\numberwithin{equation}{section}
\newtheorem{theorem}{Theorem}
\newtheorem{corollary}{Corollary}
\newtheorem{definition}{Definition}
\newtheorem{proposition}{Proposition}
\newtheorem{lemma}{Lemma}
\theoremstyle{definition}
\newtheorem{example}{Example}
\newtheorem{remark}{Remark}
\DeclareMathOperator{\Var}{Var}
\DeclareMathOperator{\diag}{diag}
\DeclareMathOperator*{\argmin}{arg\,min}
\DeclareMathOperator{\supp}{supp}
\newcommand{\abs}[1]{\left\vert #1 \right\vert}
\newcommand{\calE}{{\mathcal{E}}}
\newcommand{\calF}{\mathcal{F}}
\newcommand{\calJ}{\mathcal{J}}
\newcommand{\calI}{\mathcal{I}}
\newcommand{\calR}{\mathcal{R}}
\newcommand{\calS}{\mathcal{S}}
\newcommand{\calU}{\mathcal{U}}
\newcommand{\calW}{\mathcal{W}}
\newcommand{\E}{\mathbb{E}}
\newcommand{\R}{\mathbb{R}}
\newcommand{\N}{\mathbb{N}}
\newcommand{\Prob}{\mathbb{P}}
\newcommand{\sumi}{\sum_{i=1}^n }
\newcommand{\sumj}{\sum_{j=1}^n }
\newcommand{\sumk}{\sum_{k=1}^n }
\renewcommand{\i}{{\bm{i}}}
\newcommand{\e}{{\bm{e}}}
\newcommand{\bN}{{\bm{N}}}
\newcommand{\bX}{{\bm{X}}}
\newcommand{\bW}{{\bm{W}}}
\newcommand{\bS}{{\bm{S}}}
\newcommand{\bV}{{\bm{V}}}
\newcommand{\1}{\mathbbm{1}}
\newcommand{\red}[1]{{\color{black}#1}}
\begin{document}

\title{Inference for high-dimensional exchangeable arrays}
\thanks{H.D. Chiang is supported by the Office of the Vice Chancellor for Research and Graduate Education at the University of Wisconsin–Madison with funding from the Wisconsin Alumni Research Foundation. K. Kato is partially supported by NSF grants DMS-1952306 and DMS-2014636.}
\author[H. D. Chiang]{Harold D. Chiang}
\author[K. Kato]{Kengo Kato}
\author[Y. Sasaki]{Yuya Sasaki}

\date{First arXiv version: September 10, 2020. This version:
\today}

\address[H. D. Chiang]{
Department of Economics, University of Wisconsin-Madison\\
 William H. Sewell Social Science Building, 1180 Observatory Drive,
Madison, WI 53706, USA.}
\email{hdchiang@wisc.edu}

\address[K. Kato]{
Department of Statistics and Data Science, Cornell University \\
 1194 Comstock Hall, Ithaca, NY 14853, USA.}
\email{kk976@cornell.edu}

\address[Y. Sasaki]{Department of Economics, Vanderbilt University\\
 VU Station B \#351819, 2301 Vanderbilt Place, Nashville, TN 37235-1819, USA.}
\email{yuya.sasaki@vanderbilt.edu}

\begin{abstract}
We consider inference for high-dimensional separately and jointly exchangeable arrays where the dimensions may be much larger than the sample sizes. 
For both exchangeable arrays, we first derive high-dimensional central limit theorems over the rectangles and subsequently develop
novel multiplier bootstraps with theoretical guarantees. 
These theoretical results rely on new technical tools such as Hoeffding-type decomposition and maximal inequalities for the degenerate components in the Hoeffiding-type decomposition for the exchangeable arrays. 
We exhibit applications of our methods to uniform confidence bands for density estimation under joint exchangeability and penalty choice for $\ell_1$-penalized regression under separate exchangeability. 
Extensive simulations demonstrate precise uniform coverage rates. 
We illustrate by constructing uniform confidence bands for international trade network densities. 
\end{abstract}

\keywords{Bootstrap, exchangeable array, high-dimensional CLT, network data}

\maketitle

\section{Introduction}


Many recent statistical problems involve non-independent observations indexed by multiple interlocking sets of entities. 
Examples include dyadic/polyadic networks, bipartite networks, and multiway clustering. 
When the sets of entities that form each of these indices are different, as is the case with market-product data and book-reader data,  
a natural stochastic framework is \textit{separate exchangeability} \citep{MacKinnonNielsenWebb2019}.
Separately exchangeable arrays include row-column exchangeable models \citep{Mccullagh2000}, additive cross random effect models \citep{Owen2007,OwenEckles2012}, and multiway clustering \citep{CGM2011}.
Meanwhile, when all indices belong to a common set of entities, as is the case with friendship network data, 
the underlying structure is well-captured by \textit{joint exchangeability} \citep{BickelChen2009}.
Joint exchangeability covers nonparametric random graph models of \cite{BickelChen2009} for dyadic networks, which contain widely used  models in the statistical network analysis literature such as stochastic block models.

%

Analysis of these types of data requires accounting for the underlying complex dependence structures induced by these exchangeability notions. Thus, developing valid inference methods for exchangeable arrays is challenging. 
The literature has witnessed some research on statistical inference that focuses on exchangeable arrays with low or fixed dimensions. 
For modern statistical learning methods, it is crucial to allow the dimension of data to increase with sample size.
However, the existing literature has been silent about statistical inference for such high-dimensional exchangeable arrays.

This paper is concerned with the problem of inference for separately or jointly exchangeable high-dimensional arrays.
We develop new high-dimensional central limit theorems (CLTs) over the rectangles for the sample mean under both exchangeability notions.
Building on the high-dimensional CLTs, we propose new multiplier bootstrap methods tailored to separate and jointly exchangeable arrays and derive their nonasymptotic error bounds. Such nonasymptotic results can be translated into asymptotic results that hold uniformly over a large set of distributions, which is crucial in many high-dimensional statistical applications.

To derive these theoretical results, we develop several new technical tools, which are of independent interest and would be useful for other analyses of exchangeable arrays. Specifically, we develop novel Hoeffding-type decompositions for both separately and jointly exchangeable arrays and establish novel maximal inequalities for Hoeffding-type projections in both cases.  Such maximal inequalities lead to sharp rates for degenerate components in Hoeffding-type decompositions in both cases and play a crucial role in establishing the high-dimensional CLTs and the validity of the bootstrap methods.
The proofs of these technical results are highly nontrivial. For example, the proof of the symmetrization inequality for exchangeable arrays involves a careful induction argument (see Lemma \ref{lem: conditional zero mean} in the Appendix) combined with a repeated conditioning argument. \red{Furthermore, the proof of the maximal inequality for jointly exchangeable arrays involves a delicate conditioning argument combined with the decoupling inequalities for $U$-statistics with index-dependent kernels \citep[cf.][]{delaPenaGine1999}.}

We illustrate applications of the bootstrap methods to a couple of concrete statistical problems. 
Specifically, 1) we develop a method to construct simultaneous or uniform confidence bands for density functions with jointly exchangeable dyadic arrays, and 2) we develop a method to choose a penalty level for $\ell_1$-penalized regression (Lasso) and establish error bounds for the Lasso with separately exchangeable arrays. These applications are also new in the literature. 

We conduct extensive simulation studies, which demonstrate precise uniform coverage across various designs and under both  notions of exchangeability, thereby supporting our theoretical results. 
Finally, we apply our bootstrap  method to international trade network data to draw uniform confidence bands for  trade flow volumes in 1990, 1995, 2000, and 2005. 
The results indicate that there have been increasing numbers of bilateral trading pairs with high flow volumes as time progresses.

\subsection{Relation to the literature}
There is now a large literature on high-dimensional CLTs and bootstraps
with the  ``$p \gg n$'' regime; see  \cite{CCK2013AoS,CCK2014AoS,CCK2015PTRF,CCK2016SPA,CCK2017AoP}, \cite{DengZhang2017}, \cite{CCKK2019}, \cite{kuchibotla2020}, and \cite{fang2020} for the independent case, \cite{Chen2018}, \cite{ChenKato2019b,ChenKato2019} for $U$-statistics and processes, and \cite{ZhangWu2017}, \cite{ZhangCheng2018}, \cite{CCK2019RES}, \cite{Koike2019} for time series dependence.
However, none of the above references covers extensions to exchangeable arrays. 
The present paper builds on and contributes to this literature  by developing high-dimensional CLTs and bootstrap methods for exchangeable arrays.

Early applications of exchangeable arrays in statistics include \cite{arnold1979linear}, \cite{bowman1995saturated}, and \cite{Andrews2005}, to name a few.  For reviews, see, e.g. \cite{goldenberg2010survey}, \cite{orbanz2014bayesian}, and \cite{kuchibhotla2020exchangeability}.
Analysis of exchangeable random graphs has been an active research area in the recent statistics literature; see, e.g.,  \cite{DiaconisJanson2008}, \cite{BickelChenLevina2011}, \cite{lloyd2012random}, \cite{choi2014co}, \cite{caron2017sparse}, \cite{choi2017co}, \cite{zhang2017estimating}, \cite{crane2018edge}.  Limit theorems for 
jointly exchangeable arrays (in the fixed dimensional case) date back to \cite{Silverman1976} and \cite{EaglesonWeber1978}. \cite{FafchampsGubert2007} and \cite{CGM2011} derive standard error formulas for jointly exchangeable dyadic arrays and separately exchangeable arrays, respectively; see also  \cite{CameronMiller2014, CM2015}, \cite{AronowSamiiAssenova2015}, and \cite{Tabord-Meehan2019} for further development.
\cite{Menzel2017} studies inference for separately exchangeable arrays, covering both degenerate and non-degenerate cases. 
 \cite{DDG2019} develop functional limit theorems for Donsker classes under separate and joint exchangeability.
To the best of our knowledge, however, no existing work  in this literature permits  high-dimensional inference.
We note that \cite{DDG2019} develop symmetrization
inequalities different from ours. Specifically, symmetrization inequalities developed in \cite{DDG2019} are applied
to the whole empirical process and do not lead to correct orders for degenerate components
in Hoeffding-type decompositions (indeed, \cite{DDG2019} do not derive Hoeffding-type decompositions),
thereby not powerful enough to derive our results; see Remarks \ref{rem: comparison with DDG} and \ref{rem: comparison with DDG 2} in the Appendix for details.



Methodologically, this paper is also related to the recent literature on high-dimensional $U$-statistics, such as \cite{Chen2018}, \cite{ChenKato2019,ChenKato2019b}, among others. 
Under suitable assumptions, the data of our interest can be written as $U$-statistic-like latent structure (in distribution) via the Aldous-Hoover-Kallenberg representation \citep{Aldous1981,Hoover1979,Kallenberg2006}, i.e. the data can be written as a kernel function of some latent independent random variables. 
However, unlike in $U$-statistics, neither the kernel nor the latent independent random variables is known to us.
In addition, we need to cope with the existence of extra 
higher-order shocks in the latent structure. 
Both  aspects present extra challenges.

Regarding our bootstraps, \cite{Mccullagh2000} shows that no resampling scheme for the raw data is consistent for variance of a sample mean under separate exchangeability.
A Pigeonhole bootstrap is subsequently proposed by \cite{Owen2007} and its different variants are further investigated in \cite{OwenEckles2012}, \cite{Menzel2017} and \cite{DDG2019}.
Whether the pigeonhole bootstrap works for increasing or high-dimensional test statistics remains unknown to us.
We therefore develop a novel bootstrap method in this paper which we argue works for high-dimensional data.

\subsection{Notations and organization}
Let $\N$ denote the set of positive integers. We use $\left\| \cdot \right\|, \left\| \cdot \right\|_{0}, \left\| \cdot \right\|_{1}$, and $\left\| \cdot \right\|_{\infty}$ to denote the Euclidean, $\ell_{0}$,  $\ell_1$, and $\ell^{\infty}$-norms for vectors, respectively (precisely, $\left\| \cdot \right\|_{0}$ is not a norm but a seminorm). For two real vectors $\bm{a}= (a_{1},\dots,a_{p})^{T}$ and $\bm{b} = (b_{1},\dots,b_{p})^{T}$, the notation $\bm{a} \le \bm{b}$ means that $a_{j} \le b_{j}$ for all $1 \le j \le p$. 
Let $\supp (\bm{a})$ denote the support of $\bm{a} = (a,\dots,a_p)^{T}$, i.e., $\supp(\bm{a}) = \{ j : a_j \ne 0\}$. We denote by $\odot$ the Hadamard (element-wise) product, i.e., for $\i = (i_1,\dots,i_K)$ and $\bm{j} = (j_1,\dots,j_K)$, $\i \odot \bm{j}= (i_1 j_1,\dots,i_K j_K)$. For any $a,b \in \R$, let $a \vee b = \max \{ a,b \}$. 
For $0 < \beta < \infty$, let $\psi_{\beta}$ be the function on $[0,\infty)$ defined by $\psi_{\beta} (x) = e^{x^{\beta}}-1$.
Let $\| \cdot \|_{\psi_{\beta}}$ denote the associated Orlicz norm, i.e., 
$\| \xi \|_{\psi_\beta}=\inf \{ C>0: \E[ \psi_{\beta}( | \xi | /C)] \leq 1\}$ for a real-valued random variable $\xi$.
``Constants'' refer to nonstochastic and finite positive numbers. 

The rest of the paper is organized as follows . In  Section \ref{sec:multiway_clustered_data}, we develop a high-dimensionl CLT (over the rectangles) and a bootstrap method for separately exchangeable arrays. 
In Section \ref{sec:polyadic_data}, we develop analogous results to jointly exchangeable arrays. 
We illustrate \red{two applications} in Section \ref{sec:applications}, present simulation results in Section \ref{sec:simulation}, and demonstrate an empirical application in Section \ref{sec:empirical}.
We defer all the technical proofs to the Appendix.

\section{Separately exchangeable arrays}\label{sec:multiway_clustered_data}
In this section, we consider separately exchangeable arrays that correspond to multiway clustered data. 
Pick any $K \in \N$. 
With $\i = (i_{1},\dots, i_{K}) \in \N^{K}$, we consider a $K$-array $( \bX_{\i})_{\i \in \N^{K}}$ consisting of random vectors in $\R^{p}$ \red{with $p \ge 2$}. We denote by $X_{\i}^{j}$ the $j$-th coordinate of $\bX_{\i}$: $\bX_{\i} = (X_{\i}^{1},\dots,X_{\i}^{p})^{T}$. 
We say that the array $(\bX_{\i})_{\i \in \N^K}$ is \textit{separately exchangeable} if the following condition is satisfied \citep[cf.][Section 3.1]{Kallenberg2006}.

\begin{definition}[Separate exchangeability]
A $K$-array $(\bX_{\i})_{\i \in \N^{K}}$ is called separately exchangeable if 
for any $K$ permutations $\pi_1,\dots,\pi_K$ of $\N$,  the arrays $(\bX_{\i})_{\i \in \N^{K}}$ and $(\bX_{(\pi_1(i_1),\dots,\pi_K(i_K))})_{\i \in \N^{K}}$ are identically distributed in the sense that their finite dimensional distributions agree. 
\end{definition}

See Appendix \ref{sec:more_exchangeable_arrays} in the supplementary material for more details, discussions, and examples.
From the Aldous-Hoover-Kallenberg representation \citep[see][Corollary 7.23]{Kallenberg2006}, any separately exchangeable array $(\bX_{\i})_{\i \in \N^{K}}$ is generated by the structure
\[
\bX_{\i} = \mathfrak{f} ((U_{\i \odot \e})_{\e \in \{ 0,1 \}^{K}}), \ \i \in \N^{K}, \quad \{ U_{\i \odot \e} : \i \in \N^{K}, \e \in \{ 0,1 \}^{K} \} \stackrel{i.i.d.}{\sim} U[0,1]
\]
for some Borel measurable map $\mathfrak{f}:[0,1]^{2^{K}} \to \R^{p}$.

The latent variable $U_{\bm{0}}$ appears commonly in all $\bX_{\i}$'s. In the present paper, as in \cite{Andrews2005} and \cite{Menzel2017}, we consider inference conditional on $U_{\bm{0}}$ and treat it as fixed. In the rest of Section \ref{sec:multiway_clustered_data}, we will assume (without further mentioning) that the array $(\bX_{\i})_{\i \in \N^K}$ has mean zero (conditional on $U_{\bm{0}}$) and is generated by the structure 
\begin{equation}
\bX_{\i} = \mathfrak{g} ((U_{\i \odot \e})_{\e \in \{ 0,1 \}^{K}\setminus \{ \bm{0} \}}), \ \i \in \N^{K},
\label{eq: Aldous-Hoover_SE}
\end{equation}
where $\mathfrak{g}$ is now a map from $[0,1]^{2^{K}-1}$ into $\R^{p}$.

Suppose that we observe $\{ \bX_{\i} : \i \in [\bm N] \}$ with $\bN = (N_1,\dots,N_K)$ and $[\bm N] = \prod_{k=1}^{K} \{ 1,\dots,N_{k} \}$. 
We are interested in approximating the distribution of the sample mean 
\[
\bS_{\bN} =\frac{1}{\prod_{k=1}^{K}N_{k}}\sum_{\i \in [\bN]} \bX_{\i}
\]
in the high-dimensional setting where the dimension $p$ is allowed to entail $p \gg \min \{N_1,\dots,N_K\}$. 

\begin{example}[Empirical process indexed by function class with increasing cardinality]
\label{ex: discretized EP}
Our setting covers the following situation: let $\{ Y_{\i} : \i \in \N^{K} \}$ be random variables taking values in an abstract measurable space $(S,\calS)$, and  suppose that they are generated as $Y_{\i} = \check{\mathfrak{g}} ((U_{\i \odot \e})_{\e \in \{ 0,1 \}^{K} \setminus \{ \bm{0} \}})$. 
Let $f_{j}: S \to \R$ for $1 \le j \le p$ be measurable functions, and define $X_{\i}^{j} = f_{j}(Y_{\i}) - \E[f_{j}(Y_{\i})]$. In this case, the sample mean $\bS_{\bN}$ can be regarded as the empirical process $f \mapsto (\prod_{k=1}^{K} N_{k})^{-1} \sum_{\i \in [\bN]} (f(Y_{\i}) - \E[f(Y_{\i})])$ indexed by the function class $\calF = \{ f_{1},\dots,f_{p} \}$. Allowing $p \to \infty$ as $\min_{1 \le k \le K} N_{k} \to \infty$ enables us to cover empirical processes indexed by function classes with increasing cardinality. 
\end{example}

For later convenience, we fix some additional notations.
Let $n = \min_{1 \le k \le K} N_{k}$ and $\overline{N} = \max_{1 \le k \le K}N_{k}$ denote the minimum and maximum cluster sizes, respectively.
For $1 \le k \le K$, denote by $\calE_{k} = \{ \e= (e_1,\dots,e_K)  \in \{ 0,1 \}^{K}: \sum_{k=1}^{K} e_k = k \}$ the set of vectors in $\{ 0,1 \}^{K}$ whose support has cardinality $k$. Let $\e_{k} \in \R^{K}$ denote the vector such that the $k$-th coordinate of $\e_{k}$ is $1$ and the other coordinates are $0$. 
For a given $\e \in \{ 0,1 \}^{K}$, define 
\[
I_{\e} ([\bN]) = \{ \i \odot \e : \i \in [\bN] \} \subset \N_{0}^{K} \quad \text{with} \ \N_0 = \N \cup \{ 0 \}. 
\]
The following decomposition of the sample mean $\bS_{\bN}$ will play a fundamental role in our analysis, which is reminiscent of the Hoeffding decomposition for $U$-statistics \citep{Lee1990, delaPenaGine1999}.

\begin{lemma}[Hoeffding decomposition of separately exchangeable array]
\label{lem: H decomposition}
For any $\i \in \N^{K}$, define recursively $\hat{\bX}_{\i \odot \e_{k}} = \E[ \bX_{\i} \mid U_{\i\odot \e_{k}} ]$ for $ k=1,\dots,K$ and 
$\hat{\bX}_{\i\odot \e} = \E[\bX_{\i} \mid (U_{\i\odot \e'})_{ \e' \le \e}] - \sum_{\substack{\e' \le \e \\ \e' \ne \e}} \hat{\bX}_{\i \odot \e'}$ for $\e \in \bigcup_{k=2}^{K} \mathcal{E}_{k}$. 
Then, we have $\bX_{\i} = \sum_{\e \in \{ 0,1 \}^{K} \setminus \{ \bm {0} \}} \hat{\bX}_{\i \odot \e}$. 
Consequently,  we can decompose the sample mean $\bS_{\bN} =(\prod_{k=1}^{K}N_{k})^{-1}\sum_{\i \in [\bN]} \bX_{\i}$ as 
\begin{equation}
\label{eq: H decomposition}
\bS_{\bN} = \sum_{k=1}^{K} \sum_{\e \in \mathcal{E}_{k}} \frac{1}{\prod_{k' \in \supp (\e)} N_{k'}} \sum_{\i \in I_{\e}([\bN])} \hat{\bX}_{\i}.
\end{equation}
\end{lemma}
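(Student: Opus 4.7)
The plan is to prove the pointwise identity $\bX_{\i} = \sum_{\e \in \{0,1\}^{K}\setminus\{\bm{0}\}} \hat{\bX}_{\i \odot \e}$ first, and then obtain the decomposition of $\bS_{\bN}$ by a bookkeeping argument on the multiplicities of each projection. For the pointwise identity, I would specialize the recursion at $\e = \bm{1} := (1,\dots,1)$, which gives
\[
\hat{\bX}_{\i \odot \bm{1}} = \E\bigl[ \bX_{\i} \,\big|\, (U_{\i \odot \e'})_{\e' \le \bm{1}} \bigr] - \sum_{\substack{\e' \le \bm{1} \\ \e' \notin \{\bm{0},\bm{1}\}}} \hat{\bX}_{\i \odot \e'}.
\]
By the Aldous--Hoover representation \eqref{eq: Aldous-Hoover_SE}, $\bX_{\i}$ is a Borel function of $(U_{\i \odot \e'})_{\e' \in \{0,1\}^{K} \setminus \{\bm{0}\}}$, while $U_{\bm{0}}$ is fixed by the conditioning convention; hence the conditional expectation on the right collapses to $\bX_{\i}$ almost surely. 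Rearranging yields the claim, consistent with the implicit convention $\hat{\bX}_{\i \odot \bm{0}} = 0$ coming from $\E[\bX_{\i}]=0$.

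Next, I would substitute this pointwise identity into $\bS_{\bN} = (\prod_{k} N_{k})^{-1}\sum_{\i \in [\bN]}\bX_{\i}$ and interchange summations. The combinatorial content is that for each $\e \ne \bm{0}$, the summand $\hat{\bX}_{\i \odot \e}$ depends on $\i$ only through its projection $\i \odot \e$. I would verify this by induction on $|\supp(\e)|$: the base case $\e = \e_{k}$ holds because $\E[\bX_{\i} \mid U_{\i \odot \e_{k}}]$ is, by the i.i.d.\ structure of the latents, a deterministic function of the single random variable $U_{\i \odot \e_{k}}$; in the inductive step, both $\E[\bX_{\i} \mid (U_{\i \odot \e'})_{\e' \le \e}]$ and every previously defined $\hat{\bX}_{\i \odot \e'}$ with $\e' \le \e$ depend on $\i$ only through coordinates indexed by $\supp(\e)$.

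Given this projection property, each fiber of the map $\i \mapsto \i \odot \e$ from $[\bN]$ onto $I_{\e}([\bN])$ has exactly $\prod_{k \notin \supp(\e)} N_{k}$ elements, so
\[
\frac{1}{\prod_{k=1}^{K} N_{k}} \sum_{\i \in [\bN]} \hat{\bX}_{\i \odot \e} = \frac{1}{\prod_{k' \in \supp(\e)} N_{k'}} \sum_{\bm{j} \in I_{\e}([\bN])} \hat{\bX}_{\bm{j}}.
\]
Regrouping the outer sum over $\e \in \{0,1\}^{K}\setminus\{\bm{0}\}$ according to $k = |\supp(\e)| \in \{1,\dots,K\}$ then reproduces \eqref{eq: H decomposition}. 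I do not foresee a real obstacle: the proof is a bookkeeping exercise layered on top of the Aldous--Hoover representation and the recursive definition. The only step that genuinely needs attention is the induction showing that $\hat{\bX}_{\i \odot \e}$ depends on $\i$ only via $\i \odot \e$, without which the counting step would be meaningless.
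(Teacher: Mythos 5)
Your proposal is correct and follows the same approach as the paper's proof, which derives the pointwise identity $\bX_{\i}=\sum_{\e\neq\bm{0}}\hat{\bX}_{\i\odot\e}$ by specializing the recursion at $\e=\bm{1}$ and using $\E[\bX_{\i}\mid(U_{\i\odot\e'})_{\e'\le\bm{1}}]=\bX_{\i}$, then passes to $\bS_{\bN}$ by summation. The paper handles the second step with a single ``Consequently''; your induction showing that $\hat{\bX}_{\i\odot\e}$ depends on $\i$ only through $\i\odot\e$ and the fiber-counting argument make that step explicit, which is a useful elaboration rather than a different route.
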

The proof of this lemma can be found in Appendix \ref{sec:proof of Lemma Hoeffding multiway}.


\begin{remark}[Hoeffding decomposition]
The reason that we call (\ref{eq: H decomposition}) the Hoeffding decomposition comes from the fact that if the dimension $p$ is fixed, for each fixed $k=1,\dots,K$ and $\e \in \calE_{k}$, the component $(\prod_{k' \in \supp (\e)} N_{k'})^{-1} \sum_{\i \in I_{\e}([\bN])} \hat{\bX}_{\i}$
scales as $(\prod_{k' \in \supp(\e)} N_{k'})^{-1/2} = O(n^{-k/2})$ with $n = \min_{1 \le k' \le K} N_{k'}$ under moment conditions.  See Corollary \ref{cor: maximal inequality} in Appendix \ref{sec:maximal_inequality_for_multiway_data}. This is completely analogous to the Hoeffiding decomposition of $U$-statistics and from this analogy we shall call (\ref{eq: H decomposition}) the Hoeffding decomposition. 
\end{remark}

The leading term in the decomposition (\ref{eq: H decomposition}) is 
\[
\sum_{\e \in \mathcal{E}_{1}} \frac{1}{\prod_{k' \in \supp (\e)} N_{k'}} \sum_{\i \in I_{\e}([\bN])} \hat{\bX}_{\i} =\sum_{k=1}^{K}N_{k}^{-1} \sum_{i_{k}=1}^{N_{k}} \E[ \bX_{\i} \mid U_{(0,\dots,0,i_{k},0,\dots,0)} ],
\]
which we call the H\'{a}jek projection of $\bS_{\bN}$. With this in mind, 
define $\bW_{k,i_{k}} = \E[ \bX_{\i} \mid U_{(0,\dots,0,i_{k},0,\dots,0)} ]$ for $k=1,\dots,K$. 



\subsection{High-dimensional CLT for separately exchangeable arrays}
We first establish a high-dimensional CLT  for $\bS_{\bN}$ over the class of rectangles, $\calR= \{ \prod_{j=1}^{p} [a_{j},b_{j}] : - \infty \le a_{j} \le b_{j} \le \infty, \  1 \le j \le p \}$.
This high-dimensional CLT will be a building block for establishing the validity of the multiplier bootstrap (cf. Section \ref{sec: MB sep}). 

We start with discussing regularity conditions. Denote by $\bm{1} = (1,\dots,1)$ the vector of ones. Let $D_{\bN} \ge 1$ be a given constant that may depend on the cluster sizes $\bN$ \red{(and $p$; when we consider asymptotics we have in mind that $p$ is a function of $\bN$ or $n$ so we omit the dependence of $D_{\bN}$ on $p$)}, and let $\underline{\sigma} > 0$ be another given constant independent of the cluster sizes $\bN$. 
We will assume \textit{either} of the following moment conditions.
\begin{align}
&\max_{1 \le j \le p} \| X_{\bm{1}}^{j} \|_{\psi_{1}} \le D_{\bN},  \quad \text{or} \label{eq:condition1}\\
&\E[\| \bX_{\bm{1}} \|_{\infty}^{q}] \le D_{\bN}^{q} \quad \text{for some $q \in (4,\infty)$}. \label{eq:condition1_poly}
\end{align}
We will also assume the following condition.
\begin{align}
\max_{1 \le j \le p; 1 \le k \le K} \E[| W_{k,1}^{j} |^{2+\kappa}] \le D_{\bN}^{\kappa}, \ \kappa=1,2, \quad \text{and} \quad \min_{1 \le j \le p; 1 \le k \le K} \E[ |W_{k,1}^{j} |^{2}] \ge \underline{\sigma}^{2}. \label{eq:condition2} 
\end{align}

Condition (\ref{eq:condition1}) requires that each coordinate of  $\bX_{\bm 1}$ is sub-exponential. By Jensen's inequality, Condition (\ref{eq:condition1}) implies that $\max_{1 \le j \le p;1 \le k \le K}\| W_{k,1}^j \|_{\psi_1} \le D_{\bN}$.
Condition (\ref{eq:condition1_poly}) is an alternative moment condition on $\bX_{\bm{1}}$. 
Condition (\ref{eq:condition1_poly}) is satisfied for example under the following situation: Suppose that $\bX_{\i}$ is given by $\bX_{\i} = \varepsilon_{\i} \bm{Z}_{\i}$ where $\varepsilon_{\i}$ is a scalar ``error'' variable while $\bm{Z}$ is a vector of ``covariates''. 
If each coordinate of $\bm{Z}_{\i}$ is bounded by a constant $\overline{D}$ and $\varepsilon_{\i}$ has finite $q$-th moment, then $\E[\| \bX_{\i} \|_{\infty}^{q}] \le \overline{D}^{q} \E[|\varepsilon_{\i}|^{q}]$. \red{Also Condition (\ref{eq:condition1_poly}) is satisfied if, in the discretized empirical process application (cf. Example \ref{ex: discretized EP}), the function class possesses an envelope function with finite $q$-th moment.} Again, by Jensen's inequality, Condition (\ref{eq:condition1_poly}) implies that  $\max_{1 \le k \le K}\E[\| \bW_{k,1} \|_{\infty}^{q}] \le D_{\bN}^{q}$. \red{The restriction $q > 4$ is needed to guarantee that Condition (\ref{eq:bootstrap_rate_poly}) appearing in Theorem \ref{thm:bootsrap_validity} to be non-void.}

Condition (\ref{eq:condition2}) requires the maximum of third (respectively, fourth) moment across coordinates to be increasing at speed no faster than the first (respectively, second) power of $D_{\bN}$.  By Jensen's inequality, the first part of Condition (\ref{eq:condition2}) is satisfied if $\max_{1 \le j \le p} \E[|X_{\bm{1}}^{j}|^{2+\kappa}] \le D_{\bN}^{\kappa}$ for $\kappa=1,2$. The second part of Condition (\ref{eq:condition2}) guarantees that the H\'ajek projection is nondegenerate. 

Let $\gamma = N(\bm{0},\Sigma)$ with $\Sigma = \sum_{k=1}^K (n/N_k) \Sigma_{W_k}$  and $\Sigma_{W_k} = \E[\bW_{k,1}\bW_{k,1}^T]$ for $k=1,\dots,K$.

\begin{theorem}[High-dimensional CLT for separately exchangeable arrays]
\label{thm: high-d CLT}
Suppose that either  Condition (\ref{eq:condition1}) or (\ref{eq:condition1_poly}) holds, and further that  Condition (\ref{eq:condition2}) holds. Then, there exists a constant $C$ such that 
\[
\begin{split}
\sup_{R \in \mathcal{R}} | \Prob (\sqrt{n}\bS_{\bN} \in R) -\gamma_{\Sigma}(R) | 
 \le 
\begin{cases}
C \left ( \frac{D_{\bN}^{2} \log^{7} (p\overline{N})}{n} \right )^{1/6} & \text{if (\ref{eq:condition1}) holds,} \\
C \left [ \left ( \frac{D_{\bN}^{2} \log^{7} (p\overline{N})}{n} \right )^{1/6} + \left ( \frac{D_{\bN}^{2} \log^{3} (p\overline{N})}{n^{1-2/q}} \right )^{1/3} \right ] & \text{if  (\ref{eq:condition1_poly}) holds,}
\end{cases}
\end{split}
\]
where the constant $C$ depends only on $\underline{\sigma}$ and $K$ if Condition (\ref{eq:condition1}) holds, while $C$ depends only on $q,\underline{\sigma}$, and $K$ if Condition (\ref{eq:condition1_poly}) holds. 
\end{theorem}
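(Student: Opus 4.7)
The plan is to reduce the problem to a sum of independent random vectors via the Hoeffding-type decomposition in Lemma \ref{lem: H decomposition} and then invoke the existing high-dimensional CLT machinery for independent summands. First I would write $\sqrt{n}\bS_{\bN} = H_{\bN} + R_{\bN}$, where the H\'{a}jek projection is
\[
H_{\bN} = \sqrt{n}\sum_{k=1}^{K}\frac{1}{N_{k}}\sum_{i_{k}=1}^{N_{k}}\bW_{k,i_{k}}
\]
and $R_{\bN}$ collects the components indexed by $\e\in\calE_{k}$ with $k\ge 2$. The key observation is that the summands in $H_{\bN}$ are independent: for fixed $k$ the variables $\{\bW_{k,i_{k}}\}_{i_{k}=1}^{N_{k}}$ are i.i.d.\ since each depends on a distinct latent $U_{(0,\dots,0,i_{k},0,\dots,0)}$, and across different $k$'s the blocks depend on disjoint families of latents. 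Hence $H_{\bN}$ is a sum of $\sum_{k}N_{k}$ independent mean-zero random vectors in $\R^{p}$ with covariance exactly $\Sigma=\sum_{k=1}^{K}(n/N_{k})\Sigma_{W_{k}}$, and Condition (\ref{eq:condition2}) guarantees $\min_{j}\Sigma_{jj}\ge \underline{\sigma}^{2}$.

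Next, I would invoke an off-the-shelf high-dimensional CLT over rectangles for sums of independent vectors (Chernozhukov--Chetverikov--Kato, in both the sub-exponential and polynomial-moment versions). Conditions (\ref{eq:condition1}) and (\ref{eq:condition1_poly}) transfer to $\bW_{k,i_{k}}$ via Jensen's inequality, and together with Condition (\ref{eq:condition2}) they yield
\[
\sup_{R\in\calR}|\Prob(H_{\bN}\in R)-\gamma_{\Sigma}(R)| \lesssim \left(\tfrac{D_{\bN}^{2}\log^{7}(p\overline{N})}{n}\right)^{1/6}
\]
in the sub-exponential case, with an additional $(D_{\bN}^{2}\log^{3}(p\overline{N})/n^{1-2/q})^{1/3}$ term in the polynomial-moment case. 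This already matches the stated rate for the Gaussian approximation of $H_{\bN}$.

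It then remains to absorb the degenerate remainder $R_{\bN}$. By Corollary \ref{cor: maximal inequality}, each component
\[
\frac{1}{\prod_{k'\in\supp(\e)}N_{k'}}\sum_{\i\in I_{\e}([\bN])}\hat{\bX}_{\i}, \quad \e\in\calE_{k},\;k\ge 2,
\]
has $\ell^{\infty}$-norm of order at most $n^{-k/2}$ up to polylogarithmic factors in $p\overline{N}$ and powers of $D_{\bN}$, so $\sqrt{n}\|R_{\bN}\|_{\infty}$ is of strictly smaller order than the CLT rate. Choosing a threshold $\eta_{\bN}$ and combining the tail bound for $\|R_{\bN}\|_{\infty}$ with a Nazarov-type Gaussian anti-concentration inequality on rectangles, $\sup_{R\in\calR}\gamma_{\Sigma}(R^{\eta_{\bN}}\setminus R^{-\eta_{\bN}})\lesssim \eta_{\bN}\sqrt{\log p}/\underline{\sigma}$, yields
\[
\sup_{R\in\calR}|\Prob(\sqrt{n}\bS_{\bN}\in R)-\Prob(H_{\bN}\in R)| \lesssim \eta_{\bN}\sqrt{\log p}+\Prob(\sqrt{n}\|R_{\bN}\|_{\infty}>\eta_{\bN}),
\]
and balancing $\eta_{\bN}$ against the CLT rate gives the theorem.

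The main obstacle is the maximal inequality controlling $R_{\bN}$: the order-$k$ Hoeffding components for separately exchangeable arrays are sums over $K$-tuples with exactly $k$ nonzero coordinates, and the underlying latent variables carry a multi-index lattice structure, so classical decoupling and Hoffmann--J\o rgensen-type bounds for $U$-statistics do not apply off the shelf. This is precisely what Corollary \ref{cor: maximal inequality} is designed to address, relying on the novel symmetrization argument (cf.\ Lemma \ref{lem: conditional zero mean}) developed in the appendix; once those bounds are in hand, the CLT argument above is a routine anti-concentration plus independent-sum Gaussian-approximation reduction.
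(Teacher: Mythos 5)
Your proposal is correct, and your Step~2 (controlling the degenerate remainder $R_{\bN}$ via Corollary~\ref{cor: maximal inequality} and Nazarov's inequality on rectangles, then balancing the threshold $\eta_{\bN}$) is essentially identical to the paper's Step~2. Where you depart is Step~1. The paper does \emph{not} apply a single high-dimensional CLT to the pooled H\'ajek projection. Instead it applies Proposition~2.1 of \cite{CCK2017AoP} separately to each block $\sqrt{N_k}\overline{\bW}_k$ (a sum of $N_k$ i.i.d.\ vectors with its own natural normalization, approximated by $\gamma_{\Sigma_{W_k}}$), and then combines the $K$ Gaussian approximations by a conditioning/chaining argument: condition on $\overline{\bW}_2$, replace $\sqrt{N_1}\overline{\bW}_1$ by $\bm{Y}_1 \sim N(\bm{0},\Sigma_{W_1})$, then condition on $\bm{Y}_1$ and replace $\sqrt{N_2}\overline{\bW}_2$ by $\bm{Y}_2$, and so on. You instead treat $H_{\bN}$ as a single sum of $m=\sum_k N_k$ independent (but non-identically distributed, and differently weighted) vectors and invoke the CLT once.

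Both routes yield the theorem, but you should make the rescaling explicit: writing $H_{\bN} = m^{-1/2}\sum_{k,i_k}\tilde{\bm{W}}_{k,i_k}$ with $\tilde{\bm{W}}_{k,i_k} = (\sqrt{nm}/N_k)\bm{W}_{k,i_k}$, one checks (M.1) exactly equals $\min_j\Sigma_{jj}\ge\underline{\sigma}^2$, and (M.2), (E.1) hold with $B_m = K\sqrt{m/n}\,D_{\bN}$ since $N_k\ge n$; then $B_m^2/m = K^2 D_{\bN}^2/n$ and $\log^7(pm)\lesssim_K \log^7(p\overline{N})$, recovering the stated rate. The trade-off: the paper's block-by-block route only ever invokes the CLT for i.i.d.\ sums and automatically keeps each block at its own scale, at the cost of a $K$-fold iterated conditioning argument; your pooled route is a single invocation and arguably cleaner, but requires the non-identically-distributed version of the CLT and the explicit weight bookkeeping above, which your sketch elides. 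Since both arrive at the same bound, this is a legitimate alternative proof, not a gap.
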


\begin{remark}[Refinement under subgaussianity]
The recent paper of \cite{CCKK2019} provides some improvements on convergence rate of Gaussian approximation under the subgaussian tail assumption for the sample mean of independent random vectors. With this new technique, if we strengthen Condition (\ref{eq:condition1}) by replacing the $\psi_1$-norm $\| \cdot \|_{\psi_1}$ with the $\psi_2$-norm $\| \cdot \|_{\psi_2}$ (i.e., each coordinate $\bX_{\bm{1}}$ is sub-Gaussian),  the bound $C\left(n^{-1}D_\bN^2 \log^7(p\overline N)\right)^{1/6}$ in Theorem \ref{thm: high-d CLT} can be improved to $C\left(n^{-1}D_\bN^2 \log^5(p\overline N)\right)^{1/4}$. 
\end{remark}

\subsection{Multiplier bootstrap for separately exchangeable arrays}
\label{sec: MB sep}

Let $\{ \xi_{1,i_{1}} \}_{i_1=1}^{N_{1}}, \dots, \{ \xi_{K,i_{K}} \}_{i_K=1}^{N_K}$ be independent $N(0,1)$ random variables independent of the data. Ideally, we want to make use of the bootstrap statistic
$
\red{\sum_{k=1}^K} N_k^{-1}\sum_{i_k=1}^{N_k}\xi_{k,i_k}(\bW_{k,i_k}-\bS_{\bN}).
$
However, this bootstrap is infeasible as $\bW_{k,i_k} = \E[\bX_{\i} \mid U_{(0,\dots,i_{k},\dots,0)}]$ are unknown to us.
Estimation of $\bW_{k,i_k}$ is nontrivial as $U_{(0,\dots,i_{k},\dots,0)}$ is a latent variable. We propose to estimate each $\bW_{k,i_{k}}$ by 
\[
\overline{\bX}_{k,i_{k}} = \frac{1}{\prod_{k' \ne k} N_{k'}} \sum_{i_{1},\dots,i_{k-1},i_{k+1},\dots,i_{K}} \bX_{\i}, \ i_{k} = 1,\dots,N_{k}; k=1,\dots,K,
\]
i.e., the sample mean taken over all indices but $i_k$. 
Then, we apply the multiplier bootstrap to $\overline{\bX}_{k,i_k}$ in place of $\bW_{k,i_k}$
\[
\bS_{\bN}^{MB} =  \sum_{k=1}^{K} N_{k}^{-1} \sum_{i_{k}=1}^{N_{k}} \xi_{k,i_{k}} (\overline{\bX}_{k,i_k} - \bS_{\bN}). 
\]
To the best of our knowledge, this multiplier bootstrap for separately exchangeable arrays is new in the literature. 
We will formally study the validity of this multiplier bootstrap for high-dimensional separately exchangeable arrays with $p \gg n$.

%

%

We are now in position to establish the validity of the proposed multiplier bootstrap for separately exchangeable arrays. 
Let $\Prob_{|\bX_{[\bN]}}$ denote the law conditional on the data $\bX_{[\bN]} = (\bX_{\i})_{\i\in[\bN]}$
and $\overline \sigma=\max_{1\le j \le p; 1 \le k \le K} \sqrt{\E[|W_{k,1}^j|^2]}$. 

\begin{theorem}[Validity of multiplier bootstrap for separately exchangeable arrays]\label{thm:bootsrap_validity}
Consider the following two cases:
\begin{enumerate}
\item[(i).] \red{Assume that} Conditions (\ref{eq:condition1}) and (\ref{eq:condition2}) hold, and \red{further} there exist constants $C_1$ and $\zeta \in (0,1)$ such that 
\begin{align}
\frac{\overline \sigma^2 D_{\bN}^2 \log^{7} p }{n} \bigvee \frac{D_{\bN}^2 (\log^2 n ) \log^5(p \overline N)}{n} \le C_1 n^{-\zeta}. \label{eq:bootstrap_rate}
\end{align}
\item[(ii).]  \red{Assume that} Conditions (\ref{eq:condition1_poly}) and (\ref{eq:condition2}) hold, and \red{further} there exist constants $C_1$ and $\zeta \in (2/q,1)$ such that 
\begin{equation}
\frac{\overline \sigma^2 D_{\bN}^2 \log^{5} (pn) }{n}  \bigvee \left ( \frac{D_{\bN}^2  \log^3 p}{n^{1-4/q}} \right)^2 \le C_1 n^{-\zeta}.  \label{eq:bootstrap_rate_poly}
\end{equation}
\end{enumerate}
Then, under  Case (i), for any $\nu \in (1/\zeta,\infty)$, there exists a constant $C$ depending only on $\nu, \underline{\sigma}, K$, and $C_1$ such that
$
\sup_{R\in \calR}\left|\Prob_{|\bX_{[\bN]}}(\sqrt{n} \bS_{\bN}^{MB}\in R)-\gamma_{\Sigma}(R) \right|\le C n^{-(\zeta-1/\nu)/4}
$
with probability at least $1-Cn^{-1}$. Under Case (ii), the same conclusion holds with $n^{-(\zeta-1/\nu)/4}$ replaced by $n^{-(\zeta-2/q)/4}$, while the constant $C$ depends only on $q, \underline{\sigma}, K$, and $C_1$.
\end{theorem}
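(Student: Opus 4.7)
The plan is to reduce the bootstrap validity to the combination of (a) the high-dimensional CLT of Theorem \ref{thm: high-d CLT} applied to $\sqrt{n}\bS_{\bN}$, and (b) a Gaussian-vs-Gaussian comparison between $N(\bm 0, \hat\Sigma)$ and $N(\bm 0,\Sigma)$, where $\hat\Sigma$ is the conditional covariance of $\sqrt{n}\bS_{\bN}^{MB}$ given the data. Conditional on $\bX_{[\bN]}$, the vector $\sqrt{n}\bS_{\bN}^{MB} = \sum_{k=1}^K (\sqrt{n}/N_k)\sum_{i_k=1}^{N_k}\xi_{k,i_k}(\overline{\bX}_{k,i_k}-\bS_{\bN})$ is centered Gaussian with covariance $\hat\Sigma = \sum_{k=1}^K (n/N_k^2)\sum_{i_k=1}^{N_k} (\overline{\bX}_{k,i_k}-\bS_{\bN})(\overline{\bX}_{k,i_k}-\bS_{\bN})^T$. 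By the Gaussian comparison inequality over rectangles of \cite{CCK2017AoP}, together with the lower bound on diagonals of $\Sigma$ provided by the second half of Condition (\ref{eq:condition2}), one has
\[
\sup_{R\in\calR}\bigl|\Prob_{|\bX_{[\bN]}}(\sqrt n \bS_{\bN}^{MB}\in R)-\gamma_\Sigma(R)\bigr|\lesssim \Delta^{1/3}\log^{2/3}p, \quad \Delta:=\max_{1\le j,l\le p}|\hat\Sigma_{jl}-\Sigma_{jl}|.
\]
Combined with Theorem \ref{thm: high-d CLT} applied to $\sqrt n \bS_\bN$, this reduces everything to controlling $\Delta$ on an event of probability $1-O(n^{-1})$.

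To bound $\Delta$, I would split
\[
\hat\Sigma-\Sigma = (\hat\Sigma-\tilde\Sigma) + (\tilde\Sigma-\Sigma),\qquad \tilde\Sigma := \sum_{k=1}^K (n/N_k^2)\sum_{i_k=1}^{N_k}\bW_{k,i_k}\bW_{k,i_k}^T,
\]
where $\tilde\Sigma$ is the oracle covariance that uses the unknown H\'ajek summands. For the oracle part $\tilde\Sigma-\Sigma$, each inner sum is a sum of $N_k$ i.i.d.\ rank-one random matrices (since $\bW_{k,i_k}$ depends only on $U_{(0,\dots,i_k,\dots,0)}$, and these latent variables are mutually independent across $(k,i_k)$), so entrywise concentration follows from Bernstein's inequality under Condition (\ref{eq:condition1}) or the Fuk--Nagaev inequality under Condition (\ref{eq:condition1_poly}); a union bound over the $p^2$ entries together with Condition (\ref{eq:condition2}) produces the rates $\sqrt{\overline\sigma^2 D_{\bN}^2 \log p/n}$ and $n^{-(1-2/q)/2}D_{\bN}\sqrt{\log p}$, respectively.

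The harder part is the plug-in error $\hat\Sigma-\tilde\Sigma$. Writing $\overline{\bX}_{k,i_k}=\bW_{k,i_k}+\bm R_{k,i_k}$, where by Lemma \ref{lem: H decomposition} the remainder $\bm R_{k,i_k}$ equals $\overline{\bX}_{k,i_k}-\E[\bX_\i\mid U_{(0,\dots,i_k,\dots,0)}]$ and thus is an average of \emph{degenerate} Hoeffding components indexed by $\e\in\bigcup_{r\ge 2}\calE_r$, one expands $(\overline{\bX}_{k,i_k}-\bS_{\bN})(\overline{\bX}_{k,i_k}-\bS_{\bN})^T$ and groups the error into (i) cross terms $\bW_{k,i_k}\bm R_{k,i_k}^T$ and (ii) quadratic terms $\bm R_{k,i_k}\bm R_{k,i_k}^T$, plus contributions from centering by $\bS_{\bN}$ rather than by $\E[\bW_{k,1}]=0$. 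The uniform-over-entries control of these terms is exactly where the maximal inequalities for degenerate Hoeffding-type projections (invoked via Corollary \ref{cor: maximal inequality} and its supporting lemmas) come in: each degenerate component of order $r\ge 2$ contributes a term of order at most $n^{-r/2}$ up to polylogarithmic factors in $p\overline N$ and a polynomial factor in $D_{\bN}$, which, squared and averaged over $i_k$, is much smaller than the leading $n^{-1/2}$ H\'ajek scale. Substituting these bounds back and combining with Condition (\ref{eq:bootstrap_rate}) or (\ref{eq:bootstrap_rate_poly}) yields $\Delta\le C n^{-\zeta}$ (respectively $C n^{-(\zeta-2/q)}$) with probability $1-O(n^{-1})$; choosing $\nu>1/\zeta$ to buffer tail events and plugging into the Gaussian comparison delivers the stated rate.

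The main obstacle is the degenerate-component control in $\hat\Sigma-\tilde\Sigma$: one cannot treat $\overline{\bX}_{k,i_k}-\bW_{k,i_k}$ as an average of independent summands, and naive symmetrization would only give the H\'ajek scale. The resolution is to insist on a full Hoeffding-type expansion of $\bm R_{k,i_k}$ and apply the sharp maximal inequalities for each degenerate piece separately; a uniform-in-$(k,i_k)$ maximal bound (obtained by an additional union bound over $i_k\le \overline N$, which is why the rate condition carries $\log(p\overline N)$ rather than $\log p$) then suffices. The Case (ii) polynomial-moment version follows the same scheme, with truncation at level $n^{1/q}D_{\bN}$ and the Fuk--Nagaev tail inequality replacing Bernstein's bound in each place.
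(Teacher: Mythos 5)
Your overall strategy is the same as the paper's: condition on the data to see that $\sqrt n\bS_{\bN}^{MB}$ is exactly Gaussian, reduce the task to a Gaussian covariance comparison, split $\hat\Sigma-\Sigma$ into an oracle part ($\tilde\Sigma-\Sigma$) and a plug-in part ($\hat\Sigma-\tilde\Sigma$), handle the oracle part with concentration inequalities and the plug-in part with the exchangeable-array maximal inequalities. The structure is sound, but there is one consequential gap and two imprecisions.

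The consequential gap is the choice of Gaussian comparison inequality. You invoke the bound from \cite{CCK2017AoP}, which is $\lesssim\Delta^{1/3}\log^{2/3}p$. The paper instead uses the sharper bound $\lesssim(\Delta\log^2 p)^{1/2}$ (Lemma \ref{lem:Gaussian_comparison}, citing Corollary 5.1 of \cite{CCKK2019}). The covariance error bound one can actually establish from the maximal inequalities and the rate condition (\ref{eq:bootstrap_rate}) is $\Delta\log^2 p\lesssim n^{-(\zeta-1/\nu)/2}$ with probability $1-O(n^{-1})$. Plugged into the sharp comparison inequality this yields the stated rate $n^{-(\zeta-1/\nu)/4}$; plugged into the $\Delta^{1/3}\log^{2/3}p$ bound it only gives $n^{-(\zeta-1/\nu)/6}$. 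So your argument as written would not establish the rate stated in the theorem, although it is fixed by switching to the improved comparison lemma. Incidentally, the paper's theorem is a pure Gaussian-vs-Gaussian statement, so you do not actually need item (a), the high-dimensional CLT for $\sqrt n\bS_{\bN}$, inside this proof.

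Two less consequential imprecisions. First, the remainder $\bm R_{k,i_k}=\overline{\bX}_{k,i_k}-\bW_{k,i_k}$ is \emph{not} an average only of degenerate components $\e\in\bigcup_{r\ge 2}\calE_r$: it also contains the first-order pieces $\e_{k'}$ with $k'\ne k$ (and degenerate pieces not involving coordinate $k$), which are $\calE_1$ terms. These happen to be constant in $i_k$ and are ultimately absorbed by the recentering with $\bS_{\bN}$, but your characterization as written is incorrect. The paper sidesteps this entirely by conditioning on $U_{(0,\dots,i_k,\dots,0)}$ and viewing $\overline{\bX}_{k,i_k}-\bW_{k,i_k}$ as the centered sample mean of a separately exchangeable $(K-1)$-array, to which Corollary \ref{cor: maximal inequality} applies directly and gives the $(n^{-1}D_{\bN}^2\log^3 p)^{\nu}$ moment bound. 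Second, the $\log\overline N$ factor in the rate condition does \emph{not} come from a union bound over $i_k\le\overline N$ in the plug-in error control; the plug-in error is handled by a moment bound and Markov's inequality, with no union bound over $i_k$. The $\log\overline N$ arises in bounding the oracle term $\hat\Delta_{W,1,2}$, specifically from the $\psi_{1/2}$-Orlicz control of $\max_{i_1\le N_1}\max_{\ell}|W_{1,i_1}^\ell|^2$ used in the Fuk--Nagaev-type deviation inequality.
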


\begin{remark}[Discussion on Conditions (\ref{eq:bootstrap_rate}) and (\ref{eq:bootstrap_rate_poly})]\label{remark:discussion_polynomial_rates}
Conditions (\ref{eq:bootstrap_rate}) and (\ref{eq:bootstrap_rate_poly}) are placed to guarantee that the error bound for our multiplier bootstrap decreases at a polynomial rate in $n$. 
If we are to show a weaker result, namely, 
\begin{equation}
\sup_{R\in \calR}|\Prob_{|\bX_{[\bN]}}(\sqrt{n} \bS_{\bN}^{MB}\in R)-\gamma_{\Sigma}(R) | = o_{P}(1)
\label{eq: consistency}
\end{equation}
 as $n \to \infty$ (with the understanding that $p, \overline{\sigma}, D_{\bN}$, and $\overline{N}$ are functions of $n$), then Conditions  (\ref{eq:bootstrap_rate}) and (\ref{eq:bootstrap_rate_poly}) can be weakened to $(\overline{\sigma}^2D_{\bN}^{2} \log^{7}p) \vee D_{\bN}^2\log^5 (p\overline{N}) = o(n)$ and $(n^{-1}\overline{\sigma}^2D_{\bN}^{2} \log^{5}(pn)) \vee (n^{-(1-2/q)}D_{\bN}^{2}\log^{3}p) = o(1)$, respectively. (The critical case $q=4$ is allowed for (\ref{eq: consistency}); note that the high-dimensional CLT (Theorem \ref{thm: high-d CLT}) also holds with $q=4$.) 
\end{remark}

\begin{remark}[Normalized sample mean]
In practice, we often normalize the coordinates of the sample mean by estimates of the standard deviations, so that each coordinate is approximately distributed as $N(0,1)$. 
We can estimate the variance of the $j$-th coordinate of $\sqrt{n}\bS_{\bN}$ by the conditional variance of the $j$-th coordinate of $\sqrt{n} \bS_{\bN}^{MB}$. The validity of the multiplier bootstrap to the normalized sample mean follows similarly to the preceding theorem; see Appendix \ref{sec: normalized sample mean SE} for details. 
A similar comment applies to the joint exchangeable case; see  Appendix \ref{sec: normalized sample mean JE} for details. 
\end{remark}

\section{Jointly exchangeable arrays}\label{sec:polyadic_data}
In this section, we consider another class of exchangeable arrays, namely, jointly exchangeable arrays.
The notations in the current section are independent from those in Section \ref{sec:multiway_clustered_data} unless otherwise noted. 
Joint exchangeability induces a more complex dependence structure on arrays than separate exchangeability, but still we are able to develop analogous results to the preceding section for jointly exchangeable arrays as well. It should be noted, however, that we do require a different bootstrap  and technical tools (cf. Appendix \ref{sec:maximal_inequality_for_polyadic_data}) to accommodate a specific dependence structure induced from joint exchangeability.

Pick any $K \in \N$. For a given positive integer $n \ge K$, let $I_{n,K} = \{ (i_1,\dots,i_K) : 1 \le i_1, \dots, i_K \le n \ \text{and $i_1,\dots,i_K$ are distinct} \}$. Also let $I_{\infty,K} = \bigcup_{n=K}^{\infty} I_{n,K}$. 
For any $\i = (i_1,\dots,i_K) \in \N^K$, let $\{ \i \}^{+}$ denote the set of distinct \textit{nonzero} elements of $(i_1,\dots,i_K)$.

In this section, we consider a $K$-array $( \bX_{\i})_{\i \in I_{\infty,K}}$ consisting of random vectors in $\R^{p}$ \red{with $p \ge 2$}. We say that the array $(\bX_{\i})_{\i \in I_{\infty,K}}$ is \textit{jointly exchangeable} if the following condition is satisfied \citep[cf.][Section 3.1]{Kallenberg2006}.

\begin{definition}[Joint exchangeability]
A $K$-array $(\bX_{\i})_{\i \in I_{\infty,K}}$ is called jointly exchangeable if 
for any permutation $\pi$ of $\N$, the arrays $(\bX_{\i})_{\i \in I_{\infty,K}}$ and $(\bX_{(\pi(i_1),\dots,\pi(i_K))})_{\i\in I_{\infty,K}}$ are identically distributed.
\end{definition}

See Appendix \ref{sec:more_exchangeable_arrays} in the supplementary material for more details, discussions, and examples.
From the Aldous-Hoover-Kallenberg representation \citep[see][Theorem 7.22]{Kallenberg2006}, any jointly exchangeable array $(\bX_{\i})_{\i \in I_{\infty,K}}$ is generated by the structure
\[
\bX_{\i} = \mathfrak{f} ( (U_{\{\i \odot \e\}^{+}})_{\e \in \{ 0,1 \}^{K}}), \ \i \in I_{\infty,K}, \quad \{ U_{\{\i \odot \e\}^{+}} : \i \in I_{\infty,K}, \e \in \{ 0,1 \}^{K} \} \stackrel{i.i.d.}{\sim} U[0,1]
\]
for some Borel measurable map $\mathfrak{f}: [0,1]^{2^{K}} \to \R^{p}$. 
Here the coordinates of the vector $(U_{\{\i \odot \e\}^{+}})_{\e \in \{ 0,1 \}^{K}}$ are understood to be properly ordered, so that, e.g., when $K=2$, $\bX_{(i_1,i_2)} = \mathfrak{f} (U_{\varnothing},U_{ i_1}, U_{ i_2}, U_{\{ i_1,i_2 \}})$ and $\bX_{(i_2,i_1)} = \mathfrak{f} (U_{\varnothing}, U_{i_2}, U_{i_1}, U_{\{ i_1,i_2 \}})$ differ (although they have the identical distribution). 

As in the separately exchangeable case, we consider inference conditional on $U_{\varnothing}$, and in what follows, we will assume that the array $(\bX_{\i})_{\i \in I_{\infty,K}}$ has mean zero (conditional on $U_{\varnothing}$) and is generated by the structure  
\begin{equation}
\bX_{\i} = \mathfrak{g} ( (U_{\{\i \odot \e\}^{+}})_{\e \in \{ 0,1 \}^{K} \setminus \{ \bm{0} \}}), \ \i \in I_{\infty,K},
\label{eq: Aldous-Hoover_JE}
\end{equation}
where $\mathfrak{g}$ is now a map from $[0,1]^{2^{K}-1}$ into $\R^{p}$.

Suppose that we observe $\{ \bX_{\i}:\i\in I_{n,K} \}$ with $n \ge K$ and are interested in distributional approximation of the polyadic sample mean 
\begin{align*}
\bS_n : = \frac{(n-K)!}{n!} \sum_{\i \in I_{n,K}} \bX_{\i}.
\end{align*}
in the high-dimensional setting where the dimension $p$ is allowed to entail $p \gg n$. 

As in Section \ref{sec:multiway_clustered_data}, define $\calE_{k} = \{ \e= (e_1,\dots,e_K)  \in \{ 0,1 \}^{K}: \sum_{k=1}^{K} e_k = k \}$ for $1 \le k \le K$. 
The analysis of the jointly exchangeable array relies on the following decomposition
\begin{equation}
\begin{split}
\bS_n &= \frac{1}{n}\sumj \E\left[\frac{(n-K)!}{(n-1)!}\sum_{k=1}^K\sum_{\i\in I_{n,K}: i_k=j} \bX_{\i}~\Big|~U_{j}\right] \\
&\quad +\frac{(n-K)!}{n!} \sum_{\i \in I_{n,K}} \left(\E[\bX_{\i} \mid U_{i_1},\dots,U_{i_K}]-\sum_{k=1}^K \E[\bX_{\i} \mid U_{i_k}]\right) \\
&\quad +\sum_{k=2}^K \frac{(n-K)!}{n!} \sum_{\i \in I_{n,K}}\left(\E[\bX_{\i}\mid (U_{\{\i\odot \e\}^+})_{\e \in \cup_{r=1}^k \calE_r} ]-\E[\bX_{\i}\mid (U_{\{\i\odot \e\}^+})_{\e \in \cup_{r=1}^{k-1}\calE_r} ]\right).
\end{split}
\label{eq: H-decomp polyadic}
\end{equation}
It turns out that the first term on the right-hand side, which we call the the H\'{a}jek projection of $\bS_{n}$, is a dominant term. 
Defining $h_k(u)=\E[\bX_{(1,\dots,K)} \mid U_{k}=u]$ for $k=1,\dots,K$, we can simplify the H\'ajek projection into $n^{-1}\sum_{i=1}^n \bW_j$ where $\bW_j = \sum_{k=1}^K h_k(U_j)$. 

\subsection{High-dimensional CLT for jointly exchangeable arrays}

We consider to approximate the distribution of $\sqrt {n} \bS_{n}$ by a Gaussian distribution on the set of rectangles 
$\calR$ as defined in Section \ref{sec:multiway_clustered_data}.

Let $D_n \ge 1$ be a given constant that may depend on $n$, and $\underline{\sigma} > 0$ be another given constant independent of $n$. 
We will assume either of the following moment conditions.
\begin{align}
&\max_{1 \le \ell \le p} \| X_{(1,\dots,K)}^{\ell} \|_{\psi_{1}} \le D_{n}, \quad \text{or}  \label{eq:condition1polyadic}\\
&\E[\| \bX_{(1,\dots,K)} \|_{\infty}^{q}] \le D_{n}^{q} \quad \text{for some $q \in (4,\infty)$}.\label{eq:condition1polyadic_poly}
\end{align}
We will also assume the following condition.
\begin{align}
\max_{1 \le \ell \le p} \E[|W_{1}^{\ell}|^{2+k}] \le D_{n}^{k}, \ \kappa=1,2, \quad \text{and} \quad \min_{1 \le \ell \le p} \E[|W_{1}^{\ell}|^{2}] \ge \underline{\sigma}^{2} \label{eq:condition2polyadic}
\end{align}

The conditions required here are similar to those in the case of separate exchangeability in Section \ref{sec:multiway_clustered_data}. The main difference is that Condition (\ref{eq:condition2polyadic}) is  now imposed on $\bW_1$.

Let $\gamma_{\Sigma} = N(\bm{0},\Sigma)$ with $\Sigma =\E\left[\bW_{1}\bW_{1}^{T}\right]$.
\begin{theorem}[High-dimensional CLT for jointly exchangeable arrays]\label{thm: high-d CLT polyadic}
Suppose that either Condition (\ref{eq:condition1polyadic})  or (\ref{eq:condition1polyadic_poly}) holds, and further Condition (\ref{eq:condition2polyadic}) holds. Then, there exists a constant $C$ such that 
\[
\begin{split}
\sup_{R\in \calR}\left|\Prob(\sqrt{n} \bS_n\in R)-\gamma_{\Sigma}(R) \right|
 \le
\begin{cases}
C \left ( \frac{D_{n}^{2} \log^{7} (pn)}{n} \right )^{1/6} & \text{if (\ref{eq:condition1polyadic}) holds,} \\
C \left [ \left ( \frac{D_{n}^{2} \log^{7} (pn)}{n} \right )^{1/6} + \left ( \frac{D_{n}^{2} \log^{3} (pn)}{n^{1-2/q}} \right )^{1/3} \right ] & \text{if  (\ref{eq:condition1polyadic_poly}) holds,}
\end{cases}
\end{split}
\]
where the constant $C$ depends only on $\underline{\sigma}$ and $K$ if Condition (\ref{eq:condition1polyadic}) holds, while $C$ depends only on $q,\underline{\sigma}$, and $K$ if Condition (\ref{eq:condition1polyadic_poly}) holds. 
\end{theorem}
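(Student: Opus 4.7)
My plan is to follow the three-step template for high-dimensional CLTs introduced by \cite{CCK2017AoP}: decompose $\sqrt{n}\bS_n$ into a leading i.i.d.\ average plus degenerate remainders using (\ref{eq: H-decomp polyadic}), apply the CCK Gaussian approximation to the leading piece, and control the remainder in sup-norm so that Nazarov's anti-concentration inequality converts the control into a bound uniform over rectangles $\calR$.

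The decomposition (\ref{eq: H-decomp polyadic}) is already set up for this purpose: its first line collapses to the H\'{a}jek projection $n^{-1}\sum_{j=1}^n \bW_j$, which is an average of i.i.d.\ vectors because $U_1,\dots,U_n$ are independent and $\bW_j = \sum_{k=1}^K h_k(U_j)$ depends only on $U_j$; its second and third lines collect completely degenerate components of orders $k=2,\dots,K$ in the underlying latent variables. For the CLT on $n^{-1/2}\sum_j \bW_j$, Condition (\ref{eq:condition1polyadic}) (respectively (\ref{eq:condition1polyadic_poly})) transfers to $\bW_1$ by Jensen's inequality, since each $h_k(U_1)$ is a conditional expectation of a suitable $\bX_{\i}$; combined with the third/fourth moment bounds in Condition (\ref{eq:condition2polyadic}) and the lower bound $\underline{\sigma}^2$ on coordinate variances, the high-dimensional CLT of \cite{CCK2017AoP} (together with its polynomial-moment variant) applied to the i.i.d.\ sequence $\bW_1,\dots,\bW_n$ with covariance $\Sigma = \E[\bW_1 \bW_1^T]$ yields precisely the two displayed rates.

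The bulk of the work lies in controlling the remainder $R_n = \sqrt{n}\bS_n - n^{-1/2}\sum_j \bW_j$. The target is a high-probability bound on $\|R_n\|_\infty$ that is of smaller order than the CCK rate, polylog-in-$pn$ factors included. For each order-$k$ summand in the last two lines of (\ref{eq: H-decomp polyadic}) with $k \ge 2$, complete degeneracy should produce an $n^{-k/2}$ scaling in $L^2$ on each coordinate, and a maximal inequality across the $p$ coordinates should cost at most a polylog-in-$pn$ factor. Multiplying by $\sqrt{n}$ then leaves $O(n^{-1/2}\mathrm{polylog}(pn))$ for the leading $k=2$ piece, which is dwarfed by the $n^{-1/6}$-type CCK rate. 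Nazarov's anti-concentration inequality for $\gamma_\Sigma$, exactly as in the CCK program, then promotes this sup-norm control to the required bound on $\sup_{R\in\calR}|\Prob(\sqrt{n}\bS_n \in R) - \Prob(n^{-1/2}\sum_j \bW_j \in R)|$.

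The hard part is the maximal inequality for the degenerate components under joint exchangeability. Unlike standard $U$-statistics, the order-$k$ summands here have kernels that themselves depend on the index $\i$ through the latent variables $U_{\{\i \odot \e\}^+}$ for $|\{\e\}^+|\ge 2$, so classical decoupling and moment bounds for $U$-statistics do not apply off the shelf. The strategy, as flagged in the introduction, is a delicate repeated-conditioning argument that first conditions on all $U_{\{\i \odot \e\}^+}$ with $|\{\e\}^+|\ge 2$, then applies the decoupling inequalities for $U$-statistics with index-dependent kernels of \cite{delaPenaGine1999} to the resulting sums over $I_{n,K}$, and finally integrates out; this yields moment bounds that are sharp simultaneously in $n$ and $\log p$, which is exactly what is needed for $\sqrt{n} R_n$ to be negligible against the leading CCK rate and hence for Theorem \ref{thm: high-d CLT polyadic} to follow.
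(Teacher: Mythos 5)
Your overall architecture matches the paper's: decompose $\sqrt{n}\bS_n$ via (\ref{eq: H-decomp polyadic}) into the H\'{a}jek projection $n^{-1}\sum_j \bW_j$ (an i.i.d.\ average since $\bW_j$ depends only on $U_j$) plus degenerate remainders, apply the Gaussian approximation of \cite{CCK2017AoP} to the leading term, bound the remainders in sup-norm via maximal inequalities, and combine through Nazarov's anti-concentration inequality. You also correctly flag that the order-$k$ remainders for $k\ge 2$ are the hard part because their kernels involve higher-cardinality latent variables and are therefore index-dependent, and that the de la Pe\~na--Gin\'e decoupling inequalities are the right tool.

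However, the conditioning step you describe runs in the wrong direction, and this is not a cosmetic slip. Conditioning on all $U_{\{\i\odot\e\}^+}$ with $|\{\i\odot\e\}^+|\ge 2$ and treating the result as a $U$-statistic in the singleton $U_j$'s does \emph{not} produce conditionally centered summands: integrating out the singleton $U$'s leaves $\E[\bX_{\i} \mid (U_{\{\i\odot\e\}^+})_{\e \in \cup_{r=2}^k \calE_r}] - \E[\bX_{\i} \mid (U_{\{\i\odot\e\}^+})_{\e \in \cup_{r=2}^{k-1}\calE_r}]$, which is generically nonzero (for $K=k=2$ it is $\E[\bX_{(i,j)}\mid U_{\{i,j\}}]$), so the symmetrization that must precede decoupling is unavailable. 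The correct move, used in the proof of Lemma~\ref{lem: max_ineq_poly2}, is the opposite: for the order-$k$ term, condition on the \emph{low}-cardinality variables $\calU_{k-1} = \{U_{\{\i\odot\e\}^+} : \e \in \cup_{r=1}^{k-1}\calE_r, \i \in I_{\infty,K}\}$. By the tower property, the order-$k$ component then becomes a function of the cardinality-$k$ variables alone with zero conditional mean, and the relevant $U$-statistic to which the index-dependent decoupling and telescoping argument is applied is indexed by $\calJ_{n,k}$ (tuples of cardinality-$k$ subsets), not by $I_{n,K}$. Relatedly, your ``last two lines'' framing lumps $\mathbb{U}_n$ with the genuinely index-dependent third-line terms; $\mathbb{U}_n$ is a classical degenerate $U$-statistic in the singleton $U_j$'s with a symmetric kernel and needs none of this machinery --- it follows directly from Corollary 5.6 of \cite{ChenKato2019b}, as in Lemma~\ref{lem: max_ineq_poly1}.
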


\begin{remark}[Comparison with \cite{Silverman1976}]
Theorem \ref{thm: high-d CLT polyadic} is a high-dimensional extension of Theorem A in \cite{Silverman1976} that establishes a CLT for jointly exchangeable arrays with fixed $p$. The covariance matrix of the limiting Gaussian distribution in \cite{Silverman1976} has a different expression than our $\Sigma$, but we will verify below that two expressions are indeed the same. The covariance matrix given in Corollary to Theorem A in \cite{Silverman1976} reads as follows: Let $\check{\bX}_{(i_1,\dots,i_K)}$ be the symmetrized version of $\bX_{(i_1,\dots,i_K)}$, i.e., $\check{\bX}_{(i_1,\dots,i_K)} = (K!)^{-1} \sum_{(i_1',\dots,i_K')} \bX_{(i_1',\dots,i_K')}$ where the summation is taken over all permutations of $( i_1,\dots, i_K )$. The covariance matix given in \cite{Silverman1976} is $\Sigma_{S} = K^2\E[\check{\bX}_{(1,\dots,K)} \check{\bX}_{(1,K+1,\dots,2K)}]$. On the other hand, 
$\sum_{k=1}^{K} \E[ \bX_{(1,\dots,K)} \mid U_{k} = u] = \sum_{k=1}^{K} \E[\check{\bX}_{(1,\dots,K)} \mid U_{k}=u] = K \E[ \check{\bX}_{(1,\dots,K)} \mid U_{1} = u]$,
so that \\$\Sigma = K^2 \E\left [ \E[ \check{\bX}_{(1,\dots,K)} \mid U_{1} ] \E[ \check{\bX}_{(1,\dots,K)} \mid U_{1} ] \right ] =  K^2\E[\check{\bX}_{(1,\dots,K)} \check{\bX}_{(1,K+1,\dots,2K)}] = \Sigma_{S}$.
\end{remark}

\subsection{Multiplier bootstrap for jointly exchangeable arrays}
Let $\{ \xi_j \}_{j=1}^{n}$ be independent $N(0,1)$ random variables independent of the data. 
Ideally, we want to make use of the multiplier bootstrap statistic $
n^{-1}\sumj \xi_j(
\bW_{j}-K\bS_n).
$
This is infeasible, however, as the projections $\bW_j$ are unknown. 
As an alternative, we replace each $\bW_j$ by its estimate
\[
\hat{\bW}_j =\frac{(n-K)!}{(n-1)!}\sum_{k=1}^K\sum_{\i\in I_{n,K}: i_k=j} \bX_{\i},
\]
and apply the multiplier bootstrap to $\hat{\bW}_{j}$, i.e., 
\begin{align*}
\bS_n^{MB} :
=\frac{1}{{n}}\sumj \xi_j (\hat{\bW}_j - K\bm{S}_{n})
\end{align*}
When $K=2$ (dyadic), this mulitplier bootstrap 
coincides with the multiplier bootstrap statistic considered in Section 3.2 of \cite{DDG2019}.
However, \cite{DDG2019} do not consider the extension to general $K$ arrays, and focus on the empirical process indexed by a Donsker class, which excludes the high-dimensional sample mean. 
We will study the validity of this multiplier bootstrap for jointly exchangeable arrays.

Let $\Prob_{|\bX_{I_{n,K}}}$ denote the law conditional on the data $(\bX_{\i})_{\i \in I_{n,K}}$
and $\overline \sigma=\max_{1\le \ell \le p}\sqrt{\E[|W_{1}^\ell|^2]}$.

\begin{theorem}[Validity of multiplier bootstrap for jointly exchangeable arrays]\label{thm:bootstrap validity polyadic}
Consider the following two cases.
\begin{enumerate}
\item[(i).] \red{Assume that} Conditions (\ref{eq:condition1polyadic}) and (\ref{eq:condition2polyadic}) hold, and \red{further} there exist constants $C_1$ and $\zeta \in (0,1)$ such that 
\begin{align}
\frac{\overline \sigma^2 D_{n}^2 \log^{7} p }{n} \bigvee \frac{D_n^2 (\log^2 n ) \log^5(p n)}{n} \le C_1 n^{-\zeta}. \label{eq:bootstrap_rate_polyadic}
\end{align}
\item[(ii).] \red{Assume that} Conditions (\ref{eq:condition1polyadic_poly}) and (\ref{eq:condition2polyadic}) hold, and \red{further} there exist constants $C_1$ and $\zeta \in (2/q,1)$ such that 
\begin{equation}
\frac{\overline \sigma^2 D_{n}^2 \log^{5} (pn) }{n} \bigvee \left ( \frac{D_{n}^2  \log^3 p}{n^{1-4/q}} \right)^2 \le C_1 n^{-\zeta}.  \label{eq:bootstrap_rate_polyadic_poly}
\end{equation}
\end{enumerate}
Then, under Case (i), for any $\nu \in (1/\zeta,\infty)$, there exists a constant $C$ depending only on $\nu, \underline{\sigma}, K$, and $C_1$ such that
$
\sup_{R\in \calR}\left|\Prob_{|\bX_{I_{n,K}}}(\sqrt{n} \bS_n^{MB}\in R)-\gamma_{\Sigma}(R) \right|\le C n^{-(\zeta-1/\nu)/4}
$
with probability at least $1-Cn^{-1}$. Under Case (ii), the same conclusion holds with $n^{-(\zeta-1/\nu)/4}$ replaced by $n^{-(\zeta-2/q)/4}$, while the constant $C$ depends only on $q, \underline{\sigma}, K$, and $C_1$.
\end{theorem}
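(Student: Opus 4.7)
The plan is to exploit the fact that, conditional on the data $\bX_{I_{n,K}}$, the bootstrap statistic $\sqrt n\,\bS_n^{MB}$ is a centered Gaussian vector with covariance
\[
\hat\Sigma \;=\; \frac{1}{n}\sum_{j=1}^n (\hat{\bW}_j - K\bS_n)(\hat{\bW}_j - K\bS_n)^T \;=\; \frac{1}{n}\sum_{j=1}^n \hat{\bW}_j\hat{\bW}_j^T - K^2 \bS_n \bS_n^T,
\]
where the second equality uses the identity $n^{-1}\sum_{j=1}^n \hat{\bW}_j = K\bS_n$ (each $\bX_\i$ with $\i\in I_{n,K}$ appears exactly $K$ times in the double sum defining $\sum_j\hat\bW_j$, once for each coordinate $k$ with $i_k=j$). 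The problem therefore reduces to a Gaussian-to-Gaussian comparison: by the Chernozhukov-Chetverikov-Kato comparison inequality for rectangles (applicable since Condition (\ref{eq:condition2polyadic}) gives diagonal entries of $\Sigma$ bounded below by $\underline\sigma^2$),
\[
\sup_{R\in\calR}\bigl|\Prob(N(\bm{0},\hat\Sigma)\in R)-\gamma_\Sigma(R)\bigr|\;\lesssim\; \|\hat\Sigma - \Sigma\|_\infty^{1/2}\,(\log p)^{1/2},
\]
so it suffices to show that $\|\hat\Sigma-\Sigma\|_\infty$ decays polynomially in $n$ with probability at least $1-Cn^{-1}$, at the rate that produces the stated bound once squared and inflated by $\log p$.

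To control $\|\hat\Sigma-\Sigma\|_\infty$ I would use the expansion
\[
\hat\Sigma - \Sigma \;=\; \Big(\tfrac{1}{n}\sum_j \bW_j\bW_j^T - \Sigma\Big) + \tfrac{1}{n}\sum_j\bigl[(\hat\bW_j-\bW_j)\bW_j^T + \bW_j(\hat\bW_j-\bW_j)^T\bigr]+\tfrac{1}{n}\sum_j(\hat\bW_j-\bW_j)(\hat\bW_j-\bW_j)^T - K^2\bS_n\bS_n^T.
\]
Because $\bW_j=\sum_k h_k(U_j)$ are i.i.d.\ across $j$ with $\psi_1$-controllable (resp.\ $L^q$-controllable) entries, the first summand is handled in max-norm by a Bernstein-plus-union-bound argument after truncating each $W_j^\ell W_j^{\ell'}$ at level $n^{1/\nu}$ (resp.\ $n^{1/q}$); it is exactly this truncation that produces the $1/\nu$ (resp.\ $2/q$) loss in the final exponent of the theorem. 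The term $K^2\|\bS_n\bS_n^T\|_\infty \le K^2\|\bS_n\|_\infty^2$ is negligible, since the high-dimensional CLT of Theorem \ref{thm: high-d CLT polyadic} combined with Gaussian anti-concentration (or, more directly, the Hoeffding decomposition \eqref{eq: H-decomp polyadic} and a coordinatewise maximal inequality) gives $\|\bS_n\|_\infty = O_P(D_n\sqrt{\log(pn)/n})$, which squared is well within the rate condition.

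The hard part is controlling the cross and quadratic remainder terms, both of which are governed by the maximum $M:=\max_{1\le j\le n,\,1\le \ell\le p}|\hat W_j^\ell - W_j^\ell|$ of the ``leave-one-vertex-out'' estimator's error. Since $\E[\hat\bW_j\mid U_j]=\bW_j$ (each kernel component $h_k(U_j)$ is recovered by averaging $\bX_\i$ over the $\i$ with $i_k=j$), one may write
\[
\hat W_j^\ell - W_j^\ell \;=\; \frac{(n-K)!}{(n-1)!}\sum_{k=1}^{K}\sum_{\i\in I_{n,K}:\,i_k=j}\bigl(X_\i^\ell - h_k^\ell(U_j)\bigr),
\]
which, conditionally on $U_j$, is a centered jointly exchangeable sum of order $K-1$ in the remaining latent variables. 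Applying a conditional Hoeffding-type decomposition to this inner object and then invoking the maximal inequalities for degenerate components of jointly exchangeable arrays developed in Appendix \ref{sec:maximal_inequality_for_polyadic_data} (themselves proved by decoupling for $U$-statistics with index-dependent kernels together with induction on the order of degeneracy), followed by a union bound over $j=1,\dots,n$, yields $M = O_P(D_n\sqrt{\log(pn)/n})$ up to polylog factors. Cauchy-Schwarz then bounds the cross term in max-norm by $M\cdot\max_\ell(n^{-1}\sum_j(W_j^\ell)^2)^{1/2}$ and the quadratic remainder by $M^2$, both small enough under (\ref{eq:bootstrap_rate_polyadic}) (resp.\ (\ref{eq:bootstrap_rate_polyadic_poly})). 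Substituting the resulting bound on $\|\hat\Sigma-\Sigma\|_\infty$ back into the Gaussian comparison inequality delivers the claimed rate $n^{-(\zeta-1/\nu)/4}$ (resp.\ $n^{-(\zeta-2/q)/4}$); the high-probability statement follows by calibrating each tail event in the truncation/Bernstein/maximal-inequality steps to have probability $O(n^{-1})$.
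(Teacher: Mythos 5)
Your skeleton is correct and matches the paper: write $\sqrt n\,\bS_n^{MB}\mid\bX_{I_{n,K}}\sim N(\bm 0,\hat\Sigma)$, reduce to bounding $\|\hat\Sigma-\Sigma\|_\infty\log^2 p$ via the Gaussian comparison lemma, expand $\hat\Sigma-\Sigma$ into an i.i.d.\ empirical-covariance term, a cross term, a quadratic leave-one-out-error term, and $K^2\bS_n\bS_n^T$, and control the leave-one-out error $\hat\bW_j-\bW_j$ by conditioning on $U_j$ and applying the joint-exchangeability maximal inequalities. However, the key step has a genuine gap: you control $M:=\max_{j,\ell}|\hat W_j^\ell-W_j^\ell|$ by a union bound over $j$ and then bound the cross/quadratic terms by $M\cdot\overline\sigma$ and $M^2$. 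The paper instead controls the \emph{average} of squares $\hat\Delta_{W,1}=\max_\ell n^{-1}\sum_j(\hat W_j^\ell-W_j^\ell)^2$, noting $\E[\hat\Delta_{W,1}^\nu]\le\E[\|\hat\bW_1-\bW_1\|_\infty^{2\nu}]$ by Jensen — no union bound over $j$ at all — and then applies Markov at power $\nu$. Your $M$-based route is lossy: under sub-exponential tails the union bound over $j$ costs extra $\mathrm{polylog}(n)$ factors that exceed the $\log^2 n$ budget in \eqref{eq:bootstrap_rate_polyadic}, and under polynomial moments (Case (ii), $q>4$) the union bound costs a \emph{polynomial} factor of order $n^{2/q}$, so your claimed $M=O_P(D_n\sqrt{\log(pn)/n})$ simply does not hold, and the resulting bound on $\hat\Sigma-\Sigma$ misses the rate needed under \eqref{eq:bootstrap_rate_polyadic_poly} by a factor $n^{2/q}$.

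Relatedly, you misattribute the source of the $1/\nu$ (resp.\ $2/q$) exponent loss. It does \emph{not} come from truncating $W_j^\ell W_j^{\ell'}$ in the i.i.d.\ term $n^{-1}\sum_j\bW_j\bW_j^T-\Sigma$: that term is handled by a maximal inequality plus a Fuk--Nagaev/concentration bound (Lemma 8 of \cite{CCK2015PTRF} and Lemma E.2 of \cite{CCK2017AoP}) and contributes $n^{-\zeta/2}$ with no $\nu$ or $q$ dependence. The $1/\nu$ (resp.\ $2/q$) loss arises entirely from applying Markov's inequality at power $\nu$ (resp.\ $q/2$) to $\hat\Delta_{W,1}$ in order to convert its moment bound into a probability-$1-Cn^{-1}$ tail bound. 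Fixing your proof requires replacing $M$ by $\hat\Delta_{W,1}$ throughout: bound the cross term via Cauchy--Schwarz as $\hat\Delta_{W,1}^{1/2}(\overline\sigma^2+\hat\Delta_{W,2})^{1/2}$ and the quadratic term by $\hat\Delta_{W,1}$, and control $\hat\Delta_{W,1}$ via the moment bound $\E[\|\hat\bW_1-\bW_1\|_\infty^{2\nu}]\lesssim(n^{-1}D_n^2\log^3p)^\nu$ followed by Markov.
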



\begin{remark}[Discussion on Conditions (\ref{eq:bootstrap_rate_polyadic}) and (\ref{eq:bootstrap_rate_polyadic_poly})]\label{remark:discussion_rates_polyadic}
Similar to Remark \ref{remark:discussion_polynomial_rates}, if one is interested only in bootstrap consistency,
	Conditions (\ref{eq:bootstrap_rate_polyadic}) and (\ref{eq:bootstrap_rate_polyadic_poly}) can be weakened to $(\overline{\sigma}^2D_{n}^{2} \log^{7}p) \vee (D_{n}^2\log^5 (pn)) = o(n)$ and $(n^{-1}\overline{\sigma}^2D_{n}^{2} \log^{5}(pn)) \vee (n^{-(1-2/q)}D_{n}^{2}\log^{3}p) = o(1)$, respectively. 
\end{remark}

\section{Applications}\label{sec:applications}
In this section, we illustrate a couple of applications of our bootstrap methods. 
Section \ref{sec:density} is concerned with construction of confidence bands for densities of flows in dyadic data.
Section  \ref{sec:multiway_lasso} is concerned with  penalty choice for the Lasso and the performance of the corresponding estimate.

\subsection{Confidence bands for densities of flows in dyadic data}\label{sec:density}

Researchers are often interested in ``the densities of migration across states, trade across nations, liabilities across banks, or minutes of telephone conversation among individuals'' \citep{Graham2019kernel}.
Densities of these flow measures use dyadic data.
We illustrate an application of our method in Section \ref{sec:polyadic_data} to constructing  confidence bands for such density functions. We refer the reader to \cite{bickel1973, claeskens2003, cck2014density} as references on confidence bands for density estimation with i.i.d. data. 

Following \cite{Graham2019kernel}, suppose that we observe the dyadic data $\{Y_{ij} : 1 \le i \ne j \le n \}$ that admits the structure
\begin{align}
Y_{ij}=\mathfrak g (U_i,U_j,U_{\{i,j\}}) \label{eq:density_DGP}
\end{align}
where $\mathfrak{g}$ is symmetric in the first two arguments and hence $Y_{ij}=Y_{ji}$. 
We are interested in inference on the density of $Y_{ij}$. However, in certain empirical applications, such as international trade \citep[see][]{head2014gravity}, a proportion of the variable of interest is zero. 
Hence we assume that 
$Y_{ij}$ has a probability mass at zero, i.e. 
$Y_{ij}$ is such that $\Prob(Y_{ij}\ne 0)=a\in(0,1]$, and $Y_{ij}\sim f$ when $Y_{ij}\ne 0$, where $f$ is a density function on $\R$. Let $b(y)=af(y)$ denote the scaled density.
We may estimate $f(\cdot) = b(\cdot)/a$ by $\hat f(\cdot) =\hat b(\cdot)/\hat a$, where
$
\hat a={n\choose 2}^{-1}\sum_{1\le i<j\le n} \1(Y_{ij}\ne 0)$ and $\hat b(y)={n\choose 2}^{-1}\sum_{1\le i<j\le n} K_h(y- Y_{ij})\1(Y_{ij}\ne 0).
$
Here $K:\R\to \R$ is a kernel function (a function that integrates to one), $K_h(\cdot):=h^{-1} K(\cdot/h)$,  and  $h=h_n\to 0$ is a bandwidth.

We consider to construct simultaneous confidence intervals (bands) for $f$ over the set of design points $y_1,\dots,y_p$, where $p=p_n \to \infty$ is allowed. 
	Define
	\begin{align*}
	 \tilde X_{ij}^\ell=\left\{ \frac{K_h(y_\ell-Y_{ij})}{\hat a} -\frac{ \hat b(y_\ell)}{\hat a^2}  \right\}\1(Y_{ij}\ne 0), \quad 1 \le i < j \le n, \quad \tilde{X}_{ij}^{\ell} = \tilde{X}_{ji}^{\ell}, \quad 1 \le j  < i \le n, 
	\end{align*}
	for $\ell=1,\dots,p$. 
Then, the multiplier bootstrap statistic is given by
\begin{align*}
&\tilde \bS_n^{MB}=\frac{1}{n}\sum_{i=1}^n \xi_i (\tilde \bW_i-2\tilde \bS_n),  \text{with} \ 
 \tilde \bS_n=\frac{1}{n(n-1)}\sum_{1\le i\ne j \le n}\tilde \bX_{ij}  \ \text{and} \  \tilde \bW_i=\frac{1}{n-1}\sum_{j \ne i} 2\tilde \bX_{ij},
\end{align*}
where $\sum_{j \ne i} = \sum_{j \in \{1,\dots,n\} \setminus \{ i \}}$. 
For a given $\alpha\in (0,1)$, consider the $(1-\alpha)$-simultaneous confidence intervals defined by 
\begin{align*}
\mathcal I(1-\alpha):=\prod_{\ell=1}^p \left[\hat f (y_\ell)\pm \frac{ \tilde c(1-\alpha)}{\sqrt{n}}\right] \quad \text{and} \quad \mathcal I^N(1-\alpha):=\prod_{\ell=1}^p \left[\hat f (y_\ell)\pm \frac{\tilde \sigma_\ell \tilde c^N(1-\alpha)}{\sqrt{n}}\right],
\end{align*} 
where
$\tilde\sigma_\ell^2 = n^{-1}\sum_{i=1}^n( \tilde W_i^\ell - 2\tilde S_n^\ell )^2
$, $\tilde\Lambda=\diag(\tilde\sigma_1^2,\dots,\tilde \sigma_p^2)$, $\tilde c(1-\alpha)$ is the conditional $(1-\alpha)$-quantile of $\|\sqrt{n}\tilde\bS_n^{MB}\|_\infty$, and $\tilde c^N(1-\alpha)$ is the conditional $(1-\alpha)$-quantile of $\|\sqrt{n}\hat\Lambda^{-1/2}\tilde\bS_n^{MB}\|_\infty$. The first method  $\mathcal I(1-\alpha)$ is a constant-length confidence band, while the second method $\mathcal I^N(1-\alpha)$ is a variable-length confidence band based on Studentization.


The following proposition establishes asymptotic validity of the confidence bands.  
We will assume that there exists a conditional density of $Y_{ij}$ given $U_i$ and $Y_{ij} \ne 0$, denoted by $f_{Y_{12} \mid U_1, Y_{12} \ne 0}(y \mid u)$ (more formally, we assume that the conditional distribution of $Y_{ij}$ given $U_i$ is $\Prob (Y_{ij} \in dy \mid U_i) = \Prob (Y_{ij} = 0 \mid U_i) \delta_{0}(dy) + \Prob (Y_{ij} \ne 0 \mid U_i) f_{Y_{12} \mid U_{1},Y_{12} \ne 0} (y \mid U_i) dy$, where $\delta_{0}$ is the Dirac delta at $0$). 
Let $\overline f_h(y)=\int K_h(y-z)f(z)dz $ and $\overline f_h (y\mid u)=\int K_{h}(y-z) f_{Y_{12}\mid U_1, Y_{12}\ne 0}(z \mid u)dz$ denote the surrogate density and conditional density, respectively. Recall that a kernel $K$ is an $r$-th order kernel for some $r \ge 2$ if $\int y^{t} K(y) dy = 0$ for $t =1,\dots,r-1$ and $\int |y^r K(y)| dy < \infty$. Let $M$, $h_0$, $\sigma_0$, and $a\in(0,1]$ be given positive constants independent of $n$.
\begin{proposition}\label{proposition:kernel_density_mixture}
	Suppose that: (i) the data is generated following Equation (\ref{eq:density_DGP}) with point mass at zero, $\Prob(Y_{ij}\ne 0)=a$ and $Y_{ij}\sim f$ \red{with probability $a$}; (ii) $\| f \|_{\infty} \le M$ and  $\sup_{y\in \R, u\in [0,1]}|f_{Y_{12}\mid U_1, Y_{12}\ne 0}(y \mid u)|\le M$;  
	(iii) for the set of non-zero design points $\{y_1,\dots,y_p\} \subset \R$ and $h\le h_0$, \\
	$\Var\left(\overline f_h(y_\ell\mid U_1) \cdot \Prob(Y_{12}\ne 0\mid U_1)\right) \ge \sigma_0^2 $;
	 (iv) the kernel $K$ is a bounded $r$-th order kernel for some $r\ge 2$; (v) the bandwidth satisfies $h\to 0, nh^2 \to \infty$ as $n\to \infty$ and $\log^7 (pn)=o(nh^2)$. Then we have
	\begin{align*}
	\Prob\left(\left(\overline f_h (y_\ell)\right)_{\ell=1}^p\in \mathcal I(1-\alpha)\right) \to (1-\alpha) \quad \text{and} \quad  	\Prob\left(\left (\overline f_h (y_\ell)\right)_{\ell=1}^p\in \mathcal I^N(1-\alpha)\right) \to (1-\alpha).
	\end{align*}
	In addition, if $f$ is $r$-continuously differentiable, $\|f^{(r)}\|_\infty<\infty$, and  $nh^{2r}\log p =o(1)$, then 
	\begin{align*}
		\Prob\left(\left ( f (y_\ell)\right)_{\ell=1}^p\in \mathcal I(1-\alpha) \right)\to (1-\alpha)  \quad \text{and} \quad \Prob\left(\left (f (y_\ell)\right)_{\ell=1}^p\in \mathcal I^N(1-\alpha) \right)\to (1-\alpha).
	\end{align*}
\end{proposition}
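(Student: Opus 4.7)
The plan is to reduce the problem, via a delta-method linearization of the ratio $\hat f(y_\ell) = \hat b(y_\ell)/\hat a$, to a direct application of Theorems \ref{thm: high-d CLT polyadic} and \ref{thm:bootstrap validity polyadic} with $K=2$. Introduce the infeasible oracle influence functions
\[
X_{ij}^{\ell} = \frac{1}{a}\bigl\{K_h(y_\ell - Y_{ij}) - \bar f_h(y_\ell)\bigr\}\1(Y_{ij}\ne 0),
\]
which are centered, symmetric in $(i,j)$, and jointly exchangeable. A first-order Taylor expansion of the ratio yields
\[
\sqrt n \bigl(\hat f(y_\ell) - \bar f_h(y_\ell)\bigr) = \sqrt n \bS_n^{\ell} + R_n^\ell, \qquad \bS_n^\ell := \binom{n}{2}^{-1}\sum_{i<j} X_{ij}^\ell,
\]
with $R_n^\ell$ a quadratic remainder in $(\hat a - a,\ \hat b(y_\ell) - \bar b_h(y_\ell))$. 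Using the dyadic maximal inequality for degenerate Hoeffding components (Appendix \ref{sec:maximal_inequality_for_polyadic_data}) together with the bound $\|K_h\|_\infty \le \|K\|_\infty/h$, one shows $\max_\ell |R_n^\ell| = o_P(1/\sqrt{\log p})$ under the bandwidth condition $\log^7(pn) = o(nh^2)$.

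Applying Theorem \ref{thm: high-d CLT polyadic} to $\bS_n = (\bS_n^\ell)_{\ell=1}^p$ is then direct. Boundedness of $K$ gives $\|X_{ij}^\ell\|_{\psi_1} \lesssim 1/h$, so Condition (\ref{eq:condition1polyadic}) holds with $D_n \asymp 1/h$, and a direct computation yields $\bW_1^\ell = (2/a)\,\Prob(Y_{12}\ne 0\mid U_1)\bigl(\bar f_h(y_\ell\mid U_1) - \bar f_h(y_\ell)\bigr)$, from which condition (iii) delivers the variance lower bound $\min_\ell \E[(W_1^\ell)^2] \ge \underline\sigma^2$ in Condition (\ref{eq:condition2polyadic}); the upper moment bounds follow from $|W_1^\ell| \lesssim 1/h = D_n$. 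Under the consistency version of Theorem \ref{thm:bootstrap validity polyadic} from Remark \ref{remark:discussion_rates_polyadic}, the rate requirement reduces to $(\overline\sigma^2 D_n^2 \log^7 p) \vee (D_n^2\log^5(pn)) = o(n)$, which in our setting simplifies exactly to assumption (v).

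The crux of the argument is lifting the \emph{oracle} bootstrap validity for $X_{ij}^\ell$ to the feasible bootstrap built from the plug-ins $\hat a, \hat b$. Decomposing $\tilde\bS_n^{MB} = \bS_n^{MB,\mathrm{oracle}} + \Delta_n$, the error $\Delta_n$ consists of linear-in-$\xi$ averages multiplied by $(\hat a - a)$ and $(\hat b(y_\ell) - \bar b_h(y_\ell))$. Dyadic sample-mean concentration gives $|\hat a - a| = O_P(n^{-1/2})$ and $\max_\ell|\hat b(y_\ell) - \bar b_h(y_\ell)| = O_P(\sqrt{\log p/(nh)})$; combined with a maximal inequality for Gaussian sums (conditionally on the data) and the Gaussian anti-concentration bound over rectangles, $\max_\ell |\sqrt n \Delta_n^\ell| = o_P(1/\sqrt{\log p})$. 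This transfers the Kolmogorov bound from the oracle bootstrap to $\tilde\bS_n^{MB}$ and, together with step one and anti-concentration of $\gamma_\Sigma$, yields coverage of $(\bar f_h(y_\ell))_\ell$ by $\mathcal I(1-\alpha)$. The Studentized band $\mathcal I^N(1-\alpha)$ additionally requires $\max_\ell|\tilde\sigma_\ell^2 - \E[(W_1^\ell)^2]| = o_P(1/\log p)$, which follows from the same dyadic concentration bounds applied to $\hat\bW_i^\ell$.

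Finally, the bias claim is standard: the $r$-th-order kernel together with $\|f^{(r)}\|_\infty < \infty$ yields $\max_\ell |\bar f_h(y_\ell) - f(y_\ell)| = O(h^r)$, and $nh^{2r}\log p = o(1)$ gives $\sqrt n h^r\sqrt{\log p} = o(1)$, so the same bands cover $(f(y_\ell))_\ell$ at the nominal level. The main obstacle I anticipate is the uniform plug-in equivalence in the third step: because $D_n \asymp 1/h$ diverges, each remainder must be controlled at the stringent $1/\sqrt{\log p}$ scale rather than at the much weaker pointwise $n^{-1/2}$ scale, which forces the dyadic Hoeffding-type maximal inequalities from the appendix to be interlaced carefully with the Gaussian comparison and anti-concentration estimates.
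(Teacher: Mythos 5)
Your proposal follows essentially the same route as the paper's own proof: linearize the ratio $\hat f = \hat b/\hat a$ around $(a,\bar b_h)$ to obtain the same oracle influence function $X_{ij}^\ell$ (your expression, after substituting $\bar b_h = a\bar f_h$, is identical to the paper's $\tilde X_{ij}^\ell$ with $a,\bar b_h$ in place of $\hat a,\hat b$), compute the same H\'ajek projection $W_1^\ell$, verify Conditions~(\ref{eq:condition1polyadic})--(\ref{eq:condition2polyadic}) with $D_n\asymp 1/h$, and invoke Theorems~\ref{thm: high-d CLT polyadic} and~\ref{thm:bootstrap validity polyadic} under the consistency version of Remark~\ref{remark:discussion_rates_polyadic}, treating the bias by Nazarov's inequality at the end.

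The only substantive difference is in the feasibility transfer. You propose to bound $\max_\ell|\sqrt n\,\Delta_n^\ell|$ directly (the sup-norm gap between the feasible and oracle bootstrap statistics, conditionally on data) and then invoke Gaussian anti-concentration over rectangles. The paper instead exploits that both bootstrap statistics are conditionally Gaussian and compares their conditional covariance matrices, showing $\|\tilde\Sigma-\hat\Sigma\|_\infty=o_P((\log p)^{-2})$ and applying the Gaussian comparison lemma (Lemma~\ref{lem:Gaussian_comparison}). These are logically equivalent routes; the covariance comparison avoids taking a supremum over $\xi$ and is slightly cleaner to make rigorous, but your approach also works with the same conditional decomposition of $\tilde\bW_i-\hat\bW_i$ into terms linear in $(\hat a - a)$ and $(\hat b(y_\ell)-\bar b_h(y_\ell))$. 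Two small imprecisions that do not affect the conclusion: (i) the rate $\max_\ell|\hat b(y_\ell)-\bar b_h(y_\ell)|=O_P(\sqrt{\log p/(nh)})$ you quote is a valid but loose bound; the Hoeffding decomposition actually gives $O_P(\sqrt{\log p/n})$ because the H\'ajek projection of $K_h(y_\ell-Y_{ij})\1(Y_{ij}\neq 0)$ is uniformly bounded (the degenerate component carries the $1/h$ factor and is $O_P((nh)^{-1}\log p)$, which is lower order under $nh^2\to\infty$); (ii) $|W_1^\ell|$ is in fact $O(1)$, not $O(1/h)$, for the same reason — the extra factor of $D_n$ enters only through the degenerate kernel and its tails, not through $W_1$. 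Neither slip invalidates your argument, since your weaker bounds still verify the rate conditions under assumption~(v).
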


Some comments on the proposition are in order. 
\begin{remark}
(i) The assumption that $\mathfrak g$ in  (\ref{eq:density_DGP}) is symmetric in its first two arguments can in fact be relaxed. In such case, the conclusions in Proposition \ref{proposition:kernel_density_mixture} continue to hold under a few minor modifications to the regularity conditions. 
Also, when $a=1$ and $r=2$, the proposed dyadic kernel density estimator reduces to the estimator of \cite{GrahamNiuPowell2020kernel}. The proposition complements \cite{GrahamNiuPowell2020kernel} by providing valid simultaneous confidence intervals for their dyadic kernel density estimator.
(ii) In some applications, such as in our empirical illustration in Section \ref{sec:empirical}, the object of interest is $b(\cdot)$. For such case, one can simply omit the estimation of $a$ by setting $\hat a=1$ while keeping $\hat b(\cdot ) $ unaltered. The conclusions in Proposition \ref{proposition:kernel_density_mixture} continue to hold with this modification. 
(iii) 	The proof of Proposition \ref{proposition:kernel_density_mixture} does not follow directly from the results of Section \ref{sec:polyadic_data}, as we have to handle the estimation errors of $\hat a$ and $\hat b(\cdot)$, which involves additional substantial work.
\end{remark}
\subsection{Penalty choice for Lasso under separate exchangeability}\label{sec:multiway_lasso}

Consider a regression model
\begin{align*}
Y_{\bm{i}}= f(\bm{Z}_{\bm{i}}) +\varepsilon_{\bm{i}},  \quad \E[ \varepsilon_{\bm{i}} \mid \bm{Z}_{\bm{i}}] = 0,  \quad \bm{i} = (i_1,\dots,i_K) \in [\bm{N}] = \prod_{k=1}^{K} \{ 1,\dots,N_k \},
\end{align*} 
where $Y_{\bm{i}}$ is a scalar outcome variable, $\bm{Z}_{\bm{i}}\in \R^d$ is a $d$-dimensional vector of covariates, $f:\R^d \to \R $ is an unknown regression function of interest, and $\varepsilon_{\bm{i}}$ is an error term.
We approximate $f$ by a linear combination of technical controls $\bm{X}_{\bm{i}}=P(\bm{Z}_{\bm{i}})$ for some transformation $P:\R^d \to \R^p$, i.e., $f(\bm{Z}_{\bm{i}})=\bm{X}_{\bm{i}}^T\beta_0 + r_{\bm{i}},\, \bm{i}\in [\bm{N}]$, 
where $r_{\bm{i}}$ is a bias term. The dimension $p$ can be much larger than the cluster sizes $\bN$, but we assume that the vector $\beta_0\in \R^p$ is sparse in the sense that $\|\beta_0\|_{0}=s\ll n$ with $n = \min_{1 \le k \le K} N_k$. 
Suppose that the array $\big ( (Y_{\bm{i}},\bm{Z}_{\bm{i}}^T)^T \big)_{\i \in \N^{K}}$ is separately exchangeable and generated as
\[
(Y_{\bm{i}},\bm{Z}_{\bm{i}}^T)^T= \mathfrak{g} ((U_{\bm{i} \odot \bm{e}})_{\bm{e} \in \{ 0,1 \}^{K} \setminus \{ \bm {0} \}}), \ \i \in \N^{K}, \quad \{ U_{\bm{i} \odot \bm{e}} : \bm{i} \in \N^{K}, \bm{e} \in \{ 0,1 \}^{K} \setminus \{ \bm {0} \} \} \stackrel{i.i.d.}{\sim} U[0,1],
\]
for some Borel measurable map $\mathfrak{g}: [0,1]^{2^{K}-1} \to \R^{1+d}$. 

Arguably, one of the most popular estimation methods for such a high-dimensional regression problem is the Lasso  \citep{tibshirani1996}; we refer to \cite{vandegeer2011, giraud2015, wainwright2019} as standard references on high-dimensional statistics. 
Let $N=\prod_{k=1}^K N_k$ denote the total sample size. 
The Lasso estimate for $\beta_0$ is defined by
\[
\hat \beta^\lambda=\argmin_{\beta\in \R^p}\left\{ \frac{1}{N}\sum_{\bm{i}\in [\bm{N}]}(Y_{\bm{i}}-\bm{X}_{\bm{i}}^T\beta)^2+ \lambda\|\beta\|_1\right\},
\]
where $\lambda>0$ is a penalty level. We estimate the vector $\bm{f} = (f_{\i})_{\i \in [\bN]} = (f(\bm{Z}_{\i}))_{\i \in [\bN]}$ by $\hat{\bm{f}}^{\lambda} = (\bX_{\i}^{T}\hat{\beta}^{\lambda})_{\i \in [\bN]}$. Let $\| \bm{t} \|_{N,2}^2=N^{-1}\sum_{\bm{i}\in [\bm{N}]} t_{\bm{i}}^2$ for $\bm{t} = (t_{\bm{i}})_{\bm{i} \in [ \bm{N} ]}$.

In what follows, we discuss the statistical performance of the Lasso estimate. 
Following \cite{bickel2009}, we say that Condition RE$(s,c_0)$ holds (RE refers to ``restricted eigenvalue'') if, for a given positive constant $c_0\ge 1$, the inequality
\[
\kappa(s,c_0)=\min_{\substack{J\subset\{1,\dots,p\}\\1\le |J|\le s}}\inf_{\substack{\theta\in \R^p,\,\theta\ne 0 \\\|\theta_{J^c}\|_1\le c_0\|\theta_{J}\|_1}}  \frac{\sqrt{s N^{-1}\sum_{\bm{i}\in [\bm{N}]}(\theta^T\bm{X}_{\bm{i}})^2}}{\|\theta_{J}\|_1}>0
\]
holds with $J^c= \{1,\dots,p\}\setminus J$. Here for $\theta = (\theta_1,\dots,\theta_p)^{T}$ and $J \subset \{1,\dots,p \}$, $\theta_{J} = (\theta_j)_{j \in J}$. 

In addition, to guarantee fast rates for the Lasso, it is important to choose the penalty level $\lambda$ in such a way that $\lambda\ge 2c\| \bm{S}_{\bN} \|_\infty$ with $\bm{S}_{\bm{N}}=N^{-1}\sum_{\bm{i}\in [\bm{N}]}\varepsilon_{\i} \bX_{\i}$ for some $c > 1$ \citep{bickel2009,BC2013}. 
To this end, we shall estimate  the $(1-\eta)$-quantile of $2c \|\bm{S}_{\bN}\|_{\infty}$ for some small $\eta>0$.
We first estimate the error terms $\varepsilon_{\i}$ by pre-estimating $\beta_0$ by the preliminary Lasso estimate $\tilde{\beta} = \hat{\beta}^{\lambda_{0}}$ with penalty $
\lambda^0 =\tau_{n} (n^{-1}\log p)^{1/2}$ for some slowing growing sequence $\tau_{n} \to \infty$. In the following, we take $\tau_n = \log n$ for the sake of simplicity but other choices also work. 
We apply the multiplier bootstrap to
$\tilde{\bm{S}}_{\bN}=N^{-1}\sum_{\i \in [\bN]}\tilde\varepsilon_\i \bX_{\i}$ instead of $\bm{S}_{\bN}$. 

The H\'{a}jek projection to $\bm{S}_{\bN}$ is given by $\sum_{k=1}^{K} N_{k}^{-1}\sum_{k=1}^{N_{k}}\bm{V}_{k,i_{k}}$, where $\bm{V}_{k,i_{k}}$ is given by $\bV_{k,i_k}=\E[ \varepsilon_{(1,\dots,1,i_k,1,\dots,1)}\bX_{(1,\dots,1,i_k,1,\dots,1)}\mid U_{(0,\dots,0,i_k,0,\dots,0)}]$. 
We estimate $\bV_{k,i_k}$ by \\
$\tilde{\bm{V}}_{k,i_k}=\big(\prod_{k'\ne k} N_{k'} \big)^{-1}\sum_{i_1,\dots,i_{k-1},i_{k+1},\dots,i_K} \tilde\varepsilon_\i \bX_{\i}$. Let $\{ \xi_{1,i_{1}} \}_{i_1=1}^{N_{1}}, \dots, \{ \xi_{K,i_{K}} \}_{i_K=1}^{N_K}$ be i.i.d. $N(0,1)$ variables independent of the data, and consider
\begin{align*}
\Lambda_\bN^\xi = \left\|\sum_{k=1}^K\frac{1}{N_k} \sum_{i_k=1}^{N_k} \xi_{k,i_k} (\tilde{\bm{V}}_{k,i_k}-\tilde{\bm{S}}_{\bN})\right\|_\infty.
\end{align*}
We propose to choose $\lambda$ as $\lambda=\lambda(\eta)=2c\Lambda_{\bm{N}}^\xi(1-\eta)$, 
where $\Lambda_{\bm{N}}^\xi(1-\eta)$ denotes the conditional $(1-\eta)$-quantile of $\Lambda_{\bm{N}}^\xi$. 
We allow $\eta$ to decrease with $n$, i.e, $\eta = \eta_n \to 0$. 

The following proposition establishes the asymptotic validity of our choice of $\lambda$ (as $n \to \infty$) under separate exchangeability. In what follows, we understand that $s,p,\bN,\eta$ are functions of $n$ while other parameters such as $c, q, \underline{\kappa}$ are independent of $n$.

\begin{proposition}[Penalty choice for the Lasso under separate exchangeability]\label{proposition:lasso_MB_penalty}
Suppose that: (i) there exist some constants $q \in [4,\infty)$ independent of $n$ and $D_{\bN}$ that may depend on $\bN$ (and thus on $n$) such that $\E[|\varepsilon_{\bm{1}}|^{2q} ]\vee\E[\|\bX_{\bm{1}}\|_{\infty}^{2q} ]\le D_\bN^{q}$ and $ \max_{1\le j\le p}\max_{1\le k\le K}\E[|V_{k,1}^j|^{2+\ell}]\le D_\bN^\ell$ for $\ell=1,2$; (ii) $\E[|V_{k,1}^{j}|^2]$ is bounded and bounded away from zero uniformly in $1 \le j \le p$ and $1 \le k \le K$; (iii) there exists a positive constant $\underline{\kappa}$ independent of $n$ such that $\kappa (s,c_0) \ge \underline{\kappa}$ with probability $1-o(1)$; (iv) as $n \to \infty$, $\| \bm{r} \|_{N,2} = O(\sqrt{(s \log p)/n})$ and $\frac{s \overline N^{1/q}D_\bN^3  \log^7 (p \overline N)}{n}
\bigvee \frac{D_\bN^2  \log^5  (pn)  }{n^{1-2/q}}=o(1)$. 
Then, we have $\lambda \ge 2c \| \bm{S}_{\bN} \|_{\infty}$ with probability $1-\eta - o(1)$. Further, 
we have 
$
\| \hat{\bm{f}}^{\lambda} - \bm{f} \|_{N,2} = O_{P} \left( \sqrt{\frac{s\log  p}{n}} \bigvee \sqrt{\frac{s\log(1/\eta)}{n}} \right).
$
\end{proposition}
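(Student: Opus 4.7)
The proof combines validity of the data-dependent penalty $\lambda=2c\Lambda_{\bN}^{\xi}(1-\eta)$ with the standard oracle inequality for Lasso under the restricted eigenvalue condition. I plan to proceed in three steps.

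First, I would show that the ``oracle'' multiplier bootstrap threshold---the one obtained by using the true errors $\varepsilon_{\i}$ in place of $\tilde{\varepsilon}_{\i}$, denoted $\Lambda_{\bN}^{\xi,*}(1-\eta)$---dominates $2c\sqrt n\|\bS_{\bN}\|_{\infty}$ with probability at least $1-\eta-o(1)$. This follows from Theorem~\ref{thm: high-d CLT} applied to $Z_{\i}=\varepsilon_{\i}\bX_{\i}$, whose moments are controlled via Cauchy--Schwarz by the envelope bound $\E[|\varepsilon_{\bm{1}}|^{2q}]\vee\E[\|\bX_{\bm{1}}\|_\infty^{2q}]\le D_{\bN}^{q}$ in Assumption~(i), combined with Theorem~\ref{thm:bootsrap_validity} applied to the oracle bootstrap $\sum_{k} N_k^{-1}\sum_{i_k}\xi_{k,i_k}(\overline Z_{k,i_k}-\overline Z_{\bN})$; Gaussian anti-concentration over the rectangle class $\calR$ then transfers the Kolmogorov-distance bounds into a quantile comparison. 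Assumption~(iv) makes the error bounds in both theorems $o(1)$, so up to negligible terms the oracle quantile approximates the target quantile of $\sqrt n\|\bS_{\bN}\|_{\infty}$.

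Second, I would bound the perturbation from replacing $\varepsilon_{\i}$ by $\tilde{\varepsilon}_{\i} = Y_{\i}-\bX_{\i}^{T}\tilde\beta$. Writing $\tilde\varepsilon_{\i}-\varepsilon_{\i}=\bX_{\i}^{T}(\beta_{0}-\tilde\beta)+r_{\i}$, the feasible-minus-oracle difference inside $\Lambda_{\bN}^{\xi}$ is, conditionally on the data, a centered Gaussian vector in $\R^{p}$ whose coordinatewise standard deviations are bounded, up to constants, by $\max_{\i\in [\bN]}\|\bX_{\i}\|_{\infty}\cdot (\|\bX(\tilde\beta-\beta_{0})\|_{N,2}+\|\bm r\|_{N,2})$ via Cauchy--Schwarz. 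A union bound over the $N=\prod_k N_k$ cells together with the $2q$-th moment bound on $\bX_{\bm{1}}$ gives $\max_{\i\in [\bN]}\|\bX_{\i}\|_{\infty}=O_{P}(\overline N^{1/q}D_{\bN}^{1/2})$. For $\tilde\beta$, the penalty $\lambda^{0}=\log n\sqrt{\log p/n}$ majorizes $2c\|\bS_{\bN}\|_{\infty}$ with probability $1-o(1)$ by Theorem~\ref{thm: high-d CLT} and standard Gaussian tail estimates, so the oracle inequality under RE$(s,c_{0})$ yields $\|\bX(\tilde\beta-\beta_{0})\|_{N,2}=O_{P}(\log n\sqrt{s\log p/n})$. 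The rate condition $s\overline N^{1/q}D_{\bN}^{3}\log^{7}(p\overline N)/n=o(1)$ in~(iv) is then exactly what is needed to make the Gaussian maximal inequality render the sup-norm perturbation $o_{P}(\sqrt{(\log p)/n})$, which is $o_{P}(\Lambda_{\bN}^{\xi,*}(1-\eta)/\sqrt n)$ because (ii) forces the oracle quantile to be at least of order $\sqrt{(\log p)/n}$.

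Finally, once $\lambda\ge 2c\|\bS_{\bN}\|_{\infty}$ with probability $1-\eta-o(1)$, I would invoke the textbook Lasso prediction-norm bound under RE$(s,c_{0})$ with $c_{0}=(c+1)/(c-1)$ (e.g.\ Theorem~7.1 of \cite{bickel2009}), which yields $\|\hat{\bm f}^{\lambda}-\bm f\|_{N,2}\le C(\sqrt s\,\lambda/\underline\kappa+\|\bm r\|_{N,2})$. Gaussian quantile estimates for the sup-norm of a centered $p$-dimensional Gaussian with uniformly bounded coordinate variances give $\lambda=O_{P}(\sqrt{(\log p+\log(1/\eta))/n})$, and combining with $\|\bm r\|_{N,2}=O(\sqrt{s\log p/n})$ from~(iv) produces the stated rate. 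The main obstacle is Step~2: carefully tracking how the preliminary-Lasso prediction error, multiplied by $\max_{\i}\|\bX_{\i}\|_{\infty}$, propagates through the multiplier-bootstrap perturbation, and in particular identifying the role of the factor $\overline N^{1/q}$ in~(iv). One also has to check that the Kolmogorov-distance bounds from Theorems~\ref{thm: high-d CLT}--\ref{thm:bootsrap_validity} are sharp enough relative to Gaussian anti-concentration rates to allow $\eta=\eta_n\to 0$.
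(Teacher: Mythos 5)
Your high-level architecture matches the paper: (a) show the oracle bootstrap quantile (using true $\varepsilon_{\i}$) dominates the target; (b) control the feasible-versus-oracle perturbation coming from the preliminary Lasso estimate $\tilde\beta$; (c) feed the resulting penalty into the standard RE-based Lasso prediction bound. Steps (a) and (c) are essentially identical to the paper's argument (the paper works via the covariance comparison $\|\tilde\Sigma-\hat\Sigma\|_\infty\log^2 p = o_P(1)$ together with Lemma~\ref{lem:Gaussian_comparison}, while you compare sup-norms of the coupled Gaussians directly and appeal to anti-concentration, but these are technically interchangeable).

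The genuine gap is in your Step~2. You bound the perturbation using the \emph{pointwise} envelope $\max_{\i\in[\bN]}\|\bX_{\i}\|_\infty$, which a union bound over the $N=\prod_k N_k$ cells gives as $O_P(N^{1/(2q)}D_{\bN}^{1/2})=O_P(\overline N^{K/(2q)}D_{\bN}^{1/2})$. After squaring, your Cauchy--Schwarz leads to a factor $\overline N^{2/q}D_{\bN}$ (for $K=2$), whereas the rate condition in Assumption~(iv) is calibrated to $\overline N^{1/q}D_{\bN}$ up to logs. Concretely, your route requires $s\overline N^{2/q}D_{\bN}\log^3(p\overline N)/n=o(1)$, which is \emph{not} implied by $s\overline N^{1/q}D_{\bN}^3\log^7(p\overline N)/n=o(1)$ whenever $\overline N^{1/q}\gg D_{\bN}^2\log^4(p\overline N)$. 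The paper avoids this loss by applying Cauchy--Schwarz only on the inner $\prod_{k'\ne k}N_{k'}$-average, isolating the quantity $\max_{j,i_k}(\prod_{k'\ne k}N_{k'})^{-1}\sum_{\i_{-k}}(X_{\i}^j)^2$ rather than $\max_{\i}\|\bX_{\i}\|_\infty^2$. That inner average, conditionally on $U_{(0,\dots,0,i_k,0,\dots,0)}$, is itself the sample mean of a lower-dimensional separately exchangeable array, so Corollary~\ref{cor: maximal inequality} (applied conditionally) shows it concentrates around $\E[(X_{(1,\dots,1)}^j)^2\mid U_{(0,\dots,i_k,\dots,0)}]$; combining this with a H\"older/Jensen step for the remaining $\max_{j,i_k}$ over conditional second moments yields the sharper $O_P(\overline N^{1/q}D_{\bN})$. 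Without this conditional maximal-inequality argument your perturbation estimate is off by a factor of $\overline N^{(K-1)/q}$ and the proof does not close under the stated hypotheses.

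One smaller point: you write that the preliminary Lasso yields $\|\bX(\tilde\beta-\beta_0)\|_{N,2}=O_P(\log n\sqrt{s\log p/n})$ from the choice $\lambda^0=(\log n)\sqrt{\log p/n}$; the paper invokes Theorem~1 of \cite{BC2013} to get $O_P(\sqrt{s\log^3(p\overline N)/n})$. Either form suffices here, but you should be explicit that this bound needs to hold on the event $\{\lambda^0\ge 2c\|\bS_{\bN}\|_\infty\}\cap\{\kappa(s,c_0)\ge\underline\kappa\}$, which holds with probability $1-o(1)$.
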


The proof of Proposition \ref{proposition:lasso_MB_penalty} does not follow directly from the results of Section \ref{sec:multiway_clustered_data}, as we have to take care of the estimation error of the preliminary Lasso estimate $\tilde{\beta}$, which requires  extra work. 

Condition (iii) in the preceding proposition is a high-level condition on the sample gram matrix. 
The following proposition provides primitive sufficient conditions for Condition (iii) to hold for the case of $K=2$. 

\begin{proposition}[RE condition under $K=2$]\label{proposition:lasso_K=2}
Consider $K=2$ and let $B_{\bN} = \sqrt{\E[\max_{\i \in [\bN]}\| \bX_{\i} \|_{\infty}^2]}$. 
Suppose that the eigenvalues of $\E[\bX_{\bm{1}}\bX_{\bm{1}}^T]$ are bounded and bounded away from zero, and $sB_{\bN}^2 \log^4 (p\overline{N}) = o(n)$. Then, there exists a positive constant $\underline{\kappa}$ independent of $n$ such that $\kappa (s,c_0) \ge \underline{\kappa}$ with probability $1-o(1)$. 
\end{proposition}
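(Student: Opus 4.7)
\emph{Plan.} My strategy is to transfer the RE condition from the population Gram matrix $\Sigma = \E[\bX_{\bm{1}}\bX_{\bm{1}}^T]$ to the sample Gram $\hat\Sigma_\bN := N^{-1}\sum_{\i\in[\bN]}\bX_\i\bX_\i^T$ via a sparse operator-norm perturbation. Since $\lambda_{\min}(\Sigma)$ is uniformly bounded below by some $\kappa_{\min}>0$, the population version of RE$(s,c_0)$ holds with constant $\sqrt{\kappa_{\min}}$: indeed, $\theta^T\Sigma\theta \ge \kappa_{\min}\|\theta_J\|_2^2$ and $\|\theta_J\|_1 \le \sqrt{s}\|\theta_J\|_2$. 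By a standard Rudelson--Zhou-type transfer lemma, it is enough to show that the $\tilde s$-sparse operator norm
$$\rho_{\tilde s}(\hat\Sigma_\bN - \Sigma) := \sup\bigl\{\bigl|v^T(\hat\Sigma_\bN - \Sigma)v\bigr| : \|v\|_2\le 1,\ \|v\|_0\le \tilde s\bigr\}$$
is $o_P(1)$ for some $\tilde s$ of order $s$ (with the absolute constant depending only on $c_0$).

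\emph{Hoeffding decomposition and bounds on each piece.} Apply Lemma \ref{lem: H decomposition} to the centered symmetric array $\bX_\i\bX_\i^T - \Sigma$ (viewed coordinatewise as a $p^2$-dimensional separately exchangeable array) to write
$\hat\Sigma_\bN - \Sigma = H_1 + H_2 + D$,
where $H_k = N_k^{-1}\sum_{i_k=1}^{N_k}(M_{i_k}^{(k)} - \Sigma)$ with $M_{i_k}^{(k)} := \E[\bX_\i\bX_\i^T \mid U_{(0,\dots,i_k,\dots,0)}]$ are the two H\'ajek projections and $D$ is the doubly-degenerate residual. For each H\'ajek component, $v^T H_k v$ is an average of $N_k \ge n$ i.i.d. terms, each bounded by $2 B_\bN^2$ in absolute value and with uniformly bounded variance thanks to the spectral bounds on $\Sigma$. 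A Bernstein-type inequality combined with an $\epsilon$-net over the set of $\tilde s$-sparse unit vectors (covering number at most $\binom{p}{\tilde s} 9^{\tilde s}$, giving a log-covering bound of order $\tilde s\log(p\overline N/\tilde s)$) produces
$$\rho_{\tilde s}(H_k) = O_P\!\left(\sqrt{\frac{\tilde s\, B_\bN^2 \log(p\overline N)}{n}} + \frac{\tilde s\, B_\bN^2 \log(p\overline N)}{n}\right).$$
For the degenerate piece $D$, I combine the maximal inequality for degenerate Hoeffding components established in the appendix of the paper with the same $\epsilon$-net over sparse unit spheres, obtaining $\rho_{\tilde s}(D) = O_P(\tilde s B_\bN^2 \log^{c_1}(p\overline N)/\sqrt{N_1 N_2})$ for some finite $c_1$.

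\emph{Conclusion and main obstacle.} Under the assumption $s B_\bN^2 \log^4(p\overline N) = o(n)$ and taking $\tilde s \asymp s$, all three quantities $\rho_{\tilde s}(H_1)$, $\rho_{\tilde s}(H_2)$, and $\rho_{\tilde s}(D)$ are $o_P(1)$; by the triangle inequality and the transfer lemma, $\kappa(s,c_0) \ge \sqrt{\kappa_{\min}/2} =: \underline\kappa$ with probability $1-o(1)$. The principal technical obstacle is promoting the paper's entry-wise maximal inequality for the degenerate Hoeffding component to a uniform control of the quadratic form $v^T D v$ over $\tilde s$-sparse unit vectors; this chaining/covering step is where the restriction $K=2$ is crucially used, since only a doubly-degenerate array admits the sharp symmetrization and decoupling bounds that keep the contribution of $D$ of smaller order than the H\'ajek pieces $H_1,H_2$.
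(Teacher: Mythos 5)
Your plan mirrors the paper's proof at the level of strategy: both Hoeffding-decompose $\hat\Sigma_\bN-\Sigma$ into two H\'ajek pieces and a doubly-degenerate residual (indeed this is why the result is stated for $K=2$), both control the sparse-restricted quadratic form by covering-number/chaining arguments, and both then convert sparse spectral control into a lower bound on $\kappa(s,c_0)$. The transfer devices differ only cosmetically---you invoke a Rudelson--Zhou style perturbation argument, while the paper uses Lemma 2.7 of Lecu\'e--Mendelson (which needs only $\phi_{\min}(s)$ and the quantity $\hat\rho=\max_\ell N^{-1}\sum_\i (X_\i^\ell)^2$, then rescales $s$); either route is fine.

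There are two concrete gaps in the way you execute the estimates. First, your Bernstein bound for $\rho_{\tilde s}(H_k)$ rests on the claim that each summand is ``bounded by $2B_\bN^2$ in absolute value,'' but $B_\bN^2=\E[\max_{\i\in[\bN]}\|\bX_\i\|_\infty^2]$ is a second \emph{moment}, not an almost-sure bound, and likewise the variance of $v^T(\bX_{1,1}\bX_{1,1}^T-\Sigma)v$ is a fourth-moment quantity that the spectral bounds on $\Sigma$ do not control. So Bernstein plus a union bound over the net cannot be applied as stated; a truncation argument would be needed, or (as the paper does, via Lemma P.1 of \cite{BCCW2018}) one should stay at the level of moment bounds obtained by symmetrization and chaining, using Cauchy--Schwarz to pair the random envelope $M=\max_\i\|\bX_\i\|_\infty$ with the random Rademacher-process diameter $R$, so that only $\E[M^2]=B_\bN^2$ and $\E[R^2]$ appear.

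Second, your treatment of the degenerate piece $D$ via the entry-wise maximal inequality (Theorem~\ref{thm: maximal inequality}/Corollary~\ref{cor: maximal inequality} applied to the $p^2$-dimensional array) plus the crude step $|v^TDv|\le\tilde s\|D\|_\infty$ introduces the quantity $\sqrt{\E[\|\bX_{\bm 1}\bX_{\bm 1}^T\|_\infty^2]}=\sqrt{\E[\|\bX_{\bm 1}\|_\infty^4]}$, which is not comparable to $B_\bN^2$ in general; your bound $\rho_{\tilde s}(D)=O_P(\tilde s B_\bN^2\log^{c_1}(p\overline N)/\sqrt{N_1N_2})$ is therefore not justified from that route. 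The paper's Lemma~\ref{lem:sparse_eigenvalues} (Step~2) instead applies the exchangeable-array symmetrization (Lemma~\ref{lem: symmetrization}) directly to the supremum of the quadratic form over the sparse unit sphere $\Theta_s$, bounds the conditional Rademacher chaos of degree two by a Dudley entropy integral in the data-dependent metric $\rho_{\bX}$, and then uses Rudelson--Vershynin covering bounds together with $\E[MR_2]\le B_\bN\sqrt{\E[R_2^2]}$ to keep everything in terms of $B_\bN$. This is precisely the ``promotion from entry-wise to sparse operator norm'' that you flag as the main obstacle; the fix is to chain on the quadratic form itself rather than to multiply an $\ell^\infty$ bound by $\tilde s$. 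Once these two points are repaired (which the paper's Lemma~\ref{lem:sparse_eigenvalues} does), your conclusion follows under $sB_\bN^2\log^4(p\overline N)=o(n)$ exactly as you outline.
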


Under Condition (i) of Proposition \ref{proposition:lasso_MB_penalty}, $B_{\bN} \le \overline{N}^{1/q}D_{\bN}$, so that $sB_{\bN}^2 \log^4 (p\overline{N}) = o(n)$ reduces to $s\overline{N}^{1/q}D_{\bN}\log^{4}(p\overline{N}) = o(n)$, which is implied by Condition (iv) of Proposition \ref{proposition:lasso_MB_penalty}. 


%

\section{Simulation studies}\label{sec:simulation}

In this section, we present simulation studies to evaluate the finite sample performance of the proposed multiplier bootstrap methods.

We first describe the simulation design for separately exchangeable arrays.
With $\Sigma_{\bm{Z}}$ denoting the $p \times p$ covariance matrix consisting of elements of the form $4^{-\abs{r-c}}$ in its $(r,c)$-th position, separately exchangeable data with $K=2$ indices are generated according to
$
\bX_{\i}
=
\frac{1}{4} \left( \bm{Z}_{(i_1,0)} + \bm{Z}_{(0,i_2)} \right) + \frac{1}{2} \bm{Z}_{(i_1,i_2)},
$
where 
$
\bm{Z}_{\i \odot \e} \sim B N(\bm{0}, \Sigma_{\bm{Z}}) + (1-B)N(\bm{0}, 2\Sigma_{\bm{Z}})
$
and
$
B \sim \text{Bernoulli}(0.5)
$
independently for $\i \in \{(i_1,i_2) \in \N^2: 1 \le i_1 \le N_1, 1 \le i_2 \le N_2\}$ and $\e \in \{0,1\}^2$.
For this data generating design, we run 2,500 Monte Carlo iterations to compute the uniform coverage frequencies of $\E[\bX_{\i}]$ for the nominal probabilities of 90\% and 95\% using our proposed multiplier bootstrap for separately exchangeable arrays with 2,500 bootstrap iterations.

We next describe the simulation design for jointly exchangeable arrays. 
We shall focus on the the most common case in practice, the dyadic data, i.e. $K=2$.
With $\Sigma_{\bm{Z}}$ denoting the $p \times p$ covariance matrix consisting of elements of the form $4^{-\abs{r-c}}$ in its $(r,c)$-th position, dyadic samples are generated 
symmetrically in $i$ and $j$
according to
$
\bX_{i,j}
=
\frac{1}{4} \left( \bm{Z}_{(i,0)} + \bm{Z}_{(j,0)} \right) + \frac{1}{2} \bm{Z}_{(i,j)},
$
where 
$
\bm{Z}_{\i \odot \e} \sim B N(\bm{0}, \Sigma_{\bm{Z}}) + (1-B) N(\bm{0}, 2\Sigma_{\bm{Z}})
$
and
$
B \sim \text{Bernoulli}(0.5)
$
independently for $\i \in \{(i,j) \in \N^2: 1 \le i < j \le n \}$ and $\e \in \{1\} \times \{0,1\}$.
We run 2,500 Monte Carlo iterations to compute the uniform coverage frequencies of $\bS_{n}$ for the nominal probabilities of 90\% and 95\% using our proposed multiplier bootstrap with 2,500 bootstrap iterations.

Table \ref{tab:two_way} summarizes simulation results under the separate exchangeability.
The columns consist of the dimension $p$ of $\bX$ and the two-way sample size $(N_1,N_2)$.
The displayed numbers indicate the simulated uniform coverage frequencies for the nominal probabilities of 90\% and 95\%.
For each dimension $p \in \{25,50,100\}$, sample sizes vary as $(N_1,N_2) \in \{(25,25), (50,50),(100,100)\}$.
Table \ref{tab:polyadic} summarizes simulation results under the joint exchangeability.
The columns consist of the dimension $p$ of $\bX$, and the dyadic sample size $N$.
The displayed numbers indicate the simulated uniform coverage frequencies for the nominal probabilities of 90\% and 95\%.
For each dimension $p \in \{25,50,100\}$, sample sizes vary as $n \in \{50, 100, 200\}$.

Observe that, for each simulation design and for each nominal probability, the uniform coverage frequencies approach the nominal probability as the sample size increases.
These results support the theoretical property of our multiplier bootstrap method.
We ran many other sets of simulations with various designs and sample sizes not presented here, but this observed pattern to support our theory remains invariant across all the different sets of simulations -- see Appendix \ref{sec:simulation_gaussian}.
In Appendix \ref{sec:simulation_three_way}, we further experiment with the separate exchangeability with $K=3$ indices.

\begin{table}
	\centering
	  \scalebox{0.85}{
		\begin{tabular}{|r|ccc|ccc|ccc|}
		\hline
			Normalization & \multicolumn{9}{|c|}{No} \\
		\hline
			Dimension of $\bX_{\i}$: $p$ & 25  & 25  & 25  & 50  & 50  & 50  & 100 & 100 & 100\\
		\hline
			Sample Sizes: $N_1,N_2$ & 25  & 50  & 100 & 25  & 50  & 100 & 25  & 50  & 100\\
		\hline
			90\% Coverage      &0.927&0.908&0.905&0.942&0.931&0.919&0.943&0.910&0.917\\
			95\% Coverage      &0.967&0.954&0.956&0.976&0.968&0.960&0.973&0.957&0.962\\
		\hline
			Normalization & \multicolumn{9}{|c|}{Yes} \\
		\hline
			Dimension of $\bX_{\i}$: $p$ & 25  & 25  & 25  & 50  & 50  & 50  & 100 & 100 & 100\\
		\hline
			Sample Sizes: $N_1,N_2$ & 25  & 50  & 100 & 25  & 50  & 100 & 25  & 50  & 100\\
		\hline
			90\% Coverage      &0.884&0.892&0.905&0.885&0.885&0.900&0.857&0.878&0.901\\
			95\% Coverage      &0.936&0.938&0.949&0.930&0.938&0.942&0.921&0.936&0.952\\
		\hline
		\end{tabular}
		}
		\medskip
	\caption{Simulation results for separately exchangeable data with $K=2$ indices. Displayed are the dimension $p$ of $\bX$, the two-way sample size $(N_1,N_2)$ with $N_1=N_2$, and the simulated uniform  coverage frequencies for the nominal probabilities of 90\% and 95\%.}
	\label{tab:two_way}
\end{table}

\begin{table}
	\centering
		\scalebox{0.85}{
		\begin{tabular}{|r|ccc|ccc|ccc|}
		\hline
			Normalization & \multicolumn{9}{|c|}{No} \\
		\hline
			Dimension of $\bX_{i,j}$: $p$ & 25  & 25  & 25  & 50  & 50  & 50  & 100 & 100 & 100\\
		\hline
			Sample Size: $n$ & 50  & 100 & 200 & 50  & 100 & 200 & 50  & 100 & 200\\
		\hline
			90\% Coverage      &0.902&0.896&0.891&0.912&0.914&0.908&0.904&0.915&0.893\\
			95\% Coverage      &0.960&0.953&0.945&0.956&0.963&0.951&0.953&0.961&0.952\\
		\hline
			Normalization & \multicolumn{9}{|c|}{Yes} \\
		\hline
			Dimension of $\bX_{i,j}$: $p$ & 25  & 25  & 25  & 50  & 50  & 50  & 100 & 100 & 100\\
		\hline
			Sample Size: $n$ & 50  & 100 & 200 & 50  & 100 & 200 & 50  & 100 & 200\\
		\hline
			90\% Coverage      &0.851&0.854&0.887&0.819&0.865&0.884&0.802&0.870&0.864\\
			95\% Coverage      &0.921&0.924&0.938&0.890&0.936&0.943&0.882&0.927&0.925\\
		\hline
		\end{tabular}
		}
		\medskip
	\caption{Simulation results for dyadic data. Displayed are the dimension $p$ of $\bX$, the dyadic sample size $n$, and the simulated uniform  coverage frequencies for the nominal probabilities of 90\% and 95\%.}
	\label{tab:polyadic}
\end{table}

\section{Real data analysis}\label{sec:empirical}

In this section, we present an empirical application of the method proposed in Section \ref{sec:density} to constructing uniform confidence bands for the density functions of bilateral trade volumes in the international trade, with a similar motivation to that stated in \citet{Graham2019kernel,GrahamNiuPowell2020kernel}.
Recall that our method extends those by \citet{Graham2019kernel} in that we can draw uniform confidence bands as opposed to point-wise confidence intervals.
From this analysis, we can learn about the evolution of the distributions of international trade volumes over time.

We employ the international trade data used in \citet{head2014gravity}, that come from the Direction of Trade Statistics (DoTS).
This data set contains information about bilateral trade flows among 208 economies for 59 years from 1948 to 2006.
In this analysis, we will focus on the relatively recent years, 1990, 1995, 2000 and 2005.
Our measure of the bilateral trade volume $Y_{ij}$ is defined as the logarithm of the sum of the trade flow from economy $i$ to economy $j$ and the trade flow from economy $j$ to economy $i$.
We perform simulation studies on confidence bands for densities in Appendix \ref{sec:simulation_density}, confirm that the method works as desired, and thus use the same software code here to draw confidence bands of the probability density function of $Y_{ij}$.
Since there is a probability mass at zero in the international trade volumes, what we estimate is precisely the Lebesgue-Radon-Nikodym derivative of the continuous part of the distribution, rather than the probability density function.
Specifically, we use $\hat b(y)$ defined in Section \ref{sec:density} for estimation, and confidence bands are constructed by setting $\hat a = 1$.
That said, we shall call it a density for conciseness.

Figure \ref{fig:trade_1990} illustrates estimates and confidence bands of the density functions of $Y_{ij}$ in each of the years 1990, 1995, 2000 and 2005.
Each panel of the figure displays the kernel density estimates in a solid curve and the 95\% uniform confidence bands in a gray shade.
In addition, we also display the proportion of zero bilateral trade volumes to the left of the kernel density plots so we can get an idea of the complementary proportion that consists the density of the continuously distributed part of the distribution.
Although we treat $Y_{ij}$ as the logarithm of the bilateral trade volumes in estimation and inference, we use the original scale on the horizontal axis for ease of reading the graphs.

\begin{figure}[t]
	\centering
	\begin{tabular}{cc}
		\includegraphics[width=0.5\textwidth]{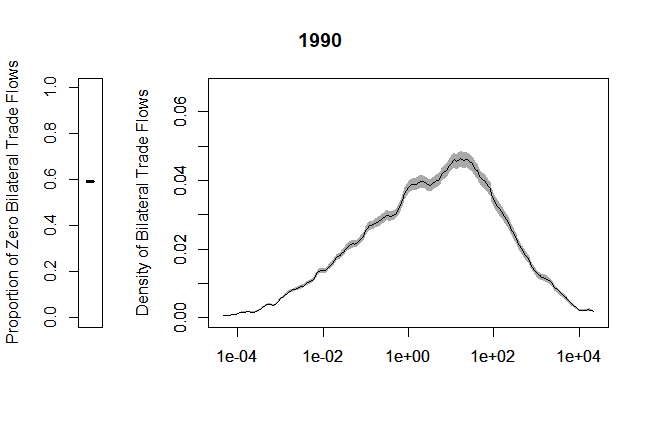}&
		\includegraphics[width=0.5\textwidth]{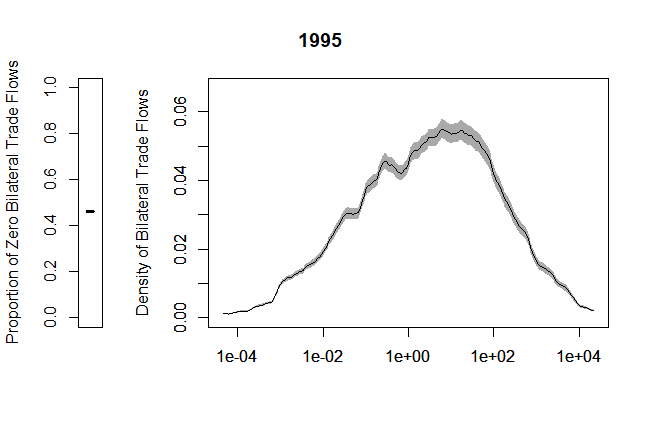}\\
		\includegraphics[width=0.5\textwidth]{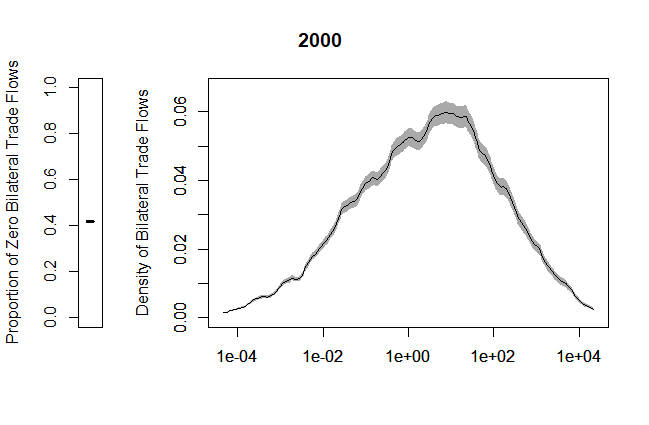}&
		\includegraphics[width=0.5\textwidth]{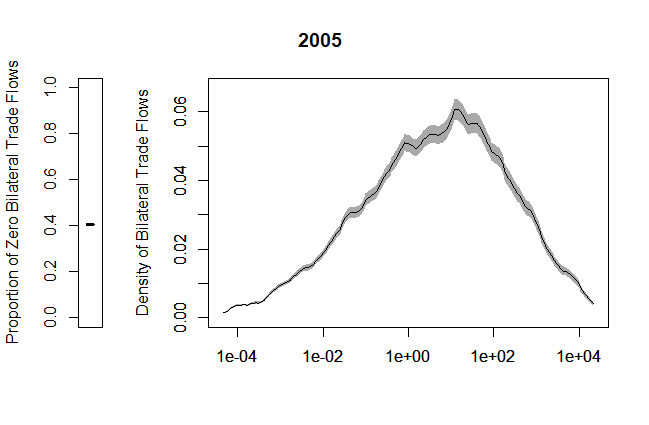}
	\end{tabular}
	\caption{The kernel density estimates (solid curve) and the 95\% uniform confidence bands (gray shade) of the bilateral trade volumes in 1990, 1995, 2000, and 2005.}
	\label{fig:trade_1990}
\end{figure}

Observe that the proportion of the zero trade volume is decreasing over time, and the density function is accordingly moving upward over time.
Despite this pattern of the changes over time, the shapes of the density functions are rather similar across time in the middle of the distribution.
This observation entails a high level of confidence given the reasonably tight confidence bands.
On the other hand, notice that the right tail of the distribution becomes fatter as time progresses, implying that there is an increasing number of bilateral trading pairs with very large trade volumes.

\section{Summary}\label{sec:summary}

In this paper, we have developed methods and theories for inference about  high-dimensional parameters with separately/jointly exchangeable arrays. 
Building on the high-dimensional CLTs over the rectangles, we have proposed bootstrap methods and established their finite sample validity for both notions of exchangeability. 
Simulation studies support the theoretical properties of the methods.
We have illustrated a couple of applications of the bootstrap methods.
First, extending \citet{Graham2019kernel}, we have applied our method to construction of uniform confidence bands for  density functions of dyadic data.
Second, we have demonstrated an application of our method to penalty choice for $\ell_1$-penalized regression under the separate exchangeability.
As such, the results in the present paper  pave the way for a variety of applications to analyses of separately and jointly exchangeable arrays.

\clearpage

\appendix
\section*{Appendix}

\section{Additional results}
\label{sec:additional results}

\subsection{Additional results for Section \ref{sec:multiway_clustered_data}}
\label{sec: normalized sample mean SE}

In practice, we often normalize the coordinates of the sample mean by estimates of the standard deviations, so that each coordinate is approximately distributed as $N(0,1)$. In view of the high-dimensional CLT, the approximate variance of the $j$-th coordinate of $\sqrt{n}\bS_{\bN}$ is given by $\sigma_j^2=\Sigma_{j,j}$, where $\Sigma_{j,j}$ is the $(j,j)$-th component of $\Sigma$. This can be estimated by 
\begin{align*}
\hat \sigma_j^2=  \sum_{k=1}^K\frac{n}{N_k^2} \sum_{i_k=1}^{N_k} ( \overline X_{k,i_k}^j 
- S_{\bN}^j )^2.
\end{align*}
Let $\Lambda = \diag \{ \sigma_1^2,\dots,\sigma_p^2 \}$ and $\hat{\Lambda}= \diag\{\hat\sigma_1^2,\dots,\hat\sigma_p^2\}$. We consider to approximate the distribution of $\sqrt{n}\hat{\Lambda}^{-1/2}\bS_{\bN}$ by $\sqrt{n}\hat{\Lambda}^{-1/2}\bS_{\bN}^{MB}$.

\begin{corollary}\label{cor:normalized_multiway_MB}
Consider Cases (i) and (ii) in Theorem \ref{thm:bootsrap_validity}.
In Case (i), assume further that 
\[
\frac{D_{\bN}^2 \log^7(p\overline{N})}{n} \le C_1 n^{-\zeta},
\]
while in Case (ii) assume further that 
\[
\frac{D_{\bN}^2 \log^7(p\overline{N})}{n} \bigvee \left ( \frac{D_{\bN}^2 \log^3 (p\overline{N})}{n^{1-2/q}}\right )^2 \le C_1 n^{-\zeta}.
\]
Then, under Case (i), there exists a constant $C$ depending only on $\underline{\sigma}, K$, and $C_1$ such that for $\bm{Y}\sim N(\bm{0},\Sigma)$,
\begin{align*}
&\sup_{R\in \calR}\left|\Prob(\sqrt{n} \hat{\Lambda}^{-1/2}\bS_{\bN}\in R)-\Prob( \Lambda^{-1/2} \bm{Y}\in R)\right|\le C n^{-\zeta/6} \quad \text{and} \\
&\Prob \left \{ \sup_{R\in \calR}\left|\Prob_{|\bX_{[\bN]}}(\sqrt{n} \hat{\Lambda}^{-1/2}\bS_{\bN}^{MB}\in R)-\Prob( \Lambda^{-1/2} \bm{Y}\in R)\right|\le C n^{-\zeta/6} \right \}  \ge 1-Cn^{-1}. 
\end{align*}
Under Case (ii), the same conclusion holds with $n^{-\zeta/6}$ replaced by $\max \{ n^{-\zeta/6}, n^{-(\zeta-2/q)/2} \sqrt{\log n} \}$, while the constant $C$ depends only on $q,\underline{\sigma}, K$, and $C_1$. 
\end{corollary}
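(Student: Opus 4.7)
The plan is to derive the normalized statements from the unnormalized ones (Theorems \ref{thm: high-d CLT} and \ref{thm:bootsrap_validity}) via two ingredients: (a) uniform concentration of the variance estimates $\hat{\sigma}_j^{2}$ around $\sigma_j^{2}$, and (b) a Gaussian anti-concentration transfer from a random rectangle back to the original one. The key identity is
\[
\{\sqrt{n}\hat{\Lambda}^{-1/2}\bS_{\bN} \in R\} = \{\sqrt{n}\bS_{\bN} \in \hat{R}\}, \quad \hat{R} := \prod_{j=1}^{p} [\hat{\sigma}_{j} a_{j}, \hat{\sigma}_{j} b_{j}],
\]
and its analogue for $\bS_{\bN}^{MB}$. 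Since $\hat{R}$ is $\bX_{[\bN]}$-measurable, Theorem \ref{thm: high-d CLT} (respectively, Theorem \ref{thm:bootsrap_validity} applied conditionally on the data) yields that $\Prob(\sqrt{n}\bS_{\bN} \in \hat{R})$ and $\Prob_{|\bX_{[\bN]}}(\sqrt{n}\bS_{\bN}^{MB} \in \hat{R})$ are within the unnormalized CLT/bootstrap rate of $\gamma_{\Sigma}(\hat{R})$, uniformly in $R \in \calR$ (since the unnormalized results hold uniformly over all rectangles). It then suffices to compare $\gamma_{\Sigma}(\hat{R})$ with $\gamma_{\Sigma}(\Lambda^{1/2}R) = \Prob(\Lambda^{-1/2}\bm{Y} \in R)$ for $\bm{Y} \sim N(\bm{0},\Sigma)$.

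First I would bound $\Delta := \max_{1 \le j \le p}|\hat{\sigma}_{j}^{2} - \sigma_{j}^{2}|$ in probability. Expanding $\overline{\bX}_{k,i_{k}}$ via the Hoeffding decomposition of Lemma \ref{lem: H decomposition}, the dominant piece of $\hat{\sigma}_{j}^{2}$ is $\sum_{k=1}^{K}(n/N_{k}^{2})\sum_{i_{k}}(W_{k,i_{k}}^{j})^{2}$, whose mean equals $\sigma_{j}^{2}$, and the residual consists of lower-order degenerate components. Applying the paper's maximal inequalities for degenerate Hoeffding-type projections (Corollary \ref{cor: maximal inequality} in the appendix) together with the $\psi_{1}$ envelope of Condition (\ref{eq:condition1}) in Case (i), or a truncation argument using the $q$-th moment envelope of Condition (\ref{eq:condition1_poly}) and Condition (\ref{eq:condition2}) in Case (ii), produces $\Delta \le \epsilon_{n}$ with probability at least $1 - Cn^{-1}$, where $\epsilon_{n}$ is of order $(D_{\bN}^{2}\log(p\overline{N})/n)^{1/2}$ in Case (i) and a polynomial-moment analogue in Case (ii). Call this event $\mathcal{A}$.

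Next I would run a Gaussian-comparison argument. Standardize coordinates $\tilde{Y}_{j} = Y_{j}/\sigma_{j}$ so that $\tilde{\bm{Y}}$ has unit variances; then $\{\bm{Y} \in \hat{R}\} = \{\tilde{\bm{Y}} \in \tilde{R}\}$ and $\{\bm{Y} \in \Lambda^{1/2}R\} = \{\tilde{\bm{Y}} \in R\}$, where $\tilde{R}$ differs from $R$ only through the rescaling $a_{j} \mapsto a_{j}\hat{\sigma}_{j}/\sigma_{j}$, $b_{j} \mapsto b_{j}\hat{\sigma}_{j}/\sigma_{j}$. After the standard truncation that restricts attention to $|a_{j}|, |b_{j}| \le C\sqrt{\log p}$ (outside of which Gaussian tails kill the contribution), on $\mathcal{A}$ these shifts are at most $\delta_{n} = C\epsilon_{n}\sqrt{\log p}/\underline{\sigma}^{2}$, and Nazarov's anti-concentration inequality for Gaussian vectors with unit-variance coordinates yields
\[
\sup_{R \in \calR}|\gamma_{\Sigma}(\hat{R}) - \gamma_{\Sigma}(\Lambda^{1/2} R)| \le C \delta_{n}\sqrt{\log p} \le C' \epsilon_{n}\log p \quad \text{on } \mathcal{A}.
\]
Combining this with the unnormalized CLT/bootstrap error and $\Prob(\mathcal{A}^{c}) \le Cn^{-1}$, and plugging in the strengthened rate assumptions, gives the target $n^{-\zeta/6}$ in Case (i) and $\max\{n^{-\zeta/6}, n^{-(\zeta-2/q)/2}\sqrt{\log n}\}$ in Case (ii).

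The main obstacle is the first step under only polynomial moments: $\hat{\sigma}_{j}^{2}$ is a quadratic functional of the exchangeable array, so its Hoeffding decomposition produces several degenerate components, each of which must be controlled uniformly in $j$ using only the $q$-th moment envelope. This requires combining truncation with the paper's maximal inequalities at the appropriate degree of degeneracy and carefully matching the resulting rate with the strengthened condition $(D_{\bN}^{2}\log^{3}(p\overline{N})/n^{1-2/q})^{2} \le C_{1} n^{-\zeta}$ to produce a polynomial-in-$n$ failure probability; the additional $n^{-(\zeta - 2/q)/2}\sqrt{\log n}$ term in the conclusion of Case (ii) reflects precisely this slower concentration of the Studentizer. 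Once this uniform concentration is in hand, the Nazarov step and the rectangle-identity reduction follow the standard Studentization template in high-dimensional CLT theory.
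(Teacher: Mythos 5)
Your overall template---concentrate the Studentizer, then transfer by Gaussian anti-concentration---matches the paper's strategy, and the bootstrap half of your step (a) is fine: conditionally on $\bX_{[\bN]}$ the rectangle $\hat R$ is deterministic, so Theorem~\ref{thm:bootsrap_validity} applied to the (conditionally) fixed rectangle $\hat R$ does give $|\Prob_{|\bX_{[\bN]}}(\sqrt{n}\bS_{\bN}^{MB}\in\hat R)-\gamma_\Sigma(\hat R)|\le\text{rate}$ with high probability.

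However, the CLT half of step (a) is a genuine gap. The statement ``Theorem~\ref{thm: high-d CLT} yields that $\Prob(\sqrt{n}\bS_{\bN}\in\hat R)$ is within the unnormalized CLT rate of $\gamma_\Sigma(\hat R)$'' is not a consequence of the theorem. Theorem~\ref{thm: high-d CLT} controls $\sup_{R'\in\calR}|\Prob(\sqrt{n}\bS_{\bN}\in R')-\gamma_\Sigma(R')|$ over \emph{fixed} rectangles. The set $\hat R$ depends on $\bX_{[\bN]}$ and hence is correlated with $\bS_{\bN}$, so the uniform bound over fixed rectangles does not transfer to $\Prob(\sqrt{n}\bS_{\bN}\in\hat R)$, and $\gamma_\Sigma(\hat R)$ is itself a random variable, so the comparison is not even type-consistent without an expectation. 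The correct maneuver is to first invoke the concentration event $\mathcal A$ and sandwich: on $\mathcal A$, $\hat R$ sits between the fixed rectangles $(\Lambda^{1/2}R)^{-\delta_n}\subset\hat R\subset(\Lambda^{1/2}R)^{+\delta_n}$ (after the truncation to $|a_j|,|b_j|\le C\sqrt{\log p}$), then apply Theorem~\ref{thm: high-d CLT} to the fixed enlargements $(\Lambda^{1/2}R)^{\pm\delta_n}$, and finish with Nazarov's inequality. Your write-up defers $\mathcal A$ to the Gaussian-comparison step only, leaving the CLT applied directly at the random $\hat R$.

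The paper sidesteps this issue entirely by working with a perturbation of the \emph{statistic} rather than of the rectangle: it bounds $\|\sqrt{n}(\hat\Lambda^{-1/2}-\Lambda^{-1/2})\bS_{\bN}\|_\infty$ via the ratio concentration $\max_j|\sigma_j^2/\hat\sigma_j^2-1|$ (extracted cheaply from Step~1 of the proof of Theorem~\ref{thm:bootsrap_validity} by choosing $\nu=3/\zeta$) together with a Borell--Sudakov--Tsirel'son bound for $\|\Lambda^{-1/2}\bm Y\|_\infty$ transferred to $\|\sqrt{n}\Lambda^{-1/2}\bS_{\bN}\|_\infty$ via the CLT, then applies Theorem~\ref{thm: high-d CLT} on the fixed enlargements $R^{\pm t_n}$ and closes with Nazarov. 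This automatically handles the unbounded-coordinate issue without a separate truncation of the rectangle endpoints, and it reuses the concentration work already done rather than re-deriving it. Your concentration claim $\Delta\lesssim(D_{\bN}^2\log(p\overline N)/n)^{1/2}$ with failure probability $Cn^{-1}$ is also stated more aggressively than the maximal inequalities alone justify (they give moment bounds, and converting to a polynomial tail via Markov costs a $n^{1/\nu}$ or $n^{2/q}$ factor, exactly what the paper tracks); this is fixable but you would need to spell it out. To repair the proposal: either switch to the paper's statistic-perturbation formulation, or explicitly insert the $\mathcal A$-sandwich before invoking Theorem~\ref{thm: high-d CLT}.
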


\noindent
The proof of this corollary is deferred to Appendix \ref{sec:proofs_multiway_clustered_data}. 

We may alternatively use Bessel's correction
\begin{align*}
\tilde \sigma_j^2=  \sum_{k=1}^K\frac{n}{N_k (N_k-1)} \sum_{i_k=1}^{N_k} ( \overline X_{k,i_k}^j 
- S_{\bN}^j )^2.
\end{align*}
to improve finite sample performances.
We employ this finite sample adjustment in our numerical examples.
Specifically, for $\alpha \in (0,1)$, let $\widehat{cv}^B(\alpha)$ denote the $\lfloor \alpha B \rfloor$-th order statistic (i.e., the approximate $1-\alpha$ quantile) of 
$
\|\sqrt{n}\hat{\Lambda}^{-1/2}\bS_{\bN}^{MB}\|_\infty
$
in $B$ replications of the multiplier bootstrap draws.
We obtain the $1-\alpha$ level uniform confidence band by
\begin{align*}
\prod_{j=1}^p \left [ S_{\bN}^j \pm \frac{\widehat{cv}^B(\alpha) \tilde \sigma_{j}}{\sqrt{n}} \right ].
\end{align*}

\subsection{Additional results for Section \ref{sec:polyadic_data}}
\label{sec: normalized sample mean JE}

We consider normalized sample means for jointly exchangeable arrays. In light of the high-dimensional CLT for jointly exchangeable arrays, the approximate variance of the $\ell$-th coordinate of $\sqrt{n}\bS_{n}$ is given by
$\sigma^2_\ell=\text{Var}(W_{1}^{\ell})$, which can be estimated by
\begin{align*}
\hat \sigma^2_\ell= \frac{1}{n}\sumk ( \hat W_{k}^\ell - KS_n^\ell  )^2.
\end{align*}
Let $\Lambda=\diag\{\sigma_1^2,\dots,\sigma_p^2\}$ and $\hat{\Lambda}= \diag\{\hat\sigma_1^2,\dots,\hat\sigma_p^2\}$. We consider to approximate the distribution of $\sqrt{n}\hat{\Lambda}^{-1/2}\bS_{n}$ by $\sqrt{n}\hat{\Lambda}^{-1/2}\bS_{n}^{MB}$.

\begin{corollary}\label{cor:normalized_polyadic_MB}
Consider Cases (i) and (ii) in Theorem \ref{thm:bootstrap validity polyadic}.
In Case (i), assume further that 
\[
\frac{D_{n}^2 \log^7(pn)}{n} \le C_1 n^{-\zeta},
\]
while in Case (ii) assume further that 
\[
\frac{D_{n}^2 \log^7(pn)}{n} \bigvee \left ( \frac{D_{n}^2 \log^3 (pn)}{n^{1-2/q}}\right )^2 \le C_1 n^{-\zeta}.
\]
Then, under Case (i),  there exists a constant $C$ depending only on $\underline{\sigma}, K$, and $C_1$ such that for $\bm{Y} \sim N(\bm{0},\Sigma)$,
\begin{align*}
&\sup_{R\in \calR}\left|\Prob(\sqrt{n} \hat{\Lambda}^{-1/2}\bS_{n}\in R)-\Prob( \Lambda^{-1/2} \bm{Y}\in R)\right|\le C n^{-\zeta/6} \quad \text{and} \\
&\Prob \left \{ \sup_{R\in \calR}\left|\Prob_{|\bX_{I_{n,K}}}(\sqrt{n} \hat{\Lambda}^{-1/2}\bS_{n}^{MB}\in R)-\Prob( \Lambda^{-1/2} \bm{Y}\in R)\right|\le C  n^{-\zeta/6} \right \} \ge 1-Cn^{-1}.
\end{align*}
Under Case (ii), the same conclusion holds with $n^{-\zeta/6}$ replaced by $\max \{ n^{-\zeta/6}, n^{-(\zeta-2/q)/2} \sqrt{\log n} \}$, while the constant $C$ depends only on $q,\underline{\sigma}, K$, and $C_1$. 
\end{corollary}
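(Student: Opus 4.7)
My strategy is to reduce the normalized assertions in Corollary~\ref{cor:normalized_polyadic_MB} to the unnormalized Gaussian and bootstrap approximations already established in Theorems~\ref{thm: high-d CLT polyadic} and \ref{thm:bootstrap validity polyadic}, via a perturbation argument glued together with Nazarov's Gaussian anti-concentration inequality for rectangles. The central auxiliary estimate is a high-probability, uniform-over-$\ell$ bound on $\max_{1\le \ell\le p}|\hat\sigma_\ell^2 - \sigma_\ell^2|$.

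First, I would establish $\max_\ell|\hat\sigma_\ell^2 - \sigma_\ell^2| = O_P(\delta_n)$ for an appropriate polynomial rate $\delta_n$. Using the decomposition $\hat W_j^\ell - K S_n^\ell = W_j^\ell + (\hat W_j^\ell - W_j^\ell) - K S_n^\ell$, I would expand the square defining $\hat\sigma_\ell^2$ and compare the leading term $n^{-1}\sum_{j=1}^n (W_j^\ell)^2$ against $\sigma_\ell^2 = \E[(W_1^\ell)^2]$. Since each $W_j^\ell$ depends only on the independent latent variable $U_j$, this leading term is a sample mean of i.i.d.\ random variables, controllable by Bernstein's inequality under Condition~(\ref{eq:condition1polyadic}) or by a Fuk--Nagaev-type tail bound (with truncation) under Condition~(\ref{eq:condition1polyadic_poly}), followed by a union bound over $\ell\le p$. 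The remaining cross terms involve $\hat W_j^\ell - W_j^\ell$ and $S_n^\ell$, which are higher-order degenerate components in the Hoeffding-type decomposition~(\ref{eq: H-decomp polyadic}); I would bound these using the maximal inequalities for jointly exchangeable arrays developed in Appendix~\ref{sec:maximal_inequality_for_polyadic_data}, showing that $\delta_n \log p$ is dominated by the target rate under the imposed rate conditions.

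Next, on the event $\{\max_\ell|\hat\sigma_\ell^2-\sigma_\ell^2|\le \delta_n\}$, since $\sigma_\ell\ge \underline\sigma>0$ I obtain $\max_\ell|\hat\sigma_\ell^{-1}-\sigma_\ell^{-1}|=O(\delta_n)$; combined with $\max_\ell \sigma_\ell^{-1}|\sqrt{n}S_n^\ell|=O_P(\sqrt{\log p})$ (from Theorem~\ref{thm: high-d CLT polyadic} and standard Gaussian maxima bounds), this yields $\|\sqrt{n}\hat\Lambda^{-1/2}\bS_n - \sqrt{n}\Lambda^{-1/2}\bS_n\|_\infty = O_P(\delta_n\sqrt{\log p})$. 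Applying Theorem~\ref{thm: high-d CLT polyadic} to approximate $\sqrt{n}\Lambda^{-1/2}\bS_n$ by $\Lambda^{-1/2}\bm{Y}$ over rectangles, together with Nazarov's anti-concentration inequality for rectangles (applicable since $\Lambda^{-1/2}\bm{Y}$ has unit-diagonal covariance) to absorb the $\ell_\infty$ perturbation of size $\delta_n\sqrt{\log p}$, then yields the first conclusion. The bootstrap claim follows from the identical perturbation-plus-anti-concentration argument applied conditionally on the data: $\hat\Lambda$ is data-measurable, so on the good event from Step~1 (of probability at least $1-Cn^{-1}$) Theorem~\ref{thm:bootstrap validity polyadic} supplies the unnormalized bootstrap approximation, and the Studentization passes through unchanged.

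\textbf{Main obstacle.} I expect the main difficulty to lie in obtaining a sharp polynomial-in-$n$ rate $\delta_n$ for $\max_\ell|\hat\sigma_\ell^2-\sigma_\ell^2|$ under the polynomial moment case~(\ref{eq:condition1polyadic_poly}), where truncation and Fuk--Nagaev-type arguments must be combined with the new maximal inequalities for degenerate projections in jointly exchangeable arrays; the appearance of an additional rate component beyond $n^{-\zeta/6}$ in Case~(ii) traces back to precisely this step.
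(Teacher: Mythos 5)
Your plan is correct and is essentially the same approach the paper uses (it explicitly proves Corollary~\ref{cor:normalized_multiway_MB} and then invokes the corollary in question by analogy): a high-probability bound on $\max_\ell|\hat\sigma_\ell^2-\sigma_\ell^2|$ (the paper simply re-reads off the $\|\hat\Sigma-\Sigma\|_\infty$ bound from Step~2 of the proof of Theorem~\ref{thm:bootstrap validity polyadic} with $\nu=3/\zeta$ rather than re-deriving it, but the underlying computation is identical to what you describe), followed by passing the diagonal scaling through via the tail bound $\|\sqrt{n}\Lambda^{-1/2}\bS_n\|_\infty = O_P(\sqrt{\log(pn)})$ and absorbing the perturbation with Nazarov's anti-concentration inequality, and an identical conditional argument for the bootstrap claim.
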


\noindent
The proof is analogous to Corollary \ref{cor:normalized_multiway_MB} and thus omitted.

We may alternatively use  Bessel's correction
\begin{align*}
\tilde \sigma^2_\ell= \frac{1}{n-1}\sumk ( \hat W_{k}^\ell - KS_n^\ell  )^2.
\end{align*}
to improve finite sample performances.
We employ this finite sample adjustment in our numerical examples.
Specifically, for $\alpha \in (0,1)$, let $\widehat{cv}^B(\alpha)$ denote the $\lfloor \alpha B \rfloor$-th order statistic (i.e., the approximate $1-\alpha$ quantile) of 
$
\| \sqrt{n}\hat{\Lambda}^{-1/2}\bS_{n}^{MB}\|_\infty
$
in $B$ replications of the multiplier bootstrap draws.
We obtain the $1-\alpha$ level uniform confidence band by
\begin{align*}
\prod_{j=1}^p \left [S^j_{n} \pm \frac{\widehat{cv}^B(\alpha) \tilde \sigma_j}{\sqrt{n}} \right].
\end{align*}

\section{Maximal inequalities for separately exchangeable arrays}\label{sec:maximal_inequality_for_multiway_data} 
In this section, we shall develop maximal inequalities for separately exchangeable arrays. 
As in Section \ref{sec:multiway_clustered_data}, let  $(\bX_{\i})_{\i \in \N^{K}}$ be a $K$-array consisting of random vectors in $\R^{p}$ with mean zero generated by the structure (\ref{eq: Aldous-Hoover_SE}), i.e., $\bX_{\i} = \mathfrak{g}((U_{\i \odot \e})_{\e \in \{ 0,1 \}^{K} \setminus \{ \bm{0} \}})$ for $\i \in \N^K$. We will follow the notations used in Section \ref{sec:multiway_clustered_data}.  
The following theorem is fundamental. 

\begin{theorem}
\label{thm: maximal inequality}
Pick any $1 \le k \le K$ and $\e \in \mathcal{E}_{k}$. Then, for any $q \in [1,\infty)$, we have
\[
\left ( \E \left [ \left \| \sum_{\i \in I_{\e}([\bN])} \hat{\bX}_{\i}  \right \|_{\infty}^{q} \right ] \right)^{1/q} \le C (\log p)^{k/2} \left ( \E \left [ \max_{1 \le j \le p} \left (\sum_{\i \in I_{\e}([\bN])} |\hat{X}_{\i}^{j}|^{2} \right)^{q/2}  \right ] \right)^{1/q},
\]
where $C$ is a constant that depends only on $q$ and $K$. 
\end{theorem}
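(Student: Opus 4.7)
My plan is to prove the maximal inequality by first applying a symmetrization inequality tailored to degenerate Hoeffding projections of separately exchangeable arrays, and then controlling the resulting Rademacher chaos coordinate-wise via hypercontractivity together with a sub-Weibull union bound.

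First, let $k_1 < \cdots < k_k$ denote the coordinates with $e_{k_\ell} = 1$, and introduce $k$ independent families of i.i.d.\ Rademacher variables $\{\epsilon_i^{(1)}\},\ldots,\{\epsilon_i^{(k)}\}$, independent of the data. I would establish a symmetrization inequality of the form
\[
\E\left\|\sum_{\i \in I_\e([\bN])} \hat{\bX}_\i\right\|_\infty^q \;\le\; C_K^q\, \E\left\|\sum_{\i \in I_\e([\bN])} \prod_{\ell=1}^{k}\epsilon_{i_{k_\ell}}^{(\ell)}\,\hat{\bX}_\i\right\|_\infty^q.
\]
The key structural fact that powers this reduction is the degeneracy of the Hoeffding projection, namely $\E[\hat{\bX}_\i \mid (U_{\i\odot \e'})_{\e'\lneq \e}] = 0$, which one reads off from the recursive definition in Lemma~\ref{lem: H decomposition}. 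Symmetrization is carried out layer by layer: at layer $\ell$, one conditions on everything except the latent variables attached to the $k_\ell$-th coordinate, uses the vanishing conditional expectation to splice in an independent "ghost" copy, and then introduces $\epsilon^{(\ell)}$ via Jensen's inequality. The subtlety is that the same $\epsilon_i^{(\ell)}$ is attached to every $\i$ with $i_{k_\ell}=i$, so the argument must be formalized by an induction that is precisely the content of the authors' Lemma~\ref{lem: conditional zero mean}.

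Second, I would condition on all of the latent variables, so that each $\hat{X}_\i^j$ becomes a deterministic coefficient. Then for fixed $j$, the quantity $T_j := \sum_{\i} \prod_\ell \epsilon_{i_{k_\ell}}^{(\ell)} \hat{X}_\i^j$ is a homogeneous Rademacher chaos of order $k$ on a tetrahedral index set (different layers use independent Rademachers). By Bonami--Beckner hypercontractivity (equivalently, the Arcones--Gin\'e moment bound for Rademacher chaos),
\[
\bigl(\E_\epsilon |T_j|^r\bigr)^{1/r} \;\le\; (r-1)^{k/2}\Biggl(\sum_{\i \in I_\e([\bN])}(\hat{X}_\i^j)^2\Biggr)^{1/2},\qquad r\ge 2,
\]
so conditionally on the latent variables, each $T_j$ is sub-Weibull of order $2/k$ with scale $(\sum_\i(\hat{X}_\i^j)^2)^{1/2}$. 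Applying the standard $\ell_\infty$ maximal inequality for $p$ sub-Weibull variables (union bound plus Markov at moment order $r \asymp \log p$) yields
\[
\E_\epsilon\Bigl[\max_{1\le j\le p}|T_j|^q\Bigr] \;\le\; C_{q,k}\,(\log p)^{kq/2}\,\max_{1\le j\le p}\Biggl(\sum_{\i \in I_\e([\bN])}(\hat{X}_\i^j)^2\Biggr)^{q/2}.
\]
Taking outer expectation over the latents and extracting the $q$-th root delivers the stated bound.

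The main obstacle is the symmetrization step. For ordinary i.i.d.\ $U$-statistics layer-by-layer symmetrization is classical, but separately exchangeable arrays involve nested latent variables $U_{\i\odot \e'}$ for all $\e' \le \e$, and one must verify that peeling off one layer preserves the vanishing-conditional-expectation property needed to feed into the next layer. This is exactly where a careful induction (and repeated conditioning) is required, and it is the technical reason the symmetrization inequality here is more delicate than its i.i.d.\ counterpart. Once symmetrization is in place, the hypercontractive / sub-Weibull argument for each coordinate together with the union bound over $j$ finishes the proof cleanly, producing the sharp exponent $k/2$ on $\log p$.
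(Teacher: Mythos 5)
Your proposal is correct and follows essentially the same route as the paper's proof: symmetrize layer-by-layer using the degeneracy of the Hoeffding projection (the paper's Lemma \ref{lem: symmetrization}, whose induction is precisely Lemma \ref{lem: conditional zero mean}), then control the resulting Rademacher chaos conditionally on the latent variables. The only difference is cosmetic: where you invoke Bonami--Beckner hypercontractivity to get $(\E_\epsilon|T_j|^r)^{1/r}\lesssim r^{k/2} s_j$ and then a ``sub-Weibull union bound,'' the paper reaches the identical conclusion by citing Corollary 3.2.6 of de la Pe\~na and Gin\'e (the $\psi_{2/k}$-norm bound for Rademacher chaos, which is itself a consequence of hypercontractivity) and Lemma 2.2.2 of van der Vaart and Wellner (the Orlicz-norm maximal inequality, which is the union bound plus Markov at moment scale $\log p$). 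So the two writeups are the same argument in different notation, and both produce the exponent $k/2$ on $\log p$.
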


The following corollary is immediate from Jensen's inequality. 

\begin{corollary}[Global maximal inequality]
\label{cor: maximal inequality}
For any $1 \le k \le K, \e \in \mathcal{E}_{k}$, and $q \in [1,\infty)$, we have
\begin{equation}
\left (\E \left [ \left \| \sum_{\i \in I_{\e}([\bN])} \hat{\bX}_{\i}  \right \|_{\infty}^{q} \right ] \right)^{1/q} \le C (\log p)^{k/2} \sqrt{\prod_{k' \in \supp (\e)} N_{k'}} (\E[ \| \hat{\bX}_{\bm{1}\odot \e} \|_{\infty}^{q \vee 2}])^{1/(q \vee 2)},
\label{eq: local max}
\end{equation}
where $C$ is a constant that depends only on $q$ and $K$. 
\end{corollary}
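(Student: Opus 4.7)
The plan is to reduce the maximal inequality to a Rademacher chaos bound via symmetrization in each of the $k$ coordinates activated by $\e$, and then apply a standard sub-Gaussian maximal inequality iteratively. The key structural fact to exploit is the degeneracy of $\hat{\bX}_{\i\odot\e}$: by the recursive inclusion-exclusion definition in Lemma \ref{lem: H decomposition}, for every $k' \in \supp(\e)$ one has $\E[\hat{\bX}_{\i\odot\e} \mid (U_{\i\odot\e'})_{\e' \le \e,\, e'_{k'}=0}] = 0$. This is the analog of the $U$-statistic degeneracy condition and is precisely what enables symmetrization.

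First, I would iteratively symmetrize in each coordinate $k' \in \supp(\e)$. Introduce mutually independent Rademacher sequences $\{\epsilon_{k',i}\}_{i=1}^{N_{k'}}$ for $k' \in \supp(\e)$, independent of the data. At the $m$-th step ($m=1,\dots,k$), condition on all $U$-variables whose index in the $m$-th active coordinate is pinned, use the degeneracy property together with the standard symmetrization trick (inserting an independent copy and bounding by the symmetrized version via Jensen), and pass one factor $\epsilon_{k_m',i_{k_m'}}$ inside. Executing this for all $k$ active coordinates, appealing to Lemma \ref{lem: conditional zero mean} to justify that the induction step really yields a conditional-mean-zero object after each symmetrization, gives
\begin{equation*}
\E\Bigl\|\sum_{\i \in I_\e([\bN])} \hat{\bX}_\i\Bigr\|_\infty^q \;\le\; 2^{kq}\,\E\Bigl\|\sum_{\i \in I_\e([\bN])} \Bigl(\prod_{k' \in \supp(\e)} \epsilon_{k',i_{k'}}\Bigr)\hat{\bX}_\i \Bigr\|_\infty^q.
\end{equation*}

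Second, I would condition on $(\hat{\bX}_\i)$ and treat the inner sum as a Rademacher chaos of order $k$ valued in $\R^p$. For fixed coordinate $j$, iterated Khintchine (e.g.\ Proposition~3.2.1 of \citet{delaPenaGine1999}, applied sequentially in the $k$ Rademacher sequences) gives, for any $r \ge 2$,
\begin{equation*}
\Bigl(\E_\epsilon\Bigl|\sum_\i \prod_{k'}\epsilon_{k',i_{k'}}\hat{X}_\i^j\Bigr|^r\Bigr)^{1/r} \;\le\; (C\sqrt{r})^{k}\Bigl(\sum_\i |\hat{X}_\i^j|^2\Bigr)^{1/2}.
\end{equation*}
Applying the standard $\ell_\infty$ bound $\E\max_{j\le p}|Z_j|^q \le p^{q/r}\max_j \E|Z_j|^r$ with $r = q \vee \log p$, the $p^{q/r}$ factor is absorbed into a constant and one obtains the chaos maximal inequality
\begin{equation*}
\E_\epsilon\max_{1\le j \le p}\Bigl|\sum_\i\!\prod_{k'}\epsilon_{k',i_{k'}}\hat{X}_\i^j\Bigr|^q \;\lesssim_{q,K}\; (\log p)^{kq/2}\max_{1\le j\le p}\Bigl(\sum_\i|\hat{X}_\i^j|^2\Bigr)^{q/2}.
\end{equation*}
Taking the outer expectation, raising to the $1/q$ power, and combining with the symmetrization bound yields the stated inequality.

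The main obstacle is the symmetrization step. In the i.i.d.\ $U$-statistic case the summands of a Hoeffding projection of order $k$ are built from a genuine product structure that makes coordinate-wise sign flips manifestly distributional invariances. Here, however, the summands $\hat{\bX}_\i$ for $\i \in I_\e([\bN])$ share their latent $U_{\i\odot\e}$ with many other indices of the array, so one cannot symmetrize in a single stroke. The correct route is a careful induction: symmetrize one coordinate at a time while conditioning on everything else, repeatedly applying a conditional-zero-mean lemma. This is exactly the content of Lemma \ref{lem: conditional zero mean} and is the delicate technical contribution that underlies the maximal inequality.
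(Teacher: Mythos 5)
Your symmetrization--chaos argument correctly reproduces the proof of Theorem~\ref{thm: maximal inequality} (the \emph{local} maximal inequality): iterating Lemma~\ref{lem: symmetrization} with Lemma~\ref{lem: conditional zero mean} supplying the conditional-zero-mean property at each step, then applying iterated Khintchine and an $L^r$-to-$\ell_\infty$ bound with $r = q\vee\log p$, is exactly the engine the paper uses (modulo a small typo: the $\ell_\infty$ bound should read $\E\max_j|Z_j|^q \le p^{q/r}\bigl(\max_j\E|Z_j|^r\bigr)^{q/r}$, not $p^{q/r}\max_j\E|Z_j|^r$). But you stop at the wrong place. What you obtain after taking the outer expectation is
\[
\Bigl(\E\Bigl\|\sum_{\i\in I_\e([\bN])}\hat\bX_\i\Bigr\|_\infty^q\Bigr)^{1/q} \;\lesssim\; (\log p)^{k/2}\Bigl(\E\Bigl[\max_{1\le j\le p}\Bigl(\sum_{\i\in I_\e([\bN])}|\hat X_\i^j|^2\Bigr)^{q/2}\Bigr]\Bigr)^{1/q},
\]
which is the conclusion of Theorem~\ref{thm: maximal inequality}, \emph{not} of Corollary~\ref{cor: maximal inequality}. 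The corollary replaces this right-hand side by $\sqrt{\prod_{k'\in\supp(\e)}N_{k'}}\,\bigl(\E\|\hat\bX_{\bm 1\odot\e}\|_\infty^{q\vee 2}\bigr)^{1/(q\vee 2)}$, and that passage---the entire content of the corollary relative to the theorem---is missing from your write-up.

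The missing step is a short Jensen argument. Bound $\max_j\sum_\i|\hat X_\i^j|^2 \le \sum_\i\|\hat\bX_\i\|_\infty^2$. For $q\le 2$, concavity of $x\mapsto x^{q/2}$ gives $\E\bigl[(\sum_\i\|\hat\bX_\i\|_\infty^2)^{q/2}\bigr]\le\bigl(|I_\e([\bN])|\,\E\|\hat\bX_{\bm 1\odot\e}\|_\infty^2\bigr)^{q/2}$; for $q>2$, convexity applied to the empirical average of $\|\hat\bX_\i\|_\infty^2$ gives $\E\bigl[(\sum_\i\|\hat\bX_\i\|_\infty^2)^{q/2}\bigr]\le |I_\e([\bN])|^{q/2}\,\E\|\hat\bX_{\bm 1\odot\e}\|_\infty^q$. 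Since $|I_\e([\bN])|=\prod_{k'\in\supp(\e)}N_{k'}$, raising to $1/q$ yields the stated bound. Note that the paper's own proof of Corollary~\ref{cor: maximal inequality} consists solely of this Jensen step, invoking Theorem~\ref{thm: maximal inequality} as a black box; you have done far more work than necessary while omitting the one step the corollary actually requires.
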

\red{
\begin{proof}[Proof of Corollary \ref{cor: maximal inequality}]
We begin with observing that
\[
\E \left [ \max_{1 \le j \le p} \left (\sum_{\i \in I_{\e}([\bN])} |\hat{X}_{\i}^{j}|^{2} \right)^{q/2}  \right ] \le \E \left [  \left (\sum_{\i \in I_{\e}([\bN])} \|\hat{\bX}_{\i}\|_{\infty}^{2} \right)^{q/2}  \right ]
\]
If $q \le 2$, then by Jensen's inequality, the right-hand side is bounded by 
\[
 \left (  \E \left [ \sum_{\i \in I_{\e}([\bN])} \|\hat{\bX}_{\i}\|_{\infty}^{2}  \right ]\right)^{q/2} = | I_{\e}([\bN])|^{q/2} (\E[ \| \hat{\bX}_{\bm{1}\odot \e} \|_{\infty}^{2}])^{q/2}.
\]
If $q > 2$, then by Jensen's inequality,
\[
\begin{split}
\left (\sum_{\i \in I_{\e}([\bN])} \|\hat{\bX}_{\i}\|_{\infty}^{2} \right)^{q/2} &= |I_{\e}([\bN])|^{q/2} \left ( \frac{1}{|I_{\e}([\bN])|} \sum_{\i \in I_{\e}([\bN])} \|\hat{\bX}_{\i}\|_{\infty}^{2} \right)^{q/2} \\
&\le |I_{\e}([\bN])|^{q/2} \times \frac{1}{|I_{\e}([\bN])|} \sum_{\i \in I_{\e}([\bN])} \|\hat{\bX}_{\i}\|_{\infty}^{q}.
\end{split}
\]
The expectation of the right-hand side is $ | I_{\e}([\bN])|^{q/2}\E[ \|\hat{\bX}_{\bm{1} \odot \e}\|_{\infty}^{q} ]$. 
\end{proof}
}

\begin{remark}
By Jensen's inequality, $\E[ \| \hat{\bX}_{\bm{1}\odot \e} \|_{\infty}^{q \vee 2}]$ on the right-hand side of (\ref{eq: local max}) can be replaced by $\E[\| \bX_{\bm{1}} \|_{\infty}^{q \vee 2}]$ by adjusting the constant $C$. 
\end{remark}

The proof of Theorem \ref{thm: maximal inequality} relies on the following symmetrization inequality. Recall that a Rademacher random variable is a random variable taking $\pm 1$ with equal probability.

\begin{lemma}[Symmetrization]
\label{lem: symmetrization}
Pick any $1 \le k \le K$.
Let $\{ \epsilon_{1,i_{1}} \},\dots, \{ \epsilon_{k,i_{k}} \}$ be independent Rademacher random variables independent of the $U$-variables.
Then, for any nondecreasing convex function $\Phi:[0,\infty) \to [0,\infty)$, we have 
\[
\E \left [ \Phi \left (\left \| \sum_{i_{1},\dots,i_{k}} \hat{\bX}_{(i_{1},\dots,i_{k},0,\dots,0)} \right \|_{\infty}\right) \right ] \le \E \left [ \Phi \left ( 2^{k} \left \| \sum_{i_{1},\dots,i_{k}} \epsilon_{1,i_{1}} \cdots \epsilon_{k,i_{k}} \hat{\bX}_{(i_{1},\dots,i_{k},0,\dots,0)} \right \|_{\infty} \right)\right ].
\]
\end{lemma}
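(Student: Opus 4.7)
The plan is to iterate the classical Rademacher symmetrization once in each of the first $k$ index coordinates, accumulating a factor $2$ per round so that the claimed $2^{k}$ emerges after $k$ rounds. The core input is the complete degeneracy of the Hoeffding-type component $\hat{\bX}_{\i \odot \e}$ for $\e = (1,\dots,1,0,\dots,0) \in \calE_k$, namely
\[
\E\bigl[\hat{\bX}_{\i\odot \e} \,\big|\, \{U_{\i\odot \e'} : \e' \le \e,\, \e' \ne \e\}\bigr] = 0,
\]
which is a direct consequence of the recursive construction in Lemma \ref{lem: H decomposition}.

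For $\ell = 1,\dots,k$, let $\calG_\ell$ be the $\sigma$-algebra generated by the latent variables $U_{(j_1,\dots,j_k,0,\dots,0)}$ with $j_\ell = 0$, write $\calH_\ell := \sigma(\calG_\ell, \epsilon_1,\dots,\epsilon_{\ell-1})$, and set
\[
T_{i_\ell}^{(\ell)} = \sum_{(i_m)_{m \ne \ell}} \epsilon_{1,i_1}\cdots \epsilon_{\ell-1,i_{\ell-1}}\, \hat{\bX}_{(i_1,\dots,i_k,0,\dots,0)}.
\]
At round $\ell$ I would verify, conditionally on $\calH_\ell$, that (a) the summands $\{T_{i_\ell}^{(\ell)}\}_{i_\ell}$ are independent across $i_\ell$, and (b) each has mean zero. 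For (a): once $\calH_\ell$ is frozen, $T_{i_\ell}^{(\ell)}$ depends only on latent variables $U_{(j_1,\dots,j_k,0,\dots,0)}$ with $j_\ell = i_\ell$, and these families are disjoint across distinct $i_\ell$. For (b): independence of the Rademacher variables from the $U$-variables lets the $\epsilon$ factors pull outside the conditional expectation, reducing matters to $\E[\hat{\bX}_{\i \odot \e} \mid \calG_\ell] = 0$; the portion of $\calG_\ell$ that actually influences $\hat{\bX}_{\i\odot \e}$ is contained in $\sigma(\{U_{\i\odot \e'} : \e' \le \e,\, \e' \ne \e\})$, because any contributing index $\i \odot \e'$ sitting in $\calG_\ell$ must have $e'_\ell = 0$ and hence $\e' \ne \e$; the tower property then reduces (b) to the base identity displayed above.

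With (a) and (b) in hand, the standard symmetrization inequality for sums of independent mean-zero random vectors, composed with the nondecreasing convex map $x \mapsto \Phi(\|x\|_\infty)$ and applied conditionally on $\calH_\ell$, yields
\[
\E\bigl[\Phi\bigl(\bigl\|\sum_{i_\ell} T_{i_\ell}^{(\ell)}\bigr\|_\infty\bigr) \,\big|\, \calH_\ell\bigr] \le \E\bigl[\Phi\bigl(2\bigl\|\sum_{i_\ell} \epsilon_{\ell,i_\ell} T_{i_\ell}^{(\ell)}\bigr\|_\infty\bigr) \,\big|\, \calH_\ell\bigr],
\]
where $\{\epsilon_{\ell,i_\ell}\}_{i_\ell}$ are fresh signs independent of $\calH_\ell$. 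Taking expectations and iterating from $\ell = 1$ up to $\ell = k$ drops one Rademacher factor into each summand per round and accumulates the $2^k$, delivering the claim.

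The main obstacle is fact (b) once several rounds of symmetrization have already injected extraneous Rademacher weights into the sum: one must carefully track which sub-lattice of index vectors $\e'$ is being conditioned on after round $\ell-1$ and confirm it still sits strictly below $\e$ componentwise, so that the base degeneracy identity can be invoked. This bookkeeping is essentially the content of the auxiliary ``conditional zero mean'' lemma alluded to in the introduction and is carried out by induction on $\ell$. Fact (a) and the final appeal to the classical symmetrization inequality are routine once the correct conditioning has been set up.
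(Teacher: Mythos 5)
Your overall strategy matches the paper's: condition on a $\sigma$-algebra containing every latent variable not ``pointed at'' by the coordinate being symmetrized, verify conditional independence across that coordinate and conditional degeneracy, apply the classical symmetrization inequality once per coordinate, and accumulate the factor $2^k$. That is exactly the repeated-conditioning argument in the paper. However, the ``base identity'' you display at the outset,
\[
\E\bigl[\hat{\bX}_{\i\odot \e} \mid \{U_{\i\odot \e'} : \e' \le \e,\ \e' \ne \e\}\bigr] = 0,
\]
is false as stated. Take $K = k = 2$ and $\bX_{\i} = U_{(i_1,0)}\,U_{(0,i_2)} - 1/4$, which is mean-zero and separately exchangeable. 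A direct computation gives $\hat{\bX}_{(i_1,i_2)} = U_{(i_1,0)} U_{(0,i_2)} - \tfrac12 U_{(i_1,0)} - \tfrac12 U_{(0,i_2)} + \tfrac14$, which is $\sigma(U_{(i_1,0)}, U_{(0,i_2)})$-measurable; conditioning on $\{\e' \le \e,\ \e' \ne \e\} = \{(1,0),(0,1)\}$ therefore returns $\hat{\bX}_{(i_1,i_2)}$ itself, not zero. Omitting only $\e$ from the conditioning set retains too much information.

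What is true --- and what Lemma \ref{lem: conditional zero mean} in the paper establishes --- is the strictly weaker identity $\E[\hat{\bX}_{\i\odot \e} \mid (U_{\i\odot\e'})_{\e' \le \e - \e_\ell}] = 0$ for each $\ell \in \supp(\e)$: one must drop \emph{every} $\e'$ with $e'_\ell = 1$, not just $\e$ itself. Your own $\calG_\ell$ intersects the latent variables entering $\hat{\bX}_{\i\odot\e}$ in exactly the set $\{\e' \le \e - \e_\ell\}$, so the weaker identity suffices for your step (b). But you cannot reach it via the tower property from your displayed base identity, because that stronger identity is false. Moreover, the correct conditional-degeneracy statement is not ``a direct consequence of the recursive construction''; it requires the induction carried out in Lemma \ref{lem: conditional zero mean}, which you acknowledge only in passing at the end while your opening display short-circuits precisely the step that requires care. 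Restating the base identity with conditioning set $\{\e' \le \e - \e_\ell\}$, and proving it by the induction of Lemma \ref{lem: conditional zero mean}, repairs the argument; the remaining steps then go through as in the paper.
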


The proof of Lemma \ref{lem: symmetrization} in turn  relies on the following result.

\begin{lemma}
\label{lem: conditional zero mean}
Let $\i \in \N^{K}$. Pick any $1 \le k \le K$ and let $\e \in \mathcal{E}_{k}$. 
Then, for any $\ell \in \supp (\e)$, conditionally on $(U_{\i \odot \e'})_{ \e' \le \e - \e_{\ell}}$, the vector $\hat{\bX}_{\i \odot \e}$ has mean zero. 
\end{lemma}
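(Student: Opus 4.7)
\textbf{Proof plan for Lemma~\ref{lem: conditional zero mean}.}

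The plan is to prove the claim by induction on $k = |\supp(\e)|$. For the base case $k=1$, we have $\e = \e_m$ for some $m$ and the only admissible $\ell$ is $m$, so $\e-\e_\ell = \bm{0}$ and the assertion reduces to $\E[\hat{\bX}_{\i \odot \e_m}] = \E[\bX_{\i}] = 0$, which holds by the standing mean-zero (conditional on $U_{\bm 0}$) assumption on the array.

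For the inductive step, fix $\e \in \calE_k$ with $k \ge 2$, fix $\ell \in \supp(\e)$, and write $\mathcal{G} = \sigma((U_{\i\odot\e'})_{\e'\le \e-\e_\ell})$. Applying the recursive definition of $\hat{\bX}_{\i\odot\e}$ and then taking $\E[\,\cdot\mid\mathcal{G}]$, the plan is to show
\[
\E[\hat{\bX}_{\i\odot\e}\mid \mathcal{G}] = \E[\bX_{\i}\mid\mathcal{G}] - \sum_{\substack{\e'\le \e \\ \e'\neq \e}} \E[\hat{\bX}_{\i\odot\e'}\mid\mathcal{G}] = 0
\]
by splitting the sum into two cases according to whether $\ell\in\supp(\e')$. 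In Case~(a), $\ell\notin\supp(\e')$, so $\e'\le\e-\e_\ell$ and $\hat{\bX}_{\i\odot\e'}$ is $\mathcal{G}$-measurable. In Case~(b), $\ell\in\supp(\e')$, so we may write $\e' = \tilde{\e}+\e_\ell$ with $\tilde{\e}\le\e-\e_\ell$ and $\tilde{\e}\ne\e-\e_\ell$ (since $\e'\ne\e$); in particular $|\supp(\e')|<k$ and $\ell\in\supp(\e')$, so the inductive hypothesis applies to yield $\E[\hat{\bX}_{\i\odot\e'}\mid \sigma((U_{\i\odot\e''})_{\e''\le\tilde{\e}})] = 0$.

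The main obstacle is the \emph{upgrade} in Case~(b) from conditioning on the smaller $\sigma$-algebra $\sigma((U_{\i\odot\e''})_{\e''\le\tilde{\e}})$ (provided by the inductive hypothesis) to conditioning on the larger $\sigma$-algebra $\mathcal{G}$. The idea is to exploit the independence of the $U$-variables: $\hat{\bX}_{\i\odot\e'}$ is a measurable function of the family $\{U_{\i\odot\e''} : \e''\le\e'\}$, while $\mathcal{G}$ is generated by $\{U_{\i\odot\e''} : \e''\le\e-\e_\ell\}$. A componentwise check shows that the intersection of the two index sets is exactly $\{\e'' : \e''\le \tilde\e\}$, and any additional $U$-variable lying in $\mathcal{G}$ but not in $\sigma((U_{\i\odot\e''})_{\e''\le\tilde{\e}})$ has index $\e''\le\e-\e_\ell$ with $\e''\not\le\tilde{\e}$, hence never appears in $\{\e''\le\e'\}$. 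Because the $U$-variables are i.i.d., these extra variables are independent of $\hat{\bX}_{\i\odot\e'}$ conditional on the common piece, so the tower property gives $\E[\hat{\bX}_{\i\odot\e'}\mid\mathcal{G}] = \E[\hat{\bX}_{\i\odot\e'}\mid \sigma((U_{\i\odot\e''})_{\e''\le\tilde{\e}})] = 0$.

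To close the argument, I would invoke the identity $\E[\bX_{\i}\mid\mathcal{G}] = \sum_{\bm{0}\ne\e'\le\e-\e_\ell}\hat{\bX}_{\i\odot\e'}$ (obtained by rearranging the recursive definition of $\hat{\bX}_{\i\odot(\e-\e_\ell)}$, with the convention $\hat{\bX}_{\i\odot\bm{0}} = 0$). This exactly cancels the sum coming from Case~(a), while the Case~(b) contributions vanish by the preceding step, yielding $\E[\hat{\bX}_{\i\odot\e}\mid\mathcal{G}] = 0$ as desired and completing the induction.
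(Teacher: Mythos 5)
Your proof is correct and follows essentially the same route as the paper: induction on $k = |\supp(\e)|$, applying $\E[\cdot\mid\mathcal{G}]$ to the recursive definition, cancelling the $\mathcal{G}$-measurable terms against $\E[\bX_{\i}\mid\mathcal{G}]$, and annihilating the remaining terms (those $\e'$ with $\ell\in\supp(\e')$) via the inductive hypothesis. The only cosmetic difference is that the paper reaches the cancellation by expanding $\hat{\bX}_{\i\odot(\e-\e_\ell)}$ and absorbing it, whereas you invoke the equivalent identity $\E[\bX_{\i}\mid\mathcal{G}]=\sum_{\bm 0\ne\e'\le\e-\e_\ell}\hat{\bX}_{\i\odot\e'}$ directly. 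You are also more explicit than the paper about the ``upgrade'' step: the paper simply asserts $\E[\hat{\bX}_{\i\odot\e'}\mid(U_{\i\odot\e''})_{\e''\le\e-\e_\ell}]=\E[\hat{\bX}_{\i\odot\e'}\mid(U_{\i\odot\e''})_{\e''\le\e'-\e_\ell}]$ without spelling out why, whereas your disjointness/independence argument ($\e''\le\e-\e_\ell$ and $\e''\le\e'$ forces $\e''\le\e'-\e_\ell$, so the extra conditioning $U$-variables are independent of everything $\hat{\bX}_{\i\odot\e'}$ depends on together with the common piece) is exactly the right justification and worth stating.
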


\begin{proof}[Proof of Lemma \ref{lem: conditional zero mean}]
For illustration, consider first the $K=3$ case and $\e = (1,1,1)$. Then 
\[
\hat{\bX}_{\i} = \bX_{\i} - \hat{\bX}_{(i_{1},i_{2},0)} - \hat{\bX}_{(0,i_{2},i_{3})} - \hat{\bX}_{(i_{1},0,i_{3})} - \hat{\bX}_{(i_{1},0,0)}  - \hat{\bX}_{(0,i_{2},0)} - \hat{\bX}_{(0,0,i_{3})}. 
\]
Given $(U_{(i_{1},0,0)}, U_{(0,i_{2},0)}, U_{(i_{1},i_{2},0)})$, we have 
\[
\begin{split}
\E[ \hat{\bX}_{(0,i_{2},i_{3})}  \mid U_{(i_{1},0,0)}, U_{(0,i_{2},0)}, U_{(i_{1},i_{2},0)}] &= \E[\bX_{\i} \mid U_{(0,i_{2},0)}] - \E[\bX_{\i} \mid U_{(0,i_{2},0)}] = 0 ,\\
\E[ \hat{\bX}_{(i_{1},0,i_{3})}  \mid U_{(i_{1},0,0)}, U_{(0,i_{2},0)}, U_{(i_{1},i_{2},0)}] &=\E[\bX_{\i} \mid U_{(i_{1},0,0)}] - \E[\bX_{\i} \mid U_{(i_{1},0,0)}] = 0.
\end{split}
\]
Conclude that 
\[
\begin{split}
\E[\hat{\bX}_{\i} \mid U_{(i_{1},0,0)}, U_{(0,i_{2},0)}, U_{(i_{1},i_{2},0)}] &= \E[\bX_{\i} \mid U_{(i_{1},0,0)}, U_{(0,i_{2},0)}, U_{(i_{1},i_{2},0)}] \\
&\quad - (\hat{\bX}_{(i_{1},i_{2},0)} + \hat{\bX}_{(i_{1},0,0)}  + \hat{\bX}_{(0,i_{2},0)}) \\
&=0.
\end{split}
\]

The proof for the general case is by induction on $k$. The conclusion is trivial when $k=1$. Suppose that the lemma is true up to $k-1$. 
Then,
\[
\begin{split}
&\E[\hat{\bX}_{\i \odot \e} \mid (U_{\i \odot \e'})_{ \e' \le \e - \e_{\ell}}] \\
&\quad =\E[ \bX_{\i} \mid (U_{\i \odot \e'})_{ \e' \le \e - \e_{\ell}}] - 
\hat{\bX}_{\i \odot (\e - \e_{\ell})}  \\
&\qquad - \sum_{\substack{ \e' \le \e \\ \e' \neq \e, \e - \e_{\ell}}} \E[\hat{\bX}_{\i\odot \e'} \mid (U_{\i \odot \e''})_{ \e'' \le \e - \e_{\ell}}]   \quad (\text{by the definition of $\hat{\bX}_{\i \odot \e}$})\\
&\quad =\sum_{\substack{\e' \le \e-\e_{\ell} \\ \e' \ne \e - \e_{\ell}}} \hat{\bX}_{\i \odot \e'}  - \sum_{\substack{ \e' \le \e \\ \e' \neq \e, \e - \e_{\ell}}} \E[\hat{\bX}_{\i\odot \e'} \mid (U_{\i \odot \e''})_{ \e'' \le \e - \e_{\ell}}] \quad (\text{by plugging in the expansion of $\hat{\bX}_{\i \odot (\e - \e_{\ell})}$}) \\
&\quad =\sum_{\substack{\e' \le \e-\e_{\ell} \\ \e' \ne \e - \e_{\ell}}} \E[\hat{\bX}_{\i \odot \e'}\mid(U_{\i \odot \e''})_{ \e'' \le \e - \e_{\ell}} ] - \sum_{\substack{ \e' \le \e \\ \e' \neq \e, \e - \e_{\ell}}} \E[\hat{\bX}_{\i\odot \e'} \mid (U_{\i \odot \e''})_{ \e'' \le \e - \e_{\ell}}]\\
&\quad = - \sum_{\substack{ \e' \le \e - \e_{\ell'}, \ell' \ne \ell \\ \ell \in \supp (\e'), \ell' \in \supp (\e)}} \E\left [ \hat{\bX}_{\i\odot \e'} \mid (U_{\i \odot \e''})_{ \e'' \le \e-\e_{\ell}} \right ].
\end{split}
\]
Here, we have used the fact that $\hat{\bX}_{\i \odot \e'}$ is $\sigma ((U_{\i \odot \e''})_{\e'' \le \e'})$-measurable, so that $\E[\hat{\bX}_{\i\odot \e'} \mid (U_{\i \odot \e''})_{ \e'' \le \e - \e_{\ell}}] = \hat{\bX}_{\i \odot \e'}$ as long as $\supp (\e') \subset \supp (\e - \e_{\ell})$. 
For any $ \e' \le \e - \e_{\ell'}$ with $\ell' \neq \ell, \ell \in \supp (\e')$, and $\ell' \in \supp(\e)$,  we have
\[
\E\left [ \hat{\bX}_{\i\odot \e'} \mid (U_{\i \odot \e''})_{ \e'' \le \e-\e_{\ell}} \right ] = \E\left [ \hat{\bX}_{\i\odot \e'} \mid (U_{\i \odot \e''})_{ \e'' \le \e'-\e_{\ell}} \right ] = 0
\]
by the induction hypothesis. 
Conclude that $\E[\hat{\bX}_{\i \odot \e} \mid (U_{\i \odot \e'})_{ \e' \le \e - \e_{\ell}}] = 0$.
\end{proof}

\begin{proof}[Proof of Lemma \ref{lem: symmetrization}]
Let $\e = (\underbrace{1,\dots,1}_{k},0,\dots,0)$.
Given $( U_{\i \odot \e'} )_{\i \in [\bN], \e' \le \e - \e_{1}}$, $\{ \sum_{i_{2},\dots,i_{k}} \hat{\bX}_{(i_{1},i_{2}\dots,i_{k},0,\dots,0)}  : i_{1} = 1,\dots, N_{1} \}$ are independent with mean zero (the latter follows from Lemma \ref{lem: conditional zero mean}). 
Hence, applying the symmetrization inequality (\cite{vdVW1996}, Lemma 2.3.6) conditionally on $( U_{\i \odot \e'})_{\i \in [\bN],\e' \le \e - \e_{1}}$, we have 
\[
\begin{split}
&\E \left [ \Phi \left ( \left \| \sum_{i_{1},\dots,i_{k}} \hat{\bX}_{(i_{1},\dots,i_{k},0,\dots,0)} \right \|_{\infty} \right ) \mid ( U_{\i \odot \e'} )_{\i \in [\bN],\e' \le \e - \e_{1}} \right ] \\
&= \E \left [ \Phi \left ( \left \| \sum_{i_{1}} \left (\sum_{i_{2}, \dots,i_{k}}\hat{\bX}_{(i_{1},\dots,i_{k},0,\dots,0)} \right ) \right \|_{\infty}  \right )\mid ( U_{\i \odot \e'} )_{\i \in [\bN],\e' \le \e - \e_{1}} \right ] \\
&\le \E \left [ \Phi \left ( 2\left \| \sum_{i_{1}} \epsilon_{1,i_{1}} \left (\sum_{i_{2}, \dots,i_{k}}\hat{\bX}_{(i_{1},\dots,i_{k},0,\dots,0)} \right ) \right \|_{\infty}\right )  \mid ( U_{\i \odot \e'} )_{\i \in [\bN],\e' \le \e - \e_{1}} \right ] \\
&=\E \left [ \Phi \left(2 \left \| \sum_{i_{1}, \dots,i_{k}} \epsilon_{1,i_{1}}\hat{\bX}_{(i_{1},\dots,i_{k},0,\dots,0)}  \right \|_{\infty} \right ) \mid ( U_{\i \odot \e'} )_{\i \in [\bN],\e' \le \e - \e_{1}} \right ] 
\end{split}
\]
By Fubini's theorem, we have 
\[
\E \left [ \Phi \left ( \left \| \sum_{i_{1},\dots,i_{k}} \hat{\bX}_{(i_{1},\dots,i_{k},0,\dots,0)} \right \|_{\infty}\right ) \right ] \le 
\E \left [\Phi \left ( 2\left \| \sum_{i_{1},\dots,i_{k}} \epsilon_{1,i_{1}}\hat{\bX}_{(i_{1},\dots,i_{k},0,\dots,0)}  \right \|_{\infty} \right) \right ].
\]
Next, given $\{ \epsilon_{1,i_{1}} \} \cup \{ U_{\i \odot \e'} \}_{\i \in [\bN],\e' \le \e-\e_{2}}$, $\{ \sum_{i_{1},i_{3},\dots,i_{K}} \epsilon_{1,i_{1}} \hat{\bX}_{(i_{1},i_{2}\dots,i_{K},0,\dots,0)}  : i_{2}=1,\dots, N_{2} \}$ are independent with mean zero, so that by the symmetrization inequality and Fubini's theorem, we have
\[
\begin{split}
&\E \left [\Phi \left ( 2\left \| \sum_{i_{1},\dots,i_{k}} \epsilon_{1,i_{1}}\hat{\bX}_{(i_{1},\dots,i_{k},0,\dots,0)}  \right \|_{\infty} \right ) \right ] \\
&=\E \left [\Phi \left( 2 \left \| \sum_{i_{2}} \left (\sum_{i_{1},i_{3} \dots,i_{k}} \epsilon_{1,i_{1}}\hat{\bX}_{(i_{1},\dots,i_{k},0,\dots,0)}  \right )  \right \|_{\infty}\right) \right ] \\
&\le \E \left [\Phi \left(4 \left \| \sum_{i_{2}} \epsilon_{2,i_{2}}  \left (\sum_{i_{1},i_{3} \dots,i_{k}} \epsilon_{1,i_{1}}\hat{\bX}_{(i_{1},\dots,i_{k},0,\dots,0)}  \right )  \right \|_{\infty}\right )  \right ] \\
&=\E \left [\Phi \left(4 \left \| \sum_{i_{1},\dots,i_{k}} \epsilon_{1,i_{1}} \epsilon_{2,i_{2}} \hat{\bX}_{(i_{1},\dots,i_{k},0,\dots,0)}  \right \|_{\infty} \right ) \right ].
\end{split}
\]
The conclusion of the lemma follows from repeating this procedure. 
\end{proof}

We are now in position to prove Theorem \ref{thm: maximal inequality}.

\begin{proof}[Proof of Theorem \ref{thm: maximal inequality}]
In this proof, the notation $\lesssim$ means that the left-hand side is less than the right-hand side up to a constant that depends only on $q$ and $K$.
We may assume without loss of generality  $\e = (\underbrace{1,\dots,1}_{k},0,\dots,0)$. 
In view of Lemma \ref{lem: symmetrization}, it suffices to show that 
\[
\E \left [ \left \| \sum_{i_{1},\dots,i_{k}} \epsilon_{1,i_{1}} \cdots \epsilon_{k,i_{k}} \hat{\bX}_{(i_{1},\dots,i_{k},0,\dots,0)} \right \|_{\infty}^{q} \right ] \lesssim (\log p)^{qk/2} \E \left [ \max_{1 \le j \le p} \left (\sum_{i_{1},\dots,i_{k}} |\hat{X}_{(i_{1},\dots,i_{k},0,\dots,0)}^{j}|^{2}\right)^{q/2} \right ]. 
\]
By conditioning and Lemma 2.2.2 in \cite{vdVW1996}, together with the fact that that the $L^{q}$-norm is bounded from above by the $\psi_{2/k}$-norm up to some constant that depends only on $(q,k)$ \red{(cf. Lemma \ref{lem: Orlicz norms} ahead)}, the problem boils down to proving that, for any constants $a_{i_{1},\dots,i_{k}}$,
\[
\left \| \sum_{i_{1},\dots,i_{k}} \epsilon_{1,i_{1}} \cdots \epsilon_{k,i_{k}} a_{i_{1},\dots,i_{k}} \right \|_{\psi_{2/k}} \lesssim \sqrt{\sum_{i_{1},\dots,i_{k}} a_{i_{1},\dots,i_{k}}^{2}},
\]
but this follows from Corollary 3.2.6 in \cite{delaPenaGine1999}. Indeed, let 
\[
(\epsilon_{1}',\epsilon_{2}',\dots) = (\epsilon_{1,1},\dots,\epsilon_{1,N_{1}},\epsilon_{2,1},\dots,\epsilon_{K,N_{K}}),
\]
and define correspondingly 
\[
b_{j_{1}\dots j_{K}}
=
\begin{cases}
a_{i_1 \dots i_{K}} & \text{if} \ j_1 = i_{1}, j_{2} = N_{1} + i_{2}, \dots, j_{K}=\prod_{k=1}^{K-1} N_{k} + i_{K}, \\
0 & \text{otherwise}
\end{cases}
\]
for $i_{k} =1,\dots,N_k, k=1,\dots,K$. Then, 
\[
\sum_{i_{1},\dots,i_{K}} \epsilon_{1,i_{1}} \cdots \epsilon_{K,i_{K}} a_{i_{1} \dots i_{K}} = \sum_{j_{1}<\dots < j_{K}}\epsilon_{j_{1}}' \dots \epsilon_{j_{K}}' b_{j_{1}\dots j_{K}}.
\]
Corollary 3.2.6 in \cite{delaPenaGine1999} implies that the $\psi_{2/k}$-norm of the right-hand side is $\lesssim \sqrt{\sum_{j_{1} < \cdots < j_{K}} b_{j_{1} \dots j_{K}}^{2}} = \sqrt{\sum_{i_{1},\dots,i_{K}} a_{i_{1} \dots i_{K}}^{2}}$. 
\end{proof}

\red{We shall prove the following technical result used in the proof of Theorem \ref{thm: maximal inequality}.

\begin{lemma}
\label{lem: Orlicz norms}
Let $0 < \beta < \infty$ and $1 \le q < \infty$ be given, and let $m = m(\beta,q)$ be the smallest positive integer satisfying $m\beta \ge q$. Then for every real-valued random variable $\xi$, we have $(\E[|\xi|^q])^{1/q} \le (m!)^{1/( m\beta)} \| \xi \|_{\psi_\beta}$. 
\end{lemma}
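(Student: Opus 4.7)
The plan is to expand the exponential in the Orlicz norm's defining inequality as a power series and extract a single term to bound the desired moment, then interpolate to the exponent $q$ via Jensen.

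Set $C = \| \xi \|_{\psi_\beta}$ and assume $C < \infty$ (otherwise the conclusion is trivial). By definition, $\E[\psi_\beta(|\xi|/C)] \le 1$, i.e.
\[
\E\!\left[\, e^{(|\xi|/C)^\beta} - 1 \,\right] \le 1.
\]
Expanding the exponential as a power series (all terms are nonnegative, so Tonelli's theorem applies) gives
\[
\sum_{k=1}^\infty \frac{\E[|\xi|^{k\beta}]}{k!\,C^{k\beta}} \le 1.
\]
In particular, keeping only the $k=m$ term yields $\E[|\xi|^{m\beta}] \le m!\,C^{m\beta}$.

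Next, since $m\beta \ge q$ by the choice of $m$, Jensen's inequality (equivalently, monotonicity of the $L^r$-norm in $r$) gives $(\E[|\xi|^q])^{1/q} \le (\E[|\xi|^{m\beta}])^{1/(m\beta)}$. Combining this with the previous bound yields $(\E[|\xi|^q])^{1/q} \le (m!)^{1/(m\beta)}\,C$, which is the desired inequality.

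The only subtlety is the case $C = \|\xi\|_{\psi_\beta} = \infty$, in which the inequality holds vacuously, and ensuring the $\inf$ in the Orlicz norm definition is attained in the sense that $\E[\psi_\beta(|\xi|/C)] \le 1$ for $C = \|\xi\|_{\psi_\beta}$; this follows by monotone convergence applied to a sequence $C_n \downarrow \|\xi\|_{\psi_\beta}$. Otherwise the argument is a one-line power series expansion followed by Jensen, with no real obstacle.
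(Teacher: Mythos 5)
Your proof is correct and takes essentially the same route as the paper's: expand $e^{x^\beta}-1$ as a power series, keep the $k=m$ term to obtain $\E[|\xi|^{m\beta}] \le m!\,C^{m\beta}$, and then apply Jensen to interpolate from exponent $m\beta$ down to $q$, with monotone convergence handling the attainment of the infimum in the Orlicz norm. No substantive differences from the paper's argument.
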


\begin{proof}[Proof of Lemma \ref{lem: Orlicz norms}]
By Taylor expansion, we have $\psi_{\beta} (x) = e^{x^{\beta}} - 1 = \sum_{\ell=1}^{\infty} \frac{x^{\ell \beta}}{\ell !} \ge \frac{x^{m \beta}}{m!}$ for $x \ge 0$. 
Choose $C = \| \xi \|_{\psi_{\beta}}$, so that $\E[\psi_{\beta}(|\xi/C|)] \le 1$ (by the monotone convergence theorem the infimum in the definition of Orlicz norm is attained).  Then, $1 \ge \E[|\xi/C|^{m\beta}]/m!$, so $(\E[|\xi|^q])^{1/q} \le (\E[|\xi|^{m\beta}])^{1/(m\beta)} \le C (m!)^{1/(m\beta)}$. 
\end{proof}
}

\begin{remark}[Comparison with \cite{DDG2019}]
\label{rem: comparison with DDG}
Lemma S2 of \cite{DDG2019} derives a symmetrization inequality for the empirical process of an separately exchangeable array. Their symmetrization inequality is substantially different from the maximal inequalities developed in this section, in the sense that their symmetrization inequality is applied to the whole sample mean and does not lead to correct orders to degenerate components of the Hoeffding decomposition. Indeed,  \cite{DDG2019} do not derive a Hoeffding-type decomposition for separately exchangeable arrays. 
\end{remark}

\section{Proofs for Section \ref{sec:multiway_clustered_data}}\label{sec:proofs_multiway_clustered_data}

\subsection{Proof of Lemma \ref{lem: H decomposition}}\label{sec:proof of Lemma Hoeffding multiway}
The lemma follows from the fact that $\E[ \bX_{\i} \mid (U_{\i \odot \e})_{ \e \le \bm{1}}] = \bX_{\i}$, so that $\bX_{\i} = \hat{\bX}_{\i} + \sum_{ \e \le \bm{1}, \e \ne \bm{1}} \hat{\bX}_{\i\odot \e} =  \sum_{\e \in \{ 0,1 \}^{K} \setminus \{ \bm {0} \}} \hat{\bX}_{\i \odot \e}$.
\qed

\subsection{Proof of Theorem \ref{thm: high-d CLT}}\label{sec:proof of theorem HDCLT multiway}
We will assume Condition (\ref{eq:condition1}). The proof under Condition (\ref{eq:condition1_poly}) is similar and thus omitted. 
In this proof, let $C$ denote a generic constant that depends only on $\underline{\sigma}$ and $K$. 
\red{Further, we may assume without loss of generality that 
 \begin{equation}
 \frac{D_{\bm{N}}^2 \log^7 (p\overline{N})}{n} \le 1
 \label{eq: rate assumption}
 \end{equation}
 since otherwise the conclusion would be trivial by taking $C$ in the statement of the theorem to be greater than $1$.}
We divide the proof into two steps. 

\underline{Step 1}. We first prove the following bound for the H\'{a}jek projection
\[
\sup_{R \in \mathcal{R}} | \Prob (\sqrt{n}\bS_{\bN}^{W} \in R) -\gamma_{\Sigma}(R) | \le C \left ( \frac{D_{\bN}^{2} \log^{7} (p\overline{N})}{n} \right )^{1/6},
\]
where $\bS_{\bN}^{W} =\sum_{k=1}^{K} N_{k}^{-1}\sum_{i_{k}=1}^{N_{k}} \bW_{k,i_{k}}$. 

For the notational convenience, we assume $K=2$; the proof for the general case is completely analogous. 
Let $\overline{\bW}_{k} = N_{k}^{-1} \sum_{i_{k}} \bW_{k,i_{k}}$. 
\red{We will apply  Proposition 2.1 in \cite{CCK2017AoP} to $\sqrt{N_k}\overline{\bW}_{k}$. Condition (\ref{eq:condition2}) ensures Conditions (M.1) and (M.2) in \cite{CCK2017AoP} to hold, and Condition (\ref{eq:condition1}) ensures Condition (E.1) in \cite{CCK2017AoP} to hold. Conclude from  Proposition 2.1 in \cite{CCK2017AoP} that}
\[
\sup_{R \in \calR} | \Prob( \sqrt{N_{k}} \overline{\bW}_{k}  \in R) - \gamma_{\Sigma_{W_{k}}} (R) | \le C \left ( \frac{D_{\bN}^{2} \log^{7} (p\overline{N})}{n} \right)^{1/6}, \ k=1,2. 
\]
For any rectangle $R = \prod_{j=1}^{p} [a_{j},b_{j}]$, vector $\bm{w} = (w_1,\dots,w_p)^{T} \in \R^{P}$, and scalar $c > 0$, we use the notation $[ c R + \bm{w}] = \prod_{j=1}^{p}[ca_{j}+w_{j},cb_{j}+w_{j}]$, which is still a rectangle. With this in mind, observe that for any rectangle $R \in \calR$, 
\[
\Prob (\sqrt{n}(\overline{\bW}_{1}  + \overline{\bW}_{2} ) \in R) = \E \left [ \Prob \left (\sqrt{N_{1}} \overline{\bW}_{1} \in [\sqrt{N_{1}/n} R - \sqrt{N_{1}}\overline{\bW}_{2}] \mid \overline{\bW}_{2} \right)  \right ]
\]
Since $\overline{\bW}_{1}$ and $\overline{\bW}_{2}$ are independent, the right-hand side is bounded by 
\[
\E \left [  \gamma_{\Sigma_{W_{1}}}( [\sqrt{N_{1}/n} R - \sqrt{N_{1}}\overline{\bW}_{2}]) \right ] + C\left ( \frac{D_{\bN}^{2} \log^{7} (p\overline{N})}{n} \right)^{1/6}.
\]
For $\bm{Y}_{1} \sim N(\bm{0},\Sigma_{W_{1}})$ independent of $\overline{\bW}_{2}$, we have 
\[
 \gamma_{\Sigma_{W_{1}}}( [\sqrt{N_{1}/n} R - \sqrt{N_{1}}\overline{\bW}_{2}])= \Prob (\bm{Y}_{1} \in[\sqrt{N_{1}/n} R - \sqrt{N_{1}}\overline{\bW}_{2}] \mid \overline{\bW}_{2}),
\]
so that 
\[
\begin{split}
&\E \left [  \gamma_{\Sigma_{W_{1}}}( [\sqrt{N_{1}/n} R - \sqrt{N_{1}}\overline{\bW}_{2}]) \right ]  =\Prob(\bm{Y}_{1} \in[\sqrt{N_{1}/n} R - \sqrt{N_{1}}\overline{\bW}_{2}]) \\
&\quad = \Prob (\sqrt{N_{2}}\overline{\bW}_{2} \in [ \sqrt{N_{2}/n} R - \sqrt{N_{2}/N_{1}}\bm{Y}_{1}]) \\
&\quad = \E\left [ \Prob (\sqrt{\red{N_{2}}}\overline{\bW}_{2} \in [ \sqrt{N_{2}/n} R - \sqrt{N_{2}/N_{1}}\bm{Y}_{1}] \mid \bm{Y}_{1}) \right ]. 
\end{split}
\]
Since $\bm{Y}_{1}$ and $\overline{\bW}_{2}$ are independent, the far right-hand side is bounded by
\[
 \E\left [ \gamma_{\Sigma_{W_{2}}} ([ \sqrt{N_{2}/n} R - \sqrt{N_{2}/N_{1}}\bm{Y}_{1}]) \right ] + C\left ( \frac{D_{\bN}^{2} \log^{7} (p\overline{N})}{n} \right)^{1/6}.
 \]
For $\bm{Y}_{2} \sim N(\bm{0},\Sigma_{W_{2}})$ independent of $\bm{Y}_{1}$, the first term can be written as $\Prob (\sqrt{n/N_{1}} \bm{Y}_{1} + \sqrt{n/N_{2}} \bm{Y}_{2} \in R) = \gamma_{\Sigma}(R)$. Conclude that 
\[
\Prob (\sqrt{n}(\overline{\bW}_{1}  + \overline{\bW}_{2} ) \in R)  \le \gamma_{\Sigma}(R) + C\left ( \frac{D_{\bN}^{2} \log^{7} (p\overline{N})}{n} \right)^{1/6}.
\]
\red{The reverse inequality,
\[
\Prob (\sqrt{n}(\overline{\bW}_{1}  + \overline{\bW}_{2} ) \in R)  \ge \gamma_{\Sigma}(R)- C\left ( \frac{D_{\bN}^{2} \log^{7} (p\overline{N})}{n} \right)^{1/6}
\]
 follows similarly. }

\underline{Step 2}. We will prove the conclusion of the theorem. 
Recall the decomposition:
\[
\bS_{\bN} = \bS_{\bN}^{W} + \bm{R}_{\bN} \quad \text{with} \quad \bm{R}_{\bN} = \sum_{k=2}^{K} \sum_{\e \in \mathcal{E}_{k}} \frac{1}{\prod_{k' \in \supp (\e)} N_{k'}} \sum_{\i \in I_{\e}([\bN])} \hat{\bX}_{\i}.
\]
\red{
By Corollary \ref{cor: maximal inequality} applied with $q=1$ (see also the remark after the corollary), we have 
\[
\begin{split}
\E[ \| \bm{R}_{\bN} \|_{\infty}] &\le  \sum_{k=2}^{K} \sum_{\e \in \mathcal{E}_{k}} \frac{1}{\prod_{k' \in \supp (\e)} N_{k'}} \E \left [ \left \| \sum_{\i \in I_{\e}([\bN])} \hat{\bX}_{\i} \right \|_{\infty} \right ] \\
&\le C \sum_{k=2}^{K} (\log p)^{k/2}  \sum_{\e \in \mathcal{E}_{k}} \frac{1}{\sqrt{\prod_{k' \in \supp (\e)} N_{k'}}} \sqrt{\E[\| \bX_{\bm{1}} \|_{\infty}^2]} \\
&\le C \sum_{k=2}^{K} n^{-k/2}(\log p)^{k/2}  \sqrt{\E[\| \bX_{\bm{1}} \|_{\infty}^2]}. 
\end{split}
\]
By Lemma 2.2.2 in \cite{vdVW1996}, we have 
\[
\begin{split}
\sqrt{\E[\| \bX_{\bm{1}} \|_{\infty}^2]} &\le C \| \| \bX_{\bm{1}} \|_{\infty} \|_{\psi_1} \\
&\le C (\log p) \max_{1 \le j \le p} \| \bX_{\bm{1}}^{j} \|_{\psi_1} \\
&\le C (\log p) D_{\bm{N}}.
\end{split}
\]
Using the assumption (\ref{eq: rate assumption}), we conclude that 
\[
\E[ \| \bm{R}_{\bN} \|_{\infty}] \le C \sum_{k=2}^{K} n^{-k/2}(\log p)^{k/2+1} D_{\bm{N}}  \le Cn^{-1} (\log p)^2 D_{\bm{N}}. 
\]
To be precise, the second inequality follows from the following argument. By (\ref{eq: rate assumption}), we know that $n^{-1} \log p \le (\log 2)^{-6}$ (as $D_{\bm{N}} \ge 1$ and $p \ge 2$), so that 
\[
\begin{split}
\sum_{k=2}^K n^{-k/2}(\log p)^{k/2+1} &= n^{-1} (\log p)^2 \sum_{k=2}^K (n^{-1} \log p)^{(k-2)/2} \\
&\le (K-1) (\log 2)^{-3(K-2)} n^{-1} (\log p)^2.
\end{split}
\]
}

For $R = \prod_{j=1}^{p} [a_{j},b_{j}]$ with $\bm{a}=(a_1,\dots,a_p)^{T}$ and $\bm{b} = (b_1,\dots,b_p)^{T}$, we have 
\[
\begin{split}
&\Prob (\sqrt{n}\bS_{\bN} \in R) = \Prob (\{ - \sqrt{n}\bS_{\bN} \le -\bm{a} \} \cap \{ \sqrt{n}\bS_{\bN} \le \bm{b} \}) \\
&\le \Prob (\{ - \sqrt{n}\bS_{\bN} \le -\bm{a} \} \cap \{ \sqrt{n}\bS_{\bN} \le \bm{b} \} \cap \{ \| \sqrt{n} \bm{R}_{N} \|_{\infty} \le t \}) + \Prob (\| \sqrt{n}\bm{R}_{N} \|_{\infty} > t) \\
&\le \Prob (\{ - \sqrt{n}\bS_{\bN}^{W} \le -\bm{a} - t\} \cap \{ \sqrt{n}\bS_{\bN}^{W} \le \bm{b}+t \}) +C  t^{-1} n^{-1/2} D_{\bN} (\log p)^{2} \\
&\le \gamma_{\Sigma} (\{ \bm{ y} \in \R^{p} : -\bm{y} \le -\bm{a} \red{-} t, \bm{y} \le \bm{b}+t \}) \\
&\qquad + C \left ( \frac{D_{\bN}^{2} \log^{7} (p\overline{N})}{n} \right)^{1/6} + C  t^{-1} n^{-1/2} D_{\bN} (\log p)^{2} \\
&\le \gamma_{\Sigma} (R) + C t \sqrt{\log p} + C \left ( \frac{D_{\bN}^{2} \log^{7} (p\overline{N})}{n} \right)^{1/6} + C  t^{-1} n^{-1/2} D_{\bN} (\log p)^{2},
\end{split}
\]
where the last line follows from Nazarov's inequality (see Lemma \ref{lem: AC} in Appendix \ref{sec:auxiliary_lemmas}) \red{together with the fact that the smallest diagonal element of $\Sigma$ is bounded from below by $\underline{\sigma}^2$, which is guaranteed from the second part of Condition (\ref{eq:condition2})}. 
Choosing $t = n^{-1/4} D_{\bN}^{1/2} (\log^{3} p)^{1/4}$, we have 
\[
\begin{split}
\Prob (\sqrt{n}\bS_{\bN} \in R) &\le \gamma_{\Sigma} (R) + C \left ( \frac{D_{\bN}^{2} \log^{7} (p\overline{N})}{n} \right)^{1/6}  + C\left ( \frac{D_{\bN}^{2} \log^{5}p}{n} \right )^{1/4} \\
&\le \gamma_{\Sigma}(R) + C \left ( \frac{D_{\bN}^{2} \log^{7} (p\overline{N})}{n} \right)^{1/6},
\end{split}
\]
\red{where we used the assumption (\ref{eq: rate assumption}) to derive the final inequality.}
\red{The reverse inequality,
\[
\Prob (\sqrt{n}\bS_{\bN} \in R) \ge \gamma_{\Sigma}(R) - C \left ( \frac{D_{\bN}^{2} \log^{7} (p\overline{N})}{n} \right)^{1/6}
\]
follows similarly.} \qed

\subsection{Proof of Theorem \ref{thm:bootsrap_validity}}\label{sec:proof of multiway multiplier bootstrap}
We separately prove the theorem under Cases (i) and (ii). 

\underline{Case (i)}. 
Let $C$ denote a generic constant that depends only on $\nu, \underline{\sigma}, K$, and $C_1$. Also the notation $\lesssim$ means that the left-hand side is bounded by the right-hand side up to a constant that depends only on $\nu, \underline{\sigma}, K$, and $C_1$. We divide the proof into two steps. 

\underline{Step 1}. 
Define
\[
\hat \Delta_{W}=\max_{1\le j \le p;1 \le k \le K} \frac{1}{N_k} \sum_{i_k=1}^{N_k} ( \overline X_{k,i_k}^j 
- W_{k,i_k}^j )^2.
\]
We will show that $\Prob  ( \overline \sigma^2 \hat \Delta_{W}\log^4 p > n^{-(\zeta-1/\nu)} ) \lesssim n^{-1}$.
It suffices to show that
$\Prob(\overline \sigma^2 \hat \Delta_{W,1,1}\log^4 p >  n^{-(\zeta-1/\nu)})\lesssim n^{-1}$, where
\begin{align*}
 \hat \Delta_{W,1,1}=\max_{1\le \ell\le p}\frac{1}{N_1} \sum_{i_1=1}^{N_1}
( \overline{X}_{1,i_{1}}^\ell -W_{1,i_1}^\ell)^2,
\end{align*}
as similar bounds hold for $\max_{1\le \ell\le p}N_k^{-1} \sum_{i_k=1}^{N_k}
( \overline{X}_{k,i_{k}}^\ell -W_{k,i_k}^\ell)^2$ with $k\in \{2,\dots,K\}$.

We first note that 
\begin{align*}
\hat \Delta_{W,1,1}
=
\max_{1\le \ell \le p} \frac{1}{N_1} \sum_{i_1=1}^{N_1} ( \overline X_{1,i_1}^\ell - W_{1,i_1}^\ell )^2
\le
 \frac{1}{N_1} \sum_{i_1=1}^{N_1}\|\overline{\bX}_{1,i_1} - \bW_{1,i_1}\|_{\infty}^2.
\end{align*}
Pick any $i_1 \in \N$. 
For each $\i_{-1}=(i_2,\dots,i_K)\in \N^{K-1}$ and $\e \in \{0,1\}^{K-1}$, define the vector
\begin{align*}
V_{\i_{-1}\odot \e}=(U_{(0,\i_{-1} \odot \e)},U_{(i_1,\i_{-1} \odot \e)}).
\end{align*}
With this notation, we can rewrite $\bX_{\i}$ with $\i = (i_1,\i_{-1})$  as 
\begin{align*}
\bX_{\i}=\mathfrak{g}\big(U_{(i_1,0,\dots,0)},(V_{\i_{-1}\odot \e})_{\e\in \{0,1\}^{K-1}\setminus \{ \bm{0} \}}\big).
\end{align*}
From this expression, we see that, conditionally on $U_{(i_1,0,\dots,0)}$, the $(K-1)$-array $(\bX_{(i_1,\i_{-1})})_{\i_{-1} \in \N^{K-1}}$ is separately exchangeable with mean vector $\bW_{1,i_1}$ generated by $\{ V_{\i_{-1}\odot \e} : \i_{K-1} \in \N^{K-1}, \e \in \{0,1\}^{K-1} \setminus \{ \bm{0} \} \}$. 
Applying Corollary \ref{cor: maximal inequality} conditionally on $U_{(i_1,0,\dots,0)}$ (the fact that $U_{\i \odot \e}$ are uniform on $[0,1]$ is not crucial in the proof of Corollary \ref{cor: maximal inequality}) combined with Jensen's inequality, we have 
\[
\E[\|\overline{\bX}_{1,i_1} - \bW_{1,i_1}\|_{\infty}^{2\nu} \mid U_{(i_1,0\dots,0)}] \\
\lesssim \underbrace{\left ( \sum_{k=1}^{K-1}n^{-k/2}  (\log p)^{k/2} \right )^{2\nu}}_{\lesssim  (n^{-1} \log p)^{\nu}} \E[\|\bX_{(i_1,1,\dots,1)}\|_{\infty}^{2\nu} \mid U_{(i_1,\dots,0)}],
\]
so that by Fubini's theorem
\[
\E[\|\overline{\bX}_{1,i_1} - \bW_{1,i_1}\|_{\infty}^{2\nu}] \lesssim (n^{-1} \log p)^{\nu} \E[\|\bX_{(i_1,1,\dots,1)}\|_{\infty}^{2\nu}] \lesssim (n^{-1} D_{\bN}^2 \log^3 p)^{\nu}.
\]
This implies that  $\E[(\overline \sigma^2 \hat \Delta_{W,1,1} \log^4 p)^{\nu}]\lesssim n^{-\zeta \nu}$ under our assumption. By Markov's inequality, we conclude that 
\begin{align*}
\Prob\left(\overline \sigma^2 \hat \Delta_{W,1,1} \log^4 p > n^{-\zeta +1/\nu}\right)\lesssim n^{-1}.
\end{align*}
This completes Step 1.

\underline{Step 2}. 
Conditionally on $\bX_{[\bN]}$, we have $\sqrt{n}\bS_{\bN}^{MB} \sim N(\bm{0},\hat{\Sigma})$,
where 
\[
\hat{\Sigma} =\sum_{k=1}^K \frac{n}{N_k^2} \sum_{i_k=1}^{N_k}(\overline{\bX}_{k,i_{k}}  - \bS_{\bN})(\overline{\bX}_{k,i_{k}}  -  \bS_{\bN})^{T} .
\]
Hence, to obtain a bound on $\sup_{R\in \calR}|\Prob_{|\bX_{[\bN]}}(\sqrt{n}\bS_{\bN}^{MB}\in R)-\gamma_{\Sigma}(R) |$, it suffices to bound $\| \hat{\Sigma} - \Sigma \|_{\infty}$ in view of Lemma \ref{lem:Gaussian_comparison} in Appendix \ref{sec:auxiliary_lemmas}. We note that
\begin{align*}
\|\hat \Sigma - \Sigma\|_{\infty}
\le&
\sum_{k=1}^K
 \underbrace{\max_{1\le j,\ell\le p}\Big|\frac{n}{N_k^2} \sum_{i_k=1}^{N_k} \overline X_{k,i_{k}}^j \overline{X}_{k,i_{k}}^\ell- \frac{n}{N_k} S_{\bN}^j S_{\bN}^\ell  - \frac{n}{N_k}\E[W_{k,1}^j W_{k,1}^\ell]\Big|}_{=:\hat \Delta_{W,k}}. 
\end{align*}
We will focus on bounding $\hat \Delta_{W,1}$ as similar bounds hold for $\hat \Delta_{W,k}$ with $ k\in\{2,\dots, K\}$. 

Observe that
\[
\begin{split}
\frac{n}{N_1^2} \sum_{i_1=1}^{N_1}  \overline{X}_{1,i_{1}}^j  \overline{X}_{1,i_{1}}^\ell
=&
\frac{n}{N_1^2} \sum_{i_1=1}^{N_1}( \overline{X}_{1,i_{1}}^j -W_{1,i_1}^j)( \overline{X}_{1,i_{1}}^\ell -W_{1,i_1}^\ell)
+
\frac{n}{N_1^2} \sum_{i_1=1}^{N_1}( \overline{X}_{1,i_{1}}^j -W_{1,i_1}^j)W_{1,i_1}^\ell \\
& +\frac{n}{N_1^2} \sum_{i_1=1}^{N_1}W_{1,i_1}^j( \overline{X}_{1,i_{1}}^\ell -W_{1,i_1}^\ell)
+\frac{n}{N_1^2}\sum_{i_1=1}^{N_1} W_{1,i_1}^j W_{1,i_1}^\ell.
\end{split}
\]
By the Cauchy-Schwarz inequality and the definition of $n$, we obtain
\begin{equation}
\begin{split}
\hat\Delta_{W,1}
\le&
 \underbrace{\max_{1\le \ell\le p}\frac{1}{N_1} \sum_{i_1=1}^{N_1}
( \overline{X}_{1,i_{1}}^\ell -W_{1,i_1}^\ell)^2}_{=\hat \Delta_{W,1,1}}
+
2 \hat \Delta_{W,1,1}^{1/2}\sqrt{
\max_{1\le \ell \le p}\frac{1}{N_1}\sum_{i_1=1}^{N_1}|W_{1,i_1}^\ell|^2 
} \\
&+
\underbrace{\max_{1\le j,\ell\le p}\Big | \frac{1}{N_1}\sum_{i_1}^{N_1}(W_{1,i_1}^j W_{1,i_1}^\ell - \E[W_{1,1}^j W_{1,1}^\ell] )\Big| }_{=:\hat \Delta_{W,1,2}}
+
\max_{1\le \ell \le p}|S_{\bN}^\ell|^2.
\end{split}
\label{eq:thm:bootstrap:1}
\end{equation}
For the second term on the right-hand side, we have
\begin{align}
\frac{1}{N_1}\sum_{i_1=1}^{N_1}|W_{1,i_1}^\ell|^2 
\le
\E[|W_{1,i_1}^\ell|^2] +
\frac{1}{N_1}\sum_{i_1=1}^{N_1}(|W_{1,i_1}^\ell|^2 -\E[|W_{1,i_1}^\ell|^2]) \le \overline \sigma^2+\hat \Delta_{W,1,2}.
\label{eq:thm:bootstrap:2}
\end{align}
Further, since $S_{\bN}^\ell = N_1^{-1} \sum_{i_1=1}^{N_1}(\overline{ X}_{1,i_1}^\ell - W_{1,i_1}^\ell )+
N_1^{-1} \sum_{i_1=1}^{N_1} W_{1,i_1}^\ell $, we have
\begin{align}
\max_{1\le \ell \le p}|S_{\bN}^\ell|^2
\lesssim
\hat \Delta_{W,1,1} + \hat \Delta_{W,1,3}^2,\label{eq:thm:bootstrap:3}
\end{align}
where  $\hat \Delta_{W,1,3}=\max_{1\le \ell \le p}|N_1^{-1}\sum_{i_1=1}^{N_1} W_{1,i_1}^\ell |$. 
Combining (\ref{eq:thm:bootstrap:1})--(\ref{eq:thm:bootstrap:3}), we have
\begin{align*}
\hat \Delta_{W,1}
\lesssim
\hat \Delta_{W,1,1}+
 \overline \sigma \hat \Delta_{W,1,1}^{1/2} + \hat \Delta_{W,1,2} + \hat \Delta_{W,1,3}^2.
\end{align*}
It remains to find bounds on the four terms on the right-hand side.

First, by Step 1, we have
$\hat\Delta_{W,1,1} \log^ 2 p \le C n^{-(\zeta-1/\nu)}$ and
$\overline \sigma \hat \Delta_{W,1,1}^{1/2}\log^2 p\le C n^{-(\zeta-1/\nu)/2} $ 
with probability at least $1-Cn^{-1}$.
Second, we note that
\[
\E\left[\max_{1\le i_1 \le N_1}\max_{1\le \ell\le p }|W_{1,i_1}^\ell|^4 \right] 
\lesssim 
(\log p \overline N)^4 \max_{1\le \ell\le p }\||W_{1,1}^\ell|^4\|_{\psi_{1/4}} =(\log p \overline N)^4 \underbrace{\max_{1\le \ell \le p}\|W_{1,1}^\ell\|_{\psi_{1}}^4}_{\le D_{\bN}^{4}}.  
\]
Applying Lemma 8 in \cite{CCK2015PTRF}, we have
\[
\begin{split}
\E[\hat \Delta_{W,1,2}]
&\lesssim N_{1}^{-1}\sqrt{(\log p) \max_{1\le j,\ell\le p} \sum_{i_1=1}^{N_1} \E[|W_{1,i_1}^j W_{1,i_1}^\ell|^2]}  + N_1^{-1}\sqrt{\E\left[\max_{1\le i_1 \le N_1}\max_{1\le \ell \le p} |W_{1,i_1}^\ell|^4 \right] }\log p \\
&\lesssim  N_1^{-1/2} D_{\bN} \log^{1/2} p + N_1^{-1}D_{\bN}^2 (\log p)\log^2 (p \overline N)\\
&\lesssim  n^{-1/2} D_{\bN} \log^{1/2} p + n^{-1}D_{\bN}^2  \log^3 (p \overline N).
\end{split}
\]
Now, applying Lemma E.2 in \cite{CCK2017AoP} with $\eta=1$ and $\beta=1/2$, together with the fact that 
\[
\left\|\max_{1\le i_1 \le N_1}\max_{1\le \ell \le p} |W_{1,i_1}^\ell|^2 \right\|_{\psi_{1/2}}=\left\|\max_{1\le i_1 \le N_1}\max_{1\le \ell \le p} |W_{1,i_1}^\ell | \right \|_{\psi_{1}}^2\lesssim  (\log pN_{1})^{2} D_{\bN}^{2},
\]
we have
\[
\Prob \left(\hat \Delta_{W,1,2} \ge 2 \E[\hat \Delta_{W,1,2}]+t \right) \le
\exp\left(-\frac{n t^2}{3 D_{\bN}^2}\right) + 
3\exp\left\{-\left(\frac{n t}{CD_{\bN}^2 \log^2(p \overline N)}\right)^{1/2}\right\}.
\]
Setting  $t=\{C n^{-1} D_{\bN}^2 \log n\}^{1/2}\vee \{C n^{-1} D_{\bN}^2 (\log^2 n)\log^3(p \overline N) \}$, 
we conclude that
\[
\Prob\left(\hat \Delta_{W,1,2} \ge  C\{(n^{-1} D_{\bN}^2 \log^{1/2} (pn)  + n^{-1}D_{\bN}^2 (\log n)^2\log^3 (p \overline N )\}\right) \le Cn^{-1}.
\]
Condition (\ref{eq:bootstrap_rate}) then guarantees that $\hat \Delta_{W,1,2} \log^{2} p \le Cn^{-\zeta/2}$ with probability at least $1-Cn^{-1}$.

Finally, since $\overline \sigma^2 \le (\max_{1 \le \ell \le p}\E[|W_{1,1}^\ell|^3])^{2/3} \le 1 + \max_{1\le \ell\le p} \E[|W_{1,1}^\ell|^3]\lesssim D_{\bN}$, using Lemma 8 in \cite{CCK2015PTRF}, we have
\[
\E[\hat \Delta_{W,1,3}]
\lesssim
(n^{-1} D_{\bN} \log p)^{1/2} + n^{-1} D_{\bN} \log (p \overline N).
\]
Applying Lemma E.2 in \cite{CCK2017AoP} with $\eta=1$ and $\beta=1$, we have
\[
\hat \Delta_{W,1,3}^2 \log^2 p \le
 \underbrace{C\{n^{-1} D_{\bN} (\log^2 p) \log (pn) + n^{-2} D_{\bN}^2 (\log^2 n) (\log^2 p) \log^{2} (p \overline N)\}}_{\le Cn^{-\zeta}}
\]
with probability at least $1-C n^{-1}$.
Conclude that $\hat\Delta_{W,1} \log^2 p \le C n^{-(\zeta - 1/\nu)/2}$
with probability at least $1-C n^{-1}$. The desired result then follows from Lemma \ref{lem:Gaussian_comparison} in Appendix \ref{sec:auxiliary_lemmas}.

\underline{Case (ii)}. The proof is similar to the previous case. We only point out required modifications. Let $C$ denote a generic constant that depends only on $q, \underline{\sigma}, K$, and $C_1$. The similar modification applies to $\lesssim$.

 Set $\nu = q/2$ in the previous case. Under Case (ii), we have 
\[
\E[\| \overline{\bX}_{1,i_1} - \bW_{1,i_1}\|_{\infty}^{q}] \lesssim (n^{-1} \log p)^{\nu}\underbrace{\E[\| \bX_{(i_1,1,\dots,1)} \|_{\infty}^q]}_{\le D_{\bN}^q},
\]
which implies that $\E[(\overline{\sigma}^2 \hat{\Delta}_{W,1,1}\log^4p)^{q/2}] \lesssim n^{-\zeta q/2}$. Markov's inequality yields that 
\[
\Prob\left(\overline \sigma^2 \hat \Delta_{W,1,1}\log^4 p >n^{-\zeta+2/q} \right)\lesssim n^{-1}.
\]
 In view of the previous case, it remains to find bounds on $\hat{\Delta}_{W,1,2}$ and $\hat{\Delta}_{W,1,3}$.

Applying Lemma 8 in \cite{CCK2015PTRF}, we have
\begin{align*}
\E[\hat \Delta_{W,1,2}]
&\lesssim N_{1}^{-1}\sqrt{(\log p) \max_{1\le j,\ell\le p} \sum_{i_1=1}^{N_1} \E[|W_{1,i_1}^j W_{1,i_1}^\ell|^2]}  + N_1^{-1}\sqrt{\E\left[\max_{1\le i_1 \le N_1}\max_{1\le \ell \le p} |W_{1,i_1}^\ell|^4 \right] }\log p \\
&\lesssim  N_1^{-1/2} D_{\bN} \log^{1/2} p + N_1^{-1+2/q}D_{\bN}^2 \log p \\
&\lesssim  n^{-1/2} D_{\bN} \log^{1/2} p + n^{-1+2/q}D_{\bN}^2 \log p.
\end{align*}
Applying  the Fuk-Nagaev inequality (Lemma E.2 in \cite{CCK2017AoP}) with $s=q/2$, we have
\begin{align*}
\Prob \left(\hat \Delta_{W,1,2} \ge 2 \E[\hat \Delta_{W,1,2}]+t \right) &\le
\exp\left(-\frac{N_1 t^2}{3 D_{\bN}^2}\right) 
+ 
\frac{C N_1 D_{\bN}^q}{ N_1^{q/2} t^{q/2}} \\
&\le
\exp\left(-\frac{n t^2}{3 D_{\bN}^2}\right) 
+ 
\frac{C D_{\bN}^q}{ n^{q/2-1} t^{q/2}}.
\end{align*}
Setting $t=(C n^{-1} D_{\bN}^2 \log n)^{1/2}\bigvee ( Cn^{-1+4/q} D_{\bN}^2 )$, we have
\begin{align*}
\Prob \left(\hat \Delta_{W,1,2} \ge C\{ (n^{-1} D_{\bN}^2 \log (pn) )^{1/2} + n^{-1+4/q}   D_{\bN}^2 \log p \} \right)\le Cn^{-1}.
\end{align*}
Condition (\ref{eq:bootstrap_rate_poly}) then guarantees that $\hat \Delta_{W,1,2} \log^{2}p \le Cn^{-\zeta/2}$
with probability at least $1-Cn^{-1}$.
A bound for $\hat \Delta_{W,1,3}$ can be obtained similarly. Using Lemma 8 in \cite{CCK2015PTRF}, we have 
\begin{align*}
\E[\hat \Delta_{W,1,3}]\lesssim&
(n^{-1} D_{\bN} \log p)^{1/2} + n^{-1+1/q} D_{\bN} \log p.
\end{align*}
Applying Lemma E.2 in \cite{CCK2017AoP} with $s=q$, we have 
\begin{align*}
\Prob \left(\hat \Delta_{W,1,3} \ge 2 \E[\hat \Delta_{W,1,3}]+t \right) \le&
\exp\left(-\frac{n t^2}{3 D_{\bN}}\right) 
+ 
\frac{C D_{\bN}^q}{ n^{q-1} t^{q}}.
\end{align*}
Setting $t=(Cn^{-1}D_\bN \log n)^{1/2}\bigvee (C n^{-1+2/q}D_\bN)$, we conclude that 
\begin{align*}
\hat \Delta_{W,1,3}^2 \log^2 p \le
\underbrace{C\{n^{-1} D_{\bN} (\log^{2}p)\log (p n) + n^{-2+4/q} \log^4 p\}}_{\le Cn^{-\zeta}}
\end{align*}
with probability at least $1-Cn^{-1}$.
\qed


\subsection{Proof of Corollary \ref{cor:normalized_multiway_MB}}\label{sec:proof of corollary normalized multiway}
We only prove the corollary under Case (i). The proof for Case (ii) is similar. 
Let $C$ denote a generic constant that depends only on $\underline{\sigma}, K$, and $C_1$. 
We first note that from the proof of Theorem \ref{thm:bootsrap_validity}, we have 
\begin{align*}
\max_{1\le j \le p}\left|\sigma_j^2/\hat\sigma_j^2 -1\right|\le C n^{-\zeta/3}/\log^2 p
\end{align*}
with probability at least $1-Cn^{-1}$ (choose $\nu = 3/\zeta$ in Theorem \ref{thm:bootsrap_validity}). 
By Theorem \ref{thm: high-d CLT}, we have 
\begin{align*}
\sup_{R\in \calR}\left|\Prob(\sqrt{n} {\Lambda}^{-1/2}\bS_{\bN}\in R)-\Prob( \Lambda^{-1/2} \bm{Y}\in R)\right|\le C n^{-\zeta/6}.
\end{align*} 
By the Borell-Sudakov-Tsirel'son inequality and the fact $\E[\|\Lambda^{-1/2}\bm{Y}\|_{\infty}]\le C \sqrt{\log p}$, which is implied by the Gaussianity of $\Lambda^{-1/2}\bm{Y}$, we have 
\begin{align*}
\Prob \left(\|\Lambda^{-1/2}\bm{Y}\|_{\infty} > C \sqrt{\log (pn) } \right)\le n^{-1}.
\end{align*}
Combining the high-dimensional CLT, we see that 
\begin{align*}
\Prob \left(\|\sqrt{n} {\Lambda}^{-1/2}\bS_{\bN} \|_{\infty}> C \sqrt{\log (pn) } \right)\le C n^{-\zeta/6}.
\end{align*}
Since $\frac{n^{-\zeta/3}}{\log^2 p}\times \sqrt{\log (p n)}\le \frac{C n^{-\zeta/6}}{\log^{3/2} p}$, we have
\begin{align*}
\Prob \left(\|\sqrt{n} ({\hat{\Lambda}}^{-1/2}-\Lambda^{-1/2})\bS_{\bN} \|_{\infty}> t_n \right)\le C n^{-\zeta/6}
\end{align*}
with $t_n=\frac{C n^{-\zeta/6}}{\log^{3/2} p}$.

Now, for $R=\prod_{j=1}^p [a_j,b_j]$ with $\bm{a}=(a_1,\dots,a_p)^{T}$ and $\bm{b} = (b_1,\dots,b_p)^{T}$, we have 
\begin{align*}
\Prob \left(\sqrt{n} \hat{{\Lambda}}^{-1/2}\bS_{\bN} \in R\right)
&\le
\Prob \left(\{-\sqrt{n} \Lambda^{-1/2}\bS_{\bN} \le -\bm{a}- t_n \}\cap \{\sqrt{n}\Lambda^{-1/2}\bS_{\bN} \le \bm{b} + t_n \}\right)\\
&\quad +\Prob \left(\|\sqrt{n} ({\hat{\Lambda}}^{-1/2}-\Lambda^{-1/2})\bS_{\bN} \|_{\infty}> t_n \right)\\
&\le
\Prob \left(\{- {\Lambda}^{-1/2}\bm{Y} \le -\bm{a} - t_n \}\cap \{{\Lambda}^{-1/2}\bm{Y} \le \bm{b}+ t_n \}\right)
+
C n^{-\zeta/6}\\
&\le
\Prob (\Lambda^{-1/2} \bm{Y} \in R)+ C n^{-\zeta/6},
\end{align*}
where the last inequality follows from Lemma \ref{lem: AC} together with the fact that
\[
t_n \sqrt{\log p}\le C n^{-\zeta/6}/\log p \le C n^{-\zeta/6}.
\]
Thus, we have
\begin{align*}
\Prob(\sqrt{n} \hat{\Lambda}^{-1/2}\bS_{\bN}\in R)\le 
\Prob( \Lambda^{-1/2} \bm{Y}\in R)+C n^{-\zeta/6}.
\end{align*}
Likewise, we have
\begin{align*}
\Prob(\sqrt{n} \hat{\Lambda}^{-1/2}\bS_{\bN}\in R)\ge 
\Prob( \Lambda^{-1/2} \bm{Y}\in R)- C n^{-\zeta/6}.
\end{align*}
Conclude that 
\begin{align*}
\sup_{R\in \calR}\left|\Prob(\sqrt{n} \hat{\Lambda}^{-1/2}\bS_{\bN}\in R)-\Prob( \Lambda^{-1/2} \bm{Y}\in R)\right|\le C n^{-\zeta/6}.
\end{align*}
Similarly, using Theorem \ref{thm:bootsrap_validity} and following similar arguments, we conclude that
\begin{align*}
\sup_{R\in \calR}\left|\Prob_{|\bX_{[\bN]}}(\sqrt{n} \hat{\Lambda}^{-1/2}\bS_{\bN}^{MB}\in R)-\Prob( \Lambda^{-1/2} \bm{Y}\in R)\right|\le C n^{-\zeta/6}
\end{align*}
with probability at least $1-Cn^{-1}$. 
\qed


\section{Maximal inequalities for jointly exchangeable arrays}\label{sec:maximal_inequality_for_polyadic_data}
In this section, we shall develop maximal inequalities for jointly exchangeable arrays. 
As in Section \ref{sec:polyadic_data}, let $( \bX_{\i})_{\i \in I_{\infty,K}}$ be a $K$-array consisting of random vectors in $\R^{p}$ with mean zero generated by the structure (\ref{eq: Aldous-Hoover_JE}), i.e., $\bX_{\i} = \mathfrak{g}((U_{\{ \i \odot \e \}^{+}})_{\e \in \{ 0,1 \}^{K} \setminus \{ \bm{0} \}})$. We will follow the notations used in Section \ref{sec:polyadic_data}.  Recall that $I_{n,K} = \{ (i_1,\dots,i_K) : 1 \le i_1,\dots,i_K \le n \ \text{and} \ i_1,\dots,i_K \ \text{are distinct} \}$.

We first point out that when analyzing the sample mean $\bS_n$, it is without loss of generality to assume that $\bX_{\i}$ is symmetric in the components of $\i$, i.e., 
\begin{equation}
\bX_{(i_1,\dots,i_K)} = \bX_{(i_1',\dots,i_K')} \label{eq: symmetry}
\end{equation}
for any permutation $(i_1',\dots,i_K')$ of $(i_1,\dots,i_K)$. This is because even if $\bX_{\i}$ is not symmetric in the components of $\i$, we can instead work with its symmetrized version
\[
\check{\bX}_{(i_1,\dots,i_K)} = \frac{1}{K!} \sum_{(i_1',\dots,i_K')} \bX_{(i_1',\dots,i_K')},
\]
where the summation is taken over all permutations of $(i_1,\dots,i_K)$. It is not difficult to see that the array $(\check{\bX}_{\i})_{\i \in I_{\infty,K}}$ continues to be jointly exchangeable and satisfies that 
\[
\bS_{n} = \frac{(n-K)!}{n!} \sum_{\i \in I_{n,K}} \check{\bX}_{\i} = \binom{n}{K}^{-1} \sum_{1 \le i_1 < \cdots < i_{K} \le n} \check{\bX}_{\i}.
\]
Henceforth, in this section, we will maintain Condition (\ref{eq: symmetry}). 

In the decomposition (\ref{eq: H-decomp polyadic}), the second term on the right-hand side 
\[
\mathbb{U}_{n} =\binom{n}{K}^{-1} \sum_{1 \le i_1 < \cdots < i_{K} \le n} \left( \E[\bX_{\i} \mid U_{i_1},\dots,U_{i_{K}}] - \sum_{k=1}^{K} \E[\bX_{\i} \mid U_{i_k}] \right )
\]
is a degenerate $U$-statistic (with a symmetric kernel) of degree $K$. Indeed, if we define $\mathfrak{t} (u_1,\dots,u_K) = \E[\bX_{(1,\dots,K)} \mid U_1 = u_1,\dots,U_K=u_K] - \sum_{k=1}^{K} \E[\bX_{(1,\dots,K)} \mid U_k = u_k]$, then $\mathfrak{t}$ is symmetric and 
\[
\mathbb{U}_{n} = \binom{n}{K}^{-1} \sum_{1 \le i_1 < \cdots < i_K \le n} \mathfrak{t}(U_{i_1},\dots,U_{i_K}).
\]
The kernel $\mathfrak{t}$ is degenerate as 
\[
\E[ \mathfrak{t}(u,U_2,\dots,U_{K})] = \E[\bX_{(1.\dots,K)} \mid U_1= u]-\E[\bX_{(1,\dots,K)} \mid U_1 = u] = 0.
\]
Applying Corollary 5.6 in \cite{ChenKato2019b}, we obtain the following lemma. 

\begin{lemma}
\label{lem: max_ineq_poly1}
For any $q \in [1,\infty)$, we have 
\[
(\E[\| \mathbb{U}_{n} \|_{\infty}^{q}])^{1/q} \le C \sum_{k=2}^{K} n^{-k/2} (\log p)^{k/2} (\E[\| \bX_{(1,\dots,K)} \|_{\infty}^{q \vee 2}])^{1/(q \vee 2)},
\]
where $C$ is a constant that depends only on $q$ and $K$.
\end{lemma}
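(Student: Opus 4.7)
The plan is essentially direct: recognize $\mathbb{U}_n$ as a completely degenerate $U$-statistic of order $K$ with i.i.d.\ inputs $U_1,\dots,U_n$ and symmetric kernel $\mathfrak{t}$, and then invoke Corollary 5.6 of \cite{ChenKato2019b}, which furnishes a maximal inequality for high-dimensional completely degenerate $U$-statistics. The setup in that corollary requires symmetry of the kernel, complete degeneracy, and a moment bound on $\mathfrak{t}(U_1,\dots,U_K)$; so the task is to verify these hypotheses and then translate the conclusion.

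First I would verify the hypotheses. Symmetry is built into the construction of $\mathfrak{t}$, since we work with the symmetrized version of $\bX_{\i}$ (justified at the top of this appendix section). Complete degeneracy is recorded in the preamble to the lemma: $\E[\mathfrak{t}(u,U_2,\dots,U_K)] = 0$, and by symmetry of $\mathfrak{t}$ the degeneracy extends to every argument. The requisite moment control on the kernel follows from Jensen's inequality: $\mathfrak{t}$ is a signed sum of $K+1$ conditional expectations of $\bX_{(1,\dots,K)}$, so
\[
\left(\E\bigl[\| \mathfrak{t}(U_1,\dots,U_K) \|_{\infty}^{q \vee 2}\bigr]\right)^{1/(q \vee 2)} \le C_{K} \left(\E\bigl[\| \bX_{(1,\dots,K)} \|_{\infty}^{q \vee 2}\bigr]\right)^{1/(q \vee 2)},
\]
with $C_K$ depending only on $K$.

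Second, Corollary 5.6 of \cite{ChenKato2019b} bounds $(\E[\| \mathbb{U}_n \|_\infty^q])^{1/q}$ by a sum over the degrees $k = 1,\dots,K$ of the Hoeffding projections of $\mathfrak{t}$, where the contribution of the $k$-th projection scales like $n^{-k/2}(\log p)^{k/2}$ times the appropriate moment of that projection (which is in turn controlled by the moment of $\mathfrak{t}$ itself, hence by $\bX_{(1,\dots,K)}$ via the above). Complete degeneracy kills the $k=1$ projection, so only the indices $k = 2,\dots,K$ survive, yielding the claimed bound.

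I do not foresee a real obstacle: the only care required is matching normalization conventions (the $\binom{n}{K}^{-1}$ prefactor versus the sum-form used in \cite{ChenKato2019b}) and confirming that the $\log p$ exponent $k/2$ for the $k$-th degenerate component matches the statement of that corollary in the range $q \in [1,\infty)$, both of which are bookkeeping rather than substantive difficulties.
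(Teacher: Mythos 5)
Your proposal matches the paper's proof, which is precisely the one-line observation that $\mathbb{U}_n$ is a first-order degenerate, symmetric $U$-statistic with i.i.d.\ inputs and mean-zero kernel $\mathfrak{t}$, to which Corollary 5.6 of \cite{ChenKato2019b} applies; the Jensen step controlling $\E[\| \mathfrak{t}(U_1,\dots,U_K) \|_{\infty}^{q\vee 2}]$ by $\E[\| \bX_{(1,\dots,K)} \|_{\infty}^{q\vee 2}]$ is the only additional bookkeeping. One terminological caution: you say ``completely degenerate,'' which would usually mean that all Hoeffding projections of $\mathfrak{t}$ up to order $K-1$ vanish, and in that case only the $k=K$ term would survive; what actually holds (and what you correctly use in your final paragraph) is just first-order degeneracy, i.e.\ $\E[\mathfrak{t}(u,U_2,\dots,U_K)]=0$, which kills only the $k=1$ term and is exactly why the bound is a sum over $k=2,\dots,K$.
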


We turn to the analysis of the third term on the right-hand side of (\ref{eq: H-decomp polyadic})
\[
\begin{split}
& \sum_{k=2}^K \frac{(n-K)!}{n!} \sum_{\i \in I_{n,K}}\left(\E[\bX_{\i}\mid (U_{\{\i\odot \e\}^+})_{\e \in \cup_{r=1}^k \calE_r} ]-\E[\bX_{\i}\mid (U_{\{\i\odot \e\}^+})_{\e \in \cup_{r=1}^{k-1}\calE_r} ]\right) \\
 &\quad =\sum_{k=2}^K \binom{n}{K}^{-1} \sum_{1 \le i_1 < \cdots < i_K \le n}\left(\E[\bX_{\i}\mid (U_{\{\i\odot \e\}^+})_{\e \in \cup_{r=1}^k \calE_r} ]-\E[\bX_{\i}\mid (U_{\{\i\odot \e\}^+})_{\e \in \cup_{r=1}^{k-1}\calE_r} ]\right)
 \end{split}
 \]
 where the quality follows from Condition (\ref{eq: symmetry}).
 
\begin{lemma}
\label{lem: max_ineq_poly2}
For any $k=2,\dots,K$ and $q \in [1,\infty)$, we have 
\[
\begin{split}
&\left ( \E\left [ \left \| \binom{n}{K}^{-1} \sum_{1 \le i_1 < \cdots < i_K \le n}\left(\E[\bX_{\i}\mid (U_{\{\i\odot \e\}^+})_{\e \in \cup_{r=1}^k \calE_r} ]-\E[\bX_{\i}\mid (U_{\{\i\odot \e\}^+})_{\e \in \cup_{r=1}^{k-1}\calE_r} ]\right) \right \|_{\infty}^{q} \right ] \right )^{1/q}\\
&\quad \le 
C n^{-k/2} (\log p)^{1/2} (\E[\| \bX_{(1,\dots,K)} \|_{\infty}^{q \vee 2}])^{1/(q \vee 2)},
\end{split}
\]
where $C$ is a constant that depends only on $q$ and $K$. 
\end{lemma}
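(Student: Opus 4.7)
The plan is to carry out a further Hoeffding-type decomposition of
\[
D_k(\i) := \E[\bX_\i \mid (U_{\{\i\odot\e\}^+})_{\e \in \cup_{r=1}^k \calE_r}] - \E[\bX_\i \mid (U_{\{\i\odot\e\}^+})_{\e \in \cup_{r=1}^{k-1} \calE_r}]
\]
with respect to the ``level-$k$'' latent variables. Let $\calF_{k-1}$ be the $\sigma$-field generated by $(U_j)_{j=1}^n$ together with $(U_S)_{S \subset [n],\, 2 \le |S| \le k-1}$. By the tower property, $\E[D_k(\i) \mid \calF_{k-1}] = 0$, and conditional on $\calF_{k-1}$ the vector $D_k(\i)$ is a measurable function of the independent uniform variables $\{U_T : T \subset \{i_1,\dots,i_K\},\, |T| = k\}$ alone. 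Using Condition (\ref{eq: symmetry}), I restrict the outer summation to $i_1 < \cdots < i_K$ throughout.

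Next, I would decompose $D_k(\i) = \sum_{j=1}^{\binom{K}{k}} \Pi_j D_k(\i)$, where $\Pi_j D_k(\i)$ collects the canonical order-$j$ interactions of the variables $\{U_T : T \subset \{i_1,\dots,i_K\},\, |T|=k\}$ (integrating out any active $U_T$ reduces it to zero), and the order-$0$ term vanishes by the degeneracy noted above. The Hájek-type piece is
\[
\Pi_1 D_k(\i) = \sum_{T \subset \{i_1,\dots,i_K\},\, |T|=k} \E[D_k(\i) \mid U_T, \calF_{k-1}].
\]
Exchanging the order of summation,
\[
\binom{n}{K}^{-1}\!\!\!\!\sum_{i_1<\cdots<i_K} \!\!\Pi_1 D_k(\i) = \binom{n}{K}^{-1} \!\!\!\!\sum_{T \subset [n],\, |T|=k} \Bigl(\sum_{\i \supset T,\, i_1<\cdots<i_K} \E[D_k(\i) \mid U_T, \calF_{k-1}]\Bigr),
\]
which, given $\calF_{k-1}$, is a sum of $\binom{n}{k}$ \emph{independent} mean-zero random vectors indexed by $T$. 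Each inner bracket has $\binom{n-k}{K-k} = O(n^{K-k})$ terms, so its conditional second moment is of size $O(n^{2(K-k)})$ modulo a factor depending on $\E[\|\bX_{(1,\dots,K)}\|_\infty^{q\vee 2}]$. Applying a standard maximal inequality for sums of independent mean-zero random vectors in $\ell^\infty$ (a Bernstein-type bound together with the $\psi_\beta$-Orlicz comparison of Lemma \ref{lem: Orlicz norms}) yields the scaling
\[
\binom{n}{K}^{-1}\sqrt{\binom{n}{k}\cdot n^{2(K-k)}\cdot \log p} \,\lesssim\, n^{-k/2}(\log p)^{1/2},
\]
which is precisely the target rate.

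For the higher-order projections $\Pi_j D_k(\i)$ with $j \ge 2$, I would write their normalized sum, conditionally on $\calF_{k-1}$, as a completely canonical $U$-statistic with index-dependent kernel in the super-variables $(U_T)_{|T|=k}$, and invoke the decoupling inequalities for $U$-statistics with index-dependent kernels from \citet{delaPenaGine1999} (Chapter 3) to replace shared $U_T$'s by independent copies. After decoupling, the maximal inequality for completely degenerate $U$-statistics (Corollary 5.6 in \citet{ChenKato2019b}, as already exploited in the proof of Lemma \ref{lem: max_ineq_poly1}) gives a bound strictly of smaller order than $n^{-k/2}(\log p)^{1/2}$ for every $j \ge 2$, so the Hájek piece dominates. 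A final application of Fubini (together with Jensen's inequality to pass between the conditional and unconditional $L^q$ norms) converts the bound into its unconditional form, with constants tracked through the moment hypothesis on $\bX_{(1,\dots,K)}$.

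The main obstacle will be the combinatorial bookkeeping in handling the $\Pi_j$ pieces with $j \ge 2$: the super-indices $T_1,\dots,T_j$ appearing in a single summand can share underlying elements of $[n]$, and one must count how many distinct $[n]$-level indices are truly new in order both to set up the index-dependent $U$-statistic correctly and to obtain the decoupled maximal bound with the proper rate. This is precisely the step flagged in the introduction as requiring ``a delicate conditioning argument combined with the decoupling inequalities for $U$-statistics with index-dependent kernels.''
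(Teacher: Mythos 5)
Your proposal takes a genuinely different route from the paper's proof. The paper does \emph{not} carry out a further Hoeffding decomposition of $D_k(\i)$ in the level-$k$ super-variables. After conditioning on $\calU_{k-1}$ (your $\calF_{k-1}$), it introduces independent copies $V_J$ to get a difference of conditional mean zero, applies the decoupling inequality for $U$-statistics with index-dependent kernels (Theorem 3.1.1 in \cite{delaPenaGine1999}) so that the $m_k=\binom{K}{k}$ coordinate positions carry mutually independent families of super-variables, and then uses a telescoping identity to split into $m_k$ terms. Each telescoping term, once you condition on everything except the one coordinate position that changes, is a sum of conditionally independent mean-zero vectors; ordinary symmetrization followed by a degree-one Rademacher-chaos bound (sub-Gaussianity + Lemma 2.2.2 of van der Vaart--Wellner) then yields $n^{K-k/2}(\log p)^{1/2}$ for the unnormalized sum. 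Crucially, every telescoping piece obtains the same rate, so the paper never needs a dominance argument or a full-order chaos decomposition.

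Your Hájek-piece calculation is sound and gives exactly the target rate. However, there is a genuine gap in how you treat the higher-order projections $\Pi_j$ with $j\ge 2$. Corollary 5.6 in \cite{ChenKato2019b} is stated for degenerate $U$-statistics with a \emph{fixed, symmetric} kernel applied to i.i.d.\ data. After decoupling you still have an index-dependent kernel in the super-variables, and — more seriously — the effective degree of the statistic in the underlying $[n]$-level indices is not constant: for a tuple $(T_1,\dots,T_j)$ of $k$-subsets the relevant quantity is $m=|T_1\cup\cdots\cup T_j|$, which ranges between $k+1$ and $\min(jk,K)$ depending on the overlap pattern, and the number of $\i\supset T_1\cup\cdots\cup T_j$ scales like $n^{K-m}$. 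So the $\Pi_j$-piece is not a single degenerate $U$-statistic but a mixture of index-dependent statistics of varying effective degree, to which Corollary 5.6 does not directly apply. Your rate heuristic ($n^{-m/2}$ with $m\ge k+1$, hence strictly better than $n^{-k/2}$) is correct, but making it rigorous requires re-proving a maximal inequality for this index-dependent, variable-degree chaos — which is exactly the nontrivial step the paper bypasses by the telescoping-plus-single-coordinate-symmetrization scheme. As written, the "strictly of smaller order" claim is asserted but not established.
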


Before the formal proof of Lemma \ref{lem: max_ineq_poly2}, which is somewhat involved, we shall look at the case with $k=K=2$ to understand the bound. If $k=K=2$, then the term in question is 
\[
\binom{n}{2}^{-1} \sum_{1 \le i < j \le n} (\E[\bX_{(i,j)} \mid U_{i},U_{j},U_{\{i,j\}}] - \E[\bX_{(i,j)} \mid U_{i},U_{j}] ).
\]
Conditionally on $U_{i}$'s, this is the sum of independent random vectors with mean zero, so the bound in the lemma can be deduced from applying the symmetrization inequality \citep[][Lemma 2.3.6]{vdVW1996} conditionally on $U_{i}$'s and then Lemma 2.2.2 in \cite{vdVW1996} to the weighted sum of Rademacher variables conditionally on all $U$-variables. 
The general case is more involved and we will apply the decoupling inequality for $U$-statistics with index-dependent kernels (cf. Theorem 3.1.1 in \cite{delaPenaGine1999}) and adapt the telescoping sum technique used in the proof of Lemma A.1 in \cite{DDG2019}.\footnote{\red{We are indebted to an anonymous referee who pointed out a mistake in the initial proof of the lemma.}}

\red{
\begin{proof}[Proof of Lemma \ref{lem: max_ineq_poly2}]
In this proof, the notation $\lesssim$ means that the left-hand side is bounded by the right-hand side up to a constant that depends only on $q$ and $K$. Fix any $k=2,\dots,K$. 
Conditionally on $\calU_{k-1} = \{ U_{\{\i\odot \e\}^+} : \e \in \cup_{r=1}^{k-1}\calE_r, \i \in I_{\infty,K} \}$, the component 
\[
\E[\bX_{\i}\mid (U_{\{\i\odot \e\}^+})_{\e \in \cup_{r=1}^k \calE_r} ]-\E[\bX_{\i}\mid (U_{\{\i\odot \e\}^+})_{\e \in \cup_{r=1}^{k-1}\calE_r} ]
\]
is a function of $(U_{\{\i\odot \e\}^+})_{\e \in \calE_k}$ with mean zero
\[
\E[\bX_{\i}\mid (U_{\{\i\odot \e\}^+})_{\e \in \cup_{r=1}^k \calE_r} ]-\E[\bX_{\i}\mid (U_{\{\i\odot \e\}^+})_{\e \in \cup_{r=1}^{k-1}\calE_r} ] = \mathfrak{h}_{(\{\i\odot \e\}^+)_{\e \in \calE_{k}}} ((U_{\{ \i \odot \e \}^{+}})_{\e \in \calE_{k}}). 
\]
The function $\mathfrak{h}_{(\{\i\odot \e\}^+)_{\e \in \calE_{k}}}$ implicitly depends on $(U_{\{\i\odot \e\}^+})_{\e \in \cup_{r=1}^{k-1}\calE_r}$, so that it is indexed by $(\{\i\odot \e\}^+)_{\e \in \calE_{k}}$ (the vector $({\{\i\odot \e\}^+})_{\e \cup_{r=1}^{k-1}\calE_r}$ is uniquely determined by $(\{\i\odot \e\}^+)_{\e \in \calE_{k}}$ so it is enough to index the function by $(\{\i\odot \e\}^+)_{\e \in \calE_{k}}$). 
Define 
\[
\calJ_{n,k} = \{  (\{ \i \odot \e \}^+)_{\e \in \calE_k} : 1 \le i_1 < \cdots < i_K \le n \}.
\]
This is a collection of vectors of sets where each vector contains $m_k = \binom{K}{k}$ sets.
We denote a generic element of $\calJ_{n,k}$ by $\vec{J} = (J_1,\dots,J_{m_k})$ by ordering the elements of $\calE_k$. We will also write $U_{\vec{J}} = (U_{J_1},\dots,U_{J_{m_k}})$. Then we arrive at the expression
\[
\sum_{1 \le i_1 < \cdots < i_K \le n}\left(\E[\bX_{\i}\mid (U_{\{\i\odot \e\}^+})_{\e \in \cup_{r=1}^k \calE_r} ]-\E[\bX_{\i}\mid (U_{\{\i\odot \e\}^+})_{\e \in \cup_{r=1}^{k-1}\calE_r} ]\right) = \sum_{\vec{J} \in \calJ_{n,k}} \mathfrak{h}_{\vec{J}} (U_{\vec{J}}). 
\]

Let $\calI_{n,k} = \{ \{ i_1,\dots,i_k \} : 1 \le i_1,\dots,i_k \le n \ \text{are distinct} \}$, and let $\{ V_{J} :J \in \calI_{n,k} \}$ be i.i.d.  $U[0,1]$ random variables independent of the $U$-variables. Conditionally on $\calU_{k-1}$, we have $\E[ \mathfrak{h}_{\vec{J}}(V_{\vec{J}}) \mid \calU_{k-1} ] = \E[ \mathfrak{h}_{\vec{J}}(U_{\vec{J}}) \mid \calU_{k-1} ] = 0$ (with $V_{\vec{J}} = (V_{J_1},\dots,V_{J_{m_k}}))$ for $\vec{J} \in \calJ_{n,k}$, so that by Jensen's inequality, 
\[
\begin{split}
\E \left [ \left \|  \sum_{\vec{J} \in \calJ_{n,k}} \mathfrak{h}_{\vec{J}} (U_{\vec{J}}) \right \|_{\infty}^{q} \mid \calU_{k-1} \right] &= \E \left [ \left \|  \sum_{\vec{J} \in \calJ_{n,k}} \{ \mathfrak{h}_{\vec{J}} (U_{\vec{J}}) - \E[\mathfrak{h}_{\vec{J}}(V_{\vec{J}}) \mid \calU_{k-1}] \} \right \|_{\infty}^{q} \mid \calU_{k-1} \right] \\
&\le \E \left [ \left \|  \sum_{\vec{J} \in \calJ_{n,k}} \{ \mathfrak{h}_{\vec{J}} (U_{\vec{J}}) - \mathfrak{h}_{\vec{J}}(V_{\vec{J}}) \} \right \|_{\infty}^{q} \mid \calU_{k-1} \right].
\end{split}
\]
Conditionally on $\calU_{k-1}$, let $\overline{\mathfrak{h}}_{\vec{J}}(W_{\vec{J}}) = \mathfrak{h}_{\vec{J}} (U_{\vec{J}}) - \mathfrak{h}_{\vec{J}}(V_{\vec{J}}) $ with $W_{\vec{J}} = (W_{J_1},\dots,W_{J_{m_k}}) = ((U_{J_1},V_{J_1}),\dots,(U_{J_{m_k}},V_{J_{m_k}}))$. \
Conditionally on $\calU_{k-1}$, $\sum_{\vec{J} \in \calJ_{n,k}} \overline{\mathfrak{h}}_{\vec{J}}(W_{\vec{J}})$ can be seen as a $U$-statistic with index-dependent kernels by adding zero kernels. Namely, if we extend $\overline{\mathfrak{h}}_{J_1,\dots,J_{m_k}}$ as 
\[
\overline{\mathfrak{h}}^{(e)}_{J_1,\dots,J_{m_k}} (w_1,\dots,w_{m_{k}}) = 
\begin{cases}
\overline{\mathfrak{h}}_{J_1,\dots,J_{m_k}} (w_1,\dots,w_{m_{k}}) & \text{if $(J_1,\dots,J_{m_k}) \in \calJ_{n,k}$} \\
0 & \text{if $(J_1,\dots,J_{m_k}) \not \in  \calJ_{n,k}$}
\end{cases}
\]
for all distinct $J_1,\dots,J_{m_k} \in \calI_{n,k}$, then we see that 
\[
\sum_{\vec{J} \in \calJ_{n,k}} \overline{\mathfrak{h}}_{\vec{J}}(W_{\vec{J}}) = \sum_{\substack{J_1,\dots,J_{m_k} \in \calI_{n,k} \\ J_1,\dots,J_{m_k}:  \text{distinct}}} \overline{\mathfrak{h}}^{(e)}_{J_1,\dots,J_{m_k}}(W_{J_1},\dots,W_{J_{m_k}}).
\]
Thus, by the first part of Theorem 3.1.1 in \cite{delaPenaGine1999} (after introducing a proper ordering in $\calI_{n,k}$), we have 
\[
\begin{split}
\E \left [ \left \|  \sum_{\vec{J} \in \calJ_{n,k}}\overline{\mathfrak{h}}_{\vec{J}}(W_{\vec{J}})  \right \|_{\infty}^{q} \mid \calU_{k-1} \right] &=\E \left [ \left \|   \sum_{\substack{J_1,\dots,J_{m_k} \in \calI_{n,k} \\ J_1,\dots,J_{m_k}:  \text{distinct}}} \overline{\mathfrak{h}}^{(e)}_{J_1,\dots,J_{m_k}}(W_{J_1},\dots,W_{J_{m_k}}) \right \|_{\infty}^{q} \mid \calU_{k-1} \right] \\
& \lesssim \E \left [ \left \|   \sum_{\substack{J_1,\dots,J_{m_k} \in \calI_{n,k} \\ J_1,\dots,J_{m_k}:  \text{distinct}}} \overline{\mathfrak{h}}^{(e)}_{J_1,\dots,J_{m_k}}(W^1_{J_1},\dots,W^{m_k}_{J_{m_k}}) \right \|_{\infty}^{q} \mid \calU_{k-1} \right] \\
&= \E \left [ \left \|  \sum_{\vec{J} \in \calJ_{n,k}}\overline{\mathfrak{h}}_{\vec{J}}(W_{J_1}^{1},\dots,W_{J_{m_k}}^{m_k})  \right \|_{\infty}^{q} \mid \calU_{k-1} \right],
\end{split}
\]
where $\{ W_{J}^1 \}_{J \in \calI_{n,k}},\dots, \{ W_{J}^{m_k} \}_{J \in \calI_{n,k}}$ are independent copies of $\{ W_{J} \}_{J \in \calI_{n,k}}$ (with $W_{J} = (U_J,V_J)$) independent of $\calU_{k-1}$. We note here that kernels $\overline{\mathfrak{h}}_{\vec{J}}$ need not be symmetric in the sense of (3.1.2) in \cite{delaPenaGine1999}, but the first part of Theorem 3.1.1 in \cite{delaPenaGine1999} does not require the symmetry of kernels.

Decompose $\mathfrak{h}_{\vec{J}} (U_{J_1}^1,\dots,U_{J_{m_k}}^{m_k}) - \mathfrak{h}_{\vec{J}} (V_{J_1}^1,\dots,V_{J_{m_k}}^{m_k})$ as the following telescoping sum:
\[
\begin{split}
&\mathfrak{h}_{\vec{J}} (U_{J_1}^1,\dots,U_{J_{m_k}}^{m_k}) - \mathfrak{h}_{\vec{J}} (V_{J_1}^1,\dots,V_{J_{m_k}}^{m_k}) \\
&= 
\mathfrak{h}_{\vec{J}} (U_{J_1}^1,\dots,U_{J_{m_k}}^{m_k}) - \mathfrak{h}_{\vec{J}} (V_{J_1}^1,U_{J_2}^2, \dots,U_{J_{m_k}}^{m_k}) \\
&\quad +\mathfrak{h}_{\vec{J}} (V_{J_1}^1,U_{J_2}^2, \dots,U_{J_{m_k}}^{m_k}) - \mathfrak{h}_{\vec{J}} (V_{J_1}^1,V_{J_2}^2, U_{J_3}^3, \dots,U_{J_{m_k}}^{m_k}) \\
&\quad \quad \vdots \\
&\quad + \mathfrak{h}_{\vec{J}} (V_{J_1}^1,\dots,V_{J_{m_k-1}}^{m_k-1},U_{J_{m_k}}^{m_k}) - \mathfrak{h}_{\vec{J}} (V_{J_1}^1, \dots,V_{J_{m_k}}^{m_k}).
\end{split}
\]
Accordingly, we have
\begin{equation}
\label{eq: telescoping sum}
\begin{split}
&\E \left [ \left \|  \sum_{\vec{J} \in \calJ_{n,k}} \mathfrak{h}_{\vec{J}} (U_{\vec{J}}) \right \|_{\infty}^{q} \mid \calU_{k-1} \right]  \\
&\lesssim \E \left [ \left \|  \sum_{\vec{J} \in \calJ_{n,k}} \{ \mathfrak{h}_{\vec{J}} (U_{J_1}^1,\dots,U_{J_{m_k}}^{m_k}) - \mathfrak{h}_{\vec{J}} (V_{J_1}^1,U_{J_2}^2, \dots,U_{J_{m_k}}^{m_k})  \}  \right \|_{\infty}^{q} \mid \calU_{k-1} \right]  \\
&\quad + \E \left [ \left \|  \sum_{\vec{J} \in \calJ_{n,k}} \{ \mathfrak{h}_{\vec{J}} (V_{J_1}^1,U_{J_2}^2, \dots,U_{J_{m_k}}^{m_k}) - \mathfrak{h}_{\vec{J}} (V_{J_1}^1,V_{J_2}^2, U_{J_3}^3, \dots,U_{J_{m_k}}^{m_k}) \}  \right \|_{\infty}^{q} \mid \calU_{k-1}
 \right] \\
 &\quad \quad \vdots \\
 &\quad + \E \left [ \left \|  \sum_{\vec{J} \in \calJ_{n,k}} \{  \mathfrak{h}_{\vec{J}} (V_{J_1}^1,\dots,V_{J_{m_k-1}}^{m_k-1},U_{J_{m_k}}^{m_k}) - \mathfrak{h}_{\vec{J}} (V_{J_1}^1, \dots,V_{J_{m_k}}^{m_k}) \}  \right \|_{\infty}^{q} \mid \calU_{k-1} \right ].
\end{split}
\end{equation}
We note that there are $m_k = \binom{K}{k}$ terms on the right-hand side. 
We will focus on bounding the first term on the right-hand side, since bounding other terms is similar.  Observe that 
\[
\begin{split}
&\sum_{\vec{J} \in \calJ_{n,k}} \{ \mathfrak{h}_{\vec{J}} (U_{J_1}^1,\dots,U_{J_{m_k}}^{m_k}) - \mathfrak{h}_{\vec{J}} (V_{J_1}^1,U_{J_2}^2, \dots,U_{J_{m_k}}^{m_k})  \} \\
&= \sum_{J_1} \left [ \sum_{J_2,\dots,J_{m_k}}\{ \mathfrak{h}_{\vec{J}} (U_{J_1}^1,\dots,U_{J_{m_k}}^{m_k}) - \mathfrak{h}_{\vec{J}} (V_{J_1}^1,U_{J_2}^2, \dots,U_{J_{m_k}}^{m_k})  \} \right ].
\end{split}
\]
Here the summation $\sum_{J_1} \sum_{J_2,\dots,J_{m_k}}$ is understood as 
\[
\sum_{\substack{J_1: \exists (J_2,\dots,J_{m_k}) \\ \text{such that} \ (J_1,J_2,\dots,J_{m_k}) \in \calJ_{n,k}}} \sum_{(J_2,\dots,J_{m_k}): (J_1,J_2,\dots,J_{m_k}) \in \calJ_{n,k}}.
\]
Let $\calW_{-1} = \{ W_{J}^{\ell} : J \in \calI_{n,k}, \ell =2,\dots,m_k \}$. Since $\{ W_{J}^1 \}_{J \in \calI_{n,k}},\dots, \{ W_{J}^{m_k} \}_{J \in \calI_{n,k}}$ are independent, conditionally on $\calW_{-1}$ and $\mathcal U_{k-1}$, the terms 
\[
\sum_{J_2,\dots,J_{m_k}}\{ \mathfrak{h}_{\vec{J}} (U_{J_1}^1,\dots,U_{J_{m_k}}^{m_k}) - \mathfrak{h}_{\vec{J}} (V_{J_1}^1,U_{J_2}^2, \dots,U_{J_{m_k}}^{m_k})  \} 
 \]
are independent across different $J_1$'s. Further, they have conditional mean $0$ given $\calW_{-1}$ and $\mathcal U_{k-1}$ (as $U_{J_1}^1 \stackrel{d}{=} V_{J_1}^1$ given $\calW_{-1}$ and $\mathcal U_{k-1}$). Thus, applying the symmetrization inequality (\cite{vdVW1996}, Lemma 2.3.6), we have 
\[
\begin{split}
&\E \left [ \left \| \sum_{J_1} \left [ \sum_{J_2,\dots,J_{m_k}}\{ \mathfrak{h}_{\vec{J}} (U_{J_1}^1,\dots,U_{J_{m_k}}^{m_k}) - \mathfrak{h}_{\vec{J}} (V_{J_1}^1,U_{J_2}^2, \dots,U_{J_{m_k}}^{m_k})  \} \right ] \right \|_{\infty}^{q}\mid \calW_{-1},\mathcal U_{k-1} \right] \\
&\lesssim \E \left [ \left \| \sum_{J_1} \epsilon_{J_1} \left [ \sum_{J_2,\dots,J_{m_k}}\{ \mathfrak{h}_{\vec{J}} (U_{J_1}^1,\dots,U_{J_{m_k}}^{m_k}) - \mathfrak{h}_{\vec{J}} (V_{J_1}^1,U_{J_2}^2, \dots,U_{J_{m_k}}^{m_k})  \} \right ] \right \|_{\infty}^{q}  \mid \calW_{-1},\mathcal U_{k-1}\right],
\end{split}
\]
where $\{ \epsilon_{J} : J \in \calI_{n,k} \}$ are independent Rademacher random variables independent of everything else. By Fubini, together with the elementary inequality $(a+b)^q \le 2^{q-1}(a^q+b^q)$ for $a,b \ge 0$ and the fact that $\{ W_{J}^1 \}_{J \in \calI_{n,k}},\dots, \{ W_{J}^{m_k} \}_{J \in \calI_{n,k}}$ are independent and $U_{J_1}^1 \stackrel{d}{=} V_{J_1}^1$, we have
\[
\begin{split}
&\E \left [ \left \| \sum_{J_1} \left [ \sum_{J_2,\dots,J_{m_k}}\{ \mathfrak{h}_{\vec{J}} (U_{J_1}^1,\dots,U_{J_{m_k}}^{m_k}) - \mathfrak{h}_{\vec{J}} (V_{J_1}^1,U_{J_2}^2, \dots,U_{J_{m_k}}^{m_k})  \} \right ] \right \|_{\infty}^{q}\mid \calU_{k-1} \right] \\
&\lesssim \E \left [ \left \| \sum_{J_1} \epsilon_{J_1} \left [ \sum_{J_2,\dots,J_{m_k}} \mathfrak{h}_{\vec{J}} (U_{J_1}^1,\dots,U_{J_{m_k}}^{m_k})  \right ] \right \|_{\infty}^{q}  \mid  \calU_{k-1}\right].
\end{split}
\]

Conditionally on the $U$-variables, each variable $ \sum_{J_1} \epsilon_{J_1} \left [ \sum_{J_2,\dots,J_{m_k}} \mathfrak{h}_{\vec{J}}^\ell (U_{J_1}^1,\dots,U_{J_{m_k}}^{m_k}) \right]$ (with $\mathfrak{h}_{\vec{J}} = (\mathfrak{h}_{\vec{J}}^1,\dots,\mathfrak{h}_{\vec{J}}^p)$) is a weighted sum of independent Rademacher random variables and thus sub-Gaussian whose (conditional) $\psi_2$-norm is 
\[
\lesssim \sqrt{\sum_{J_1} \left [  \sum_{J_2,\dots,J_{m_k}} \mathfrak{h}_{\vec{J}}^{\ell} (U_{J_1}^1,\dots,U_{J_{m_k}}^{m_k}) \right ]^2} \le \sqrt{\sum_{J_1}\left \|  \sum_{J_2,\dots,J_{m_k}} \mathfrak{h}_{\vec{J}} (U_{J_1}^1,\dots,U_{J_{m_k}}^{m_k})   \right \|_{\infty}^2}
\]
by e.g. Corollary 3.2.6 in \cite{delaPenaGine1999}.
Applying Lemma 2.2.2 in \cite{vdVW1996} and noting that $\E[|\xi|^q] \lesssim \| \xi \|_{\psi_2}^q$ by Lemma \ref{lem: Orlicz norms}, we have 
\[
\begin{split}
&\E \left [ \left \| \sum_{J_1} \epsilon_{J_1} \left [ \sum_{J_2,\dots,J_{m_k}} \mathfrak{h}_{\vec{J}} (U_{J_1}^1,\dots,U_{J_{m_k}}^{m_k})  \right ] \right \|_{\infty}^{q}  \mid \calU_{k-1}\right] \\
&\lesssim (\log p)^{q/2} \E\left [ \left ( \sum_{J_1}\left \|  \sum_{J_2,\dots,J_{m_k}} \mathfrak{h}_{\vec{J}} (U_{J_1}^1,\dots,U_{J_{m_k}}^{m_k})   \right \|_{\infty}^2 \right)^{q/2} \mid \calU_{k-1} \right ].
\end{split}
\]
Observe that given $J_1$, the number of $(J_2,\dots,J_{m_k})$ such that $(J_1,J_2,\dots,J_{m_k}) \in \calJ_{n,k}$ is 
\[
\binom{n-k}{K-k} = O(n^{K-k}).
\]
To see this, observe that $J = (J_1,\dots,J_{m_k}) \in \calJ_{n,k}$ is of the form $J = (\{ \i \odot \e \}^+)_{\e \in \calE_k}$ for some $(i_1,\dots,i_K)$ such that $1 \le i_1 < \dots < i_K \le n$. Fixing $J_1$ corresponds to fixing $k$ elements of $i_1,\dots,i_K$, so the number of possible $(J_2,\dots,J_{m_k})$ coincides with the number of ways to choose remaining $K-k$ elements from $n-k$ integers.

Thus, by the Cauchy-Schwarz inequality, we have 
\[
\sum_{J_1}\left \|  \sum_{J_2,\dots,J_{m_k}} \mathfrak{h}_{\vec{J}} (U_{J_1}^1,\dots,U_{J_{m_k}}^{m_k})   \right \|_{\infty}^2 \lesssim  n^{K-k} \sum_{\vec{J}} \|  \mathfrak{h}_{\vec{J}} (U_{J_1}^1,\dots,U_{J_{m_k}}^{m_k}) \|_{\infty}^{2}. 
\]
Combining the fact that the size of $\calJ_{n,k}$ is $\binom{n}{K} = O(n^{K})$, we have 
\[
\begin{split}
&\E \left [ \left \| \sum_{J_1} \left [ \sum_{J_2,\dots,J_{m_k}}\{ \mathfrak{h}_{\vec{J}} (U_{J_1}^1,\dots,U_{J_{m_k}}^{m_k}) - \mathfrak{h}_{\vec{J}} (V_{J_1}^1,U_{J_2}^2, \dots,U_{J_{m_k}}^{m_k})  \} \right ] \right \|_{\infty}^{q} \mid \calU_{k-1} \right]  \\
&\lesssim  n^{(K-k/2)q}(\log p)^{q/2} \E\left [ \left( | \calJ_{n,k} |^{-1} \sum_{\vec{J}} \|  \mathfrak{h}_{\vec{J}} (U_{J_1}^1,\dots,U_{J_{m_k}}^{m_k})\|_{\infty}^{2} \right)^{q/2} \mid \calU_{k-1} \right].
\end{split}
\]
Using Jensen's inequality, we have
\[
\begin{split}
&\E\left [ \left( | \calJ_{n,k} |^{-1} \sum_{\vec{J}} \|  \mathfrak{h}_{\vec{J}} (U_{J_1}^1,\dots,U_{J_{m_k}}^{m_k})\|_{\infty}^{2} \right)^{q/2} \mid \calU_{k-1} \right] \\
&\le
\begin{cases}
(|\calJ_{n,k} |^{-1} \sum_{\vec{J}}\E[  \|  \mathfrak{h}_{\vec{J}} (U_{J_1}^1,\dots,U_{J_{m_k}}^{m_k})\|_{\infty}^{2} \mid \calU_{k-1}])^{q/2} & \text{if $q \le 2$} \\
 |\calJ_{n,k} |^{-1} \sum_{\vec{J}} \E[  \|  \mathfrak{h}_{\vec{J}} (U_{J_1}^1,\dots,U_{J_{m_k}}^{m_k})\|_{\infty}^{q} \mid \calU_{k-1}] & \text{if $q > 2$}
\end{cases}
.
\end{split}
\]
Since, conditionally on $\calU_{k-1}$, $\mathfrak{h}_{\vec{J}} (U_{J_1}^1,\dots,U_{J_{m_k}}^{m_k}) \stackrel{d}{=} \mathfrak{h}_{\vec{J}}(U_{\vec{J}})$, combining Fubini and Jensen, and the definition of $\mathfrak{h}_J$, we conclude that 
\begin{equation}
\begin{split}
&\E \left [ \left \| \sum_{J_1} \left [ \sum_{J_2,\dots,J_{m_k}}\{ \mathfrak{h}_{\vec{J}} (U_{J_1}^1,\dots,U_{J_{m_k}}^{m_k}) - \mathfrak{h}_{\vec{J}} (V_{J_1}^1,U_{J_2}^2, \dots,U_{J_{m_k}}^{m_k})  \} \right ] \right \|_{\infty}^{q}  \right] \\
&\lesssim 
\begin{cases}
n^{(K-k/2)q} (\log p)^{q/2} (\E[\| \bX_{(1,\dots,K)} \|_{\infty}^{2}])^{q/2} & \text{if $q \le 2$} \\ 
n^{(K-k/2)q} (\log p)^{q/2} \E[\| \bX_{(1,\dots,K)} \|_{\infty}^{q}] & \text{if $q > 2$}
\end{cases}
.
\label{eq: final bound}
\end{split}
\end{equation}
Indeed, if $q \le 2$, then 
\[
\begin{split}
&\E \left [ \left(|\calJ_{n,k} |^{-1} \sum_{\vec{J}}\E[  \|  \mathfrak{h}_{\vec{J}} (U_{\vec{J}})\|_{\infty}^{2} \mid \calU_{k-1}]\right)^{q/2}\right ]\\
& \le \left ( |\calJ_{n,k} |^{-1} \sum_{\vec{J}}\E[  \|  \mathfrak{h}_{\vec{J}}(U_{\vec{J}})\|_{\infty}^{2}] \right )^{q/2} \\
&\lesssim \left ( n^{-K} \sum_{1 \le i_1 < \cdots < i_{K} \le n}\E\left [  \| \E[\bX_{\i}\mid (U_{\{\i\odot \e\}^+})_{\e \in \cup_{r=1}^k \calE_r} ]-\E[\bX_{\i}\mid (U_{\{\i\odot \e\}^+})_{\e \in \cup_{r=1}^{k-1}\calE_r} ]\|_{\infty}^{2} \right] \right )^{q/2} \\
&\lesssim \left ( n^{-K} \sum_{1 \le i_1 < \cdots < i_{K} \le n}\E\left [  \| \E[\bX_{\i}\mid (U_{\{\i\odot \e\}^+})_{\e \in \cup_{r=1}^k \calE_r} ]\|_{\infty}^{2} + \| \E[\bX_{\i}\mid (U_{\{\i\odot \e\}^+})_{\e \in \cup_{r=1}^{k-1}\calE_r} ]\|_{\infty}^{2} \right] \right )^{q/2} \\
&\lesssim \left ( n^{-K}\sum_{1 \le i_1 < \cdots < i_{K} \le n}\E[  \| \bX_{\i} \|_{\infty}^2 ] \right )^{q/2} \lesssim  (\E[\| \bX_{(1,\dots,K)} \|_{\infty}^{2}])^{q/2}. 
\end{split}
\]
Likewise, if $q > 2$, then 
\[
\E\left [ \E[  \|  \mathfrak{h}_{\vec{J}} (U_{\vec{J}})\|_{\infty}^{q} \mid \calU_{k-1}] \right] \lesssim  \E[\| \bX_{(1,\dots,K)} \|_{\infty}^{q}]. 
\]
Thus we obtain (\ref{eq: final bound}). 
Similar bounds hold for other terms in the decomposition (\ref{eq: telescoping sum}). This completes the proof. 
\end{proof}
}

\begin{remark}[Comparison with \cite{DDG2019}]
\label{rem: comparison with DDG 2}
Lemma A.1 in \cite{DDG2019} derives a symmetrization inequality for the empirical process of a jointly exchangeable array. Essentially, the same comparison made in Remark \ref{rem: comparison with DDG} applies to the comparison of their Lemma A.1 with the maximal inequalities developed in this section. Lemma S3 in \cite{DDG2019} covers the degenerate case but focuses only on the $K=2$ case. As seen in the proof of Lemma \ref{lem: max_ineq_poly2} above, handling the degenerate components in $K>2$ cases is highly nontrivial \red{(however, it would be fair to point out that the proof of Lemma \ref{lem: max_ineq_poly2} is partly inspired by the proof of Lemma A.1 in \cite{DDG2019})}. 
\end{remark}

\section{Proofs for Section \ref{sec:polyadic_data}}\label{sec:proofs_polyadic_data}

\subsection{Proof of Theorem \ref{thm: high-d CLT polyadic}}\label{sec:proof of theorem HDCLT polyadic}
Given Lemmas \ref{lem: max_ineq_poly1} and \ref{lem: max_ineq_poly2}, the proof is almost identical to that of Theorem \ref{thm: high-d CLT}. We omit the details for brevity. 
\qed

\subsection{Proof of Theorem \ref{thm:bootstrap validity polyadic}}\label{sec:proof of theorem polyadic mutliplier bootstrap}

We only prove the proposition under Case (i). The proof for Case (ii) is similar.

\underline{Step 1}. Define
\begin{align*}
\hat \Delta_{W,1}=\max_{1\le \ell \le p} \frac{1}{n}\sumj (\hat W_{j}^\ell - W_{j}^\ell  )^2.
\end{align*}
We will show that 
\[
\Prob \left ( \overline{\sigma}^2 \hat{\Delta}_{W,1}\log^{4}p > n^{-\zeta+1/\nu} \right) \lesssim n^{-1}.
\]
Here the notation $\lesssim$ means that the left-hand side is bounded by the right-hand side up to a constant that depends only on $\nu,\underline{\sigma},K$, and $C_1$. 

Recall that $\bW_j$ can can be written as
\begin{align*}
\bW_j=\E\left[\frac{(n-K)!}{(n-1)!}\sum_{k=1}^K\sum_{\i \in I_{n,K}: i_k=j} \bm{X}_{\i}~\Big|~U_{j}\right].
\end{align*}
We have 
\begin{align*}
\hat \Delta_{W,1}&=\max_{1\le \ell \le p} \frac{1}{n}\sumj (\hat W_{j}^\ell -W_{j}^\ell  )^2
\le
 \frac{1}{n}\sumj \|\hat \bW_{j}- \bW_{j} \|_\infty^2\\
&\lesssim
 \sum_{k=1}^K\frac{1}{n}\sumj \left\|\frac{(n-K)!}{(n-1)!}\sum_{\i \in I_{n,K}:i_k=j}( \bX_{\i} - \E[\bX_{\i} \mid U_{j}])\right\|^2_\infty.
\end{align*}
Consider the $k=1$ term. 
Pick any $j  \in \N$. Let $I_{\infty,K-1}^{-j} = \{ (i_2,\dots,i_K) \in (\N \setminus \{ j \})^{K-1} : \text{$i_2,\dots,i_K$ are distinct} \}$.
Given $U_{j}$, for each $\bm{i}_{-1}=(i_2,\dots,i_K) \in I_{\infty,K}^{-j}$ and $\e \in \{0,1\}^{K-1}$, define the vector
\begin{align*}
V_{\{\bm{i}_{-1}\odot \e\}^{+}}=(U_{\{ \bm{i}_{-1} \odot \e\}^{+}},U_{\{(j,\bm{i}_{-1} \odot \e)\}^{+}}).
\end{align*}
With this notation, we can rewrite $\bX_{\i}$ with $\i = (j,\i_{-1})$ as 
\begin{align*}
\bm{X}_{\bm{i}} =\mathfrak{g}\big(U_{j},(V_{\{\bm{i}_{-1}\odot \e\}^{+}})_{\e\in \{0,1\}^{K-1}\setminus \{ \bm{0} \}}\big).
\end{align*}
From this expression, we see that, conditionally on $U_{j}$, the array $(\bX_{(j,\i_{-1})})_{\i_{-1} \in I_{\infty,K-1}^{-j}}$ is 
jointly exchangeable with mean vector $\E[\bX_{\i} \mid U_{j}]$. Applying Lemmas \ref{lem: max_ineq_poly1} and \ref{lem: max_ineq_poly2} conditionally on $U_j$ (the fact that $U$-variables are uniform on $(0,1)$ is not crucial in the proofs), we have 
\[
\begin{split}
&\E \left [ \left\|\frac{(n-K)!}{(n-1)!}\sum_{\i \in I_{n,K}:i_k=j}( \bX_{\i} - \E[\bX_{\i} \mid U_{j}])\right\|^{2\nu}_\infty \mid U_{j} \right ] \\
&\lesssim \underbrace{\left (\sum_{k=1}^{K-1} n^{-k/2} (\log p)^{k/2} \right)^{2\nu} }_{\lesssim (n^{-1}\log p)^{\nu}} \E[ \| \bX_{(j,\i_{-1})} \|_{\infty}^{2\nu} \mid U_{j}],
\end{split}
\]
where $\i_{-1} \in I_{\infty,K-1}^{-j}$ is arbitrary. By Fubini's theorem, the expectation of the left-hand side can be bounded as 
\[
\lesssim (n^{-1}\log p)^{\nu} \E[ \| \bX_{(j,\i_{-1})} \|_{\infty}^{2\nu}] \lesssim (n^{-1}D_n^2 \log^3 p)^{\nu}.
\]
Similar bounds hold for other $k$. Conclude that $\E[ (\overline{\sigma}^2 \hat{\Delta}_{W,1}\log^{4}p)^{\nu}] \lesssim n^{-\zeta \nu}$ under our assumption. Together with Markov's inequality, we obtain the conclusion of Step 2. 

\underline{Step 2}. 
Conditionally on $(\bX_{\i})_{\i \in I_{n,K}}$, we have $\sqrt{n} \bS_{n}^{MB} \sim N(\bm{0},\hat{\Sigma})$,
where  
\[
\hat{\Sigma} = \frac{1}{n} \sumj 
(\hat{\bW}_{j} - K\bm{S}_{n})(\hat{\bW}_{j} - K\bm{S}_{n})^T
\]
As in the proof of Theorem \ref{thm:bootsrap_validity}, the desired result follows from bounding $\hat{\Delta}_{W}=\| \hat{\Sigma} - \Sigma \|_{\infty}$. 

We first note that 
\begin{align*}
\hat{\Delta}_{W} =
\max_{1\le \ell,\ell' \le p}\left| \frac{1}{n} \sumj (\hat W_j^{\ell}-K S_n^{\ell} )(\hat W_j^{\ell'}-K S_n^{\ell'} ) -\E[W_1^{\ell} W_1^{\ell'}] \right| .
\end{align*}
For every $\ell, \ell' \in \{1,\dots,p\}$, 
\begin{align*}
&\frac{1}{n} \sumj (\hat W_j^{\ell}-K S_n^{\ell} )(\hat W_j^{\ell'}-K S_n^{\ell'} )
=
\frac{1}{n} \sumj \hat W_j^{\ell}\hat W_j^{\ell'}- K^2 S_n^{\ell}  S_n^{\ell'}  \\
&\quad =
\frac{1}{n}\sumj (\hat W_j^{\ell}- W_j^{\ell} )(\hat W_j^{\ell'} - W_j^{\ell'} )
+
\frac{1}{n} \sumj  (\hat W_j^{\ell}- W_j^{\ell})W_j^{\ell'} \\
&\quad \quad +
\frac{1}{n} \sumj   W_j^{\ell}(\hat W_j^{\ell'}- W_j^{\ell'})+ \frac{1}{n} \sumk W_j^{\ell} W_j^{\ell'} - K^2S_n^{\ell} S_n^{\ell'}. 
\end{align*}
Using the Cauchy-Schwarz inequality, we have 
\begin{align*}
\hat \Delta_W 
\le&
\underbrace{ \max_{1\le \ell\le p}\frac{1}{n} \sumj (\hat W_j^{\ell} -W_j^{\ell} )^2 }_{= \Delta_{W,1}} 
+ 
2\hat\Delta_{W,1}^{1/2} \sqrt{\max_{1\le \ell\le p} \frac{1}{n}\sumj |W_j^{\ell}|^2}\\
&+
\underbrace{ 
\max_{1\le \ell,\ell' \le p} \left| \frac{1}{n} \sumj (W_j^{\ell} W_j^{\ell'}  - \E[ W_1^{\ell} W_1^{\ell'} ] ) \right|
 }_{= \hat \Delta_{W,2}} + K^2 \max_{1\le \ell \le p} |S_n^{\ell}|^2.
\end{align*}
For the second term on the right-hand side, we have
\begin{align*}
\frac{1}{n}\sumk |W_j^{\ell}|^2 
\le 
\E\left[\frac{1}{n} \sumj |W_j^\ell|^2\right] +
\frac{1}{n}\sumj ( |W_j^\ell|^2-\E[|W_1^\ell|^2])
\le 
\overline \sigma^2 +\hat \Delta_{W,2}.
\end{align*}
Further, since $KS_n^\ell = n^{-1} \sumj(\hat W_j^\ell - W_j^\ell )+n^{-1}\sumj W_j^\ell $, we have
\begin{align*}
K^2\max_{1\le \ell \le p}|S_n^\ell|^2 \le 2\hat \Delta_{W,1} +2\hat \Delta_{W,3}^2,
\end{align*}
where $\hat \Delta_{W,3}=\max_{1\le \ell \le p}|n^{-1}\sumj W_j^\ell|$. 
Conclude that
\begin{align*}
\hat \Delta_{W} \lesssim \hat \Delta_{W,1} + \overline{\sigma} \hat \Delta_{W,1}^{1/2} + \hat \Delta_{W,2} + \hat \Delta_{W,3}
\end{align*}
up to a universal constant. 
The rest is completely analogous to the latter part of the proof of Theorem \ref{thm:bootsrap_validity}. We omit the details for brevity. \qed




\newcommand{\diam}{\text{diam}}
\section{Proof for Section \ref{sec:applications}}\label{sec:proofs_for_applications}

\subsection{Proof of Proposition \ref{proposition:kernel_density_mixture}} 
	In this proof, the notation $\lesssim$ means that the left-hand side is bounded by the right-hand side up to a constant independent of $n$. Also, $\sum_{j\ne i} $ is understood as $\sum_{j\in \{1,\dots,n\}\setminus\{i\}}$.  We will first consider the non-normalized statistic.
	
Define the infeasible sample mean and its H\'ajek projection,
\begin{align*}
 &\bS_n=\frac{1}{n(n-1)} \sum_{1\le i\ne j\le n}\bX_{ij},\quad \frac{1}{n}\sum_{i=1}^n \bW_i, \text{ where $\overline b_h(\cdot)=a \overline f_h(\cdot)$,}\\ 
 &\bX_{ij}=\left(\left\{ \frac{K_h(y_\ell-Y_{ij})}{a} -\frac{\overline b_h(y_\ell)}{a^2}\right\}\1(Y_{ij}\ne 0)\right)_{1\le \ell\le p},
 \\
  &\bW_i=\left(2\left\{\frac{\overline f_h (y_\ell\mid U_i)}{a}-\frac{\overline b_h(y_\ell)}{a^2} \right\}\Prob(Y_{ij} \ne 0 \mid U_i)\right)_{1\le \ell \le p}.
\end{align*}

We will show that 
\begin{align*}
\sqrt{n} (\hat f(y_\ell)-\overline f_h(y_\ell))=\sqrt{n} S_n^\ell +O_P\left(\frac{\log p}{n^{1/2}}\right)
\end{align*}
uniformly over $\ell \in \{ 1,\dots,p \}$.
Under Conditions (ii) and (iv) of the proposition, for any $y\in \{y_1,...,y_p\}$,
\begin{align*}
\left | \E[K_h(y- Y_{ij})\1(Y_{ij}\ne 0)\mid U_i] \right |&=\left | \Prob(Y_{ij}\ne 0\mid U_i) \int K(z)f_{Y_{12} \mid U_1,Y_{12}\ne 0}( y+zh \mid U_i) dz \right | \\
&\le M \int |K(z)| dz,\\
| \E[K_h(y- Y_{ij})\1(Y_{ij}\ne 0)\mid U_i,U_j] | &\le h^{-1} \| K \|_{\infty},\\
E[\hat b(y)]&=\E[K_h(y- Y_{ij})\1(Y_{ij}\ne 0)]\\
&=a \int K_h(y-z)f(z)dz =\overline b_h(y),\\
\E\left[\max_{1\le \ell\le p}\left|K_h(y_\ell- Y_{ij})\1(Y_{ij}\ne 0)\right|^2\right] &\le h^{-2} \| K \|_{\infty}^2.
\end{align*}	
Using these results, Condition (v) of the proposition, the Hoeffding-type decomposition (\ref{eq: H-decomp polyadic}), Lemmas \ref{lem: max_ineq_poly1} and \ref{lem: max_ineq_poly2} together give $|\hat a-a|=O_P(n^{-1/2})$ and
\[
\max_{1\le \ell \le p } |\hat b(y_\ell) -\overline  b_h(y_\ell)| 
=O_P\left(\sqrt{n^{-1}\log p} + (nh)^{-1}\log p + (nh)^{-1} \log^{1/2} p \right)=O_P\left(\sqrt{n^{-1}\log p}\right). 
\]
Then, linearization of the estimator yields the following representation
\begin{align*}
&\sqrt{n} (\hat f(y_\ell)-\overline f_h(y_\ell))=\sqrt{n}\left[\frac{\hat b(y_\ell) - \overline b_h(y_\ell)}{a}-\frac{(\hat a -a)\overline b_h(y_\ell)}{a^2}\right]+ O_P\left(\frac{\log p}{n^{1/2}}\right)
\end{align*}
uniformly over $y\in\{y_1,\dots,y_p\} $.
Using the fact that $Y_{ij}=Y_{ji}$, we can rewrite the right-hand side as
\begin{align*}
\frac{\sqrt{n}}{n(n-1)} \sum_{1\le i\ne j\le n} X_{ij}^\ell+O_P\left(\frac{\log p}{n^{1/2}}\right)=\sqrt{n} S_n^\ell +O_P\left(\frac{\log p}{n^{1/2}}\right)
\end{align*}
uniformly over $\ell \in \{ 1,\dots,p \}$. 

Next, we apply Theorem \ref{thm:bootstrap validity polyadic} under Condition (\ref{eq:condition1polyadic}) and Remark \ref{remark:discussion_rates_polyadic} to the infeasible bootstrap  statistic
\begin{align*}
\bS_n^{MB}=\frac{1}{n}\sum_{i=1}^n \xi_i ( \hat\bW_i-2\bS_n),\quad \hat \bW_i= \frac{1}{n-1}\sum_{j\ne i} 2\bX_{ij}.
\end{align*}
To verify Condition (\ref{eq:condition1polyadic}) and the first part of Condition (\ref{eq:condition2polyadic}), observe that $\|\bX_{ij}\|_\infty \le 2a^{-1}h^{-1} \int |K(z)|dz = O(h^{-1})$. Thus, for $D_n=Ch^{-1}$ with some appropriate constant $C$, we have $\max_{1\le \ell \le p}\|X_{12}^\ell\|_{\psi_1}\le D_n$ and $\max_{1\le \ell\le p}\E[|W_{1}^\ell|^{2+\kappa}]\le D_n^\kappa$ for $\kappa=1,2$ for $n$ large enough. Condition (iii) of the proposition guarantees that $\min_{1\le \ell \le p}\E[|W_1^\ell|^2] \ge 4a^{-2} \sigma_{0}^2$. To verify the rate conditions in Remark \ref{remark:discussion_rates_polyadic}, note that $\overline \sigma^2=\max_{1\le \ell \le p}\E[|W_1^\ell|^2]=O(1)$ under Conditions (ii) and (iv) of the proposition.
Thus Condition (v) of the proposition implies that $D_n^{2}\log^7 (pn)=h^{-2}\log^7 (pn)=o(n)$. We have verified the conditions of Theorem \ref{thm:bootstrap validity polyadic} under Remark \ref{remark:discussion_rates_polyadic}. Theorem \ref{thm:bootstrap validity polyadic} then yields the distributional approximation 
\begin{align*}
\sup_{R\in\calR}\left|\Prob_{|\bX_{I_{n,2}}}\left(\sqrt{n}\bS_n^{MB}\in R\right) - \gamma_{\Sigma}(R)\right|=o_P(1),
\end{align*}
where $\gamma_{\Sigma} = N(\bm{0},\Sigma)$ and $\Sigma=\E[\bW_1\bW_1^T]$.

 Coming back to the statistic $\tilde \bS_n^{MB}$, we note that conditionally on $\bX_{I_{n,2}}$,
\begin{align*}
\sqrt{n}\tilde \bS_n^{MB}\sim N(\bm{0},\tilde \Sigma),\text{ where } \tilde \Sigma=\frac{1}{n}\sumi (\tilde\bW_i -2\tilde\bS_n)(\tilde \bW_i -2\tilde\bS_n)^T.
\end{align*}
We will show that
$
\|\tilde \Sigma - \hat\Sigma\|_\infty  = o_P((\log p)^{-2})
$,
where
 $\hat\Sigma=n^{-1}\sum_{i=1}^n(\hat\bW_i -2 \bS_n)(\hat\bW_i -2 \bS_n)^T$. In view of Lemma \ref{lem:Gaussian_comparison} and Step 2 in the proof of Theorem \ref{thm:bootstrap validity polyadic},  this claim and the distributional approximation of the infeasible bootstrap  statistic imply
 \begin{align*}
 \sup_{R\in\calR}\left|\Prob_{|\bX_{I_{n,2}}}\left(\sqrt{n}\tilde\bS_n^{MB}\in R\right) - \gamma_{\Sigma}(R)\right|=o_P(1),
 \end{align*} 
which in turn implies the first statement of the proposition for the non-normalized statistic. 

 Following similar calculation to those in Step 2 in the proof of Theorem \ref{thm:bootstrap validity polyadic}, 
 we have
 \begin{align*}
 &\max_{1\le \ell,\ell'\le p}\left|\frac{1}{n}\sumi \left\{(\tilde W_i^\ell -2\tilde S_n^\ell)(\tilde W_i^{\ell'} -2 \tilde S_n^{\ell'} )- ( \hat W_i^{\ell}  -2 S_n^{\ell} )(\hat W_i^{\ell'}  -2 S_n^{\ell'} )\right\}\right|\\
 &\quad \le\underbrace{\max_{1\le \ell,\ell'\le p}\left|\frac{1}{n}\sumi (\tilde W_i^\ell \tilde W_i^{\ell'}-\hat W_i^{\ell} \hat W_i^{\ell'} )\right|}_{=\Delta_I} +4\underbrace{\max_{1\le \ell,\ell'\le p}\left|(\tilde S_n^\ell \tilde S_n^{\ell'}-S_n^\ell S_n^{\ell'})\right|}_{=\Delta_{II}}.
 \end{align*}
We shall first bound $\Delta_{I}$. Since
\[
 \tilde W_i^\ell \tilde W_i^{\ell'}-\hat W_i^{\ell} \hat W_i^{\ell'} 
 = (\tilde W_i^\ell-\hat W_i^{\ell}) (\tilde W_i^{\ell'}-\hat W_i^{\ell'}) + (\tilde W_i^\ell-\hat W_i^{\ell}) \hat W_i^{\ell'} + \hat W_i^{\ell} (\tilde W_i^{\ell'}-\hat W_i^{\ell'}),
\]
 we have
 \begin{align*}
 \Delta_I\le \underbrace{\max_{1\le \ell\le p} \frac{1}{n}\sumi (\tilde W_i^\ell-\hat W_i^{\ell})^2}_{=\Delta_{III}} + 2\underbrace{ \Delta_{III}^{1/2} \sqrt{\max_{1\le \ell \le p}\frac{1}{n} \sumi |\hat W_i^\ell|^2 } }_{=\Delta_{IV}}.
 \end{align*}
Consider $\Delta_{III}$.  We see that, with probability $1-o(1)$,
 \begin{align*}
 |\tilde W_i^\ell -\hat W_i^\ell|^2=&
 \left|\frac{2}{n-1}\sum_{j\ne i}\left\{\frac{a-\hat a}{a\hat a}K_h(y_\ell-Y_{ij})+\frac{\hat a^2\overline b_h(y_\ell)-a^2\hat b(y_\ell)}{a^2\hat a^2}\right\}\1(Y_{ij}\ne 0)\right|^2\\
 \lesssim& (\hat a-a)^2\left|\frac{1}{n-1}\sum_{j\ne i}K_h(y_\ell-Y_{ij})\1(Y_{ij}\ne 0)\right|^2 + (\hat a-a)^2 \vee |\hat b(y_\ell)-\overline b_h(y_\ell)|^2
 \end{align*}
 up to a constant independent of $n$ and $i$ (the $o(1)$ term is uniform in $1 \le i \le n$). 
Decompose the first term on the right-hand side as
 \begin{align*}
 &\max_{1\le \ell \le p}\left|\frac{1}{n-1}\sum_{j\ne i}K_h(y_\ell-Y_{ij})\1(Y_{ij}\ne 0)\right|^2\\
&\le 2 \max_{1\le \ell \le p}\left|\frac{1}{n-1}\sum_{j\ne i}\left\{K_h(y_\ell-Y_{ij})\1(Y_{ij}\ne 0)-\E[K_h(y_\ell-Y_{ij})\1(Y_{ij}\ne 0)\mid U_i]\right\}\right|^2\\
 &\quad + 2\max_{1\le \ell \le p}\left|\E[K_h(y_\ell-Y_{ij})\1(Y_{ij}\ne 0)\mid U_i]\right|^2.
 \end{align*}
Conditionally on $U_i$, we apply Theorem 2.14.1 in \cite{vdVW1996} with $p=2$ to the first term on the right-hand side to deduce that 
\begin{align*}
&\E\left[\max_{1\le \ell \le p}\left|\frac{1}{n-1}\sum_{j\ne i}\left\{K_h(y_\ell-Y_{ij})\1(Y_{ij}\ne 0)-\E[K_h(y_\ell-Y_{ij})\1(Y_{ij}\ne 0)\mid U_i]\right\}\right|^2\mid U_i\right]\\
&\quad  \lesssim \frac{\log p}{n} \cdot\E\left[\max_{1\le \ell \le p} \left(K_h(y_\ell-Y_{ij})\right)^2\1(Y_{ij}\ne 0)\mid U_i\right] \\
&\quad \le \frac{\| K \|_{\infty}^2\log p}{nh^2}
\end{align*}
 up to a constant independent of $n$ and $i$. 
Thus by Fubini, we have
\begin{align*}
&\E\left[\max_{1\le \ell \le p}\left|\frac{1}{n-1}\sum_{j\ne i}\left\{K_h(y_\ell-Y_{ij})\1(Y_{ij}\ne 0)-\E[K_h(y_\ell-Y_{ij})\1(Y_{ij}\ne 0)\mid U_i]\right\}\right|^2\right] \\
&=O\left( \frac{\log p}{nh^2} \right).
\end{align*}
Recalling that $\max_{1\le \ell \le p}\left|\E[K_h(y_\ell-Y_{ij})\1(Y_{ij}\ne 0)\mid U_i]\right| \le M \int |K(z)|dz$, we have
\begin{align*}
\Delta_{III}=O_P\left(|\hat a- a|^2 \vee \max_{1\le \ell\le p}|\hat b(y_\ell)-\overline b_h(y_\ell)|^2\right)=O_P\left(\frac{\log p}{n}\right).
\end{align*}

Next consider $\Delta_{IV}$. We note that
\begin{align*}
\max_{1\le \ell \le p}\frac{1}{n} \sumi |\hat W_i^\ell|^2 \le \frac{2}{n} \sumi \max_{1\le \ell \le p}\left|\hat W_i^\ell-\E[\hat W_i^\ell\mid U_i ]\right|^2 +\frac{2}{n} \sumi \max_{1\le \ell \le p}\left|\E[\hat W_i^\ell\mid U_i ]\right|^2.
\end{align*}
Since for any $i, j \ne i$ and $\ell$, $\E[\hat W_{i}^\ell\mid U_i]=2\E[X_{ij}^\ell \mid U_i] \le 4a^{-1} M \int |K(z)| dz$, we have
\begin{align*}
\E\left[\frac{1}{n} \sumi \max_{1\le \ell \le p}\left|\E[\hat W_i^\ell\mid U_i] \right|^2\right]= &\E\left[ \max_{1\le \ell \le p}\left|2\E[ X_{12}^\ell\mid U_1] \right|^2\right]= O(1).
\end{align*}
Further, for any $i$ and $j\ne i$, conditionally on $U_i$, we have
\begin{align*}
\E\left[\max_{1\le \ell \le p} \left| \hat W_i^\ell -\E[\hat W_i^\ell\mid U_i]\right|^2\mid U_i\right]=
\E\left[\max_{1\le \ell \le p} \left| \hat W_i^\ell -2\E[X_{ij}^\ell\mid U_i]\right|^2\mid U_i\right].
\end{align*}
Conditionally on $U_i$, we apply Theorem 2.14.1 in \cite{vdVW1996} with $p=2$ to deduce that
\begin{align*}
\E\left[\max_{1\le \ell \le p} \left| \hat W_i^\ell -2\E[X_{ij}^\ell\mid U_i]\right|^2\mid U_i\right] &\lesssim
\frac{\log p}{n}\cdot\E[\|\bX_{ij}\|_\infty^2\mid U_i] \\
&\lesssim \frac{\log p}{nh^2}
\end{align*}
up to constants independent of $n$ and $i$. 
By Fubini, we have
\begin{align*}
\E\left[\max_{1\le \ell \le p} \left| \hat W_i^\ell -2\E[X_{ij}^\ell\mid U_i]\right|^2\right]=
O\left(\frac{\log p}{nh^2}\right).
\end{align*}
Thus we have $\max_{1\le \ell \le p}n^{-1} \sumi |\hat W_i^\ell|^2=O_P(1)$, so that
\begin{align*}
\Delta_{IV}=\Delta_{III}^{1/2}\cdot O_P(1)=O_P\left( \sqrt{\frac{\log p}{n}}\right).
\end{align*}
Conclude that that $\Delta_{I} \le \Delta_{III} +2\Delta_{IV}= O_P\left( (n^{-1}\log p)^{1/2}\right)$.

To bound $\Delta_{II}$, we first note that $\| \bS_n\|_\infty=O_P((n^{-1}\log p)^{1/2})$, which follows from the Gaussian approximation, Theorem \ref{thm: high-d CLT polyadic}. Combined with the rates for $|\hat a-a|$ and $\max_{1\le \ell \le p}|\hat b(y_\ell)-\overline b_h(y_\ell)|$, and the fact that
\begin{align*}
\tilde S_n^\ell \tilde S_n^{\ell'}- S_n^\ell S_n^{\ell'}=
(\tilde S_n^\ell -S_n^\ell )(\tilde S_n^{\ell'}- S_n^{\ell'}) +  S_n^\ell(\tilde S_n^{\ell'}-S_n^{\ell'})  +(\tilde S_n^\ell - S_n^\ell ) S_n^{\ell'},
\end{align*}
we have $\Delta_{II}=O_P((n^{-1} \log p)^{1/2} )$. By Condition (v) of the proposition, we conclude that
\begin{align*}
\|\tilde \Sigma - \hat \Sigma\|_\infty \log^2 p=O_P\left(\sqrt{\frac{\log^5 p}{n}} \right)=o_P(1).
\end{align*}

For the second statement of the proposition for the non-normalized statistic, under the conditions of the proposition, the bias can be controlled uniformly over $y\in\{y_1,\dots,y_p\}$ by
\begin{align*}
|\overline f_h(y)- f(y)|\le \frac{h^r}{r!}\|f^{(r)}\|_\infty \int |z^r K(z)|dz=O(h^r).
\end{align*}
By Lemma \ref{lem: AC}, we have
\begin{align*}
&\left|\Prob\left(\left (\overline f_h (y_\ell)\right )_{\ell=1}^p\in \mathcal I(1-\alpha)\right)-	\Prob\left(\left (f (y_\ell))\right)_{\ell=1}^p\in \mathcal I(1-\alpha) \right)\right|\\
&\lesssim \sqrt{n \log p}\cdot\max_{1\le \ell\le p}	|\overline f_h(y_\ell)- f(y_\ell)|
=O(h^r\sqrt{n\log p} )=o(1).
\end{align*}
The argument here follows from similar steps to those in Corollary 3 in \cite{KatoSasaki2018}. We omit the details for brevity.

Finally, the result for the normalized  statistic follows from the above results for the non-normalized statistic in view of Corollary \ref{cor:normalized_polyadic_MB} under the conditions of the proposition. This completes the proof. 
\qed

\subsection{Proof of Proposition \ref{proposition:lasso_MB_penalty}} 
In this proof, the notation $\lesssim$ means that the left-hand side is bounded by the right-hand side up to a constant independent of $n$. 

By Theorem \ref{thm: high-d CLT} (use Condition (\ref{eq:condition1_poly})), we have 
\[
\sup_{t \in \R} |\Prob (\| \sqrt{n}\bm{S}_{\bN} \|_{\infty} \le t) - \Prob (\| \bm{G} \|_{\infty} \le t) | \to 0,
\]
where $\bm{G} \sim N(\bm{0},\Sigma)$ with $\Sigma =\sum_{k=1}^K(n/N_k)\E[\bV_{k,1}\bV_{k,1}^T]$. 
Conditionally on $((Y_\i,\bm{Z}_\i^T)^T)_{\i\in[\bN]}$, we have 
\begin{align*}
\sum_{k=1}^K\frac{\sqrt{n}}{N_k}\sum_{i_k=1}^{N_k} \xi_{k,i_k} (\tilde{\bm{V}}_{k,i_k}-\tilde{\bm{S}}_N)\sim N(\bm{0},\tilde\Sigma ), \ \text{where }\tilde \Sigma=\sum_{k=1}^K (n/N_k^2) \sum_{i_k=1}^{N_k}(\tilde\bV_{k,i_k}-\tilde{\bm{S}}_{\bN})(\tilde\bV_{k,i_k}-\tilde{\bm{S}}_{\bN})^T.
\end{align*}
Thus, in view of Lemma \ref{lem:Gaussian_comparison}, it suffices to show that  $\|\tilde\Sigma-\Sigma\|_\infty\log^2 p=o_{P}(1)$. Further, the proof of Theorem \ref{thm:bootsrap_validity} under polynomial moment conditions (see also Remark \ref{remark:discussion_polynomial_rates}) implies that $\|\hat\Sigma-\Sigma\|_\infty=o_P((\log p)^{-2})$, where $\hat{\Sigma} = \sum_{k=1}^K (n/N_k^2)\sum_{i_k=1}^{N_k}(\hat \bV_{k,i_k}-\bm{S}_{\bN})(\hat\bV_{k,i_k}-\bm{S}_{\bN})^T$ and $\hat\bV_{k,i_k} = (\prod_{k'\ne k} N_{k'})^{-1}\sum_{i_1,\dots,i_{k-1},i_{k+1},\dots,i_K}\varepsilon_\i \bX_{\i}$.  Thus, it suffices to show that 
$\|\tilde\Sigma-\hat\Sigma\|_\infty=o_P((\log p)^{-2})$.

Recall that $\lambda^0 = (\log n) (n^{-1} \log p)^{1/2}$. 
We note that 
\[
\E[\| \bm{G} \|_{\infty}] \lesssim \max_{j,k} \sqrt{\E[(V_{k,1}^{j})^2] \log p} \lesssim \sqrt{\log p}, 
\]
so that $\lambda^0 \ge 2c\|\bm{S}_\bN\|_\infty$ with probability $1-o(1)$. By assumption, $\kappa(s,c_0)$ is  bounded away from zero with probability $1-o(1)$. Thus, Theorem 1 in \cite{BC2013} implies that
\begin{align*}
\sqrt{\frac{1}{N}\sum_{\i\in [\bN]}(\bX_\i^T(\tilde \beta - \beta_0))^2} =O_P\left(\sqrt{ \frac{s\log^3 (p \overline N)}{n}}\right).
\end{align*}

Observe that
\begin{align*}
\|\tilde\Sigma-\hat\Sigma\|_\infty
\le&
\sum_{k=1}^K\underbrace{\max_{1\le j,\ell\le p}\left|\frac{1}{N_k}\sum_{i_k=1}^{N_k}(\tilde V_{k,i_k}^j\tilde V_{k,i_k}^\ell-\hat V_{k,i_k}^j\hat V_{k,i_k}^\ell)\right|}_{=:(I_k)}+K\underbrace{\max_{1\le j,\ell\le p} \left|\tilde S_{\bN}^j\tilde S_{\bN}^\ell- S_{\bN}^jS_{\bN}^\ell\right|}_{=:(II)}.
\end{align*}
We first consider the term $(I_k)$. We shall focus on $k=1$ as similar bounds hold for other $k$. Observe that
\begin{align*}
\frac{1}{N_1}\sum_{i_1=1}^{N_1}(\tilde V_{1,i_1}^j\tilde V_{1,i_1}^\ell-\hat V_{1,i_1}^j\hat V_{1,i_1}^\ell)=&
\frac{1}{N_1}\sum_{i_1=1}^{N_1}(\tilde V_{1,i_1}^j-\hat V_{1,i_1}^j)(\tilde V_{1,i_1}^\ell-\hat V_{1,i_1}^\ell)
+\frac{1}{N_1}\sum_{i_1=1}^{N_1}(\tilde V_{1,i_1}^j-\hat V_{1,i_1}^j)\hat V_{1,i_1}^\ell\\
&+\frac{1}{N_1}\sum_{i_1=1}^{N_1}\hat V_{1,i_1}^j(\tilde V_{1,i_1}^\ell-\hat V_{1,i_1}^\ell).
\end{align*}
By Cauchy-Schwarz, we have
\begin{align*}
(I_1)\le & \underbrace{\max_{1\le j\le p}\frac{1}{N_1}\sum_{i_1=1}^{N_1}(\tilde V_{1,i_1}^j-\hat V_{1,i_1}^j)^2}_{=:(III)} +2 (III)^{1/2}\sqrt{\underbrace{\max_{1\le \ell \le p}\frac{1}{N_1}\sum_{i_1=1}^{N_1}|\hat V_{1,i_1}^\ell|^2}_{=:(IV)}}.
\end{align*}

To bound $(IV)$, we note that
\begin{align*}
\max_{1\le \ell \le p}\frac{1}{N_1}\sum_{i_1=1}^{N_1}|\hat V_{1,i_1}^\ell|^2
 \le
\frac{1}{N_1}\sum_{i_1=1}^{N_1}\left\|\hat \bV_{1,i_1}-\E[\hat \bV_{1,i_1}\mid U_{(i_1,0,\dots,0)}]\right\|_\infty^2
 +
\frac{1}{N_1}\sum_{i_1=1}^{N_1}\left\|\E[\hat \bV_{1,i_1}\mid U_{(i_1,0,\dots,0)}]\right\|_\infty^2
\end{align*}
Since $\E[\hat \bV_{1,i_1}\mid U_{(i_1,0,\dots,0)}]=\E\b[\varepsilon_{\bm{1}}X_{\bm{1}}\mid U_{(1,0,\dots,0)}]$ for all $i_1$, by Fubini and Jensen's inequality, we have
\begin{align*}
\E\left[\frac{1}{N_1}\sum_{i_1=1}^{N_1}\left\|\E[\hat \bV_{1,i_1}\mid U_{(i_1,0,\dots,0)}]\right\|_\infty^2\right]\le&
\left(\E\left[\left\|\E[\varepsilon_{\bm{1}}X_{\bm{1}}^\ell\mid U_{(1,0,\dots,0)}]\right\|_\infty^q\right]\right)^{2/q}\\
\le&\left(\E\left[\max_{1\le\ell \le p} |\varepsilon_{\bm{1}}X_{\bm{1}}^\ell|^q\right]\right)^{2/q}\le D_\bN^2.
\end{align*}
Conditionally on $U_{(i_1,0,\dots,0)}$,
\begin{align*}
\E\left[\left\|\hat \bV_{1,i_1}-\E[\hat \bV_{1,i_1}\mid U_{(i_1,0,\dots,0)}]\right\|_\infty^2\mid U_{(i_1,0,\dots,0)}\right]\le&\left(\E\left[\left\|\hat \bV_{1,i_1}-\E[\varepsilon_\i\bX_\i\mid U_{(i_1,0,\dots,0)}]\right\|_\infty^q\mid U_{(i_1,0,\dots,0)}\right]\right)^{2/q}.
\end{align*}
As in the proof of Theorem \ref{thm:bootsrap_validity} (see Step 1), conditionally on $U_{(i_1,0,\dots,0)}$, the array $( \varepsilon_{(i_1,\i_{-1})}\bX_{(i_1,\i_{-1})} )_{\i_{-1} \in \N^{K-1}}$ is separately exchangeable with mean vector $\E[\varepsilon_\i\bX_\i\mid U_{(i_1,0,\dots,0)}]$. 
By Corollary \ref{cor: maximal inequality}, we have
\begin{align*}
\E\left[\left\|\hat \bV_{1,i_1}-\E[\varepsilon_\i\bX_\i\mid U_{(i_1,0,\dots,0)}]\right\|_\infty^q\mid U_{(i_1,0,\dots,0)}\right]\lesssim& 
n^{-q/2}(\log p)^{q/2} \E[\|\varepsilon_{\i}\bX_{\i} \|_\infty^{q}\mid U_{(i_1,0,\dots,0)}].
\end{align*}
By Fubini, we have
\begin{align*}
\E\left[\left\|\hat\bV_{1,i_1}-\E[\varepsilon_\i\bX_\i\mid U_{(i_1,0,\dots,0)}]\right\|_\infty^q\right]\lesssim& 
n^{-q/2}(\log p)^{q/2} D_\bN^{q}.
\end{align*}
Conclude that  $|(IV)|=O_P(D_\bN^2)$.

Next, we shall bound the term $(III)$.  Observe that by Cauchy-Schwarz,
\begin{align*}
&|\tilde V_{1,i_1}^j-\hat V_{1,i_1}^j|=\left|\frac{1}{\prod_{k\ne 1}N_k}\sum_{i_2,\dots,i_K}X_\i^j(\bX_\i^T(\tilde \beta-\beta_0)+r_\i)\right|\\
&\quad \le
\sqrt{\frac{1}{\prod_{k\ne 1}N_k}\sum_{i_2,\dots,i_K}(X_\i^j)^2}\left(\sqrt{\frac{1}{\prod_{k\ne 1}N_k}\sum_{i_2,\dots,i_K}(\bX_\i^T(\tilde \beta -\beta_0))^2}+\sqrt{\frac{1}{\prod_{k\ne 1}N_k}\sum_{i_2,\dots,i_K}r_\i^2}\right),
\end{align*}
so that the term $(III)$ is bounded as
\begin{align*}
&\lesssim 
\max_{j} \frac{1}{N}\sum_{i_1=1}^{N_1}\left(\frac{1}{\prod_{k\ne 1}N_k}\sum_{i_2,\dots,i_K}(X_\i^j)^2\right)\left(\frac{1}{\prod_{k\ne 1}N_k}\sum_{i_2,\dots,i_K}(\bX_\i^T(\tilde \beta -\beta_0))^2+\frac{1}{\prod_{k\ne 1}N_k}\sum_{i_2,\dots,i_K}r_\i^2\right)\\
&\le
\left(\max_{j, i_1} \frac{1}{\prod_{k\ne 1}N_k}\sum_{i_2,\dots,i_K}(X_\i^j)^2\right) \underbrace{\left(\frac{1}{N}\sum_{\i\in[\bN]}(\bX_\i^T(\tilde \beta -\beta_0))^2+\frac{1}{N}\sum_{\i\in[\bN]}r_\i^2\right)}_{=O_P\left(\frac{s\log^3 (p \overline N)}{n}\right)}.
\end{align*}
Observe that
\[
\begin{split}
&\E\left[\max_{j,i_1} \frac{1}{\prod_{k\ne 1}N_k}\sum_{i_2,\dots,i_K}(X_\i^j)^2\right] \\
&\le \E\left[\max_{j,i_1} \left | \frac{1}{\prod_{k\ne 1}N_k}\sum_{i_2,\dots,i_K}\{ (X_\i^j)^2 - \E[(X_{(i_1,1,\dots,1)}^j)^2 \mid U_{(i_1,0,\dots,0)}] \} \right | \right] \\
&\quad + \E\left [ \max_{j,i_1} \E[(X_{(i_1,1,\dots,1)}^j)^2 \mid U_{(i_1,0,\dots,0)}] \right ]. 
\end{split}
\]
By H\"older's inequality, we have
\[
\begin{split}
\E\left [ \max_{j,i_1} \E[(X_{(i_1,1,\dots,1)}^j)^2 \mid U_{(i_1,0,\dots,0)}] \right ] &\le \E \left  [ \max_{i_1} \E[\| \bX_{(i_1,1,\dots,1)} \|_{\infty}^{2} \mid  U_{(i_1,0,\dots,0)}] \right ] \\
&\le \E \left  [ \max_{i_1} \left ( \E[\| \bX_{(i_1,1,\dots,1)} \|_{\infty}^{2q} \mid  U_{(i_1,0,\dots,0)}] \right )^{1/q}  \right ] \\
&\le \left ( \E \left [ \sum_{i_1} \E[\| \bX_{(i_1,1,\dots,1)} \|_{\infty}^{2q} \mid  U_{(i_1,0,\dots,0)}] \right ] \right )^{1/q} \\
&\le \overline{N}^{1/q} D_{\bN}. 
\end{split}
\]
Applying Corollary \ref{cor: maximal inequality} conditionally on $U_{(i_1,0,\dots,0)}$ (cf. Step 1 in the proof of Theorem \ref{thm:bootsrap_validity}), we have 
\[
\begin{split}
&\E\left[ \max_{j} \left | \frac{1}{\prod_{k\ne 1}N_k}\sum_{i_2,\dots,i_K}\{ (X_\i^j)^2 - \E[(X_{(i_1,1,\dots,1)}^j)^2 \mid U_{(i_1,0,\dots,0)}] \} \right |^{q} \mid U_{(i_1,0,\dots,0)} \right] \\
&\lesssim n^{-q/2} (\log p)^{q/2} \E[\| \bX_{(i_1,1,\dots,1)} \|_{\infty}^{2q} \mid U_{(i_1,0,\dots,0)}].
\end{split}
\]
Thus, we have 
\[
\begin{split}
&\E\left[\max_{j,i_1} \left | \frac{1}{\prod_{k\ne 1}N_k}\sum_{i_2,\dots,i_K}\{ (X_\i^j)^2 - \E[(X_{(i_1,1,\dots,1)}^j)^2 \mid U_{(i_1,0,\dots,0)}] \} \right | \right] \\
&\le\left ( \sum_{i_1} \E\left[ \max_{j} \left | \frac{1}{\prod_{k\ne 1}N_k}\sum_{i_2,\dots,i_K}\{ (X_\i^j)^2 - \E[(X_{(i_1,1,\dots,1)}^j)^2 \mid U_{(i_1,0,\dots,0)}] \} \right |^{q} \right] \right )^{1/q} \\
&\lesssim \overline{N}^{1/q} n^{-1/2} (\log p)^{1/2} D_{\bN}. 
\end{split}
\]
Conclude that $(III) = O_P\left(  \{ n^{-1}s\overline N^{1/q}D_\bN \log^3 ( p \overline N)\}^{1/2}\right)$ and consequently 
\[
|(I_1)|=O_P\left(  \{ n^{-1}s\overline N^{1/q}D_\bN^3\log^3 ( p \overline N)\}^{1/2}\right).
\]

Finally, to bound $|(II)|$, observe that
\[
\begin{split}
\tilde S_{\bN}^j\tilde S_{\bN}^\ell- S_{\bN}^jS_{\bN}^\ell &=(\tilde S_{\bN}^j-S_{\bN}^j)(\tilde S_{\bN}^\ell-S_{\bN}^\ell)+S_{\bN}^j(\tilde S_{\bN}^\ell-S_{\bN}^\ell) \\
&\quad +(\tilde S_{\bN}^j-S_{\bN}^j)S_{\bN}^\ell.
\end{split}
\]
Then, we have 
\begin{align*}
|(II)|\le &\max_{1\le j\le p}\left| \frac{1}{N}\sum_{\i\in [\bN]}(\tilde \varepsilon_\i - \varepsilon_\i)X_\i^j\right|^2 
+2 \|\bS_{\bN}\|_\infty\cdot \max_{1\le j \le p} \left| \frac{1}{N}\sum_{\i\in [\bN]}(\tilde \varepsilon_\i - \varepsilon_\i)X_\i^j\right| .
\end{align*}
By Cauchy-Schwarz, we have
\begin{align*}
\max_{1\le j\le p}\left| \frac{1}{N}\sum_{\i\in [\bN]}(\tilde \varepsilon_\i - \varepsilon_\i)X_\i^j\right|
&\le
\max_{1\le j\le p}\sqrt{\frac{1}{N}\sum_{\i \in [\bN]}(X_\i^j)^2}\left(\sqrt{ \frac{1}{N}\sum_{\i\in [\bN]}(\bX_\i^T(\beta_0-\tilde \beta))^2}+\|\bm{r}\|_{N,2}\right)\\
&=O_P\left(\sqrt{ \frac{sD_\bN\log^3 (\overline Np)}{n}}\right),
\end{align*}
so that $|(II)|=O_P\left(  n^{-1}s D_\bN^3\log^3 ( p \overline N) + \{n^{-2}s D_\bN(\log p)(\log^3 ( p \overline N))\}^{1/2}\right).$

Combining the above bounds, we have
$
\|\tilde \Sigma-\hat \Sigma\|_\infty=O_P\left(  \{n^{-1}s\overline N^{1/q} D_\bN^3\log^3 ( p \overline N)\}^{1/2}\right).
$
This implies  that $\|\tilde \Sigma-\hat \Sigma\|_\infty\log^2 p=o_P(1)$, as required.

Finally, by the Gaussian concentration, we have 
\[
\lambda =  O_{P} \left( \sqrt{\frac{\log  p }{n}} \bigvee \sqrt{\frac{\log(1/\eta)}{n}} \right).
\]
Together with Theorem 1 of \cite{BC2013}, we obtain the desired bound on $\| \hat{\bm{f}}^{\lambda} - \bm{f} \|_{N,2}$. 
\qed

\subsection{Proof of Proposition \ref{proposition:lasso_K=2}}
Recall that $K=2$. We write $\bX_{i,j}$ instead of $\bX_{(i,j)}$ for the notational simplicity. 
Define the $N \times p$ matrix $\mathbb{X} = (\bX_{1,1},\dots,\bX_{N_1, 1},\bX_{2,1},\dots,\bX_{N_1,N_2})^{T}$. 
The $s$-sparse eigenvalue with $1 \le s \le p$ for $\mathbb{X}$ is defined by 
\[
\phi_{\min}(s)=\min_{\|\theta\|_0\le s, \| \theta \|=1} \| \mathbb{X} \theta \|_{N,2}.
\]
By \citet[Lemma 2.7]{LecueMendelson2017}, if $\phi_{\min}(s) \ge \phi_{1}$, then for $2 \le s \le p$, we have 
\[
\| \mathbb{X} \theta \|_{N,2}^{2} \ge \phi_{1}^{2} \| \theta \|^{2}- \frac{\| \theta \|_{1}^{2}}{s-1} \times \underbrace{\max_{1 \le \ell \le p}  \sum_{(i,j) \in [\bN]} (X_{i,j}^{\ell})^{2}/N}_{=: \hat{\rho}}
\]
for all $\theta \in \R^{p}$. We can then deduce that for $s_{1} \le (s-1)\phi_{1}^2/(2(1+c_0)^2\hat{\rho})$, we have 
\[
\kappa (s_1,c_0) \ge \phi_1/\sqrt{2}.
\]
Lemma \ref{lem:sparse_eigenvalues} below implies that $\phi_{\min}(s)$ is bounded away from zero with probability $1-o(1)$. 
Further, observe that 
\[
\hat{\rho} \le \max_{1 \le \ell \le p} \E[(X_{1,1}^{\ell})^2] + \max_{1 \le \ell \le p}\left | N^{-1} \sum_{(i,j) \in [\bN]} \{ (X_{i,j}^{\ell})^2 - \E[(X_{1,1}^{\ell})^2]\} \right |.
\]
The first term on the right-hand side is $O(1)$, while the second term is $o_{P}(1)$ (which follows from Lemma \ref{lem:sparse_eigenvalues} below with $s=1$), so that $\hat{\rho} = O_{P}(1)$. The conclusion of the proposition follows from rescaling $s$. 
\qed

\begin{lemma}[Sparse eigenvalues for two-way clustering]\label{lem:sparse_eigenvalues}
Suppose  that  $(\bX_{i,j})_{(i,j) \in [\bN]}$ with $[\bN] = \{ 1,\dots, N_1 \} \times \{1,\dots,N_2\}$ is sampled from a separately exchangeable array $(\bX_{i,j})_{(i,j)\in \N^2}$ generated as $\bX_{i,j} = \mathfrak{g}(U_{i,0},U_{0,j},U_{i,j})$ for some Borel measurable map $\mathfrak{g}: [0,1]^{3} \to \R^{p}$ and i.i.d. $U[0,1]$ variables $U_{i,0},U_{j,0},U_{i,j}$. 
Pick any $1 \le s \le p \wedge n$. 
Let $B = \sqrt{\E[M^2]}$ with $M=\max_{(i,j) \in [\bN]}\|\bX_{i,j}\|_\infty$. 
Define
\[
\delta_{\bN}=\sqrt{s}B\left(\frac{1}{\sqrt{n}}\left\{\log^{1/2} p + (\log s)(\log^{1/2} \overline N) (\log^{1/2} p)\right\}\bigvee \frac{1}{\sqrt{N}}\left\{\log p + (\log \overline N)(\log p)\right\}\right).
\]
Then, we have 
\[
\E\left[ \sup_{\|\theta\|_0 \le s, \|\theta\|=1 } \left| \frac{1}{N} \sum_{(i,j)\in [\bN] }\{(\theta^T\bX_{i,j} )^2 - \E[(\theta^T\bX_{1,1} )^2]\}\right|\right]
\lesssim
\delta_{\bN}^2 +\delta_{\bN}\sup_{\|\theta\|_0\le s,\,\|\theta\|=1}\sqrt{\E[(\theta^T \bX_{1,1})^2]}
\]
up to a universal constant. 
In addition, we have
$\delta_{\bN} \lesssim \{n^{-1}sB^2\log^4( p\overline N)\}^{1/2}$ up to a universal constant. 
\end{lemma}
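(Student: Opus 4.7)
The plan is to split the process via the Hoeffding decomposition (Lemma \ref{lem: H decomposition}) applied (with $K=2$) to the scalar separately exchangeable array $Z_{i,j}(\theta) := (\theta^T\bX_{i,j})^2 - \E[(\theta^T\bX_{1,1})^2]$, giving
\[
\frac{1}{N}\sum_{(i,j)\in [\bN]} Z_{i,j}(\theta) = \frac{1}{N_1}\sum_{i=1}^{N_1} g_{1,i}(\theta) + \frac{1}{N_2}\sum_{j=1}^{N_2} g_{2,j}(\theta) + \frac{1}{N}\sum_{(i,j)\in[\bN]} h_{i,j}(\theta),
\]
where each $g_{k,\cdot}$ is a function of a single latent ($U_{i,0}$ or $U_{0,j}$) and $h_{i,j}$ satisfies $\E[h_{i,j}(\theta)\mid U_{i,0}]=\E[h_{i,j}(\theta)\mid U_{0,j}]=0$. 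I will bound $\E[\sup_{\theta\in\Theta_s}|\cdot|]$ of each summand, with $\Theta_s:=\{\theta\in\R^p:\|\theta\|_0\le s,\|\theta\|=1\}$, using throughout the covering estimate $\log N(\epsilon,\Theta_s,\|\cdot\|)\lesssim s\log(ep/s)+s\log(1/\epsilon)$ together with the quadratic Lipschitz bound $|Z_{i,j}(\theta_1)-Z_{i,j}(\theta_2)|\le 4sM^2\|\theta_1-\theta_2\|$, which follows from Cauchy--Schwarz on $\theta_1\pm\theta_2$ (each supported on at most $2s$ coordinates).

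For the two H\'ajek summands, $\theta\mapsto N_k^{-1}\sum g_{k,\cdot}(\theta)$ is a classical i.i.d.\ empirical process indexed by a function class of entropy $\lesssim s\log p$, envelope $\lesssim sB^2$, and variance $\lesssim sB^2\overline\sigma^2$ with $\overline\sigma:=\sup_{\theta\in\Theta_s}\sqrt{\E[(\theta^T\bX_{1,1})^2]}$. Talagrand-type maximal inequalities (e.g.\ Theorem~2.14.1 of \citet{vdVW1996} or Lemma~8 of \citet{CCK2015PTRF}) then deliver a bound of order $\sqrt{s\log p/n}\cdot\overline\sigma + n^{-1}sB^2(\log s)(\log\overline N)^{1/2}(\log p)^{1/2}$, the $\log\overline N$ arising from $\E[\max_i M_i^2]\lesssim B^2\log\overline N$ and the $\log s$ from iterating the truncation level across the net. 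For the degenerate summand I will first invoke Lemma \ref{lem: symmetrization} (with $k=2$) to introduce independent Rademacher variables $\{\epsilon_{1,i}\},\{\epsilon_{2,j}\}$, reducing to the second-order chaos $N^{-1}\sum_{i,j}\epsilon_{1,i}\epsilon_{2,j}(\theta^T\bX_{i,j})^2$. Conditionally on all $U$-variables, the supremum of this chaos over $\Theta_s$ admits a $\psi_1$-Orlicz tail bound by Corollary~3.2.6 of \citet{delaPenaGine1999}, and chaining over the sparse $\ell_2$-net delivers a parallel contribution of order $\sqrt{s/N}\cdot(\log p)\cdot\overline\sigma + N^{-1}sB^2(\log\overline N)(\log p)$.

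Summing the three contributions and matching the variance and envelope pieces respectively with $\delta_\bN\overline\sigma$ and $\delta_\bN^2$ yields the main inequality. The final bound $\delta_\bN\lesssim\{n^{-1}sB^2\log^4(p\overline N)\}^{1/2}$ then reduces to pure algebra: each of the four summands in the definition of $\delta_\bN$ is at most $\sqrt{s}B\cdot n^{-1/2}\log^2(p\overline N)$, using $N\ge n$, $\log s\le\log p$, and $\log^{1/2}\overline N\cdot\log^{1/2}p\le\log(p\overline N)$.

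The main obstacle is the degenerate $(1,1)$-component. Because $h_{i,j}(\theta)$ is both quadratic in $\theta$ and doubly dependent across the two index sets, a single-index symmetrization does not suffice, and one must use the two-way Rademacher symmetrization of Lemma \ref{lem: symmetrization} to produce a second-order chaos whose conditional sub-Gaussianity can then be chained over the sparse net. Keeping the envelope contribution confined to the $\delta_\bN^2$ slot (rather than inflating the sparsity exponent from $\sqrt{s}$ to $s$) further requires carefully separating the chaos into a concentration-controlled bounded piece and a tail piece handled through the envelope $sM^2$.
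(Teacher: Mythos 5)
Your decomposition (Hoeffding for $K=2$), your treatment of the H\'ajek summands via conditional symmetrization, and your use of Lemma \ref{lem: symmetrization} to turn the degenerate part into a degree-two Rademacher chaos all match the paper. The gap is in the entropy/chaining step, and it is not a cosmetic one.

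You propose to chain over ``the sparse $\ell_2$-net,'' i.e., to use $\log N(\Theta_s,\|\cdot\|,\epsilon)\lesssim s\log(ep/(s\epsilon))$ together with the quadratic Lipschitz bound $|Z_{i,j}(\theta_1)-Z_{i,j}(\theta_2)|\lesssim sM^2\|\theta_1-\theta_2\|$. Feeding this into the $\psi_1$ Dudley integral produces a factor of order $s$, not $\sqrt s$: the increment pseudometric scales like $\sqrt{s}\,M R_2\|\cdot\|$, and the $\ell_2$-entropy contributes another $\sqrt s$ through the integral, giving $sMR_2\cdot\sqrt{s}$ overall rather than $sMR_2$. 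That is exactly the ``inflation from $\sqrt s$ to $s$'' you flag at the end, but the remedy you gesture at --- splitting the chaos into a bounded and a tail piece through the envelope $sM^2$ --- does not repair it, because the problem is not heavy tails but the size of the $\ell_2$ covering at small scales. The paper instead changes the pseudometric: it chains in the data-dependent $\ell_\infty$-type norm $\|\theta\|_{\bX}=\max_{(i,j)}|\theta^T\bX_{i,j}|$ and invokes the Rudelson--Vershynin covering bound (Lemma 3.9 and Eq.~(3.10) in \cite{RudelsonVershynin2008}), which supplies the crucial Sudakov-type regime $\log N(\Theta_s/\sqrt{s},\|\cdot\|_\bX,t)\lesssim t^{-2}M^2(\log N)(\log p)$ for $t>M/\sqrt s$. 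Integrating that contributes only $\sqrt{s}\,M$, and this is precisely where the $\sqrt s$ in $\delta_\bN$ comes from. A parallel issue lurks in your H\'ajek bound: the paper invokes Lemma P.1 of \cite{BCCW2018}, whose $\sqrt{s}MR_1$ factor again rests on an $\|\cdot\|_{\bX}$-type covering estimate, whereas classical entropy bounds (Theorem~2.14.1 of \cite{vdVW1996}, Lemma~8 of \cite{CCK2015PTRF}) with the naive $\ell_2$ net would again overshoot by $\sqrt s$. A smaller point: to chain the supremum of the degenerate chaos you need a maximal inequality for processes with $\psi_1$ increments (the paper's Corollary 5.1.8 of \cite{delaPenaGine1999}); Corollary 3.2.6, which you cite, only gives the pointwise $\psi_1$ tail and does not itself produce a bound on the supremum.

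In short: the skeleton of the argument is right, but without the Rudelson--Vershynin covering bound in the $\|\cdot\|_\bX$ pseudometric your chaining gives $\delta_\bN\asymp s$ rather than $\sqrt s$, and the final claim $\delta_\bN\lesssim\{n^{-1}sB^2\log^4(p\overline N)\}^{1/2}$ would require the additional restriction $s\lesssim\sqrt{n}$, which the lemma does not assume.
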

\begin{proof}[Proof of Lemma \ref{lem:sparse_eigenvalues}]
In this proof, the notation $\lesssim$ means that the left-hand side is bounded by the right-hand side up to a universal constant.
 
Let $\Theta_{s}=\cup_{|T|=s}\{\theta\in \R^p: \|\theta\|=1,\, \supp(\theta)\subset T\}$.
Further, let $Z_{i,j}(\theta)=(\theta^T\bX_{i,j})^2 - \E[(\theta^T\bX_{1,1} )^2]$. Then, for each $\theta$, $Z_{i,j}(\theta)$ is a centered random variable. Consider the decomposition
\[
Z_{i,j}(\theta) = \E[Z_{i,1}(\theta)\mid U_{i,0}] + \E[Z_{1,j}(\theta)\mid U_{0,j}] + \underbrace{Z_{i,j}(\theta) - \E[Z_{i,1}(\theta)\mid U_{i,0}] - \E[Z_{1,j}(\theta)\mid U_{0,j}]}_{=:\hat{Z}_{i,j}(\theta)}. 
\]
We divide the rest of the proof into two steps. 
	
\underline{Step 1}. 
Consider first the term $\sum_{i,j}\E[Z_{i,j}(\theta)\mid U_{i,0}]=N_2\sum_{i=1}^{N_1}\E[Z_{i,1}(\theta)\mid U_{i,0}]$, which consists of i.i.d. variables. 
Observe that $\E[Z_{i,1}(\theta)\mid U_{i,0}]$ has mean $0$ and by symmetrization
\[
\begin{split}
\E\left[\sup_{\theta\in \Theta_{s}}\left|\sum_{i=1}^{N_1}\E[Z_{i,1}(\theta)\mid U_{i,0}]\right|\right]
&=
\E\left[\sup_{\theta\in \Theta_{s}}\left|\sum_{i=1}^{N_1}\left(\theta^T\E[\bX_{i,1}\bX_{i,1}^T\mid U_{i,0}]\theta-\E[(\theta^T \bX_{1,1})^2]\right)\right|\right]\\
&\le
2 \E\left[\E\left[\sup_{\theta\in \Theta_{s}}\left|\sum_{i=1}^{N_1}\epsilon_i \left(\theta^T\E[\bX_{i,1}\bX_{i,1}^T\mid U_{i,0}]\theta \right)\right|\mid \bX_{[\bN]}\right]\right] \\
&\le
2 \E\left[\E\left[\sup_{\theta\in \Theta_{s}}\left|\sum_{i=1}^{N_1}\epsilon_i (\theta^{T}\bX_{i,1})^2\right|\mid \bX_{[\bN]}\right]\right],
\end{split}
\]
where $(\epsilon_i)_{i=1}^{N_1}$ is a sequence of independent Rademacher random variables  that are independent of  $(\bX_{i,j})_{(i,j) \in [\bN]}$, and the second inequality follows from Jensen's inequality. 
Now, the following bound can be obtained by following the proof of Lemma P.1. in \cite{BCCW2018} with $\mathcal U$ set to be a singleton set:
\[
\E\left[\sup_{\theta\in \Theta_{s}}\left|\sum_{i=1}^{N_1}\epsilon_i (\theta^T\bX_{i,1})^2 \right| \mid \bX_{[\bN]}\right]\lesssim\sqrt{s} M R_1(\log^{1/2} p + (\log s) (\log^{1/2} \overline N)(\log^{1/2}p)),
\]
where $R_1=\sup_{\theta\in\Theta_{s}}\left(\sum_{i=1}^{N_1} (\theta^T \bX_{i,1})^2\right)^{1/2}$.

Choosing $\delta_{\bN,1}=BN_1^{-1/2}\sqrt{s}\{\log^{1/2} p + (\log s) (\log^{1/2} \overline N)(\log^{1/2}p)\}$, 
by Cauchy-Schwarz, we have 
\begin{align*}
I&:=
\E\left[\sup_{\theta\in \Theta_{s}}\left|\sum_{i=1}^{N_1}\epsilon_i (\theta^{T}\bX_{i,1})^2\right|\right]
\lesssim
\frac{\delta_{\bN,1} \E[MR_1]}{B\sqrt{N_1}}
\le\left(\frac{\delta_{\bN,1}}{B}\right) \left(\frac{\E[M^2]\E[R_1^2]}{N_1}\right)^{1/2}
\\
&\le \delta_{\bN,1}(\E[R_1^2/N_1])^{1/2}
\lesssim
\delta_{\bN,1}\left(I+\sup_{\theta\in\Theta_{s}}\E[(\theta^T\bX_{1,1})^2] \right)^{1/2}.
\end{align*}
Using the algebraic fact that $a^2\le \delta^2 a+\delta^2 b$ implies $a\le \delta^2 +a^{-1}\delta^2b$, we have 
\[
I\lesssim \delta_{\bN,1}^2 + \delta_{\bN,1} \sqrt{\sup_{\theta\in\Theta_{s}}\E[(\theta^T \bX_{1,1})^2]}.
\]
The same bound holds for 
$\E\left[\sup_{\theta\in \Theta_{s}}\left|N_2^{-1}\sum_{j=1}^{N_2}\E[Z_{1,j}(\theta)\mid U_{0,j}]\right|\right]$. Conclude that 
\begin{align*}
\E\left[\sup_{\theta\in \Theta_{s}}\left|\frac{1}{N}\sum_{i,j}\left(\E[Z_{i,j}(\theta)\mid U_{i,0}]+\E[Z_{i,j}(\theta)\mid U_{0,j}]\right)\right|\right]
\lesssim \delta_{\bN,2}^2 +\delta_{\bN,2}\sqrt{\sup_{\theta\in\Theta_{s}}\E[(\theta^T \bX_{1,1})^2]},
\end{align*}
where $\delta_{\bN,2}=Bn^{-1/2}\sqrt{s}\{\log^{1/2} p + (\log s) (\log^{1/2} \overline N)(\log^{1/2}p)\}\lesssim Bn^{-1/2}\sqrt{s} \log^{2}( p\overline N)$.
	
\underline{Step 2}. 
	Now, to obtain a bound on $\E[\sup_{\theta\in \Theta_{s}}|N^{-1}\sum_{i,j}\hat Z_{i,j}(\theta)|]$, by Lemma \ref{lem: symmetrization} (more formally, we apply Lemma \ref{lem: symmetrization} after approximating $\Theta_s$ by a sequence of finite sets and take limits), we have  the following symmetrization inequality
\begin{align*}
\E\left[\sup_{\theta\in \Theta_{s}}\left|\sum_{i,j} \hat Z_{i,j}(\theta)\right|\right]
	&\le 
	4\E\left[\E\left[\sup_{\theta\in \Theta_{s}}\left|\sum_{i,j} \epsilon_i \epsilon'_j \hat Z_{i,j}(\theta)\right|\mid \bX_{[\bN]}\right]\right]\\
	&\lesssim \E\left[\E\left[\sup_{\theta\in \Theta_{s}}\left|\sum_{i,j} \epsilon_i \epsilon'_j  (\theta^T\bX_{i,j})^2\right|\mid \bX_{[\bN]}\right]\right],
	\end{align*}
	where $(\epsilon_i)$ and $(\epsilon_i')$ are independent copies of Rademacher random variables independent of $(\bX_{i,j})_{(i,j) \in [\bN]}$, and the second inequality follows from Jensen's inequality. 
Conditionally on $(\bX_{i,j})_{(i,j) \in [\bN]}$, $\sum_{i,j} \epsilon_i \epsilon_j' (\theta^T\bX_{i,j})^2$ is a Rademacher chaos of degree $2$ (cf. the proof of Theorem \ref{thm: maximal inequality}). Hence, Corollary 5.1.8 in \cite{delaPenaGine1999} yields that 
\begin{align*}
II&:=\E\left[\sup_{\theta\in \Theta_{s}}\left|\sum_{i,j} \epsilon_i\epsilon_j' (\theta^T \bX_{i,j})^2\right|\mid \bX_{[\bN]}\right]
\lesssim \left\|\sup_{\theta \in \Theta_{s}} \left | \sum_{i,j} \epsilon_i \epsilon'_j  (\theta^T \bX_{i,j})^2 \right | \right\|_{\psi_1 \mid \bX} \\
&\lesssim
\int_{0}^{\text{diam}(\Theta_{s})} \log N(\Theta_{s},\rho_{\bX},t)d t,
\end{align*}
where $\|\cdot\|_{\psi_1 \mid \bX}$ is the $\psi_1$-norm evaluated conditionally on $(\bX_{i,j})_{(i,j) \in [\bN]}$, $\rho_{\bX}$ is a pseudometric on $\Theta_{s}$ defined by $\rho_{\bX}(\theta,\overline\theta)=\left( \sum_{i=1}^{N_1}\sum_{j=1}^{N_2}\{(\theta^T\bX_{i,j} )^2-(\overline\theta^T \bX_{i,j} )^2\}^2 \right)^{1/2}$, and $\text{diam} (\Theta_{s})$ is the $\rho_{\bX}$-diameter of $\Theta_{s}$. Now, for any two $\theta$, $\bar\theta\in \Theta_{s}$, 
\begin{align*}
\rho_{\bX} (\theta,\bar\theta) &=
\left(\sum_{i=1}^{N_1}\sum_{j=1}^{N_2}\left\{(\theta^T \bX_{i,j})^2-(\bar\theta^T \bX_{i,j})^2 \right\}^2\right)^{1/2}\\
&\le
\left(\sum_{i=1}^{N_1}\sum_{j=1}^{N_2}\left\{(\theta^T \bX_{i,j})+(\bar\theta^T \bX_{i,j}) \right\}^2\right)^{1/2} \max_{(i,j)\in [\bN]}|(\theta - \bar \theta)^T\bX_{i,j}| \\
&\le
\sqrt{2} R_{2} \|\theta - \bar \theta\|_{\bX},
\end{align*}
where $R_{2}=\sup_{\theta\in\Theta_{s}}\left(\sum_{(i,j) \in [\bN]}   (\theta^T \bX_{i,j})^2\right)^{1/2}$ and  $\| \theta \|_{\bX}=\max_{(i,j)\in [\bN]}|\theta^T \bX_{i,j}|$. Thus, we have
\begin{align*}
\int_{0}^{\text{diam}(\Theta_{s})} \log N(\Theta_{s},\rho_{\bX},t)dt
&\le \int_{0}^{2\sqrt{2s}MR_{2}} \log N \left(\Theta_{s}/\sqrt{s},\|\cdot\|_{\bX},t/(\sqrt{2s}R_{2}) \right)dt \\
&= 2\sqrt{2s} R_{2} \int_{0}^{M} \log N \left(\Theta_{s}/\sqrt{s},\|\cdot\|_{\bX},t\right)dt.
\end{align*}
Lemma 3.9 and Equation (3.10) in \cite{RudelsonVershynin2008} yield that for some universal constant $A$, 
\begin{align*}
&\int_0^{M} \log N\left(\Theta_{s}/\sqrt{s},\|\cdot\|_{\bX},t\right) dt\\ 
&\le \int_{0}^{M/\sqrt{s}} \log\left({p \choose s}(1+2M/t)^s \right)dt+\int_{M/\sqrt{s}}^M \log\left((2p)^{At^{-2}M^2 \log N}\right)dt\\
& \le \frac{M}{\sqrt{s}}\log{p \choose  s} + \sqrt{s}\int_{0}^{M/\sqrt{s}} \log (1+2M/t)dt + AM^2(\log N)(\log (2p)) \int_{M/\sqrt{s}}^M \frac{dt}{t^2}\\
& \lesssim M \sqrt{s}\log p+M  (1+2\sqrt{s})\log\left(1+\frac{1}{2\sqrt{s}}\right)+ A\sqrt{s}M(\log N)(\log (2p)) \\
& \lesssim \sqrt{s}M\left(\log p + (\log\overline N)(\log p)\right),
\end{align*}
where the second term follows from integration by parts
\begin{align*}
\sqrt{s}\int_{0}^{M/\sqrt{s}} \log (1+2M/t)dt
&\le\sqrt{s}t\log\left(1+\frac{2M}{t}\right)\Big|_{0}^{M/\sqrt{s}}+\sqrt{s}2M \log(t+2M)\Big|_{0}^{M/\sqrt{s}}\\
&\lesssim M  (1+2\sqrt{s})\log\left(1+\frac{1}{2\sqrt{s}}\right).
\end{align*}
Hence, we have $II \lesssim sR_{2} M\left\{\log p + (\log\overline N)(\log p)\right\}$.

Setting $\delta_{\bN,3}=sN^{-1/2}B\left(\log p + (\log\overline N)(\log p)\right)$, we have 
\begin{align*}
III &:=\E\left[\sup_{\theta\in \Theta_{s}}\left|\sum_{i,j} \epsilon_i \epsilon'_j  (\theta^T\bX_{i,j})^2\right|\right]
\lesssim
\frac{\delta_{\bN,3} \E[MR_2]}{B\sqrt{N}}
\le \left(\frac{\delta_{\bN,3}}{B}\right) \left(\frac{\E[M^2]\E[R_2^2]}{N}\right)^{1/2} \\
&\le\delta_{\bN,3}  \left(\frac{\E[R_2^2]}{N}\right)^{1/2}  
\lesssim
\delta_{\bN,3}\left(III+ \sup_{\theta\in\Theta_{s}} \E[(\theta^T\bX_{1,1})^2] \right)^{1/2}.
\end{align*}
 Using the same algebraic fact as in Step 1 yields that
 $III\lesssim \delta_{\bN,3}^2 + \delta_{\bN,3}\sqrt{\sup_{\theta\in\Theta_{s}} \E[(\theta^T\bX_{1,1})^2] }$. 

Finally, since $n\le \sqrt{N}$ and $s\le n$, we have 
\[
\frac{sB}{\sqrt{N}}\left(\log p + (\log\overline N)(\log p)\right)
\lesssim
\frac{sB}{n}\left(\log p + (\log\overline N)(\log p)\right)
\lesssim   \frac{\sqrt{s}B}{\sqrt{n}} \log^2(p\overline N).
\]
This completes the proof. 
\end{proof}


\section{Technical Tools}\label{sec:auxiliary_lemmas}

\begin{lemma}[Nazarov's inequality]
\label{lem: AC}
Let $\bm{Y}=(Y^{1},\dots,Y^{p})^{T}$ be a centered Gaussian random vector in $\R^p$ such that $\E[|Y^{j}|^{2}]\geq \underline{\sigma}^{2}$ for all $1 \le j \le p$ and some constant $\underline{\sigma}>0$. 
Then for every $\bm{y} \in \R^{p}$ and $\delta > 0$,
\[
\Prob(\bm{Y} \le \bm{y}+\delta)-\Prob(\bm{Y} \le \bm{y})\leq \frac{\delta}{\underline{\sigma}} (\sqrt{2\log p}+2).
\]
\end{lemma}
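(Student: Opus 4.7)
The plan is to follow the standard proof of Nazarov's inequality, which first reduces the estimate to an anti-concentration bound for a single random variable and then exploits Gaussian concentration.

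First, I would rewrite the difference as a slab probability for a maximum. Setting $M := \max_{1 \le j \le p}(Y^{j} - y^{j})$, one has
\[
\Prob(\bm{Y} \le \bm{y} + \delta \bm{1}) - \Prob(\bm{Y} \le \bm{y}) = \Prob(0 < M \le \delta),
\]
so it suffices to bound this slab probability by $\delta(\sqrt{2\log p}+2)/\underline{\sigma}$. By an approximation argument (replacing $\Sigma$ by $\Sigma + \varepsilon I$ and letting $\varepsilon \downarrow 0$) I may assume $\Sigma$ is strictly positive definite, so that $M$ admits a Lebesgue density $f_{M}$, and it is enough to prove the uniform density bound $\sup_{a \in \R} f_{M}(a) \le (\sqrt{2\log p} + 2)/\underline{\sigma}$; integrating this bound over $(0,\delta]$ then yields the conclusion.

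Next, to establish the density bound, I would use the representation $\bm{Y} = A Z$ with $Z \sim N(\bm{0},I_{p})$ and $A A^{T} = \Sigma$, so that $M = \varphi(Z)$ for the convex, Lipschitz function $\varphi(z) := \max_{j}((A z)_{j} - y^{j})$. Rescaling via $\widetilde Y^{j} := Y^{j}/\sigma_{j}$ with $\sigma_{j} := \sqrt{\Sigma_{jj}} \ge \underline{\sigma}$ replaces the width-$\delta$ slab in coordinate $j$ by a slab of width at most $\delta/\underline{\sigma}$ in the unit-variance vector $\widetilde{\bm{Y}}$; this is exactly how the denominator $\underline{\sigma}$ enters. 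In the rescaled coordinates, $\widetilde M := \max_{j}(\widetilde Y^{j} - y^{j}/\sigma_{j})$ is a $1$-Lipschitz functional of the standard Gaussian $Z$, so its density can be controlled via the co-area (or Gaussian integration-by-parts) formula together with the Borell--Sudakov--Tsirelson concentration inequality and the elementary estimate $\E[\max_{j} \widetilde Y^{j}] \le \sqrt{2\log p}$. A careful optimization of the resulting balance yields the sharp constant $\sqrt{2\log p} + 2$.

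The main obstacle is the density estimate $\sup_{a} f_{\widetilde M}(a) \lesssim \sqrt{\log p}$: a naive union bound over marginal conditional densities only yields $O(p)$, and the logarithmic improvement is precisely the content of Nazarov's original argument, which uses Gaussian log-concavity and convex-geometric isoperimetry on the complement of the rectangle $\{\bm{Y} \le \bm{y}\}$. Once this density estimate is in hand, assembling the final inequality is routine.
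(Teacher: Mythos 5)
The paper does not actually give a proof of this lemma; the stated ``proof'' is a one-line citation to Lemma A.1 of \cite{CCK2017AoP}, with the detailed argument relegated to \cite{CCK2017Note}. Your sketch correctly reproduces the reduction architecture that the cited argument uses: rewriting the difference as $\Prob(0 < M \le \delta)$ with $M = \max_{j}(Y^{j}-y^{j})$, regularizing so $M$ has a Lebesgue density, rescaling by the marginal standard deviations $\sigma_{j} \ge \underline{\sigma}$ so that the slab width is at most $\delta/\underline{\sigma}$, and reducing the claim to a density bound of order $\sqrt{\log p}$.

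There is, however, a genuine gap at precisely the step you flag as the main obstacle, and the route you propose for closing it does not work. You suggest that the density bound $\sup_{a} f_{\widetilde M}(a) \lesssim \sqrt{\log p}$ follows from the co-area formula together with Borell--Sudakov--Tsirel'son concentration and $\E[\max_{j}\widetilde Y^{j}] \le \sqrt{2\log p}$. BST concentration controls the tails of $\widetilde M$ but says nothing about the peak of its density; and although the law of $\widetilde M$ is indeed log-concave (by Pr\'ekopa, since the level sets $\{\max_{j}\widetilde Y^{j} \le a\}$ are convex), log-concavity plus a sub-Gaussian tail does not bound $\sup_a f_{\widetilde M}(a)$ --- a log-concave density can have an arbitrarily sharp mode while still having light tails. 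You then, in the final paragraph, change tack and defer the bound to ``Nazarov's original argument'' via ``Gaussian log-concavity and convex-geometric isoperimetry,'' which is inconsistent with the BST-based route just proposed and, in any case, still leaves the key estimate unproved. The argument that is actually used in the cited note is more direct and tailored to the box: one differentiates $t \mapsto \Prob(\bm Y \le \bm y + t\bm 1)$ to get a sum of terms $\phi(y_{j})\,\Prob(Y_{k}\le y_{k}\,\forall k\neq j\mid Y_{j}=y_{j})$, splits the indices according to whether $y_{j}$ exceeds a threshold of order $\sqrt{2\log p}$, and exploits the rapid decay of the one-dimensional Gaussian density on the large-$y_{j}$ block together with an elementary bound on the remaining block. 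So while your reductions are sound and you have correctly identified what must be proved, the crucial density estimate is neither proved nor reachable by the concentration-based mechanism you describe.
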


\begin{proof}
This is Lemma A.1 in \cite{CCK2017AoP}; see \cite{CCK2017Note} for its proof.
\end{proof}

\begin{lemma}[Gaussian comparison over rectangles]
\label{lem:Gaussian_comparison}
Let $\bm{Y}$ and $\bW$ be centered Gaussian random vectors in $\R^{d}$ with covariance matrices $\Sigma^{Y} = (\Sigma_{j,k}^{Y})_{1 \le j,k \le d}$ and $\Sigma^{W} = (\Sigma_{j,k}^{W})_{1 \le j,k \le d}$, respectively, and let $\Delta = \| \Sigma^{Y} - \Sigma^{W} \|_{\infty}$. 
Suppose that $\min_{1 \le j \le d} \Sigma_{j,j}^{Y} \bigvee \min_{1 \le j \le d} \Sigma_{j,j}^{W} \ge \underline{\sigma}^{2}$ for some constant $\underline{\sigma} > 0$. 
Then
\[
\sup_{R \in \calR} | \Prob (\bm{Y} \in R) - \Prob (\bW \in R) | \le C (\Delta \log^2 d)^{1/2},
\]
where $C$ is a constant that depends only on $\underline{\sigma}$.
\end{lemma}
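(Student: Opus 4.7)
The plan is to apply the smoothing-plus-interpolation technique that underlies the Chernozhukov--Chetverikov--Kato line of work on high-dimensional Gaussian comparisons. First, I would rewrite $\{\bm{x}\in R\}$ for a generic rectangle $R = \prod_{j=1}^d [a_j,b_j]$ as $\{\max_{1\le \ell\le 2d} g_\ell(\bm{x}) \le 0\}$, where $g_\ell(\bm{x}) = x_\ell - b_\ell$ for $\ell \le d$ and $g_\ell(\bm{x}) = a_{\ell-d} - x_{\ell-d}$ for $\ell > d$. Each $g_\ell$ is linear in a single coordinate with coefficient $\pm 1$, which is the structural fact that drives the Hessian bound below. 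Then I would approximate the indicator by the smooth function $G_\beta(\bm{x}) = h(F_\beta(g_1(\bm{x}),\ldots,g_{2d}(\bm{x})))$, where $F_\beta(\bm{u}) = \beta^{-1}\log\sum_\ell e^{\beta u_\ell}$ is the softmax (satisfying $|F_\beta(\bm{u}) - \max_\ell u_\ell| \le \beta^{-1}\log(2d)$) and $h \in C^2(\R;[0,1])$ is a fixed cutoff with $h \equiv 1$ on $(-\infty,0]$, $h \equiv 0$ on $[1/\beta,\infty)$, and $\|h^{(k)}\|_\infty \lesssim \beta^k$ for $k = 0,1,2$.

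Second, I would apply Stein--Slepian interpolation: with $\bm{Y}$ and $\bW$ independent and $\bm{Z}(t) = \sqrt{t}\,\bm{Y} + \sqrt{1-t}\,\bW$, the standard Gaussian calculus identity gives
\begin{equation*}
\E[G_\beta(\bm{Y})] - \E[G_\beta(\bW)] = \tfrac{1}{2}\int_0^1 \sum_{j,k}(\Sigma^Y_{jk} - \Sigma^W_{jk})\,\E[\partial^2_{jk} G_\beta(\bm{Z}(t))]\,dt.
\end{equation*}
Using the softmax identities $\sum_j \partial_j F_\beta = 1$ and $\sum_{j,k}|\partial^2_{jk} F_\beta| \lesssim \beta$ together with the fact that each $g_\ell$ depends on a single coordinate with $\pm 1$ coefficient, a direct computation yields $\sup_{\bm{x}}\sum_{j,k}|\partial^2_{jk} G_\beta(\bm{x})| \lesssim \beta^2$, whence $|\E[G_\beta(\bm{Y})] - \E[G_\beta(\bW)]| \lesssim \Delta\,\beta^2$.

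Third, I would control the smoothing error via anti-concentration. The difference $G_\beta(\bm{x}) - \1\{\bm{x}\in R\}$ is nonzero only when $\max_\ell g_\ell(\bm{x})$ lies in a band of width $O(\beta^{-1}\log d)$ around $0$, and Nazarov's inequality (Lemma \ref{lem: AC})---whose hypothesis $\Sigma^Y_{jj}\vee \Sigma^W_{jj}\ge \underline{\sigma}^2$ is exactly the assumption of the lemma---bounds the probability of such an event by $O(\beta^{-1}\log^{3/2} d/\underline{\sigma})$ for both $\bm{Y}$ and $\bW$. Combining this with the interpolation bound and choosing $\beta$ to balance $\Delta \beta^2$ against $\beta^{-1}\log^{3/2}d/\underline{\sigma}$ then yields the claimed bound $C(\Delta\log^2 d)^{1/2}$, with $C$ depending only on $\underline{\sigma}$.

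The main technical step is the careful Hessian bookkeeping for $G_\beta$: it is precisely the fact that each $g_\ell$ acts on a single coordinate that prevents $\sum_{j,k}|\partial^2_{jk}G_\beta|$ from blowing up with the dimension and allows the $O(\beta^2)$ bound (rather than one of order $d\beta^2$). This is the structural reason the final rate depends on $\Delta = \|\Sigma^Y - \Sigma^W\|_\infty$ (the entrywise maximum) rather than on a full operator norm of the covariance difference.
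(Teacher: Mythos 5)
The paper does not give an argument for this lemma at all: it simply cites Corollary~5.1 of \cite{CCKK2019}. Your sketch follows the classical smoothing--interpolation--anti-concentration template, and the individual ingredients (softmax approximation, Slepian/Stein interpolation, the $O(\beta^2)$ Hessian bound via the single-coordinate structure of the $g_\ell$, Nazarov) are all correct, but the final balancing is arithmetically wrong. Setting $\Delta\beta^2 = \beta^{-1}\log^{3/2}d$ gives $\beta \asymp (\log^{3/2}d/\Delta)^{1/3}$ and hence a total error of order $\Delta\beta^2 \asymp \Delta^{1/3}\log d = (\Delta\log^3 d)^{1/3}$, which is the \emph{classical} CCK comparison rate, not the improved $(\Delta\log^2 d)^{1/2} = \Delta^{1/2}\log d$ claimed in the lemma; the two cross at $\Delta \asymp \log^{-6}d$, and the claimed rate is strictly sharper in the small-$\Delta$ regime of interest. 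So your proof, as written, establishes only the weaker bound.

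To go beyond $\Delta^{1/3}$ one must exploit that the interpolant $\bm{Z}(t) = \sqrt{t}\,\bm{Y}+\sqrt{1-t}\,\bW$ is itself Gaussian: apply Nazarov \emph{inside} the Slepian integral, noting that the Hessian of $G_\beta$ is supported on the $O(\beta^{-1}\log d)$-band near $\partial R$, so that $\E\bigl[\sum_{j,k}|\partial^2_{jk}G_\beta(\bm{Z}(t))|\bigr]\lesssim \beta\log^{3/2}d/\sqrt{1-t}$, which after integrating in $t$ lowers the interpolation term to $\Delta\beta\log^{3/2}d$. Rebalancing against the smoothing error $\beta^{-1}\log^{3/2}d$ now gives $\Delta^{1/2}\log^{3/2}d$, which is closer but still a factor $\sqrt{\log d}$ above the target; shaving that last half-power is the nontrivial content of \cite{CCKK2019}, and there is no shortcut visible in your outline. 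A secondary point: the hypothesis (note the $\vee$) only controls the minimum marginal variance of \emph{one} of $\bm{Y}$, $\bW$, so Nazarov applies directly to that vector alone; one should therefore sandwich $\1_R$ between one-sided smooth approximants and pay the smoothing error only on the non-degenerate side, rather than invoking anti-concentration ``for both $\bm{Y}$ and $\bW$'' as you do.
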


\begin{proof}
See Corollary 5.1 in \cite{CCKK2019}. 
\end{proof}

\section{Additional simulation studies}\label{sec:additional_simulation}

\subsection{Gaussian design}\label{sec:simulation_gaussian}

In Section \ref{sec:simulation} 
in the main text, we experiment with simulation designs based on the mixture distribution:
$
\bm{Z}_{\i \odot \e} \sim B N(\bm{0}, \Sigma_{\bm{Z}}) + (1-B)N(\bm{0}, 2\Sigma_{\bm{Z}})
$
and
$
B \sim \text{Bernoulli}(0.5)
$
independently for $\i \in \{(i_1,i_2) \in \N^2: 1 \le i_1 \le N_1, 1 \le i_2 \le N_2\}$ and $\e \in \{0,1\}^2$ for separately exchangeable arrays
, and 
$
\bm{Z}_{\i \odot \e} \sim B N(\bm{0}, \Sigma_{\bm{Z}}) + (1-B) N(\bm{0}, 2\Sigma_{\bm{Z}})
$
and
$
B \sim \text{Bernoulli}(0.5)
$
independently for $\i \in \{(i,j) \in \N^2: 1 \le i < j \le n \}$ and $\e \in \{1\} \times \{0,1\}$ symmetrically in $i$ and $j$ for jointly exchangeable arrays.
In the current appendix section, we present additional simulation results under the Gaussian design:
$
\bm{Z}_{\i \odot \e} \sim N(\bm{0}, \Sigma_{\bm{Z}})
$
independently for $\i \in \{(i_1,i_2) \in \N^2: 1 \le i_1 \le N_1, 1 \le i_2 \le N_2\}$ and $\e \in \{0,1\}^2$ for separately exchangeable arrays, and
$
\bm{Z}_{\i \odot \e} \sim N(\bm{0}, \Sigma_{\bm{Z}})
$
independently for $\i \in \{(i,j) \in \N^2: 1 \le i < j \le n\}$ and $\e \in \{1\} \times \{0,1\}$ symmetrically in $i$ and $j$ for jointly exchangeable arrays.

Table \ref{tab:two_way_gaussian} summarizes simulation results under the separate exchangeability.
The columns consist of the dimension $p$ of $\bX$, and the two-way sample size $(N_1,N_2)$.
The displayed numbers indicate the simulated uniform coverage frequencies for the nominal probabilities of 90\% and 95\%.
For each dimension $p \in \{25,50,100\}$, sample sizes vary as $(N_1,N_2) \in \{(25,25), (50,50),(100,100)\}$.
Observe that, for each nominal probability, the uniform coverage frequencies approach the nominal probability as the sample size increases.
These results support the theoretical property of our multiplier bootstrap method.

\begin{table}[t]
	\centering
	  \scalebox{0.9}{
		\begin{tabular}{|r|ccc|ccc|ccc|}
		\multicolumn{10}{c}{}\\
		\hline
			Normalization & \multicolumn{9}{|c|}{No} \\
		\hline
			Dimension of $\bX_{\i}$: $p$ & 25  & 25  & 25  & 50  & 50  & 50  & 100 & 100 & 100\\
		\hline
			Sample Sizes: $N_1,N_2$ & 25  & 50  & 100 & 25  & 50  & 100 & 25  & 50  & 100\\
		\hline
			90\% Coverage      &0.928&0.921&0.909&0.935&0.925&0.906&0.943&0.916&0.910\\
			95\% Coverage      &0.973&0.964&0.955&0.973&0.963&0.954&0.976&0.962&0.960\\
		\hline
			Normalization & \multicolumn{9}{|c|}{Yes} \\
		\hline
			Dimension of $\bX_{\i}$: $p$ & 25  & 25  & 25  & 50  & 50  & 50  & 100 & 100 & 100\\
		\hline
			Sample Sizes: $N_1,N_2$ & 25  & 50  & 100 & 25  & 50  & 100 & 25  & 50  & 100\\
		\hline
			90\% Coverage      &0.895&0.889&0.888&0.883&0.889&0.904&0.858&0.896&0.895\\
			95\% Coverage      &0.946&0.944&0.938&0.938&0.946&0.948&0.920&0.944&0.940\\
		\hline
		\multicolumn{10}{c}{}
		\end{tabular}
		}
	\caption{Simulation results for separately exchangeable data with $K=2$ indices. Displayed are the dimension $p$ of $\bX$, the two-way sample size $(N_1,N_2)$ with $N_1=N_2$, and the simulated uniform  coverage frequencies for the nominal probabilities of 90\% and 95\%.}
	\label{tab:two_way_gaussian}
\end{table}

Table \ref{tab:polyadic_gaussian} summarizes simulation results under the joint exchangeability.
The columns consist of the dimension $p$ of $\bX$ and the dyadic sample size $N$.
The displayed numbers indicate the simulated uniform coverage frequencies for the nominal probabilities of 90\% and 95\%.
For each dimension $p \in \{25,50,100\}$, sample sizes vary as $n \in \{50, 100, 200\}$.
Observe that, for each nominal probability, the uniform coverage frequencies approach the nominal probability as the sample size increases.
These results support the theoretical property of our multiplier bootstrap method.

\begin{table}[t]
	\centering
		\scalebox{0.9}{
		\begin{tabular}{|r|ccc|ccc|ccc|}
		\multicolumn{10}{c}{}\\
		\hline
			Normalization & \multicolumn{9}{|c|}{No} \\
		\hline
			Dimension of $\bX_{i,j}$: $p$ & 25  & 25  & 25  & 50  & 50  & 50  & 100 & 100 & 100\\
		\hline
			Sample Size: $n$ & 50  & 100 & 200 & 50  & 100 & 200 & 50  & 100 & 200\\
		\hline
			90\% Coverage      &0.909&0.894&0.898&0.909&0.915&0.903&0.913&0.906&0.901\\
			95\% Coverage      &0.960&0.951&0.948&0.966&0.966&0.954&0.960&0.956&0.954\\
		\hline
			Normalization & \multicolumn{9}{|c|}{Yes} \\
		\hline
			Dimension of $\bX_{i,j}$: $p$ & 25  & 25  & 25  & 50  & 50  & 50  & 100 & 100 & 100\\
		\hline
			Sample Size: $n$ & 50  & 100 & 200 & 50  & 100 & 200 & 50  & 100 & 200\\
		\hline
			90\% Coverage      &0.858&0.874&0.886&0.836&0.889&0.872&0.807&0.857&0.886\\
			95\% Coverage      &0.923&0.934&0.942&0.904&0.948&0.929&0.891&0.916&0.940\\
		\hline
		\end{tabular}
		}
		\medskip
	\caption{Simulation results for dyadic data. Displayed are the dimension $p$ of $\bX$, the dyadic sample size $n$, and the simulated uniform  coverage frequencies for the nominal probabilities of 90\% and 95\%.}
	\label{tab:polyadic_gaussian}
\end{table}

\subsection{Separate exchangeability with three indices}\label{sec:simulation_three_way}

In Section \ref{sec:simulation} 
in the main text, we experimented with separately exchangeable arrays with $K=2$ indices.
In the current appendix section, we present simulation studies based on exchangeability with $K=3$ indices.
Samples are generated according to
\begin{align*}
\bX_{\i}
=
\frac{1}{12} \left( 
\bm{Z}_{(i_1,0,0)} + \bm{Z}_{(0,i_2,0)} + \bm{Z}_{(0,0,i_3)} +
\bm{Z}_{(i_1,i_2,0)} + \bm{Z}_{(i_1,0,i_3)} + \bm{Z}_{(0,i_2,i_3)} 
\right)
+ \frac{1}{2} \bm{Z}_{(i_1,i_2,i_3)},
\end{align*}
where 
(i)
$
\bm{Z}_{\i \odot \e} \sim N(\bm{0}, \Sigma_{\bm{Z}})
$
independently for $\i \in \{(i_1,i_2,i_3) \in \N^3: 1 \le i_1 \le N_1, 1 \le i_2 \le N_2, 1 \le i_3 \le N_3\}$ and $\e \in \{0,1\}^3$ in one design, and
(ii)
$
\bm{Z}_{\i \odot \e} \sim B N(\bm{0}, \Sigma_{\bm{Z}}) + (1-B)N(\bm{0}, 2\Sigma_{\bm{Z}})
$
and
$
B \sim \text{Bernoulli}(0.5)
$
independently for $\i \in \{(i_1,i_2,i_3) \in \N^3: 1 \le i_1 \le N_1, 1 \le i_2 \le N_2, 1 \le i_3 \le N_3\}$ and $\e \in \{0,1\}^3$ in the other design.
For each of these data generating designs, we run 2,500 Monte Carlo iterations to compute the uniform coverage frequencies of $\E[\bX_{\i}]$ for the nominal probabilities of 90\% and 95\% using our proposed multiplier bootstrap for separately exchangeable arrays with 2,500 bootstrap iterations.

Table \ref{tab:three_way} summarizes simulation results.
The columns consist of the dimension $p$ of $\bX$ and the three-way sample size $(N_1,N_2,N_3)$.
The displayed numbers indicate the simulated uniform coverage frequencies for the nominal probabilities of 90\% and 95\%.
For each dimension $p \in \{25,50,100\}$, sample sizes vary as $(N_1,N_2,N_3) \in \{(25,25,25), (50,50,50),(100,100,100)\}$.

\begin{table}
	\centering
		\scalebox{0.9}{
		\begin{tabular}{|r|ccc|ccc|ccc|}
		\multicolumn{10}{c}{}\\
		\hline
		  Distribution of $\bm{Z}_{\i \odot \e}$ & \multicolumn{9}{|c|}{(i) Gaussian} \\
		\hline
			Normalization & \multicolumn{9}{|c|}{No} \\
		\hline
			Dimension of $\bX_{\i}$: $p$ & 25  & 25  & 25  & 50  & 50  & 50  & 100 & 100 & 100\\
		\hline
			Sample Sizes: $N_1,N_2,N_3$ & 25  & 50  & 100 & 25  & 50  & 100 & 25  & 50  & 100\\
		\hline
			90\% Coverage      &0.912&0.912&0.910&0.932&0.914&0.908&0.929&0.918&0.902\\
			95\% Coverage      &0.952&0.958&0.951&0.971&0.958&0.956&0.973&0.962&0.956\\
		\hline
			Normalization & \multicolumn{9}{|c|}{Yes} \\
		\hline
			Dimension of $\bX_{\i}$: $p$ & 25  & 25  & 25  & 50  & 50  & 50  & 100 & 100 & 100\\
		\hline
			Sample Sizes: $N_1,N_2,N_3$ & 25  & 50  & 100 & 25  & 50  & 100 & 25  & 50  & 100\\
		\hline
			90\% Coverage      &0.882&0.892&0.908&0.891&0.897&0.904&0.888&0.889&0.894\\
			95\% Coverage      &0.942&0.944&0.959&0.946&0.949&0.956&0.944&0.939&0.942\\
		\hline
		\multicolumn{10}{c}{}\\
		\hline
		  Distribution of $\bm{Z}_{\i \odot \e}$ & \multicolumn{9}{|c|}{(ii) Mixture} \\
		\hline
			Normalization & \multicolumn{9}{|c|}{No} \\
		\hline
			Dimension of $\bX_{\i}$: $p$ & 25  & 25  & 25  & 50  & 50  & 50  & 100 & 100 & 100\\
		\hline
			Sample Sizes: $N_1,N_2,N_3$ & 25  & 50  & 100 & 25  & 50  & 100 & 25  & 50  & 100\\
		\hline
			90\% Coverage      &0.921&0.916&0.904&0.923&0.915&0.908&0.946&0.913&0.908\\
			95\% Coverage      &0.964&0.958&0.952&0.960&0.958&0.956&0.974&0.959&0.958\\
		\hline
			Normalization & \multicolumn{9}{|c|}{Yes} \\
		\hline
			Dimension of $\bX_{\i}$: $p$ & 25  & 25  & 25  & 50  & 50  & 50  & 100 & 100 & 100\\
		\hline
		\hline
			90\% Coverage      &0.896&0.904&0.892&0.899&0.900&0.906&0.894&0.886&0.886\\
			95\% Coverage      &0.943&0.945&0.945&0.948&0.948&0.948&0.942&0.940&0.936\\
		\hline
		\end{tabular}
		}
		\medskip
	\caption{Simulation results for three-way ($K=3$) cluster sampled data. Displayed are the dimension $p$ of $\bX$, the three-way sample size $(N_1,N_2,N_3)$ with $N_1=N_2=N_3$, and the simulated uniform  coverage frequencies for the nominal probabilities of 90\% and 95\%.}
	\label{tab:three_way}
\end{table}

\subsection{Uniform confidence band for densities of dyadic data}\label{sec:simulation_density}

In this section, we present simulation studies to evaluate finite sample performance of the proposed uniform confidence bands for probability density functions of dyadic data that is presented in Section \ref{sec:density}.
Dyadic data are generated 
symmetrically in $i$ and $j$
according to
$$
Y_{i,j} = \frac{1}{4}(U_{i,0}+U_{j,0}) + \frac{1}{2} U_{i,j},
$$ 
where 
(i) $U_{\i \odot \e} \sim N(0,1)$ independently for $\i \in \{(i,j) \in \N^2: 1 \le i < j \le n \}$ and $\e \in \{1\} \times \{0,1\}$ in one design, and
(ii) $U_{\i \odot \e} \sim Logistic(0,1)$ independently for $\i \in \{(i,j) \in \N^2: 1 \le i < j \le n \}$ and $\e \in \{1\} \times \{0,1\}$ in the other design.

We use the Epanechnikov kernel function $K$ for estimation and inference for the probability density functions $f$ of $Y_{i,j}$.
We use the $n^{1/5}$-undersmoothed version of two Silverman's rules of thumb, i.e., (a) $h^1_n = 1.06 \hat\sigma_{Y_{i,j}} n^{-2/5}$ and (b) $h^2_n = 0.9 \min \left\{ \hat\sigma_{Y_{i,j}}, \widehat{IQR}_{Y_{i,j}}/1.34 \right\} n^{-2/5}$ where $\hat\sigma_{Y_{i,j}}$ and $\widehat{IQR}_{Y_{i,j}}$ are the sample standard deviation and the sample interquartile range of $Y_{i,j}$, respectively.
Confidence bands for $f$ are constructed on the interval $[-2,2]$ with the grid size of 201.
We run 2,500 Monte Carlo iterations to compute the uniform coverage frequencies of $f$ on this grid for the nominal probabilities of 90\% and 95\% using our proposed multiplier bootstrap for inference about the probability density functions of dyadic data with 2,500 bootstrap iterations.

Table \ref{tab:density} shows simulation results.
The columns consist of the dyadic sample sizes $n \in \{250,500\}$.
The displayed numbers indicate the simulated uniform coverage frequencies for the nominal probabilities of 95\% and 95\%.
Observe that, for each nominal probability and for each data generating design, the uniform coverage frequencies approach the nominal probability as the sample size increases.
These results support the theoretical property of our multiplier bootstrap method for constructing uniform confidence bands for probability density functions of dyadic data.

\begin{table}[t]
	\centering
	  \scalebox{0.9}{
		\begin{tabular}{|r|ccc|ccc|}
		\multicolumn{7}{c}{}\\
		\hline
		  Distribution of $U_{\i \odot \e}$ & \multicolumn{6}{|c|}{(i) Gaussian}\\
		\hline
		  Bandwidth Rule & \multicolumn{3}{|c|}{(a) $h^1_n$} & \multicolumn{3}{|c|}{(b) $h^2_n$}\\
			\hline
			Sample Sizes: $n$ & 250 & 500 & 1000 & 250 & 500 & 1000\\
		\hline
			90\% Coverage      & 0.835 & 0.908 & 0.906 & 0.813 & 0.889 & 0.913
\\
			95\% Coverage      & 0.902 & 0.953 & 0.962 & 0.880 & 0.949 & 0.959\\
		\hline
		\multicolumn{7}{c}{}\\
		\hline
		  Distribution of $U_{\i \odot \e}$ & \multicolumn{6}{|c|}{(ii) Logistic}\\
		\hline
		  Bandwidth Rule & \multicolumn{3}{|c|}{(a) $h^1_n$} & \multicolumn{3}{|c|}{(b) $h^2_n$}\\
			\hline
			Sample Sizes: $n$ & 250 & 500 & 1000 & 250 & 500 & 1000\\
		\hline
			90\% Coverage      & 0.906 & 0.916 & 0.914 & 0.899 & 0.914 & 0.908\\
			95\% Coverage      & 0.955 & 0.962 & 0.962 & 0.951 & 0.958 & 0.961\\
		\hline
		\end{tabular}
		}
		\medskip
	\caption{Simulation results for uniform confidence bands on $[-2,2]$ of probability density functions of dyadic data. Displayed are the dyadic sample sizes $n$ and the simulated uniform coverage frequencies for the nominal probabilities of 90\% and 95\%.}
	\label{tab:density}
\end{table}

\section{More on the Exchangeable Arrays}\label{sec:more_exchangeable_arrays}

In this section, we discuss more details about the separately and jointly exchangeable arrays,  their concrete examples, and their differences.
For simplicity, we focus on the case with $K=2$.
Recall that an array $(\bX_{(i_1,i_2)})_{(i_1,i_2) \in \N^{2}}$ is separately exchangeable if for any two permutations $\pi_1$ and $\pi_2$ of $\N$, the arrays $(\bX_{(i_1,i_2)})_{(i_1,i_2) \in \N^{2}}$ and $(\bX_{(\pi_1(i_1),\pi_2(i_2))})_{(i_1,i_2) \in \N^{2}}$ are identically distributed, cf. Definition 1. 
An array $(\bX_{(i_1,i_2)})_{(i_1,i_2) \in I_{\infty,2}}$ is called jointly exchangeable if for any permutation $\pi$ of $\N$, the arrays $(\bX_{i_1,i_2})_{(i_1,i_2) \in I_{\infty,2}}$ and $(\bX_{(\pi(i_1),\pi(i_2))})_{(i_1,i_2)\in I_{\infty,2}}$ are identically distributed, cf. Definition 2.
We first introduce market data in Example \ref{ex:market_data} as an example of a separately exchangeable array, and network data in Example \ref{ex:network} as an example of a jointly exchangeable array.

\begin{example}[Market Data - Separately Exchangeable Array]\label{ex:market_data}
Market data $(\bX_{i_1,i_2})_{(i_1,i_2) \in \N^{2}}$ commonly in use for marketing analysis are typically indexed by two indices, namely $i_1$ for markets and $i_2$ for products.
An observation $\bX_{(i_1,i_2)}$ may consist of $p$ attributes, such as the market share $X_{(i_1,i_2)}^1$ of product $i_2$ in market $i_1$, the price $X_{(i_1,i_2)}^2$ of product $i_2$ in market $i_2$, and so on.
The array $(\bX_{(i_1,i_2)})_{(i_1,i_2) \in \N^{2}}$ is separately exchangeable if its distribution is identical after a permutation of the market indices $i_1$ and a permutation of the product indices $i_2$.
\end{example}

\begin{example}[Network Data]\label{ex:network}
Network edge data $(\bX_{(i_1,i_2)})_{(i_1,i_2) \in I_{\infty,2}}$, such as those of international trade are typically indexed by two indices, namely $i_1$ for originating country and $i_2$ for destination country.
An observation $\bX_{(i_1,i_2)}$ may consist of $p$ attributes, such as the trade volume $X_{(i_1,i_2)}^1$ of wheat products from country $i_1$ to country $i_2$, trade volume $X_{(i_1,i_2)}^1$ of dairy products from country $i_1$ to country $i_2$, and so on.
The array $(\bX_{(i_1,i_2)})_{(i_1,i_2) \in I_{\infty,2}}$ is jointly exchangeable if its distribution is identical after a permutation of the country indices.
\end{example}

Given these two concrete examples, we can now describe key differences between the two notions of exchangeability.
In Example \ref{ex:market_data}, $i_1$ and $i_2$ index different sets of units, namely markets and products, respectively.
For this data structure, an identical distribution may well hold even after permuting the markets and products separately.
In Example \ref{ex:network}, on the other hand, $i_1$ and $i_2$ index the same set of units, namely countries.
For this data structure, an identical distribution is less plausible after permuting the origins and destinations separately.

To see this point more concretely, consider the sub-array $(\bX_{(1,2)}, \bX_{(2,1)})$ shown in the left matrix below.
\begin{align*}
&&
\text{Separately Exchanged}
\\
\left(\begin{array}{cc}
            & \bX_{(1,2)} \\
\bX_{(2,1)} &
\end{array}\right)
&&
\left(\begin{array}{cc}
            & \bX_{(3,5)} \\
\bX_{(4,6)} &
\end{array}\right)
\end{align*}
Consider separate permutations $\pi_1$ on $i_1$ and $\pi_2$ on $i_2$ such that $\pi_1(1)=3$, $\pi_1(2)=4$, $\pi_2(1)= 6$ and $\pi_2(2)=5$ to obtain the right matrix above which yields $(\bX_{(3,5)}, \bX_{(4,6)})$.
The separate exchangeability requires that $(\bX_{(1,2)}, \bX_{(2,1)}) \stackrel{d}{=} (\bX_{(3,5)}, \bX_{(4,6)})$ in particular.
This is plausible in Example \ref{ex:market_data} if different markets, 1, 2, 3, and 4, are \textit{ex ante} identical and different products, 1, 2, 5, and 6, are \textit{ex ante} identical in terms of the distribution.

On the other hand, $(\bX_{(1,2)}, \bX_{(2,1)}) \stackrel{d}{=} (\bX_{(3,5)}, \bX_{(4,6)})$ is not plausible in Example \ref{ex:network}.
The two observations $(\bX_{(1,2)}, \bX_{(2,1)})$ are likely to be highly correlated because they measure exports and imports among the identical pair $(1,2)$ of countries.
In contrast, the two observations $(\bX_{(3,5)}, \bX_{(4,6)})$ are likely to be less correlated because they measure exports and imports among two distinct pairs, $(3,5)$ and $(4,6)$, of countries.
Thus, the joint distributions of $(\bX_{(1,2)}, \bX_{(2,1)})$ and $(\bX_{(3,5)}, \bX_{(4,6)})$ are not plausibly assumed to be identical in this example.

Next, consider a joint permutation $\pi$ on $i_1$ and $i_2$ such that $\pi(1)=3$ and $\pi(2)=4$ to obtain the right matrix below which yields $(\bX_{(3,4)}, \bX_{(4,3)})$.
\begin{align*}
&&
\text{Jointly Exchanged}
\\
\left(\begin{array}{cc}
            & \bX_{(1,2)} \\
\bX_{(2,1)} &
\end{array}\right)
&&
\left(\begin{array}{cc}
            & \bX_{(3,4)} \\
\bX_{(4,3)} &
\end{array}\right)
\end{align*}
Observe that $(\bX_{(1,2)}, \bX_{(2,1)}) \stackrel{d}{=} (\bX_{(3,5)}, \bX_{(4,6)})$ is plausible even in Example \ref{ex:network}.
The two observations $(\bX_{(1,2)}, \bX_{(2,1)})$ are likely to be highly correlated because they measure exports and imports among the identical pair $(1,2)$ of countries, as well as $(\bX_{(3,4)}, \bX_{(4,3)})$ are similarly likely to be highly correlated because they also measure exports and imports among the identical pair $(3,4)$ of countries.

The above illustrations show that the separate exchangeability is legitimate for Example \ref{ex:market_data} but not for Example \ref{ex:network}.
On the other hand, the joint exchangeability is more relevant to Example \ref{ex:network}.
As can be seen from the above illustrations, the separately exchangeability implies the joint exchangeability.
That being said, we want to emphasize that we use different array structures between the definition of separately exchangeable arrays and that of jointly exchangeable arrays in that the index set is $\mathbb{N}^2$ for the former and the index set is $I_{\infty,2}$ for the latter.
In this sense, the two definitions of exchangeable arrays do not nest each other.
The main purpose of using $I_{\infty,2}$ for the latter is to preclude a self link, as there is no trade flow from a country to itself.

We conclude by commenting on a comparison of the high-dimensional CLTs and bootstrap validity results under the two notions of exchangeability. Although at the first glimpse Theorems \ref{thm: high-d CLT polyadic} and \ref{thm:bootstrap validity polyadic} under joint exchangeability look similar to Theorems \ref{thm: high-d CLT} and \ref{thm:bootsrap_validity}, respectively, under separately exchangeable arrays, they are in fact fundamentally different and neither is nested or implied by the other. The fundamental differences are rooted in the different dependence structures that result in different Hoeffding decompositions, H\'ajek projections, as well as distinctive techniques for handling higher order terms. These in turn lead to two dedicated bootstrap procedures. Therefore, both sets of results are irreplaceable by the other.

\setlength{\baselineskip}{7.1mm}
\bibliographystyle{Chicago}
\bibliography{biblio}

\end{document}